\newcommand{\R}{{\mathbb R}}
\newcommand{\tr}{\mathrm{tr}}
\newcommand{\di}{\mathrm{div}}
\newcommand{\p}{\partial}
\newcommand{\Vol}{\textrm{Vol}}
\newcommand{\Mbar}{\overline{M}}
\newcommand{\gbar}{{\overline{g}}}
\newcommand{\kbar}{\overline{k}}
\newcommand{\pbar}{\bar \partial}
\newcommand{\phibar}{\overline{\phi}}
\newcommand{\Vbar}{\overline{V}}
\newcommand{\Kbar}{\overline{K}}
\newcommand{\Nbar}{\overline{N}}
\newcommand{\xbar}{\overline{x}}
\newcommand{\ebar}{\overline{e}}
\newcommand{\gtil}{\widetilde{g}}
\newcommand{\Deltatil}{\widetilde{\Delta}}
\newcommand{\phitil}{\widetilde{\phi}}
\newcommand{\Ltil}{\widetilde{L}}
\newcommand{\Gtil}{\widetilde{G}}
\newcommand{\sigmatil}{\widetilde{\sigma}}
\newcommand{\ra}{\rightarrow}
\newcommand{\mru}{\mathring{u}}
\newcommand{\Wtil}{\widetilde{W}}
\newcommand{\rtil}{\widetilde{r}}
\newcommand{\Jtil}{\widetilde{J}}
\newcommand{\cP}{\mathcal{P}}
\newcommand{\calL}{\mathcal{L}}
\let\Reals\R
\DeclareMathOperator{\Lap}{\Delta}
\newcommand{\loc}{\text{loc}}
\newcounter{mnotecount}[section]
\let\oldmarginpar\marginpar
\renewcommand\marginpar[1]{\-\oldmarginpar[\raggedleft\footnotesize #1]%
{\raggedright\footnotesize #1}}
\newcommand{\Number}{section} 	
\theoremstyle{plain}
	\newtheorem{thm}{Theorem}[\Number]
	\newtheorem{lem}[thm]{Lemma}
	\newtheorem{prop}[thm]{Proposition}
	\newtheorem{cor}[thm]{Corollary}
	\newtheorem{remark}[thm]{Remark}
	\newtheorem{defn}[thm]{Definition}
	\newtheorem{assumpt}[thm]{Assumption}
\title{The Einstein Constraint Equations on Asymptotically Euclidean Manifolds}
\author{James Dilts}
\date{June 15, 2015}
\begin{document}

\maketitle

\begin{abstract}
In this dissertation, we prove a number of results regarding the conformal method of
finding solutions to the Einstein constraint equations. These results include 
necessary and sufficient conditions for the Lichnerowicz equation to have solutions,
global supersolutions which guarantee solutions to the conformal constraint equations
for near-constant-mean-curvature (near-CMC) data as well as for far-from-CMC data,
a proof of the limit equation criterion in the near-CMC case, as well as a model
problem on the relationship between the asymptotic constants of solutions and the ADM
mass. We also prove a characterization of the Yamabe classes on asymptotically Euclidean
manifolds and
resolve the (conformally) prescribed scalar curvature problem on asymptotically
Euclidean manifolds for the case of nonpositive scalar curvatures.

Many, though not all, of the results in this dissertation have been previously
published in \cite{Dilts13b}, \cite{DIMM14}, \cite{DL14},
\cite{DM15}, and \cite{DGI15}. This article is the author's Ph.D. dissertation, except
for a few minor changes.
\end{abstract}

\chapter{Introduction} \label{chap:Intro}

General relativity, Albert Einstein's theory of gravity, has proven remarkably successful
in describing the universe from planetary to intergalactic scales. In this theory,
Einstein made the surprising claim that gravity is equivalent to the curvature
of space \cite{Einstein15b}. In other words, mass and energy bend and stretch space itself. 

In our solar system, the stretching is very slight; even passing over the surface of the
sun, the error is much less than one percent. However, this slight stretching has been
confirmed by numerous tests. The first physical confirmation was the orbit of Mercury.
The oval orbit of mercury precesses (rotates) by a small amount
each year. However, the observed precession is about
eight percent greater than Newtonian gravity predicts.
By taking into account the stretching of space,
Einstein \cite{Einstein15a} correctly explained the observed precession.

Some other confirmations of the accuracy of general relativity include the bending of
light in gravitational fields, the gravitational red-shift effect,
and the Shapiro time delay. General relativity has become the most accurate theory
of gravity known. It has led to remarkable technologies, such as GPS, 
and remarkable physical predictions, such as the Big Bang and black holes,
cf. \cite{Wald}.

In general relativity, the universe is described by a Lorentzian manifold, called a
spacetime. Vectors in the spacetime with
positive inner product represent space-like directions, while those with negative inner
product represent time-like directions. Those with zero inner product can be interpreted
as the directions that light can travel. As mentioned earlier, mass and energy stretch
spacetime itself. This is represented by the equations of general relativity:
\begin{equation}\label{eq:MainEinstein}
R_{\mu\nu} - \frac12 R \gamma_{\mu\nu} + \Lambda \gamma_{\mu\nu} = \kappa T_{\mu\nu}.
\end{equation} Here, $\gamma_{\mu\nu}$ is the metric, $R_{\mu\nu}$ and $R$
are the Ricci and
scalar curvatures respectively, $\Lambda$ is the cosmological constant and $\kappa$ is
a constant depending on the units chosen. In what follows, we choose units such that
$\kappa = 1$. The tensor $T_{\mu\nu}$ is the stress-energy
tensor, which combines both mass and energy into one object.

While easy to write down, the Einstein equations \eqref{eq:MainEinstein} are not easy to
solve. Minkowski space, $\R^n$ equipped with a flat Lorentzian metric,
solves them trivially. The
first non-trivial exact example is the Schwarzschild solution. This
solution is spherically symmetric in spatial (spacelike) directions, and describes
the space surrounding a star.
For many years, general relativity was a business of
finding special, symmetric solutions to the Einstein equations. Many of these are quite
important, such as the Kerr metric, which generalizes the Schwarzschild
 metric to allow
rotation and represents, it is thought, the end state of black holes, and the FLRW
 metric, a family of spatially homogeneous and isotropic solutions that are 
the basis of the Big Bang and the
standard model of cosmology. See \cite{Wald} for more information on these solutions.

In most scientific theories, one wants to be able to 
specify initial conditions, such as the location of particles and their momenta, and
then evolve the system to predict where the particles will be in the future. This is called
the initial value problem. It took about forty years before the initial value problem for
general relativity was put on a firm theoretical footing. 

In Newtonian physics, the
initial data of particles and their momenta is freely specifiable. This is not the case
in general relativity. In general relativity, initial data is (usually) given on a
Riemannian spatial submanifold of the spacetime. Using the Gauss and 
Codazzi equations, one can reduce the Einstein equations \eqref{eq:MainEinstein} on the
submanifold to the Einstein constraint equations,
\begin{subequations}\label{eq:Constraints}\begin{align}
R_g + (\tr_g K)^2 - |K|_g^2 &= T_{nn} \label{eq:ConstraintHamiltonian},\\
\nabla^i_g K_{ij} - \nabla_j(\tr_g K) &= 2 T_{in} \label{eq:ConstraintMomentum},
\end{align}\end{subequations}
 where $g$ is the induced Riemannian metric, $K$ is the second fundamental
form, $n$ is the unit normal direction, and latin indices indicate spatial directions.
Equation \eqref{eq:ConstraintHamiltonian} is
known as the Hamiltonian constraint while equation \eqref{eq:ConstraintMomentum} is
known as the momentum constraint. A standard reference on these equations is 
\cite{BI04}.

The constraint equations \eqref{eq:Constraints}
must hold on any spatial submanifold of a spacetime satisfying
the Einstein equations \eqref{eq:MainEinstein}. In 1952, Yvonne 
Choquet-Bruhat \cite{CB52} proved the
converse: given a Riemannian manifold $(M,g)$ and a symmetric 2-tensor $K$, there is a 
spacetime satisfying the Einstein equations \eqref{eq:MainEinstein} where $(M,g)$ is a
submanifold and $K$ is the second fundamental form of this submanifold. Later, 
Choquet-Bruhat and Robert Geroch \cite{CBG69} proved the existence of an appropriate 
``maximal" spacetime containing $(M,g)$, called the maximal globally hyperbolic 
development.

Due to these results, 
instead of trying to find and classify all solutions of the Einstein equations,
one may instead find and classify the solutions of the constraint equations. In addition,
the initial value problem is vital in finding solutions in complicated situations, such
as inspiraling binary black holes; exact solutions are difficult, if not impossible, to
find, but computers can approximate these solutions using the initial value formulation.

Thus one would like to understand the full set of solutions to the constraint equations
\eqref{eq:Constraints}, and, in particular, to
parameterize this set. The constraint equations are an underdetermined system of elliptic
PDEs. Roughly speaking, for an $n$ dimensional spacetime, there are $n$ functions 
determined by the constraint equations, while the rest of the data is freely specifiable.
However, it is not immediately obvious which of the quantities or components of tensors
we should attempt to specify and which we should attempt to solve for. A
useful decomposition of the data is needed. One of the most useful decompositions
is known as 
the conformal method.

\section{The Conformal Method}

The conformal method was developed by Lichnerowicz, Choquet-Bruhat, 
and York in order to
parameterize all the solutions of the constraint equations 
\eqref{eq:Constraints}. To date it has been the
most successful method in doing so.

The main idea behind the conformal method, as described in the previous section, is to
decompose the initial data into freely specifiable and determined data. Over the years,
several variations of the conformal method have been introduced, which did not appear
to be equivalent. Fortunately, David Maxwell \cite{Maxwell14} 
recently showed that all of the conformal
methods lead to the same set of solutions, and so are effectively equivalent. Indeed, 
there is a straightforward transformation of the specifiable data from any of the methods
to data from any of the others.
Because of this, we present and use the method that appears
to have the most advantages,
which Maxwell refers to as the ``conformal thin sandwich-Hamiltonian" 
formulation, or CTS-H for short.

In this method, the initial data consists not only of a Riemannian manifold $(M,\gbar)$
and a symmetric 2-tensor $\Kbar$, but also of a function $\Nbar$, called the lapse
function. When solving for the complete spacetime, the lapse function controls the 
relative length of the unit normal to the submanifold and the coordinate vector $\p_t$.
However, the $\Nbar$ found via the conformal method need not be used in finding the
spacetime; it is called the lapse function due to the derivation of the CTS formulations.

In the CTS-H formulation of Einstein's theory with matter sources,
 one first specifies a manifold $M$ and a background metric $g$.
One then chooses functions $\tau,r$, a function $N>0$ going to 1 at infinity,
a vector field $J$, and a transverse-traceless (i.e.,
divergence-free and trace-free) symmetric 2-tensor $\sigma$. We call 
$(g, \tau, N, \sigma,r,J)$
the ``seed data." One then seeks a function
$\phi>0$ and a vector field $W$ solving the conformal constraint equations:
\begin{subequations}\label{eq:ConfConst}\begin{align}
-a \Delta \phi + R \phi + \kappa \tau^2 \phi^{q-1} 
    - \left|\sigma + \frac{1}{2N} LW\right|^2 \phi^{-q-1} - r \phi^{-q/2}=0
     \label{eq:OrigLich}\\
\di \frac{1}{2N} LW = \kappa \phi^{q} d\tau + J. \label{eq:OrigVect}
\end{align}\end{subequations}
 Here, all quantities and operators are calculated relative to $g$, 
$q = \frac{2n}{n-2}$, $\kappa = \frac{n-1}{n}$, $a = \frac{4(n-1)}{n-2}$,
and $L$ is the conformal Killing operator, defined by
\begin{equation}\label{eq:CKO}
LW_{ab} = \nabla_a W_b + \nabla_b W_a - \frac2n \nabla_c W^c g_{ab}.
\end{equation} We refer to Equation \eqref{eq:OrigLich} as
the Lichnerowicz equation, while equation \eqref{eq:OrigVect} is called
the vector equation. The system is also called the 
LCBY (Lichnerowicz-Choquet-Bruhat-York) equations.

Once $(\phi,W)$ is found, the initial data solving \eqref{eq:Constraints}
is reconstructed as follows:
\begin{align} \label{eq:ReconstructedData}
\gbar_{ab} &= \phi^{q-2} g_{ab},\\
\Kbar_{ab} &= \phi^{-2}\left(\sigma_{ab} +\frac{1}{2N} LW_{ab}\right) 
          + \frac1n \tau \gbar_{ab},\\
T_{nn}     &= \phi^{-\frac32 q +1} r,\\
T_{in}     &= \phi^{-q} J.
\end{align} We make several notes on this. First, given $\phi$, York \cite{York73}
proved that
such a decomposition of $\Kbar$ exists and is unique. Indeed the decomposition is $L^2(M)$
orthogonal, i.e.,
\begin{equation}
  \int_M \langle \sigma, LW\rangle = 0,
\end{equation} though they are not in general orthogonal pointwise. Next, note that
$\tau = \tr_{\gbar} \Kbar$. Thus $\tau$ represents the mean curvature of the initial
data. Also, though we could allow $r<0$, this represents a negative energy density.
For the rest of the thesis we assume the weak energy condition, which in this case
is equivalent to saying that $r\geq 0$.

Finally, note that the metric $g$ was, in the end, only specified up to a conformal factor.
Perhaps the greatest strength of the CTS-H formulation over the other formulations of the 
conformal method is that it is conformally covariant. Specifically, we have the following
proposition, as proven in \cite[Prop 6.4]{Maxwell14}. 

\begin{prop}\label{prop:ConformalCovariance}
Let $(g, \tau, N, \sigma,r,J)$ be CTS-H seed data, and let $\psi$ be a smooth positive
function. Then
$(\phi,W)$ solve the conformal constraint equations \eqref{eq:ConfConst}
 for the data $(g,\tau, N,\sigma,r,J)$ if and only if 
$(\psi^{-1} \phi, W)$ solve the conformal constraint equations for the data 
\begin{equation}
(\psi^{q-2}g, \tau, \psi^q N, \psi^{-2}\sigma, \psi^{-\frac32 q +1} r, \psi^{-q} J).
\end{equation}
Both yield the same solution $(\gbar, \Kbar)$ of the 
constraint equations.
\end{prop}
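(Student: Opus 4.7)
The plan is to verify both equations of \eqref{eq:ConfConst} term-by-term using standard conformal transformation formulas, then confirm that the reconstructed data \eqref{eq:ReconstructedData} coincide. First I would collect the conformal transformation identities needed under $\widetilde{g} = \psi^{q-2}g$. The conformal Laplacian is covariant, $-a\widetilde{\Delta}\widetilde{\phi} + \widetilde{R}\widetilde{\phi} = \psi^{1-q}\bigl(-a\Delta(\psi\widetilde{\phi}) + R(\psi\widetilde{\phi})\bigr)$, which with $\widetilde{\phi}=\psi^{-1}\phi$ yields $\psi^{1-q}(-a\Delta\phi+R\phi)$. A direct computation using $\widetilde{\Gamma}^d_{ca}-\Gamma^d_{ca} = \tfrac{q-2}{2}\psi^{-1}(\delta^d_c\partial_a\psi + \delta^d_a\partial_c\psi - g_{ca}g^{de}\partial_e\psi)$ together with $\tr_g LW = 0$ shows that $\widetilde{L}W_{ab} = \psi^{q-2}LW_{ab}$ with lower indices, hence $\tfrac{1}{2\widetilde{N}}\widetilde{L}W = \psi^{-2}\cdot\tfrac{1}{2N}LW$ as $(0,2)$-tensors. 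A similar contraction gives the key divergence identity: for any trace-free symmetric $(0,2)$-tensor $A$, $\widetilde{\nabla}^a(\psi^{-2}A)_{ab} = \psi^{-q}\nabla^a A_{ab}$ as $1$-forms, where the identity $(q-2)(n-2)=4$ is what forces the exponent on the right to be exactly $-q$.

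With these in hand, each of the five terms of the transformed Lichnerowicz equation should equal $\psi^{1-q}$ times the corresponding term built from $(g,\tau,N,\sigma,r,\phi,W)$. The $-a\Delta+R$ terms are handled by conformal Laplacian covariance. The terms $\kappa\widetilde{\tau}^2\widetilde{\phi}^{q-1}$ and $\widetilde{r}\widetilde{\phi}^{-q/2}$ follow immediately from the powers of $\psi$; the weight $\widetilde{r} = \psi^{-3q/2+1}r$ is chosen precisely so the $\psi^{-q/2}$ from $\widetilde{\phi}^{-q/2}$ combines with $\psi^{-3q/2+1}$ to give $\psi^{1-q}$. For the $|\widetilde{\sigma}+\tfrac{1}{2\widetilde{N}}\widetilde{L}W|_{\widetilde{g}}^2$ term I would combine $\widetilde{\sigma}+\tfrac{1}{2\widetilde{N}}\widetilde{L}W = \psi^{-2}(\sigma+\tfrac{1}{2N}LW)$ with $|\cdot|^2_{\widetilde{g}} = \psi^{2(2-q)}|\cdot|^2_g$ for $(0,2)$-tensors; the product $\psi^{-4}\cdot\psi^{4-2q} = \psi^{-2q}$ cancels against $\widetilde{\phi}^{-q-1}$ to leave the desired $\psi^{1-q}$. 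Dividing through by $\psi^{1-q}$ recovers \eqref{eq:OrigLich} for the original data.

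The vector equation then drops out from the divergence identity: the right-hand side becomes $\kappa\widetilde{\phi}^q d\widetilde{\tau} + \widetilde{J} = \psi^{-q}\bigl(\kappa\phi^q d\tau + J\bigr)$, since $\widetilde{J}=\psi^{-q}J$, $d\widetilde{\tau}=d\tau$, and $\widetilde{\phi}^q = \psi^{-q}\phi^q$, while the left-hand side becomes $\psi^{-q}\,\di\tfrac{1}{2N}LW$ by the divergence identity applied to $A=\tfrac{1}{2N}LW$. Finally, the claim about $(\bar g,\bar K)$ reduces to checking $\widetilde{\phi}^{q-2}\widetilde{g} = \psi^{-(q-2)}\phi^{q-2}\cdot\psi^{q-2}g = \phi^{q-2}g$, and the analogous cancellations for $\bar K$, $T_{nn}$, and $T_{in}$, which are purely algebraic in $\psi$.

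The main technical obstacle is the divergence identity $\widetilde{\nabla}^a(\psi^{-2}A)_{ab} = \psi^{-q}\nabla^a A_{ab}$ for trace-free symmetric $A$. Tracking the two Christoffel-difference contributions under $\widetilde{g}^{ac}$-contraction, one of the two vanishes exactly because $\tr_g A=0$, and the other simplifies precisely via $(q-2)(n-2)=4$; the residual $-2\psi^{-1}\partial^d\psi A_{db}$ cancels the derivative of $\psi^{-2}$ picked up along the way. Once this one computation is organized, the remainder of the proof is bookkeeping with exponents of $\psi$.
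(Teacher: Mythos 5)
Your proof is correct, and the computations you sketch (conformal covariance of $-a\Delta+R$, the identity $\widetilde{L}W_{ab}=\psi^{q-2}LW_{ab}$, the weighted divergence identity $\widetilde{\nabla}^a(\psi^{-2}A)_{ab}=\psi^{-q}\nabla^a A_{ab}$ for trace-free symmetric $A$ using $(q-2)(n-2)=4$, and the exponent bookkeeping) all check out; each term in the transformed Lichnerowicz equation indeed acquires an overall factor $\psi^{1-q}$, and the vector equation an overall $\psi^{-q}$. Note, though, that the paper does not give its own proof of this proposition — it cites it directly from Maxwell's paper (\cite[Prop 6.4]{Maxwell14}) — so there is no paper-internal proof to compare against; your argument is the standard direct verification that Maxwell carries out there.
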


As a consequence of this, when we attempt to find solutions to the conformal constraint
equations, we can, without loss of generality, do all calculations with respect to any
convenient representative $\gtil \in [g]$. Usually we use this freedom to choose a
representative with convenient scalar curvature. In the more traditional conformal 
method, the corresponding conformal constraint equations (essentially 
\eqref{eq:ConfConst} with $N\equiv \frac12$)
 are not conformally covariant. 

The Lichnerowicz equation \eqref{eq:OrigLich} is a semilinear elliptic PDE. If
$\sigma + \frac{1}{2N} LW, r \equiv 0$, the Lichnerowicz equation becomes the
(conformally) prescribed
scalar curvature equation. In particular, if $\phi$ solves \eqref{eq:OrigLich} with
$\sigma + \frac{1}{2N} LW, r \equiv 0$, then the scalar curvature of $\phi^{q-2}g$ is 
$-\kappa \tau^2$. The prescribed scalar curvature problem is closely related to
the solvability of the Lichnerowicz equation (and the conformal constraint equations
overall) as we see below.

The vector equation \eqref{eq:OrigVect} is a linear elliptic PDE. In the absence
of conformal Killing fields, the
vector equation is completely understood. However, in the presence of conformal Killing
fields, it is not well understood. A conformal Killing field $V$ is one such that 
$LV \equiv 0$, and represents a symmetry of some conformally related metric.
On a compact manifold, in the vacuum case (i.e., $(r,J) \equiv 0$), 
\begin{equation} \label{eq:KillingFieldsBad}
\begin{aligned}
0 &= \int_M \frac{-1}{4N} \langle LW, LV\rangle\\
  &= \int_M \left\langle V ,\di\frac{1}{2N} LW\right\rangle \\
  &= \int \kappa \phi^{q} \langle V, d\tau\rangle
\end{aligned}
\end{equation} by integration by parts. (The adjoint of $L$ is $-2 \di$.) Thus
$\phi^q d\tau$ must be $L^2$ orthogonal to all conformal Killing fields for there
to be a solution $W$ to the vector equation \eqref{eq:OrigVect}. In the case 
$d\tau \equiv 0$, i.e., the constant mean curvature (CMC) case, this is not a problem.
However, in general, since $\phi$ is also unknown, this is a serious complication. The
``drift formulation" by Maxwell \cite{Maxwell15}, described below in Section \ref{sec:Drifts},
is an extension of the CTS-H formulation that, among other things, attempts to resolve
this problem. In this thesis, we assume that the metric does not allow
any conformal Killing fields. Fortunately, it is well known \cite{BCS05}
that generic metrics do not admit any conformal Killing fields.

Earlier, we described the CTS-H conformal method as splitting the initial data into
freely specifiable and determined data. This is not precisely the case. For instance,
on a compact manifold, if $\tau$ is a constant, then
the solution to the vector equation is $W\equiv 0$. If, in addition, $R>0$ and
 $\sigma,r \equiv 0$, the maximum principle implies
that the Lichnerowicz equation \eqref{eq:OrigLich}
has no positive solution. Thus the seed data is not
freely specifiable. The goal, then, becomes to determine which seed data sets
lead to (a hopefully unique) solution of the conformal constraint equations.
If fully understood,
this leads to a parameterization of the solutions to the constraint equations
\eqref{eq:Constraints}. The case that is most
fully understood is the case where $M$ is a compact manifold without boundary.

\section{The Compact Case}

The simplest case is the constant mean curvature case, i.e., when $d\tau \equiv 0$.
In this case, the conformal constraint equations \eqref{eq:ConfConst}
decouple. If $J$ is $L^2$ orthogonal to any conformal Killing fields,
the vector equation \eqref{eq:OrigVect} has a solution, and $W$ is not dependent on $\phi$.
Thus we can reduce the conformal constraint equations to a single equation:
\begin{equation}\label{eq:ReducedLichn}
-a\Delta \phi + R\phi + \kappa \tau^2 \phi^{q-1} - \beta^2 \phi^{-q-1}  
    -r \phi^{-q/2}=0
\end{equation} where $\beta = \left| \sigma + \frac1{2N} LW\right|$.

As before, the sign of the scalar curvature
$R$ affects whether or not equation \eqref{eq:ReducedLichn}
has any solutions. This leads us to the Yamabe problem. The Yamabe problem asks whether
a metric can be conformally transformed to one with constant scalar curvature. The answer
is yes (cf. \cite{LP87}), with the sign of the target scalar curvature being 
prescribed by a conformal invariant called the Yamabe invariant. The Yamabe invariant of
a metric, $Y(g)$, is defined by
\begin{equation}\label{eq:DefYamabeInvariant}
Y(g) := \inf_{u \in C^\infty(M), u\not\equiv 0} 
   \frac{\int_M a|\nabla u|^2 + Ru^2}{\|u\|_q^2}.
\end{equation} If $Y(g)>0$, we say $g$ is Yamabe positive, and similar for Yamabe null
and negative. The resolution of the Yamabe problem says that $g$ is Yamabe positive
if and only if $g$ can be conformally transformed to a metric with constant positive
scalar curvature, and similar statements hold for Yamabe null and negative metrics.
Since the CTS-H method is conformally covariant (cf. Proposition 
\ref{prop:ConformalCovariance}), we can assume the scalar curvature $R$ is constant of
the appropriate sign.

For compact manifolds, the CMC case (with $r \equiv 0$) was completed by Jim Isenberg
in \cite{Isenberg95}. The case $r\geq 0$ is proven essentially the same way, and so we
include it here also. The solvability of equation \eqref{eq:ReducedLichn} is detailed in
Table \ref{table:CMCSolvability}.

\begin{table}[h]
\caption{Solvability of the CMC Conformal Constraint Equations}
\centering
\begin{tabular}{|c|c|c|c|c|}
\hline
       & $\tau=0, \beta,r \equiv 0$ & $\tau=0, \beta,r \not\equiv 0$ &
         $\tau\neq 0, \beta,r\equiv 0$ & $\tau\neq 0, \beta,r\not\equiv 0$  \\ \hline
$Y(g)>0$ & No   & Yes & No  & Yes \\ \hline
$Y(g)=0$ & Yes* & No  & No  & Yes \\ \hline
$Y(g)<0$ & No   & No  & Yes & Yes \\ \hline
\end{tabular}
\label{table:CMCSolvability}
\end{table}

In all cases the solution $\phi$ to equation \eqref{eq:ReducedLichn} is unique, except
in the case $Y(g) = 0$, $\tau=0$ and $\beta,r\equiv 0$ (marked with a $*$), 
in which case there is a one 
parameter homothety family of solutions, namely $\phi \in \R^+$. This gives a complete
parameterization of the CMC solutions of the constraint equations
\eqref{eq:Constraints}, cf. 
\cite{Isenberg87}. Similar results have been found for other topologies
and asymptotic conditions. 

If every spacetime could be evolved from CMC initial data, Isenberg's work
would be enough to parameterize the solutions of the Einstein equations.
Unfortunately, not all spacetimes can be obtained this way, as proved 
in \cite{CIP05}.
It is not known whether or not generic spacetimes can be obtained from CMC initial data.
Thus for a complete parameterization of solutions to the Einstein equations, we must
consider the conformal constraint equations with generic mean curvature $\tau$. 

We first consider the Lichnerowicz equation \eqref{eq:OrigLich}. Unsurprisingly, the 
solvability of the Lichnerowicz equation mirrors the solvability of the CMC conformal
constraint problem as tabulated in Table \ref{table:CMCSolvability}, but with one caveat.
On a compact manifold, if $Y(g)<0$, the Lichnerowicz equation has a solution 
if and only if $g$ can be
conformally transformed to a metric with scalar curvature $-\kappa \tau^2$. For
$\tau^2 >0$, this is always true. For $\tau$ with zeroes, the solvability is discussed in
\cite{Rauzy95}, \cite{DM15} and Chapter \ref{chap:Yamabe} below. This problem is completely
understood.

Excepting that caveat, one might expect the solvability of the generic 
conformal constraint equations to mirror
that of the CMC case, as shown in Table \ref{table:Solvability}. 

\begin{table}[h]
\caption{Hypothesized Solvability of the Conformal Constraint Equations}
\centering
\begin{tabular}{|c|c|c|}
\hline
       & $\tau\not\equiv 0, \beta,r\equiv 0$ & $\tau\not\equiv 0, \beta,r\not\equiv 0$  \\ \hline
$Y(g)>0$ & No  & Yes \\ \hline
$Y(g)=0$ & No  & Yes \\ \hline
$Y(g)<0$ & Yes & Yes \\ \hline
\end{tabular}
\label{table:Solvability}
\end{table}

For nearly CMC data, this solvability is realized, at least in the case where
there are no conformal Killing fields. The near-CMC conditions typically
come in two flavors. If $\tau$ is a constant for which the
CMC conformal constraint equations have a solution,
then the inverse function theorem can be used to show that any nearby $\tau$ (in
$W^{1,p}_{\delta-1}$) also leads to a solution. In the second case, the condition
is that  $\|d\tau\|_{p}$ is sufficiently small compared to $\inf \tau$. 
Using these types of conditions, the Yamabe negative near-CMC case was 
settled in 1996 \cite{IM96}, the nonexistence cases for Yamabe nonnegative 
metrics in 2004 \cite{IOM04}, and the remaining cases in 2008 \cite{ACI08}. (These 
results prove results for manifolds with scalar curvature of a strict sign; Maxwell's
conformal covariance of the CTS-H formulation \cite{Maxwell14} is needed to make them
apply to the entire Yamabe classes.) 
All of these results rely on there being no conformal Killing fields, for the
reasons discussed above. 

The only generic result known for the arbitrary mean curvature case was proven by Holst,
Nagy, and Tsogtgerel \cite{HNT09}, then improved by Maxwell \cite{Maxwell09}.
This result essentially
says that on a Yamabe positive compact manifold,
given an arbitrary $\tau$, if $\sigma,r,$ and $J$ are small enough, then the conformal
constraint equations \eqref{eq:ConfConst} have a (not necessarily
unique) solution. More recently, however, Nguyen \cite{Nguyen14} showed that all such solutions are
merely rescalings of perturbations off of the maximal ($\tau\equiv 0$) case. We discuss
this in Corollary \ref{cor:FarIsNear}.
Thus the only generic far-from-CMC result known is, essentially, a near-CMC result.

Another attempted method to find solutions to the conformal constraint equations is the
``limit equation" criterion. First explored by Dahl, Gicquaud, and Humbert \cite{DGH11},
this method says that either the conformal constraint equations or the limit equation
\begin{equation}\label{eq:LimitEquation}
\di \frac{1}{2N} LW = \alpha_0 \sqrt{\kappa}\left|LW\right| \frac{d\tau}{2N\tau}
\end{equation} (for some $\alpha_0\in (0,1]$) have a (nontrivial) solution.
As was suspected, Nguyen recently showed \cite{Nguyen14} that both can in fact have solutions. 

The limit equation was originally found via a subcriticality argument. If the exponent
of $\phi$ in the vector equation \eqref{eq:OrigVect} is reduced by epsilon, the 
coupling of the conformal constraint equations is weak enough so that solutions are
relatively simple to find. As $\epsilon \to 0$, if these subcritical solutions are
bounded, they must converge to a solution to the conformal constraint equations. If
they are instead unbounded, it can be shown that they converge to a solution of the 
limit equation \eqref{eq:LimitEquation}.

The limit equation criterion is that if the limit equation has no solutions, then
the conformal constraint equations must have a solution. While this method may be used
for the far-from-CMC case, so far, it has only been used to find solutions in the 
near-CMC case, as in \cite{DGH11}.

In order to better explore the far-from-CMC regime, Maxwell studied a model problem
with high symmetry \cite{Maxwell11}. He studied seed data on $T^n$ with the flat metric,
where the data depended on only one coordinate (i.e., with $U^{n-1}$ symmetry).
For some particular data, he showed
that if $\tau$ was sufficiently far-from-CMC in some sense, then there were no
solutions to the conformal constraint equations (in the symmetry class of the data).
Given such a $\tau$, however, if the transverse traceless part of the data were
sufficiently small, then there were at least two solutions. 

This is in contrast to
the CMC and near-CMC case, where solutions are unique. Also, the
nonexistence for far-from-CMC data is in contradiction with the hypothesized solvability
described in Table \ref{table:Solvability}. However, this model problem may be a special
case for several reasons. First, the background metric is flat, which is known to be a
very special case, even in the CMC theory. Second, the background metric has conformal
Killing fields, which is known to be non-generic. Third, the mean curvature function
$\tau$ has jump discontinuities. Finally, the non-existence/non-uniqueness only occurs when
$\tau$ changes signs. However, this could also represent new phenomena, or, perhaps, 
limitations of the conformal method.

Maxwell \cite{Maxwell14b} later studied a related problem. On flat $T^n$, and arbitrary $\tau$,
again with $U^{n-1}$ symmetry, 
he found seed data that led to either flat Kasner or static-toroidal solutions of the
constraint equations. It was shown that there was in fact a one parameter family of
solutions to the conformal constraint equations if and only if
\begin{equation}\label{eq:TauStarDef}
\tau^* := \frac{\int_{S^1} N \tau dx}{\int_{S^1} N dx} = 0,
\end{equation} where the integrals are respect to the flat metric. 
While $\tau^*$ appears to be determined by $N$ and $\tau$, the reality
is more complicated.

Recall that the CTS-H method is conformally covariant, as described in Proposition 
\ref{prop:ConformalCovariance}. However, equation \eqref{eq:TauStarDef} is not conformally
covariant. Thus, if we started with arbitrary seed data on $T^n$ that happened to lead
to one of these solutions, we would need to calculate $\tau^*$ with respect to the solution
metric and not the background metric. In general, then, there is no way of determining
whether or not a set of seed data leads to a one parameter family until after the solution
has already been found. This presents serious problems for the goal of parameterizing all
solutions to the constraint equations 
\eqref{eq:Constraints}, since these one parameter
families are essentially impossible to detect.

\section{The Drift Formulation}\label{sec:Drifts}

In an attempt to avoid the pitfalls for parameterizing solutions to the constraint
equations described in the last section, Maxwell introduced the drift formulation of the
conformal method, originally in \cite{Maxwell14b}, and expanded in \cite{Maxwell15}. In the standard
CTS-H method, the mean curvature $\tau$ is specified in the seed data, and is unchanged
by the conformal factor found by solving the Lichnerowicz equation \eqref{eq:OrigLich}.
This, however, makes calculating $\tau^*$ impossible without first finding the solution
to the conformal constraint equations \eqref{eq:ConfConst}.

Maxwell's idea was to specify the constant $\tau^*$ directly, and then define $\tau$ by
adding $\tau^*$ to a conformally varying term, given by something he calls
a drift, for reasons explained in \cite{Maxwell15}. Since $\tau^*$ is specified directly,
the one parameter families of solutions described in the previous section occur only
when $\tau^*$ is specified to be zero. The drift 
formulation also has the advantage of making it possible to find solutions even in the
presence of conformal Killing fields.

Though Maxwell introduces several possible ways to construct such $\tau$, we 
discuss only one. In this formulation, which he calls CTS-H with volumetric drift,
the drift is given by a vector field $V$ which is specified up to a conformal Killing
field $Q$. Given seed data $(g, \tau^*, V, N, \sigma,r,J)$, one tries to find a solution
$(\phi, W, Q)$ to  
\begin{subequations}\label{eq:DriftEqns}\begin{align}
-a\Delta \phi + R \phi 
  + \kappa \left(\tau^* + \frac{\phi^{-2q}}{N} \di(\phi^q(V +Q))\right)^2 \phi^{q-1} 
    - \left|\sigma + \frac{1}{2N} LW\right|^2 \phi^{-q-1} - r \phi^{-q/2}=0
     \label{eq:DriftLich}\\
\di \frac{1}{2N} LW 
  = \kappa \phi^{q} d\left(\tau^* + \frac{\phi^{-2q}}{N} \di(\phi^q(V +Q))\right)
     + J. \label{eq:DriftVect},
\end{align}\end{subequations}
 which is the same as \eqref{eq:ConfConst}, except that
we replaced $\tau$ with
\begin{equation}\label{eq:TauDriftDef}
\tau := \tau^* + \frac{\phi^{-2q}}{N} \di(\phi^q(V +Q)).
\end{equation} The data is then reassembled as before, except
\begin{equation}
\Kbar_{ab} = \phi^{-2}\left(\sigma_{ab} +\frac{1}{2N} LW_{ab}\right) 
          + \frac1n \left(\tau^* + \frac{\phi^{-q}}{N} \overline{\di} V\right) \gbar_{ab},
\end{equation}  where $\overline{\di}$ is the divergence with respect to $\gbar$.
Note that for $V\equiv 0$, this method reduces to the CMC CTS-H method.

The drift method has several advantages over the CTS-H formulation.
First, the one parameter
families found in \cite{Maxwell14b} occur if and only if $\tau^* = 0$. Thus at least 
that obstruction to parameterization is overcome. Also, Maxwell proved 
\cite[Thm 10.1]{Maxwell15} that the vector equation \eqref{eq:DriftVect} has a solution for
some $Q$, even if $g$ has conformal Killing fields. Thus it becomes possible to solve
the constraint equations in the presence of conformal Killing fields.

However, the drift formulation equations \eqref{eq:DriftEqns} are
much more complicated analytically. For example, in the original vector equation
\eqref{eq:OrigVect},
one can find an upper bound on $LW$ based on an upper bound for $\phi$. In the drift
vector equation, a similar upper bound naively requires bounds on $\|\phi\|_{C^2}$. 
Perturbation methods, such as those used to produce solutions in the CTS-H formulation,
are expected to extend to the drift setting \cite{PC15}.
Since all known generic results for the CTS-H method are near-CMC results,
this would show that the drift formulation is at least as useful as the CTS-H method. The
drift formulation is a promising approach to finding the parameterization of the 
constraint equations.

\section{This Dissertation}

In this dissertation, we discuss the conformal constraint equations, in particular focusing
on the asymptotically Euclidean (AE) case. In Chapter \ref{chap:AEIntro}, we 
introduce AE manifolds and the appropriate Banach spaces for analysis, and then discuss
elliptic operator theory on AE manifolds. Because the vector equation \eqref{eq:OrigVect}
is relatively simple, we discuss its solvability in this chapter.
This chapter serves as a common introduction to all the subsequent chapters.

In Chapter \ref{chap:IFF}, we discuss the solvability of the Lichnerowicz equation
\eqref{eq:OrigLich}. In particular, we show that the Lichnerowicz equation is
solvable if an only if the metric can be conformally transformed to one with 
scalar curvature $-\kappa \tau^2$. We 
then leverage this result to obtain a circumstance where the conformal constraint
equations do not admit a solution, and also show an example of the blowup of solutions. The
results in this chapter will appear in \cite{DGI15}.

In Chapter \ref{chap:Yamabe} we discuss when the prescribed scalar curvature problem
from the previous chapter has a solution. We give a necessary and sufficient
condition for the problem to have a solution; namely, that the zero set of the prescribed
scalar curvature has positive Yamabe invariant,
as defined in this chapter. Because of this problem's close relation to AE Yamabe classes, we
also give a characterization of the AE Yamabe classes. This chapter is taken from
\cite{DM15}.

In Chapter \ref{chap:FarFromCMC}, we prove the existence of solutions for the
conformal constraint
equations for arbitrary mean curvature, assuming the tensor $\sigma$ and the matter
 terms $r$ and $J$ are sufficiently
small. We also show existence in the near-CMC case. This part is taken from
\cite{DIMM14}. We also discuss a new solvability criterion, related to Nguyen's
``local supersolution" from \cite{Nguyen14}.

In Chapter \ref{chap:LimitEquation},
we discuss the limit equation criterion for AE manifolds.
Unfortunately, we only show that the solution of the limit equation is nontrivial in the
near-CMC case. We show that arbitrarily near-CMC data, in the sense required,
does not ever occur. This is 
unpublished joint work with Romain Gicquaud and Jim Isenberg.

In Chapter \ref{chap:Mass}, we discuss the relation of the ADM mass to the
asymptotics of the solution of the conformal constraint equations 
\eqref{eq:ConfConst}. We then present a model problem for this
relation, which shows that the ADM mass is not monotonically dependent on the asymptotics
of the solution. 

\chapter{Asymptotically Euclidean Manifolds} \label{chap:AEIntro}

Perhaps the simplest solution to the Einstein constraint equations \eqref{eq:MainEinstein}
is Euclidean space,
with the second fundamental form $K$ and stress-energy tensor $T$ vanishing.
Physically this represents space with no matter and no tidal forces.
Heuristically, far from any mass and energy, space should become more and more
like Euclidean space. Far from any star, gravity becomes very weak. 
Mathematically, this kind of initial data is represented by asymptotically Euclidean
(AE) manifolds.

A manifold $(M^n,g)$ is called asymptotically Euclidean (AE) if there exists a compact 
set $K\subset M$ such that $M\setminus K$ is a (finite) collection of components $E_i$,
each diffeomorphic to the exterior of a ball in Euclidean space, $\R^n \setminus B_R(0)$,
and on each end, $g$ is asymptotic to the Euclidean metric $g_{Euc}$.
The $E_i$ are called the ends of $M$. 

In order to be precise, we must first define appropriate weighted Sobolev and H\"older
norms. First, fix a Euclidean coordinate system on each end, i.e., a distinguished
diffeomorphism from $E_i$ to $\R^n \setminus B_R(0)$. Let $\rho\geq 1$ be a
smooth function which agrees with the radial coordinate on each end. We say a function
$f \in W^{s,p}_\delta(M)$ if 
\begin{equation} \label{eq:SobolevNorm}
\sum_{|j|\leq s} \|\rho^{-\delta - \frac{n}{p} + |j|} \nabla^j f \|_{L^p} < \infty,
\end{equation} where $j$ is a multi-index, and $\nabla^j$ is calculated with respect
to a frame agreeing with the Euclidean frame on each end.
We denote this quantity by $\|f\|_{W^{s,p}_\delta(M)}$, without the $(M)$ if the
domain is understood. If $s=0$, we denote the space as $L^{p}_{\delta}(M)$ and the norm
as $\|f\|_{p,\delta}$. To extend this space and norm to tensors, we require the same
regularity and decay for each component of the tensor with respect to 
the Euclidean frame
in the background Euclidean metric $g_{Euc}$. Note that our convention on $\delta$
 is chosen such
that $f\in W^{s,p}_\delta(M)$ implies that, using the little-o notation, $f$ is
$o(\rho^{\delta})$; other conventions exist in the literature.

For $\alpha \in [0,1]$, we say a function $f\in C^{s,\alpha}_\delta(M)$ if
\begin{equation}\label{eq:HolderNorm}
\sup_{B, |j|\leq s}\left\{|\nabla^j f| \rho^{|j|-\delta}, 
   [\nabla^s f]_{B;\alpha}\rho^{s-\delta}\right\} < \infty,
\end{equation} where the supremum is over all balls $B\subset M$ of unit radius, and 
$[\cdot]_{B;\alpha}$ is the H\"older seminorm on that ball. We denote $C^{s,0}_\delta(M)$ by
$C^s_\delta(M)$ for simplicity. We extend this space to tensors similarly.

We then say that $g$ is a $W^{s,p}_\delta$ AE manifold if $\delta <0$ and
\begin{equation}
g|_{E_i} - g_{Euc} \in W^{s,p}_\delta
\end{equation} on each end $E_i$. We similarly define $C^{s,\alpha}_\delta$ AE manifolds.

We state some basic properties of these spaces in the following two propositions,
the first of which is taken from \cite[Lemma 1]{Maxwell05b}:
\begin{prop}[Properties of Weighted Sobolev Spaces]
\label{prop:SobolevEmbeddings} The following properties hold for the weighted Sobolev
spaces defined by \eqref{eq:SobolevNorm}:
\begin{enumerate}
\item If $p\geq q$ and $\delta'<\delta$ then
$L^p_{\delta'}\subset L^q_{\delta}$ and the inclusion is
continuous.

\item For $s\ge1$ and $\delta'<\delta$ the inclusion
$W^{s,p}_{\delta'}\subset W^{s-1,p}_{\delta}$ is compact.

\item If $s<n/p$ then $W^{s,p}_{\delta}\subset L^r_{\delta}$
where $r=np/(n-sp)$.  If $s=n/p$  then
$W^{s,p}_{\delta}\subset L^r_{\delta}$ for all $r\ge p$.
If $s>n/p$ then $W^{s,p}_{\delta}\subset C^{0}_{\delta}$.
These inclusions are continuous, and the last is compact.

\item If $m\le\min(j,s)$, $p\le q$, $\epsilon>0$, and $m<j+s-n/q$, then
multiplication is a continuous bilinear map from
$W^{j,q}_{\delta_1}\times W^{s,p}_{\delta_2}$ to
$W^{m,p}_{\delta_1+\delta_2+\epsilon}$ for any $\epsilon>0$. In particular,
if $s>n/p$ and $\delta<0$, then $W^{s,p}_{\delta}$ is an algebra.
\end{enumerate}
\end{prop}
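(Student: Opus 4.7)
The plan is to reduce each statement on the AE manifold to the corresponding statement on Euclidean space, and then to handle the non-compact Euclidean ends by a dyadic annular decomposition combined with a scaling argument. First I would fix a partition of unity subordinate to the cover $\{U_0\} \cup \{E_i\}$ where $U_0$ is a precompact neighborhood of $K$ and the $E_i$ are the ends. On $U_0$ the weight $\rho$ is bounded above and below, so the weighted norms are equivalent to the classical Sobolev/H\"older norms and all four properties follow from the standard theory on bounded domains: H\"older's inequality, the Rellich--Kondrachov theorem, the Sobolev embedding theorem, and the Banach-algebra/multiplication lemma. So the real work is confined to the ends.

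On each end I would use the fixed diffeomorphism to $\R^n \setminus B_R(0)$ and decompose this region into dyadic annuli $A_k = \{2^k \le |x| \le 2^{k+1}\}$ for $k \ge k_0$. Let $T_k(x) = 2^k x$ be the scaling that maps a fixed reference annulus $A_0$ onto $A_k$. A direct change-of-variables computation shows that $\|u\|_{W^{s,p}_\delta(A_k)}$ is comparable, uniformly in $k$, to $2^{k(-\delta - n/p)}$ times the ordinary $W^{s,p}(A_0)$ norm of $T_k^*u$ (with an analogous scaling for the H\"older norm). This turns the weighted problem on $A_k$ into an unweighted problem on a single fixed annulus, at the cost of tracking an explicit factor of $2^k$ in each estimate.

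With this machinery in place, item (i) follows from H\"older's inequality applied on each $A_k$ and summing a geometric series that converges because $\delta' < \delta$; item (iii) follows from applying the classical Sobolev embedding on $A_0$ to each scaled piece and summing, where the scaling weights on each side of the embedding match exactly; and item (iv) follows from applying the classical multiplication property on $A_0$, the $\epsilon$ in the target weight being precisely the slack needed to turn an otherwise borderline series into a convergent geometric sum. For (ii), I would apply the Rellich--Kondrachov theorem on each bounded piece $\{\rho \le R_N\}$ to extract, via a diagonal argument, a subsequence that converges strongly in $W^{s-1,p}$ on every compact set, and then use the strict decay $\delta' < \delta$ together with the scaling identity to show that the tails over $\{\rho > R_N\}$ are uniformly small in the $W^{s-1,p}_\delta$ norm.

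The main obstacle I expect is the uniform tail control in (ii): one must verify that the same $N$ can be chosen for every element of the bounded sequence, and this is where the strict inequality $\delta' < \delta$ is indispensable, since it produces a convergent factor $2^{k(\delta' - \delta)}$ when summing contributions from annuli near infinity. A closely related delicate point is keeping honest track of the $\epsilon$ loss in (iv): at the borderline $m = j + s - n/q$ the scaling exponents are exactly critical, so an arbitrarily small weakening of the target weight is structurally unavoidable rather than a cosmetic artifact.
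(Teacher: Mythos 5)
Your proposal is correct, and it is essentially the standard argument: the paper does not prove this proposition at all but cites it from \cite[Lemma 1]{Maxwell05b} (see also \cite{Bartnik86}), and the proofs there proceed exactly as you describe, by a partition of unity isolating the ends, a dyadic annular decomposition, and rescaling each annulus to a fixed reference annulus so that the classical H\"older, Rellich--Kondrachov, Sobolev, and multiplication lemmas apply with explicit powers of $2^k$ that are then summed. One small caveat on item (4): the strict inequality $m<j+s-n/q$ is what makes the \emph{unweighted} multiplication lemma hold on the reference annulus, while the $\epsilon$ in the target weight $\delta_1+\delta_2+\epsilon$ is a separate loss coming from the summation over annuli (one factor is controlled only in $\ell^\infty$ via $\ell^q\subset\ell^\infty$), so these are two distinct sources of slack rather than a single critical-scaling phenomenon; this does not affect the validity of your argument, since the stated conclusion already grants the $\epsilon$.
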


\begin{prop}[Properties of Weighted H\"older Spaces]
\label{prop:HolderEmbeddings} The following properties hold for the weighted Holder
spaces defined by \eqref{eq:HolderNorm}:
\begin{enumerate}
\item If $s+\alpha \geq s'+\alpha'$, $\alpha\neq1$, and $\delta\leq \delta'$
       then the inclusion $C^{s,\alpha}_\delta \subset C^{s',\alpha'}_{\delta'}$ is
       continuous.

\item If $s+\alpha> s'+\alpha'$ and $\delta<\delta'$ then the inclusion
      $C^{s,\alpha}_\delta \subset C^{s',\alpha'}_{\delta'}$ is compact.

\item Assume $s+\alpha\leq s'+\alpha'$. Then multiplication is a continuous bilinear map
      from $C^{s,\alpha}_\delta\times C^{s',\alpha'}_{\delta'}$ to 
      $C^{s,\alpha}_{\delta+\delta'}$. In particular, if $\delta\leq0$, then 
      $C^{s,\alpha}_{\delta}$ is an algebra.
\end{enumerate}
\end{prop}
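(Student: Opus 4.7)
The plan is to reduce each statement to the corresponding standard (unweighted) H\"older space result on unit balls, and then handle the weight function $\rho$ at infinity by direct manipulation of its powers. The key observation is that on any unit ball $B\subset M$, the function $\rho$ is comparable (up to a bounded factor, independent of $B$) to a constant $\rho_B$, since $\rho$ is Lipschitz and agrees with the Euclidean radius on each end. Consequently, the weighted H\"older norm defined by \eqref{eq:HolderNorm} is equivalent to $\sup_B \rho_B^{-\delta} \|\cdot\|_{C^{s,\alpha}(B)}$, where the supremum runs over a locally finite cover of $M$ by unit balls and $\|\cdot\|_{C^{s,\alpha}(B)}$ is the ordinary unweighted H\"older norm.

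For item (1), I would invoke the standard embedding $C^{s,\alpha}(B)\hookrightarrow C^{s',\alpha'}(B)$, valid on each unit ball with a constant independent of $B$ whenever $s+\alpha\geq s'+\alpha'$ and $\alpha\neq 1$. Combined with $\rho\geq 1$ and $\delta\leq\delta'$, which gives $\rho_B^{-\delta'}\leq \rho_B^{-\delta}$, this immediately yields the continuous inclusion $C^{s,\alpha}_\delta \hookrightarrow C^{s',\alpha'}_{\delta'}$. For item (3), the standard bilinear product estimate on a single ball, $\|fg\|_{C^{s,\alpha}(B)}\leq C\|f\|_{C^{s,\alpha}(B)}\|g\|_{C^{s',\alpha'}(B)}$ (with uniform $C$ whenever $s+\alpha\leq s'+\alpha'$), combined with the arithmetic $\rho_B^{-\delta}\cdot\rho_B^{-\delta'}=\rho_B^{-(\delta+\delta')}$ for the weights, delivers the desired bilinear continuity.

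For item (2), which I expect to be the one that requires real care, I would exhaust $M$ by the compact sets $M_N = \{\rho\leq N\}$. Given a bounded sequence $\{f_k\}$ in $C^{s,\alpha}_\delta(M)$, the ordinary compact embedding $C^{s,\alpha}(M_N)\hookrightarrow C^{s',\alpha'}(M_N)$, valid since $s+\alpha>s'+\alpha'$, combined with a diagonal extraction yields a subsequence converging in $C^{s',\alpha'}(M_N)$ for every $N$. To control the tail, on $M\setminus M_N$ the strict inequality $\delta<\delta'$ gives, for $|j|\leq s'$,
\[
\rho^{-\delta'+|j|}|\nabla^j f_k| \;\leq\; N^{\delta-\delta'}\,\rho^{-\delta+|j|}|\nabla^j f_k|
\;\leq\; C\,N^{\delta-\delta'},
\]
with an analogous bound on the H\"older seminorm of $\nabla^{s'} f_k$. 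Since $N^{\delta-\delta'}\to 0$ as $N\to\infty$ uniformly in $k$, the diagonal subsequence is Cauchy in $C^{s',\alpha'}_{\delta'}(M)$, establishing compactness.

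The only genuine obstacle is the combination in (2) of local Arzel\`a--Ascoli-type compactness with uniform tail decay, where one must verify that the H\"older seminorm contribution from the tail (not just the pointwise bounds on derivatives) also decays uniformly; this follows because the seminorm on any unit ball in the tail is again controlled by the weighted norm times $\rho_B^{\delta-\delta'}$. Items (1) and (3) are then essentially bookkeeping once the localization to unit balls and the equivalence of norms described above have been established.
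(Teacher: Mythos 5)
Your proposal is correct. Note that the paper itself offers no proof of this proposition --- it is stated as a standard fact, in parallel with the Sobolev version quoted from Maxwell --- so there is no argument to compare against; your localization to unit balls (with $\rho$ comparable to a constant on each ball), the weight arithmetic $\rho^{-\delta'}\leq\rho^{-\delta}$ from $\rho\geq1$, and the exhaustion-plus-uniform-tail-decay argument for compactness constitute exactly the standard proof one would supply. The only points worth making explicit in a write-up are that balls straddling $\partial M_N$ must be handled (e.g.\ by sorting the unit-ball cover into those contained in $M_{2N}$ and those contained in $M\setminus M_N$), and that since the norm \eqref{eq:HolderNorm} only records the H\"older seminorm of the top derivative, the seminorm $[\nabla^{s'}f]_{B;\alpha'}$ for $s'<s$ must be recovered from the sup bound on $\nabla^{s'+1}f$ on unit balls --- which is where the hypothesis $\alpha\neq1$ enters.
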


The standard Poincar\'e and Sobolev inequalities on $\R^n$, with appropriately chosen
weights, also hold on AE manifolds, as shown in \cite[Lem 2.1]{DM15}.

\begin{lem}\label{lem:poincare}
There exist constants $c_1,c_2$ such that
\begin{equation}\label{eq:poincare}
\|\nabla u\|_{p, -n/p} \ge c_1 \|u\|_{p,1-n/p}
\end{equation}
\begin{equation}\label{eq:sobolev}
\|\nabla u\|_2 \ge c_2 \|u\|_{q}
\end{equation}
for all $u\in u\in W^{1,2}_{\delta^*}(M)$ and $p\in[1,n)$.
\end{lem}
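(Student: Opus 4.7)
Both inequalities are weighted incarnations of classical Euclidean estimates. Unpacking the weighted norms, \eqref{eq:poincare} reads
$$\|\nabla u\|_{L^p(M,g)} \ge c_1 \|\rho^{-1}u\|_{L^p(M,g)},$$
which is the Hardy inequality, and \eqref{eq:sobolev} is the Gagliardo--Nirenberg--Sobolev inequality. Both are classical on $\R^n$, so the plan is to transfer them to the AE setting using that $g - g_{Euc}$ tends to $0$ on each end.

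First I would recall the Euclidean versions. On $\R^n$, Hardy for $p\in[1,n)$ follows by integrating by parts against the vector field $X = x|x|^{-p}$, which satisfies $\di X = (n-p)|x|^{-p}$, applying H\"older, and absorbing: this gives $\|u/|x|\|_{L^p}\le \tfrac{p}{n-p}\|\nabla u\|_{L^p}$ whenever $u$ decays at infinity. The Sobolev inequality in the form $\|u\|_{L^q}\le c\|\nabla u\|_{L^2}$ with $q = 2n/(n-2)$ is standard.

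Next I would fix a smooth partition of unity $\{\chi_0,\chi_1,\dots,\chi_N\}$ on $M$ with $\chi_0$ supported in a compact neighbourhood of $K$ and each $\chi_i$ ($i\ge 1$) supported in $E_i$, and write $u=\sum_i \chi_i u$. For each end piece $\chi_i u$ with $i\ge 1$, I would push forward via the asymptotic chart to $\R^n\setminus B_R(0)$, extend by zero, and apply the corresponding Euclidean inequality. Because $g|_{E_i}$ is uniformly comparable to $g_{Euc}$, the $g$-volume element and $g$-gradient agree with their Euclidean counterparts up to a multiplicative constant that is independent of $u$, and the constants from each end combine into a single constant. The product-rule cross terms $u\,\nabla\chi_i$ are bounded, compactly supported, and involve $u$ only on an annular region where $\rho$ is bounded, so they can be folded into the estimate on $\chi_0 u$.

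Finally I would handle the compact piece $\chi_0 u$. On its support $\rho$ is bounded, so $\|\rho^{-1}\chi_0 u\|_{L^p(M)}\le C\|\chi_0 u\|_{L^p(M)}$ and the Sobolev embedding on bounded domains applies directly. What remains is to dominate $\|\chi_0 u\|_{L^p}$ by $\|\nabla u\|_{L^p(M)}$, and this is the main obstacle, because $u$ carries no prescribed boundary condition and is not mean-zero. I would close this step by a contradiction/compactness argument: if no uniform bound held, we could extract a sequence $u_j$ in the relevant weighted Sobolev space for which the claimed inequality fails; the compact inclusion in Proposition~\ref{prop:SobolevEmbeddings}(2) yields a limit $u_\infty$ whose gradient vanishes on $M$ but whose weighted norm on the compact core is nontrivial, contradicting the decay at infinity forced by membership in $W^{1,p}_\delta$ with $\delta<0$ (together with the already proved end estimate, which forces $u_\infty\to 0$ on the ends and hence $u_\infty\equiv 0$ as a constant that decays). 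Assembling the end estimates and the compact estimate through the partition of unity yields \eqref{eq:poincare} and \eqref{eq:sobolev}.
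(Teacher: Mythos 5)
Your proposal is essentially correct, and it is worth noting that the dissertation itself offers no proof of this lemma --- it simply cites \cite[Lem 2.1]{DM15}, whose argument ultimately rests on Bartnik's weighted estimates. Your self-contained route (Euclidean Hardy via integration by parts against $x|x|^{-p}$ on each end, Gagliardo--Nirenberg--Sobolev on each end, a partition of unity, and a compactness argument for the core) is the standard way to prove such statements and all the identifications are right: unpacking the weights, \eqref{eq:poincare} is exactly $\|\nabla u\|_{L^p}\ge c_1\|u/\rho\|_{L^p}$. The one step you should spell out more carefully is the contradiction argument for the core: to invoke the compact inclusion of Proposition \ref{prop:SobolevEmbeddings}(2) you need the failing sequence $u_j$ (normalized by, say, $\|\chi_0 u_j\|_{L^p}=1$ with $\|\nabla u_j\|_{L^p}\to 0$) to be \emph{uniformly bounded} in $W^{1,p}_{1-n/p}$; this does follow, but only by feeding the normalization back through the already-proved end estimate to control $\|u_j/\rho\|_{L^p}$ globally, and then the limit is locally constant with $\int_{E_i}\rho^{-p}=\infty$ forcing the constant to vanish. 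You should also say a word about density of $C^\infty_c$ (or truncation) so that the boundary terms in the integration by parts vanish and the quantity you divide by in the H\"older/absorption step is finite. Finally, a small simplification: once \eqref{eq:poincare} is established for $p=2$, the core estimate for \eqref{eq:sobolev} is free, since $\rho$ is bounded on the core and hence $\|\chi_0 u\|_{L^2}\le C\|u/\rho\|_{L^2}\le C\|\nabla u\|_{L^2}$; no second compactness argument is needed there.
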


We refer the reader to \cite{Bartnik86}
for further properties of weighted Sobolev spaces.

\section{Elliptic Operators}

Elliptic operator theory on AE manifolds is well established, going back at least to
\cite{McOwen79}. For more references, see also 
\cite{Maxwell05b}, \cite{CBIY00}, and appendix C in \cite{CMP12}. While more general
results are available, we focus our attention on the Laplacian and vector Laplacian.
The following result is adapted from \cite{DIMM14}.

\begin{prop}
\label{prop:AELinearExistence}
Suppose $(M^n, g)$ is a $W^{s,p}_\gamma$ AE manifold, with $s\geq 2$, $s>n/p$,
and $\gamma<0$. Suppose $V \in W^{s-2,p}_{\gamma-2}$. Let $\cP$ be either the operator
$-a\Delta + V$ or the operator $\di \frac{1}{2N} L$, where $L$ is the conformal Killing
operator \eqref{eq:CKO}. Then for $\delta\in(2-n,0)$ the operator
\begin{equation}\label{eq:POperator1}
\cP: W^{s,p}_\delta \to W^{s-2,p}_{\delta-2}
\end{equation} is Fredholm of index zero, and
\begin{equation}\label{eq:PEstimate1}
\|u \|_{W^{s,p}_{\delta}} \leq C\left(\|P u\|_{W^{s-2,p}_{\delta-2}}+\|u\|_{L^p_{\delta'}}\right)
\end{equation} holds for some $C>0$, any $\delta'$ and all
$u \in W^{s,p}_\delta$. The map
\eqref{eq:POperator1} is an isomorphism if and only if $\cP$ has trivial null space
in $W^{s,p}_\delta$. If \eqref{eq:POperator1} is an isomorphism, then the estimate
\eqref{eq:PEstimate1} can be strengthened to 
\begin{equation}\label{eq:PEstimate2}
\|u \|_{W^{s,p}_{\delta}} \le C\| \cP u\|_{W^{s-2,p}_{\delta-2}}.
\end{equation}

Similarly, if $(M,g)$ is a $C^{s,\alpha}_\gamma$ AE manifold, then
\begin{equation}
\cP: C^{s,\alpha}_\delta \longrightarrow C^{s-2,\alpha}_{\delta-2}
\end{equation}
is Fredholm of index zero, with a corresponding a priori estimate. If $\cP$ has trivial
nullspace in $C^{s,\alpha}_\delta$, then there exists a constant $C > 0$ such that
\begin{equation}\label{eq:PEstimate3}
\|u\|_{C^{s,\alpha}_\delta} \leq C \| \cP u\|_{C^{s-2,\alpha}_{\delta-2}}.
\end{equation}
\end{prop}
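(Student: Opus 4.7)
The plan is to reduce the statement to the well-developed Fredholm theory of asymptotically translation-invariant elliptic operators on AE manifolds, as in Bartnik, Lockhart--McOwen, and McOwen. The central observation is that both $-a\Delta+V$ and $\di\tfrac{1}{2N}L$ are asymptotic at infinity to constant-coefficient elliptic operators on $\R^n$: the scalar Laplacian $-a\Delta_{\mathrm{Euc}}$ in the first case, and the flat vector Laplacian $\di\tfrac12 L_{\mathrm{Euc}}$ in the second, because $g-g_{\mathrm{Euc}}\in W^{s,p}_\gamma$ with $\gamma<0$, $V\in W^{s-2,p}_{\gamma-2}$, and (assuming $N\to 1$ at infinity, as stated in the seed-data conventions) $1/(2N)\to 1/2$ on each end. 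Both model operators are homogeneous of order two, so the exceptional set of weights associated with them (the set for which the model has nontrivial solutions decaying like $\rho^\delta$) is contained in $\Z$. In particular the open interval $(2-n,0)$ contains no exceptional weight.

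First I would apply the AE Fredholm theory (see Bartnik, Prop.~1.6 and Thm.~1.10 in \cite{Bartnik86}, or the version for systems in the references cited above the proposition) to conclude that for every $\delta\in(2-n,0)$ the map $\cP:W^{s,p}_\delta\to W^{s-2,p}_{\delta-2}$ is Fredholm and satisfies the semi-Fredholm estimate \eqref{eq:PEstimate1}. The proof of this estimate is the standard combination of interior Schauder/$L^p$ estimates, a scaled-annulus argument on each end (using the decay of $g-g_{\mathrm{Euc}}$, $V$, and $N-1$ into $W^{\cdot,p}_{\gamma}$ with $\gamma<0$ to absorb the error terms), and a compact interpolation of the zeroth-order term $\|u\|_{L^p_{\delta'}}$ coming from Prop.~\ref{prop:SobolevEmbeddings}. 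The hypothesis $s>n/p$ is what makes the coefficient $V$ a multiplier on $W^{s-2,p}$, so that all error terms fall under the scope of the standard theory.

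Next I would show that the index is zero. The cleanest route is a continuous-path argument: connect $\cP$ to its Euclidean model by the straight-line homotopy $\cP_t=(1-t)\cP+t\cP_{\mathrm{Euc}}$, where on each end one interpolates $g$, $V$, and $N$ with their Euclidean limits. Each $\cP_t$ has the same model at infinity, hence is Fredholm for $\delta\in(2-n,0)$ with locally uniform semi-Fredholm estimate, so the index is constant in $t$. On Euclidean space, $-a\Delta_{\mathrm{Euc}}$ and $\di\tfrac12 L_{\mathrm{Euc}}$ are isomorphisms $W^{s,p}_\delta\to W^{s-2,p}_{\delta-2}$ for $\delta\in(2-n,0)$ (the scalar case is classical via the Newtonian potential; the vector case follows from the scalar case applied componentwise together with the Helmholtz/Hodge decomposition of $LW$), so both endpoints, and therefore $\cP$ itself, have index zero. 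Once index zero is established, $\cP$ is an isomorphism iff its kernel is trivial, which is exactly the stated characterization.

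Finally, assuming $\cP$ is an isomorphism, the strengthened estimate \eqref{eq:PEstimate2} is obtained by the standard contradiction--compactness argument: if it failed there would be $u_k\in W^{s,p}_\delta$ with $\|u_k\|_{W^{s,p}_\delta}=1$ and $\|\cP u_k\|_{W^{s-2,p}_{\delta-2}}\to 0$; by the compact inclusion $W^{s,p}_\delta\hookrightarrow L^p_{\delta'}$ for $\delta'>\delta$ (Prop.~\ref{prop:SobolevEmbeddings}) one extracts a subsequence converging in $L^p_{\delta'}$, applies \eqref{eq:PEstimate1} to show it is Cauchy in $W^{s,p}_\delta$, and obtains a nonzero element of $\ker\cP$, contradicting injectivity. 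The Hölder statement is proved in the same way, with the weighted Hölder variant of the Fredholm theory, interior Schauder estimates, and the compact inclusion from Prop.~\ref{prop:HolderEmbeddings}. The main obstacle I anticipate is the index-zero claim, since exhibiting a homotopy of Fredholm operators with uniform semi-Fredholm estimate requires a little care to verify that the interpolated coefficients genuinely lie in the right weighted spaces, but no new ideas are needed beyond what is already in \cite{Bartnik86} and the references cited.
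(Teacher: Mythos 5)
The paper does not prove this proposition -- it is stated as adapted from \cite{DIMM14}, with \cite{Bartnik86}, \cite{McOwen79}, \cite{Maxwell05b}, and \cite{CBIY00} cited as background -- so there is no in-text proof to compare against. Evaluating your proposal on its own merits: the reduction to AE Fredholm theory, the a priori estimate \eqref{eq:PEstimate1}, the isomorphism-iff-trivial-kernel consequence once index zero is known, and the contradiction--compactness argument for \eqref{eq:PEstimate2} are all standard and sound (and the H\"older variant is indeed parallel).

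The gap is in the index-zero step. Your homotopy $\cP_t$ interpolates coefficients on $M$; its endpoint at $t=1$ is therefore still an operator on $M$, with exactly Euclidean coefficients far out on each end, not the Euclidean operator on $\R^n$. That $-a\Delta_{\mathrm{Euc}}$ and $\di\tfrac12 L_{\mathrm{Euc}}$ are isomorphisms $W^{s,p}_\delta(\R^n)\to W^{s-2,p}_{\delta-2}(\R^n)$ for $\delta\in(2-n,0)$ does not by itself determine the index of that endpoint on a general AE manifold, which may have nontrivial interior topology and several ends. So your homotopy gives constancy of the index along the path, but it does not identify the value at the endpoint, which is the substance of the claim.

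The argument that actually closes this (and is what underlies the references you cite) uses formal self-adjointness together with the reflective symmetry of the Fredholm interval. Both $-a\Delta+V$ (with $V$ real-valued) and $\di\tfrac{1}{2N}L$ are formally self-adjoint. Under the unweighted $L^2$ pairing the dual of $L^p_\delta$ is $L^{p'}_{-n-\delta}$, so the cokernel of $\cP:W^{s,p}_\delta\to W^{s-2,p}_{\delta-2}$ is identified, after an elliptic-regularity bootstrap, with the kernel of $\cP$ at weight $2-n-\delta$. Hence $\mathrm{ind}(\delta)=-\,\mathrm{ind}(2-n-\delta)$; since $(2-n,0)$ is symmetric about $(2-n)/2$ and the index is locally constant on any interval of Fredholm weights, $\mathrm{ind}\equiv 0$ on $(2-n,0)$. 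Alternatively one may invoke the Lockhart--McOwen index formula directly. Either replaces your homotopy step, and the rest of your outline then goes through.
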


Because Proposition \ref{prop:AELinearExistence} requires $s\geq 2$, 
and for simplicity, we will assume
$s =2$ for the rest of this paper, unless mentioned otherwise. 
In other words, $g$ is either a 
$W^{2,p}_\gamma$ or $C^{2,\alpha}_\gamma$ AE manifold. In any case,
if $g\in W^{s,p}$ for $s>2$ and $s>n/p$, Sobolev embedding 
\ref{prop:SobolevEmbeddings}, implies that $g\in W^{2,p'}$ with $p'>n/2$.

We now prove two maximum principles, taken from \cite{Maxwell05b}.

\begin{prop}[A Maximum Principle for AE Manifolds]
\label{prop:MaxPrinciple}
Suppose $(M,g)$ and $V$ are as in Proposition \ref{prop:AELinearExistence},
and suppose $V\geq 0$. Suppose $u \in W^{2,p}_\delta$ for some 
$\delta<0$. If 
\begin{equation}
-a\Delta u + V u \geq 0,
\end{equation}
then $u\geq 0$.
\end{prop}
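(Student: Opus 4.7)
The plan is to show $u \ge 0$ by proving that the negative part $u^- := \max(-u, 0)$ vanishes identically, via an energy identity obtained by pairing the differential inequality against $u^-$. First, since $s = 2 > n/p$, Proposition \ref{prop:SobolevEmbeddings} will give $u \in C^0_\delta$, so $u$ is continuous and $u(x)\to 0$ as $|x|\to\infty$ on every end. Standard truncation then places $u^- \in W^{1,p}_\delta$ with $\nabla u^- = -\mathbf{1}_{\{u<0\}}\nabla u$ almost everywhere, so pointwise $u\,u^- = -(u^-)^2$ and $\nabla u \cdot \nabla u^- = -|\nabla u^-|^2$.

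Next I would let $\chi_R$ be a smooth cutoff equal to $1$ on the coordinate ball $B_R$, vanishing outside $B_{2R}$, with $|\nabla \chi_R| \le C/R$. Multiplying the inequality $-a\Delta u + Vu \ge 0$ by $\chi_R^2\, u^- \ge 0$, integrating over $M$, and integrating by parts on the Laplacian term would yield
\[
\int_M a\,\chi_R^2\,|\nabla u^-|^2 \;+\; \int_M \chi_R^2\, V\,(u^-)^2 \;\le\; -2\int_M a\,\chi_R\, u^-\, \nabla\chi_R\cdot\nabla u^-.
\]
The right-hand side should tend to zero as $R\to\infty$ by Cauchy--Schwarz, using $|\nabla \chi_R|\le C/R$ together with the weighted decay of $u^-$ and $\nabla u^-$ (the cutoff concentrates in an annulus where both factors are pointwise small). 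Passing to the limit and applying monotone convergence on the left would give $\int_M a|\nabla u^-|^2 + V(u^-)^2 \le 0$. Since $a>0$ and $V\ge 0$, this forces $|\nabla u^-|\equiv 0$, so $u^-$ must be constant; combined with $u^-(x)\to 0$ at infinity, $u^- \equiv 0$ and therefore $u\ge 0$.

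The principal technical obstacle will be justifying the integration by parts and the limit $R\to\infty$ in the weighted $W^{2,p}_\delta$ setting. One must track that each integrand lies in $L^1(M)$ and that the cutoff error vanishes as $R\to\infty$: the weights on $V(u^-)^2$, $|\nabla u^-|^2$, and the annular term $\chi_R u^-\nabla\chi_R\cdot\nabla u^-$ must combine to give finite, asymptotically vanishing integrals. This bookkeeping uses the embedding and multiplication properties of Proposition \ref{prop:SobolevEmbeddings}, but is otherwise standard and is essentially what is carried out in \cite{Maxwell05b}.
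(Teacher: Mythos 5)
Your overall strategy (pair the differential inequality against the negative part and run an energy argument) is the same as the paper's, but there is a genuine gap in the step where you send $R\to\infty$. The hypothesis is only $u\in W^{2,p}_\delta$ for \emph{some} $\delta<0$; nothing forces $\delta\le\delta^*=\frac{2-n}{2}$, which is the threshold for $\int_M|\nabla u|^2$ to be finite. For $\delta\in(\frac{2-n}{2},0)$ one only has $u^-=o(\rho^{\delta})$ and $\nabla u^-=o(\rho^{\delta-1})$, so the annular error term obeys
\begin{equation}
\int_{B_{2R}\setminus B_R}(u^-)^2|\nabla\chi_R|^2\;\lesssim\;R^{-2}\cdot R^{2\delta}\cdot R^{n}\;=\;R^{2\delta+n-2},
\end{equation}
which diverges rather than vanishes, and the same power controls $\int_{B_{2R}\setminus B_R}\chi_R^2|\nabla u^-|^2$. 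So the Cauchy--Schwarz bound on the right-hand side does not go to zero, and indeed the limiting inequality $\int_M a|\nabla u^-|^2+V(u^-)^2\le 0$ is not even meaningful, since $\nabla u^-$ need not lie in $L^2(M)$ at these decay rates. The claim that the bookkeeping is ``otherwise standard'' hides exactly the case the proposition is required to cover.

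The paper avoids this entirely with a small but essential trick: instead of $u^-$, test against $v=(u+\epsilon)^-=\min\{0,u+\epsilon\}$ for $\epsilon>0$. Since $u\in W^{2,p}_\delta\subset C^0_\delta$ with $\delta<0$, $u\to 0$ on each end, so $u+\epsilon>0$ outside a compact set and $v$ is \emph{compactly supported}. Then $\int a|\nabla v|^2=\int -av\Delta u\le\int -Vuv\le 0$ involves only integrals over a fixed compact set, with no cutoff and no decay hypothesis on the Dirichlet energy, giving $u\ge-\epsilon$; letting $\epsilon\to 0$ finishes the proof. If you replace your cutoff-and-limit step with this $\epsilon$-shift (or, equivalently, restrict your argument to the compactly supported set $\{u<-\epsilon\}$), your proof goes through.
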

\begin{proof}
Let 
\begin{equation}
v = (u + \epsilon)^- := \min\{0,u+\epsilon\}
\end{equation} for some $\epsilon>0$. Since $u \to 0$ on each end,
we see that $v$ is compactly
supported. By Sobolev embedding, $v\in W^{1,2}$ as well. Since
\begin{equation}
\int_M a|\nabla v|^2 = \int_M -av \Delta u \leq \int_M -Vu v \leq  0,
\end{equation} we know $u\geq -\epsilon$. Letting $\epsilon \to 0$,
we find $u \geq 0$.
\end{proof}

\begin{prop}[A Strong Maximum Principle for AE Manifolds]\label{prop:StrongMaxPrinciple}
Suppose $(M,g)$ and $V$ are as in Proposition \ref{prop:AELinearExistence}. Suppose
$u \in W^{2,p}_{loc}$ is nonnegative and satisfies
\begin{equation}
-a\Delta u + Vu \geq 0.
\end{equation} If $u(x) = 0$ somewhere, then $u \equiv 0$.
\end{prop}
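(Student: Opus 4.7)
The plan is to reduce the statement to the classical local strong maximum principle, after first handling the fact that $V$ is not assumed to have a sign. Since $u \geq 0$, write $V^{+} = \max\{V, 0\}$ and observe that $V^{+} u \geq V u$ pointwise, so
\begin{equation*}
-a\Delta u + V^{+} u \;\geq\; -a\Delta u + V u \;\geq\; 0,
\end{equation*}
and now the coefficient $V^{+}$ is nonnegative. By the Sobolev embedding properties in Proposition \ref{prop:SobolevEmbeddings}, the hypothesis $V \in W^{s-2,p}_{\gamma-2}$ with $s \geq 2$ and $s > n/p$ gives $V \in L^{q}_{\mathrm{loc}}$ for some $q > n/2$, and the same is true for $V^{+}$. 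Similarly $u \in W^{2,p}_{\mathrm{loc}}$ with $p > n/2$ is continuous, so the set where $u$ vanishes is well-defined and closed.

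Next I would work in a local coordinate patch around the point $x_0$ where $u(x_0) = 0$. In such a patch, writing $\Delta$ out in coordinates produces a uniformly elliptic second-order operator with coefficients controlled by the $W^{2,p}_{\gamma}$ (or $C^{2,\alpha}_{\gamma}$) regularity of $g$. Setting $w = -u \leq 0$, the inequality above becomes
\begin{equation*}
a \Delta w + (-V^{+}) w \;\geq\; 0,
\end{equation*}
i.e., $w$ is a strong $W^{2,p}_{\mathrm{loc}}$ subsolution of a linear elliptic operator with nonpositive zero-order coefficient $-V^{+} \leq 0$. Since $w$ attains its supremum $0$ at the interior point $x_0$, the classical strong maximum principle for strong solutions (e.g.\ Gilbarg--Trudinger, Theorem 9.6) implies $w \equiv 0$ on a full neighborhood of $x_0$, and therefore $u \equiv 0$ near $x_0$.

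To finish, I would run an open-closed connectedness argument. Let $Z = \{x \in M : u(x) = 0\}$. Then $Z$ is closed by continuity of $u$, nonempty by hypothesis, and open by the previous step applied at each of its points. Since $M$ is connected, $Z = M$, i.e., $u \equiv 0$.

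The only genuinely delicate point is the regularity check: one needs that $u$ is continuous and that $V^{+}$ lies in $L^{q}_{\mathrm{loc}}$ for some $q > n/2$ so that the classical strong maximum principle for strong solutions can be invoked, but both are direct consequences of the hypotheses via Proposition \ref{prop:SobolevEmbeddings}. The reduction step $V \to V^{+}$, trivial as it looks, is the essential move that makes the absence of a sign condition on $V$ compatible with the classical statement.
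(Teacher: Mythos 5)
Your overall skeleton (propagate the zero set locally, then run an open--closed argument on the connected manifold) is exactly the paper's, but the local step as you justify it has a genuine gap: Gilbarg--Trudinger Theorem 9.6 is a strong maximum principle for \emph{strong} solutions in $W^{2,n}_{\mathrm{loc}}$ with (essentially) bounded lower-order coefficients, and neither hypothesis is available here. The proposition only assumes $u\in W^{2,p}_{\mathrm{loc}}$ with $p>n/2$, and $p$ may well lie in $(n/2,n)$, so $u$ need not be in $W^{2,n}_{\mathrm{loc}}$; likewise $V$ (hence $V^{+}$) is only in $L^{q}_{\mathrm{loc}}$ for some $q>n/2$, not in $L^\infty_{\mathrm{loc}}$. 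Your closing remark that the classical theorem can be invoked once $V^{+}\in L^{q}_{\mathrm{loc}}$, $q>n/2$, misstates its hypotheses — that is precisely the regularity regime the classical statement does \emph{not} cover, and the ABP-based machinery behind Theorem 9.6 genuinely breaks below $W^{2,n}$.

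The correct tool, and the one the paper uses, is the weak Harnack inequality of Trudinger (1973) for nonnegative supersolutions of elliptic operators whose zero-order coefficient lies only in $L^{q}$ with $q>n/2$: for small $R$ one gets $\|u\|_{L^{q_0}(B_{2R}(x))}\le C\inf_{B_R(x)}u=0$, so $u$ vanishes on a neighborhood of $x$, and then your connectedness argument finishes the proof. Note that in this formulation the constant depends on $\|V\|_{L^q}$ but not on the sign of $V$, so the truncation $V\mapsto V^{+}$ — while harmless — is not actually needed; the real content of the proof is choosing a Harnack-type inequality whose hypotheses match the low regularity of $u$ and $V$.
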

\begin{proof}
Suppose $u(x) = 0$. The
weak Harnack inequality from \cite{Trudinger73} applies to $u$; i.e., for
some radius $R$ sufficiently small and some exponent $q$ sufficiently large, there exists
$C>0$ such that
\begin{equation}
\|u\|_{L^q(B_{2R}(x))} \leq C \inf_{B_R(x)} u = 0.
\end{equation} Thus $u$ vanishes on a neighborhood of $x$, and a connectivity argument
shows that $u \equiv 0$.
\end{proof}

In order to discuss when the operator $\cP = -a\Delta +V$ is an isomorphism,
we need to first 
discuss the Yamabe invariant. The Yamabe invariant on AE manifolds is defined similarly
to how it is defined on compact manifolds (cf. Equation \eqref{eq:DefYamabeInvariant}),
except that the test functions must have compact support. Precisely,
\begin{equation}\label{eq:DefYamabeInvariantAE}
Y(g) := \inf_{u \in C^\infty_0(M), u\not\equiv 0} 
   \frac{\int_M a|\nabla u|^2 + Ru^2}{\|u\|_q^2},
\end{equation} where $C^\infty_0(M)$ represents smooth functions with compact support.
As before, we say $g$ is Yamabe positive if $Y(g)>0$. The Yamabe classes on AE
manifolds appear to behave very differently than the Yamabe classes 
on compact manifolds, but they are in
fact equivalent to each other in some sense. This idea is discussed
further in Chapter \ref{chap:Yamabe}.
We now can prove the following isomorphism theorem.

\begin{prop}\label{prop:Isomorphism}
The operator $-a\Delta +V$ from Equation \eqref{eq:POperator1} is an isomorphism either
if $V\geq 0$ or if $V=R_g$ and $g$ is Yamabe positive. 

No $C^{2,\alpha}_\gamma$ AE manifold allows a conformal Killing field in 
$C^{2,\alpha}_\delta$. Also, if $p>n$, no $W^{2,p}_\gamma$ AE manifold allows a 
conformal Killing field in $W^{2,p}_\delta$.
Thus the operator $\di \frac{1}{2N} L$ is always an isomorphism on $C^{2,\alpha}_\delta$,
and is an isomorphism on $W^{2,p}_\delta$ if $p>n$ or if the metric admits no conformal
Killing fields.
\end{prop}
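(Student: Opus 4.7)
The strategy is to leverage the index-zero Fredholm property from Proposition \ref{prop:AELinearExistence}: in each case, isomorphism reduces to triviality of the kernel on the relevant weighted space.

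For $\cP = -a\Delta + V$ with $V \geq 0$, if $\cP u = 0$ then Proposition \ref{prop:MaxPrinciple} applied to both $u$ and $-u$ gives $u \geq 0$ and $u \leq 0$, so $u \equiv 0$. For $\cP = -a\Delta + R_g$ with $g$ Yamabe positive, suppose $\cP u = 0$ with $u \in W^{2,p}_\delta$, $\delta \in (2-n, 0)$. The decay of $u$ and its derivatives coming from the weighted Sobolev embeddings (Proposition \ref{prop:SobolevEmbeddings}) justifies integration by parts, yielding
\begin{equation*}
\int_M a|\nabla u|^2 + R_g u^2 = 0.
\end{equation*}
I would then approximate $u$ by a sequence $u_k \in C^\infty_0(M)$ converging in $W^{1,2}$ and $L^q$ (via standard cutoff arguments on each end), apply the Yamabe functional \eqref{eq:DefYamabeInvariantAE} to each $u_k$, and pass to the limit to obtain $Y(g)\|u\|_q^2 \leq 0$. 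Since $Y(g) > 0$, $u \equiv 0$ follows. The $C^{2,\alpha}_\delta$ case is handled identically after embedding into $W^{2,p}_\delta$ via Proposition \ref{prop:HolderEmbeddings}.

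For the vector Laplacian $\cP = \di \frac{1}{2N} L$, suppose $W \in W^{2,p}_\delta$ (or $C^{2,\alpha}_\delta$) is in the kernel. Pairing with $W$ and integrating by parts (valid by the weighted decay of $W$) gives
\begin{equation*}
0 = -\int_M \big\langle W,\, \di \tfrac{1}{2N} LW \big\rangle = \int_M \frac{|LW|^2}{4N},
\end{equation*}
so $W$ is a conformal Killing field of $g$. The remaining content is then to show that an AE manifold admits no nontrivial conformal Killing field in $C^{2,\alpha}_\delta$, or in $W^{2,p}_\delta$ when $p > n$. In these regimes, Proposition \ref{prop:SobolevEmbeddings} ensures pointwise decay of $W$ on each end. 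Pulling back to $\R^n \setminus B_R(0)$ using the end chart, $W$ satisfies a perturbation (in $g - g_{Euc}$) of the Euclidean conformal Killing equation. The Euclidean conformal Killing fields (translations, rotations, dilations, and special conformal transformations) are all nondecaying, so matching $W$ asymptotically forces the leading term to vanish; iterating order-by-order shows $W$ vanishes to infinite order at infinity, and unique continuation for the overdetermined elliptic system $LW = 0$ then forces $W \equiv 0$ globally.

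The main obstacle is the asymptotic vanishing argument for conformal Killing fields: carefully handling the $g - g_{Euc}$ perturbation in the expansion and invoking a unique continuation result for the prolonged conformal Killing operator. This is also where the hypothesis $p > n$ is essential in the Sobolev setting, as weaker Sobolev exponents fail to give the pointwise control of $W$ (and its first derivatives) needed for the leading-order asymptotic matching to have content.
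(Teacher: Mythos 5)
Your proof is correct and follows essentially the same route as the paper's: maximum principle applied to $\pm u$ for $V\geq 0$, integration by parts plus approximation by compactly supported test functions against the definition \eqref{eq:DefYamabeInvariantAE} for the Yamabe-positive case, and the pairing $0=\int \langle W, \di\tfrac{1}{2N}LW\rangle = -\int \tfrac{1}{4N}|LW|^2$ to identify the kernel of the vector Laplacian with the conformal Killing fields. The only divergence is that the paper outsources the nonexistence of decaying conformal Killing fields to a citation of \cite{Maxwell05b}, whereas you sketch the standard argument (asymptotic matching against the nondecaying Euclidean conformal Killing fields followed by unique continuation for the overdetermined system $LW=0$); that sketch is sound and correctly identifies $p>n$ as the hypothesis needed for pointwise control of $W$ and $\nabla W$, though the cited proof is usually phrased via the prolongation of the conformal Killing equation rather than order-by-order expansion.
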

\begin{proof}
Suppose $V\geq 0$ and $(-a\Delta+V)u = 0$. By the maximum principle 
\ref{prop:MaxPrinciple}, $u\geq0$
and $-u\geq 0$. Thus $u\equiv 0$, and so $-a\Delta +V$ is an isomorphism. 

Suppose $V= R_g$ and $g$ is Yamabe positive, but that $-a\Delta+R$ is not
an isomorphism. Then there exists a nontrivial solution 
$u\in W^{2,p}_\delta \subset C^{0,\alpha}_\delta$ solving $(-a\Delta+R)u = 0$. 
By Sobolev embedding 
\ref{prop:SobolevEmbeddings}, $u \in L^q$. Integration by parts implies that
$\int a|\nabla u|^2 + R u^2 = 0$. Estimating $u$ by smooth functions with compact 
support, we find that $Y(g) =0$ (see the definition \eqref{eq:DefYamabeInvariantAE}),
which is a contradiction.

The facts about conformal Killing fields are found in \cite{Maxwell05b}. That the kernel
of $\di \frac{1}{2N} L$ is the set of conformal Killing fields follows from the 
calculation \eqref{eq:KillingFieldsBad}.
\end{proof}

On compact manifolds, the constants are harmonic functions; i.e., they satisfy 
$\Delta u = 0$. On AE manifolds, the constants are harmonic functions, but in addition,
 if for each end $E_i$, we
specify a constant $u_i$, there is a unique harmonic function $u$ such that 
$u-u_i \in W^{2,p}_\delta(E_i)$ (cf. \cite[Lem 4.1]{DIMM14}). 
We introduce the following notation.

\begin{defn}\label{defn:AsymptoticFunction}
For any set of constants $u_i$, the ``asymptotic function" $\mru$ is the unique harmonic
function such
that $\mru \to u_i$ on $E_i$. Such a function has the same regularity as the metric;
i.e., if $(M,g)$ is a $W^{2,p}_\gamma$ AE manifold, $\mru \in W^{2,p}$ and 
$\mru - u_i \in W^{2,p}_\delta(E_i)$ for any $\delta \in (2-n, 0)$. The existence of
such a function is guaranteed by \cite[Lem 4.1]{DIMM14}. When we refer to
$\mru$ we do not mention the constants $u_i$. 
\end{defn}

\begin{cor}
\label{lem:HarmonicBoundedness}
The function $\mru$ satisfies $\min_i u_i \leq \mru \leq \max_i u_i$ with
equality if and only if $\min_i u_i = \max_i u_i$.
\end{cor}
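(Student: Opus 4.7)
The plan is to reduce the statement to the two maximum principles from Propositions \ref{prop:MaxPrinciple} and \ref{prop:StrongMaxPrinciple} by subtracting off the constant $m:=\min_i u_i$ (or $M:=\max_i u_i$) and applying them to the harmonic function $w := \mru - m$.

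First, I would establish the lower bound $\mru \geq m$. The subtlety is that $w = \mru - m$ need not lie in $W^{2,p}_\delta$, since $u_i - m$ may be strictly positive on some ends; hence Proposition \ref{prop:MaxPrinciple} does not apply directly. Instead I would mimic the proof of Proposition \ref{prop:MaxPrinciple}: for $\epsilon>0$, set
\begin{equation}
v_\epsilon := (\mru - m + \epsilon)^{-} = \min\{0, \mru - m + \epsilon\}.
\end{equation}
Because $\mru \to u_i \geq m$ on each end, one has $\mru - m + \epsilon \to u_i - m + \epsilon \geq \epsilon > 0$ at infinity, so $v_\epsilon$ has compact support, and by Sobolev embedding lies in $W^{1,2}$. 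Then the identity $\nabla v_\epsilon = \chi_{\{\mru<m-\epsilon\}}\nabla \mru$ combined with integration by parts against the harmonic function $\mru$ gives
\begin{equation}
\int_M |\nabla v_\epsilon|^2 = \int_M \nabla v_\epsilon \cdot \nabla \mru = -\int_M v_\epsilon \Delta \mru = 0,
\end{equation}
so $v_\epsilon$ is locally constant. Since $v_\epsilon$ vanishes outside a compact set it must vanish identically, giving $\mru \geq m - \epsilon$. Letting $\epsilon\to 0$ yields $\mru \geq m$. The argument for $\mru \leq M$ is identical, applied to $M - \mru$.

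For the equality case, I would use the strong maximum principle. If $\mru(x_0) = m$ at some point $x_0$, then $w = \mru - m$ is a nonnegative $W^{2,p}_{\loc}$ function satisfying $-a\Delta w = 0$ and vanishing at $x_0$, so Proposition \ref{prop:StrongMaxPrinciple} (applied with $V\equiv 0$) forces $w\equiv 0$, i.e., $\mru \equiv m$. Taking limits on each end then yields $u_i = m$ for every $i$, whence $\min_i u_i = \max_i u_i$. The analogous argument handles equality with $M$.

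The main obstacle, if any, is the first step: because $m$ may equal $u_i$ for some index $i$, $w = \mru - m$ does not decay on every end and the hypotheses of Proposition \ref{prop:MaxPrinciple} fail. The $\epsilon$-regularization above circumvents this by ensuring the truncated function $v_\epsilon$ is compactly supported, which is exactly what is needed to carry out the integration-by-parts step without boundary contributions at infinity.
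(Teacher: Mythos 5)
Your proof is correct and follows essentially the same route as the paper's: the weak maximum principle gives the two-sided bound and the strong maximum principle handles the equality case. You are in fact more careful than the paper, which cites Proposition \ref{prop:MaxPrinciple} directly even though $\mru - \min_i u_i$ need not lie in $W^{2,p}_\delta$; your $\epsilon$-shifted truncation is exactly the right way to re-run that proposition's proof under the weaker hypothesis that the limits at infinity are nonnegative.
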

\begin{proof}
The maximum principle \ref{prop:MaxPrinciple} implies that
$\min_i u_i \leq \mru \leq \max_i u_i$. If $\min_i u_i = \mru$ somewhere,
the strong maximum principle \ref{prop:StrongMaxPrinciple} implies that
$\min_i u_i = \sup u = \max_i u_i$.
If $\min_i u_i = \max_i u_i$, all the $u_i$ are the same, and so
$\mru \equiv u_i$ is the desired harmonic function.
\end{proof}

When searching for solutions
to the Lichnerowicz equation \eqref{eq:OrigLich}, there is no reason to restrict ourselves
to $\phi$ such that $\phi\to 1$ on each end, since $\phi$ approaching any other constant
simply scales the Euclidean coordinates on that end. Indeed, as we
see in Chapter \ref{chap:FarFromCMC}, assuming $\phi$ approaches some other
constant can assist in finding solutions. Thus, we generally assume $\phi \to \mru$
on each end.

While the operator $\di \frac{1}{2N} L$ appears in a linear equation
\eqref{eq:OrigVect}, the Lichnerowicz equation \eqref{eq:OrigLich} is semilinear,
and so Proposition \ref{prop:Isomorphism} is not sufficient to find solutions of
this equation. A 
useful tool for finding solutions to semilinear equations is the method of sub and
supersolutions.

Consider the nonlinear problem
\begin{equation}\label{eq:Semilinear}
-a\Delta u = f(x,u)
\end{equation}
for a function $f(x,y):M\times \R\to \R$ which takes the form 
$f(x,y) = \sum_{i=1}^j a_i(x) y^{b_i}$ for specified functions $a_i$ and constants
$b_i$, where we use the convention that $y^{b_i} \equiv 1$ if $b_i =0$.
We also assume that $a_i(x) \in L^{p}_{\delta-2}$ for some $\delta<0$ (or, similarly,
that $a_i(x) \in C^{0,\alpha}_{\delta-2}$). Note that, depending on the value(s) of $b_i$,
$y^{b_i}$ is smooth on $(0,\infty)$, $[0,\infty)$, or $(-\infty,\infty)$. 
We say a function
$f$ is `` regular" if it satisfies these properties, and the largest interval for which
all the $y^{b_i}$ are smooth is $f$'s ``interval of regularity" $I$. Note that the
Lichnerowicz
equation \eqref{eq:OrigLich} takes this form, as long as we require sufficient regularity
of the seed data. Recall that $u_-$ is called a subsolution
of \eqref{eq:Semilinear} if $-a\Delta u_- \leq f(x,u_-)$, and similarly (with $\geq$ 
replacing $\leq$) for a supersolution $u_+$.

\begin{thm}[Sub and Supersolution Theorem for AE Manifolds]
\label{thm:SubSupersolutionTheorem}
Let $(M,g)$ be a $W^{2,p}_\gamma$ AE manifold with $p>n/2$ and
$\gamma<0$. Suppose $f(x,y)$ is regular (as defined above) for some $\delta \in (2-n,0)$.
Suppose that there are sub and supersolutions $u_\pm\in L^\infty$ such that
$u_-\leq u_+$ and $\inf u_- \in I$. Suppose $\mru$ is such that,
sufficiently far
out on each end, $u_- \leq \mru \leq u_+$.
Then Equation \eqref{eq:Semilinear} admits a solution $u$ such that
$u_- \leq u \leq u_+$ and $u- \mru \in W^{2,p}_{\delta}$.

A similar theorem holds for $C^{2,\alpha}_\gamma$ AE manifolds if $f$ is 
$C^{0,\alpha}_\delta$ regular and $u_\pm \in C^{0,\alpha}$.
The solution then satisfies $u-\mru \in C^{2,\alpha}_{\delta}$.
\end{thm}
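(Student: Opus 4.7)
The plan is to use monotone iteration, adapted to the asymptotic setting of AE manifolds.

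Write $u = \mru + v$; since $\mru$ is harmonic, the equation $-a\Delta u = f(x,u)$ becomes $-a\Delta v = f(x, \mru + v)$, and the target conclusion $u - \mru \in W^{2,p}_\delta$ becomes $v \in W^{2,p}_\delta$. I would choose a nonnegative shift $K(x) \in L^p_{\delta-2}$ large enough that $y \mapsto K(x)y + f(x,y)$ is nondecreasing in $y$ throughout the compact bracket $J := [\inf u_-, \sup u_+] \subset I$. Since $\partial_y f(x,y) = \sum_i b_i a_i(x) y^{b_i-1}$ is dominated on $J$ by $C_J \sum_i |a_i(x)|$, the concrete choice $K(x) := C_J \sum_i |a_i(x)|$ works. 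By Proposition \ref{prop:Isomorphism} the shifted operator $-a\Delta + K : W^{2,p}_\delta \to L^p_{\delta-2}$ is an isomorphism.

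Set $u_0 := u_-$ and inductively define $u_{k+1}$ as the unique function with $u_{k+1} - \mru \in W^{2,p}_\delta$ solving
\begin{equation*}
(-a\Delta + K) u_{k+1} = K u_k + f(x, u_k).
\end{equation*}
Writing $u_{k+1} = \mru + v_{k+1}$ and using $-a\Delta \mru = 0$ reduces this to a linear equation for $v_{k+1} \in W^{2,p}_\delta$ whose right-hand side $K(u_k - \mru) + f(x, u_k)$ lies in $L^p_{\delta-2}$, since $u_k$ is uniformly bounded and the $a_i$ have the required decay. I would then prove by induction that $u_- \leq u_k \leq u_{k+1} \leq u_+$: the inductive step follows from
\begin{equation*}
(-a\Delta + K)(u_{k+1} - u_k) = [Ku_k + f(x,u_k)] - [Ku_{k-1} + f(x,u_{k-1})] \geq 0
\end{equation*}
by monotonicity of $Ky + f(x,y)$, with the base case $u_- \leq u_1$ provided by the subsolution inequality and the upper bound $u_{k+1} \leq u_+$ obtained analogously from the supersolution inequality. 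Because these differences are merely bounded rather than in $W^{2,p}_\delta$, Proposition \ref{prop:MaxPrinciple} does not apply directly; instead the asymptotic bracketing $u_- \leq \mru \leq u_+$ outside a compact set, combined with $u_{k+1} - \mru \in W^{2,p}_\delta$, forces the differences to be nonnegative near infinity. Their potential negativity sets are therefore relatively compact open sets vanishing on their boundaries, and a standard maximum principle for $-a\Delta + K$ with $K \geq 0$ on bounded domains rules out interior negativity.

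The monotone pointwise limit $u := \lim_k u_k$ lies in $[u_-, u_+]$. Dominated convergence together with interior $W^{2,p}$ elliptic regularity gives $u_k \to u$ in $W^{2,p}_{\loc}$ with $u$ solving $-a\Delta u = f(x, u)$. Applying the isomorphism estimate \eqref{eq:PEstimate2} to $v_k$ gives
\begin{equation*}
\|v_k\|_{W^{2,p}_\delta} \leq C \bigl(\|K\|_{L^p_{\delta-2}} \|v_{k-1}\|_{L^\infty} + \|f(\cdot,u_{k-1})\|_{L^p_{\delta-2}}\bigr),
\end{equation*}
which is uniformly bounded since $u_{k-1}$ stays in $J$. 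A weak compactness argument then places $u - \mru$ in $W^{2,p}_\delta$. The $C^{2,\alpha}$ case is handled identically using estimate \eqref{eq:PEstimate3}. The main technical obstacle is the maximum principle step: arranging that the asymptotic bracketing of $\mru$ by $u_\pm$ is strong enough to override the fact that the iterates' differences need not themselves lie in any weighted decay space, so that one is forced to localize to a compact region where classical interior maximum principles apply.
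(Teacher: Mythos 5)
Your proof takes essentially the same route as the paper's: monotone iteration starting from the subsolution, with a shifted operator $-a\Delta+K$ chosen so that $y \mapsto Ky + f(x,y)$ is nondecreasing on the range of $u_\pm$, then passing to the limit via a uniform $W^{2,p}_\delta$ bound and compactness. In fact, your presentation silently corrects two slips in the paper's proof: you put $K \in L^p_{\delta-2}$ (the paper writes $L^p_\delta$, which is too weak to make $-a\Delta+K$ map $W^{2,p}_\delta\to L^p_{\delta-2}$), and your choice $K = C_J \sum_i |a_i|$ dominating $|\partial_y f|$ on $J$ corresponds to the requirement $k + \inf_y f_y \geq 0$, whereas the paper writes $\sup_y f_y$, which is not the condition actually used in the telescoping inequality.

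There is one imprecision in your maximum-principle step worth fixing. From $u_1 - \mru \in W^{2,p}_\delta$ and $\mru \geq u_-$ near infinity you can conclude only $u_1 - u_- \geq (u_1-\mru) \to 0$, so $u_1-u_-$ is bounded below by $-\epsilon$ outside a compact set (for each $\epsilon>0$), not genuinely nonnegative near infinity. Consequently $\{u_1-u_-<0\}$ need not be relatively compact, and the ``localize to a bounded domain'' argument you describe does not close as written. The standard repair, which is what the paper's own Proposition \ref{prop:MaxPrinciple} proof encodes, is to work with $v_\epsilon := (u_1 - u_- + \epsilon)^-$: this is compactly supported for each $\epsilon>0$, the integration-by-parts inequality $\int a|\nabla v_\epsilon|^2 \le -\int K(u_1-u_-)v_\epsilon \leq 0$ forces $v_\epsilon \equiv 0$, and letting $\epsilon\to 0$ gives $u_1\geq u_-$. (The same $\epsilon$-regularization handles $u_+-u_k$; for $k\geq 1$ the differences $u_{k+1}-u_k$ actually lie in $W^{2,p}_\delta$ and Proposition \ref{prop:MaxPrinciple} applies directly.) With this adjustment your argument is complete and tracks the paper's.
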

\begin{remark}
This is essentially Theorem 1 in Appendix B.2. in \cite{CBIY00}, but with lower regularity
requirements, and generalized asymptotics. We mirror their proof. Note that
if some $b_i <0$, the theorem requires $u_->\epsilon>0$ for some $\epsilon>0$.
\end{remark}
\begin{proof}
We only prove the Sobolev case. The H\"older case is proven similarly.

We construct a solution by induction, starting from $\phi_-$. Let $k$ be a positive
function on $M$ such that $k \in L^p_{\delta}$ and
\begin{equation}\label{eq:kDef}
k(x) + \sup_{y \in \textrm{Range}(u_\pm)} f_y(x,y) \geq 0.
\end{equation} Such a $k$
exists by our assumptions on $u_\pm$ and $f$. 

Let $v_1\in W^{2,p}_{\delta}$ be the unique solution to
\begin{equation}
-a\Delta v_1 + kv_1 = f(x,u_-) + k(u_- -\mru)
\end{equation} and let $u_1 = v_1 + \mru$. The solution $v_1$ exists by Proposition
\ref{prop:Isomorphism}. 

Using the equality and inequality satisfied by $v_1$ and $u_-$ respectively, we find that
\begin{equation}
-a\Delta(u_1- u_-) +k(u_1-u_-) \geq 0.
\end{equation} By the maximum principle \ref{prop:MaxPrinciple}, $u_1 \geq u_-$.
Similarly,
\begin{align}
-a\Delta(u_+ - u_1) + k(u_+-u_1) &\geq f(x,u_+) - f(x,u_-) + k(u_+-u_-)\\
&= (u_+-u_-) \left(k + \int_0^1 f_y(x,u_-+t(u_+-u_-)) dt\right)\\
&\geq 0,
\end{align} where the last line holds by our assumption on $k$, Equation \eqref{eq:kDef}.
Again by the maximum principle \ref{prop:MaxPrinciple}, $u_1\leq u_+$.

We then let $u_i = v_i + \mru$, where $v_i \in W^{2,p}_\delta$ solves
\begin{equation}
-a\Delta v_i + k v_i = f(x,u_{i-1}) - k v_{i-1}.
\end{equation} Again using the maximum principle, we can show that $u_i$ is an
increasing sequence; i.e.,
\begin{equation}
u_- \leq u_1 \leq u_2 \leq \cdots \leq u_{i-1} \leq u_i \leq \cdots \leq u_+.
\end{equation} Since the $u_i$ constitute a bounded increasing sequence, the $u_i$ converge
to some function $u$ with $u_- \leq u \leq u_+$. We claim that $u$ is a 
solution of Equation \eqref{eq:Semilinear}.

From Proposition \ref{prop:AELinearExistence}, we have
\begin{equation}
\|v_{i+1}\|_{W^{2,p}_\delta} \leq C \|f(x,u_i) - k v_i\|_{L^p_\delta}.
\end{equation} The right hand side is uniformly bounded by our assumptions on
$k$ and $f$, and since $v_i$ and $u_i$ are bounded. Thus $v_i$ is uniformly
bounded in $W^{2,p}_\delta$. 

The compact embedding of $W^{2,p}_\delta$ into 
$C^{0,\alpha}_{\delta'}$ for any $\delta'> \delta$ and some $\alpha>0$
from Proposition \ref{prop:SobolevEmbeddings}
implies that $u_i \to u$ in $C^{0,\alpha}_{\delta'}$, and that 
$u-\mru \in W^{2,p}_\delta$.
This convergence implies that $f(x,u_{i-1}) - kv_{i-1}$ converges in $L^{p}_\delta$,
and so, since $-a\Delta +k$ is an isomorphism, $u_i$ must converge to $u$ in 
$W^{2,p}_\delta$. Thus $-a\Delta u = f(x,u)$, as 
desired.
\end{proof} 

\chapter{Solvability of the Lichnerowicz Equation} \label{chap:IFF}

The results in this chapter will appear in \cite{DGI15}.

The Lichnerowicz equation \eqref{eq:OrigLich}
 is a semilinear elliptic equation. Because of the mixed sign of the
exponents, it is of a type not generally studied. However, with appropriate sign
restrictions on the coefficients, we can fully understand this equation. Recall that
in the compact case, the solvability of the Lichnerowicz equation is given by
Table \ref{table:Solvability}, with one caveat. Namely, if $g$ is Yamabe negative,
the Lichnerowicz equation is solvable if and only if $g$ can be
conformally transformed to a metric with scalar curvature $-\kappa \tau^2$. The main
result of this chapter is that, regardless of Yamabe class, the Lichnerowicz equation
on AE manifolds is solvable if and only if $g$ can be conformally transformed to a
metric with scalar curvature $-\kappa \tau^2$. 

First we must discuss what kind of data we are looking for when we discuss asymptotically
Euclidean initial data. Clearly we want an AE manifold, $(M,\gbar)$. The usual regularity
we desire for $\gbar$ is $W^{2,p}_\delta$. However, we also need $\Kbar$ to decay 
at infinity. Heuristically, we want this so that our spacelike slice $(M,\gbar)$ is not
curled up inside the spacetime near infinity. We thus require 
$\Kbar \in W^{1,p}_{\delta-1}$. In order to guarantee this regularity,
we require that our seed data satisfies
\begin{equation} \label{eq:DataRegularity}
(g - g_{Euc}, \tau, N-1, \sigma,r,J) \in W^{2,p}_\delta \times W^{1,p}_{\delta-1} \times
   W^{2,p}_\delta
   \times L^{2p}_{\delta-1} \times L^p_{\delta-2} \times L^p_{\delta-2},
\end{equation} with $p>n/2$ and $\delta \in (2-n,0)$, and similarly for 
$C^{2,\alpha}_\delta$ seed data. We then seek a solution
$(\phi, W)$ with $\phi-\mru \in W^{2,p}_\delta$ and $W \in W^{2,p}_\delta$. The 
reconstructed initial data $(\gbar, \Kbar, T_{nn}, T_{in})$ from 
\eqref{eq:ReconstructedData} then has the desired regularity.

We must make one additional restriction on the sign of the seed data. In particular, we
must require $r\geq 0$. This is known as the weak energy condition, and is simply
requiring that the matter density can never be negative. 
Equivalently, it says that gravity is always an attractive force. This is a physically
reasonable assumption, though not strictly necessary for general relativity. For
instance, solutions of the Einstein equations with stable, traversable wormholes require
matter with negative energy density.

We can now prove the main result of this chapter.

\begin{thm} [Curvature Criterion for AE Solutions to the Lichnerowicz Equation]
\label{thm:LichIff}
Suppose that $(M, g)$ is a $W^{2,p}_\delta$ AE manifold with $p>n/2$ and 
$\delta \in (2-n,0)$. Assume that $r$, $\left|\sigma+\frac{1}{2N} LW\right|^2$
and $\tau^2$ are all contained
in $L^{p}_{\delta-2}$, and that $r\geq 0$. Then the Lichnerowicz equation 
\eqref{eq:OrigLich} has a positive solution $\phi$ with $\phi-\mru \in W^{2,p}_\delta$ 
if and only if there exists a positive conformal factor $\psi$ with
$\psi-\mru'\in W^{2,p}_\delta$ such that $\gbar = \psi^{q-2} g$ has scalar curvature
$-\kappa \tau^2$. The $\mru$ and $\mru'$ are two positive 
asymptotic functions, as defined in
Definition \ref{defn:AsymptoticFunction}. A similar result holds for 
$C^{2,\alpha}_\delta$ regularity.
\end{thm}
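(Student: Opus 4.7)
The plan is to reduce both directions to an application of the sub-/supersolution Theorem \ref{thm:SubSupersolutionTheorem}, exploiting the conformal covariance (Proposition \ref{prop:ConformalCovariance}) and the scalar-curvature transformation rule $R_{\psi^{q-2}g} = \psi^{1-q}(-a\Delta\psi + R\psi)$. This identity shows that the condition ``$\psi^{q-2}g$ has scalar curvature $-\kappa\tau^2$'' is exactly the reduced Lichnerowicz equation
\[
-a\Delta\psi + R\psi + \kappa\tau^2\psi^{q-1}=0,
\]
i.e.\ \eqref{eq:OrigLich} with the nonnegative ``stress'' terms $\bigl|\sigma+\tfrac{1}{2N}LW\bigr|^2\phi^{-q-1}$ and $r\phi^{-q/2}$ deleted.

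For the $(\Leftarrow)$ direction I would use Proposition \ref{prop:ConformalCovariance} to replace $g$ by $\psi^{q-2}g$, so that without loss of generality $R=-\kappa\tau^2$, and then attack the Lichnerowicz equation with the sub-/supersolution method. Any small constant $\phi_-=\epsilon\in(0,1)$ is immediately a subsolution, since the equation evaluated on it gives $\kappa\tau^2\epsilon(\epsilon^{q-2}-1)-|\sigma+\tfrac{1}{2N}LW|^2\epsilon^{-q-1}-r\epsilon^{-q/2}\le 0$. For the supersolution I would take $\phi_+=1+v$, with $v\ge 0$ the unique $W^{2,p}_\delta$ solution of the linear problem $-a\Delta v + (q-2)\kappa\tau^2\,v=|\sigma+\tfrac{1}{2N}LW|^2+r$, supplied by Proposition \ref{prop:Isomorphism}; the maximum principle \ref{prop:MaxPrinciple} gives $v\ge 0$, and the elementary convexity inequality $(1+v)^{q-1}-(1+v)\ge (q-2)v$ together with $(1+v)^{-s}\le 1$ verifies the supersolution property. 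Theorem \ref{thm:SubSupersolutionTheorem} with $\mathring u\equiv 1$ then produces the sought solution $\phi$.

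For the $(\Rightarrow)$ direction I would first use $\phi$ itself as a conformal factor, passing to $\bar g:=\phi^{q-2}g$; a direct computation using the Lichnerowicz equation gives
\[
R_{\bar g}+\kappa\tau^2 \;=\; \Bigl|\sigma+\tfrac{1}{2N}LW\Bigr|^2\phi^{-2q}+r\phi^{1-3q/2} \;=:\; B \;\ge\; 0,
\]
and the problem reduces to finding $\chi>0$ with $\chi-1\in W^{2,p}_\delta$ satisfying $F(\chi):=-a\Delta_{\bar g}\chi + R_{\bar g}\chi + \kappa\tau^2\chi^{q-1}=0$, for then $\psi:=\phi\chi$ is the desired conformal factor (with $\mathring u'$ inheriting the asymptotic constants of $\phi$, using the multiplication property in Proposition \ref{prop:SobolevEmbeddings} to see $\psi-\mathring u'\in W^{2,p}_\delta$). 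The supersolution is once again $\chi_+=1$, since $F(1)=B\ge 0$. The main obstacle is producing a \emph{positive} subsolution, and the key trick is to solve the \emph{linear} auxiliary equation $-a\Delta_{\bar g} h + Bh=0$ with $h\to 1$ at infinity: since $B\in L^p_{\delta-2}$ and $B\ge 0$, Proposition \ref{prop:Isomorphism} yields a unique $h$ with $h-1\in W^{2,p}_\delta$, and Propositions \ref{prop:MaxPrinciple} and \ref{prop:StrongMaxPrinciple} force $0<h\le 1$ (strictly, since $h\not\equiv 0$ at infinity). Substituting $-a\Delta_{\bar g}h=-Bh$ and $R_{\bar g}=B-\kappa\tau^2$ into $F$ collapses the nonlinear expression to
\[
F(h) \;=\; \kappa\tau^2\, h\,(h^{q-2}-1)\;\le\;0,
\]
so $h$ is a subsolution, and Theorem \ref{thm:SubSupersolutionTheorem} delivers the required $\chi\in[h,1]$. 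The $C^{s,\alpha}$ version proceeds identically using the Hölder parts of Propositions \ref{prop:AELinearExistence}, \ref{prop:Isomorphism}, \ref{prop:HolderEmbeddings} and the Hölder case of Theorem \ref{thm:SubSupersolutionTheorem}.
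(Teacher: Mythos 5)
Your proof is correct, and although it shares the paper's overall skeleton (conformal covariance plus the sub-/supersolution Theorem \ref{thm:SubSupersolutionTheorem}), both of your barrier constructions differ from the paper's. In the $(\Leftarrow)$ direction the paper also normalizes to $R=-\kappa\tau^2$, but it builds the supersolution by a continuity method: it solves the one-parameter family $-a\Delta v_\epsilon+\epsilon\kappa\tau^2 v_\epsilon=\epsilon\bigl(\bigl|\sigma+\tfrac{1}{2N}LW\bigr|^2+r\bigr)$ with $v_\epsilon\to1$, shows the set of $\epsilon\in[0,1]$ with $v_\epsilon>0$ is open, closed and nonempty, and then verifies that $\beta v_1$ is a supersolution for $\beta$ large. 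Your single linear solve with potential $(q-2)\kappa\tau^2\ge0$ and source $\bigl|\sigma+\tfrac{1}{2N}LW\bigr|^2+r$ sidesteps the continuity argument entirely: positivity of $v$ is immediate from Proposition \ref{prop:MaxPrinciple} because $v$ decays, and the convexity inequality $(1+v)^{q-1}-(1+v)\ge(q-2)v$ replaces the ``take $\beta$ large'' step; this is a genuine simplification. In the $(\Rightarrow)$ direction the paper stays in the original frame, takes $\psi_-\equiv0$ and $\psi_+=\beta\phi$ as barriers for \eqref{eq:conformalFactor}, and recovers strict positivity a posteriori from the strong maximum principle \ref{prop:StrongMaxPrinciple}; you transform by $\phi$ first and manufacture a strictly positive subsolution $h$, so positivity comes for free at the cost of an extra linear solve. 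Two small points to tidy: (i) your bound $0<h\le 1$ applies Proposition \ref{prop:MaxPrinciple} to a function asymptotic to $1$ rather than $0$; the proof extends verbatim (as in Corollary \ref{lem:HarmonicBoundedness}) since $(h+\epsilon)^-$ remains compactly supported, but this should be said. (ii) Your barriers pin the asymptotics to $\mru'=\mru$ (up to the transformation by $\psi$ resp.\ $\phi$), which suffices for the stated equivalence; the paper's scalable barriers $\beta\psi$, $\beta\phi$, $\beta v_1$ are chosen precisely to yield the stronger remark, used afterwards, that admissible data admits a solution asymptotic to \emph{any} prescribed asymptotic function. Your supersolution does scale ($\beta(1+v)$ works for all $\beta\ge1$ by the same convexity estimate), so this refinement is easily recovered.
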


\begin{proof}
$(\Rightarrow)\,$ Suppose there is such a solution $\phi$ to the Lichnerowicz equation.
It is well known that the desired $\psi$ is a solution to
\begin{equation}
\label{eq:conformalFactor}
-a\Delta \psi + R \psi + \kappa\tau^2 \psi^{q-1}= 0.
\end{equation}

Equation \eqref{eq:conformalFactor} clearly satisfies the conditions of the sub and
supersolution theorem
\ref{thm:SubSupersolutionTheorem} as a consequence of the regularity we have presumed.
Note that the scalar curvature $R$ must be in $L^{p}_{\delta-2}$. 
For $\beta\geq 1$, $\beta\phi$ is a supersolution for \eqref{eq:conformalFactor}. If
$\beta > \sup \mru'/\mru$, $\beta \phi$ satisfies the conditions of Theorem
\ref{thm:SubSupersolutionTheorem}. For the
subsolution, we take $\psi_- \equiv 0$. This is certainly regular. Also, we note that
since the exponents in \eqref{eq:conformalFactor} are positive, $0$ lies in the interval
of regularity for $f(x,y)$. Together,
these conditions and Theorem \ref{thm:SubSupersolutionTheorem} guarantee the existence of
a solution $\psi\geq 0$ of \eqref{eq:conformalFactor} with the
properties we desire, except that it may be zero somewhere.

However, we can easily argue that $\psi$ cannot be 0 anywhere. Suppose it were zero at
some point. Since $\psi\in W^{2,p}_{loc}$, the strong maximum principle
\ref{prop:StrongMaxPrinciple}
implies that $\psi \equiv 0$. But $\psi \to \mru$ at infinity, a contradiction. Thus 
$\psi>0$, proving the implication.

$(\Leftarrow)\,$ Suppose there is such a conformal factor $\psi$. Note that $\psi$ must
then satisfy Equation \eqref{eq:conformalFactor}. For $\beta\leq 1$, $\beta\psi$ is a
subsolution for the Lichnerowicz equation. If $\beta< \inf \mru/\mru'$, then $\beta\psi$
satisfies the conditions of Theorem \ref{thm:SubSupersolutionTheorem}. 

To help find the supersolution, we use the conformal covariance of the Lichnerowicz 
equation \ref{prop:ConformalCovariance} and the conformal factor $\psi$ to assume
that the scalar curvature is $-\kappa \tau^2$. 

Proposition \eqref{prop:Isomorphism} shows that 
there exist solutions $v_\epsilon$ to the linear problem
\begin{equation}
-a\Delta v_\epsilon + \epsilon\kappa \tau^2 v_\epsilon = 
  \epsilon\left(\left|\sigma+ \frac{1}{2N}LW\right|^2 + r\right)
\end{equation} such that $v_\epsilon-1\in W^{2,p}_\delta$ for each $\epsilon \in [0,1]$.
Note that $v_0 \equiv 1$, and that the solution map is continuous in $\epsilon$. We 
claim that $v_\epsilon >0$ for all $\epsilon \in [0,1]$. By continuity, the set of 
$\epsilon$ for which $v_\epsilon>0$ is open. Suppose some $\epsilon$ were on the boundary
of the set for which $v_\epsilon>0$. By continuity, $v_\epsilon \geq 0$, and 
$v_\epsilon =0$ somewhere. By the strong maximum principle
\ref{prop:StrongMaxPrinciple}, $v_\epsilon \equiv 0$. However, this
contradicts that $v_\epsilon \to 1$ at infinity. Thus the set of $\epsilon$ for which
$v_\epsilon>0$ is open. Since this set is also nonempty, it is all of $[0,1]$.

We claim $\beta v:= \beta v_1$ is a supersolution to the Lichnerowicz equation 
(with $R = -\kappa \tau^2$) for large $\beta$. Indeed, if we plug $\beta v$ into the Lichnerowicz
equation, we get
\begin{multline}
-\kappa \tau^2\beta v-\kappa\tau^2 \beta v +\left|\sigma+ \frac{1}{2N}LW\right|^2 \beta +r\beta \\
  + \kappa \tau^2 (\beta v)^{q-1}-\left|\sigma+ \frac{1}{2N}LW\right|^2(\beta v)^{-q-1}-r (\beta v)^{-q/2} \\
  = \kappa \tau^2\left[(\beta v)^{q-1}-2\beta v\right] 
   + \left|\sigma+ \frac{1}{2N}LW\right|^2\left[\beta-(\beta v)^{-q-1}\right] 
   + r\left[\beta-(\beta v)^{-q/2}\right] \geq 0
\end{multline}
for sufficiently large $\beta$. If $\beta> \mru/\mru'$, the sub and supersolution
theorem
\ref{thm:SubSupersolutionTheorem} provides the desired solution to
the Lichnerowicz equation.
\end{proof}

In light of this theorem, we make the following definition.

\begin{defn}\label{def:Admissible}
The seed data $(g, \tau, N, \sigma,r,J)$ is said to be ``admissible" if there is a 
conformal factor transforming $g$ to a metric with scalar curvature $-\kappa \tau^2$.
\end{defn} 

It is important to understand for which AE metrics we can make
this conformal transformation. This question is resolved in Chapter \ref{chap:Yamabe}.
We note here, however, that it is well known (cf. \cite{Maxwell05b}) that $g$ is Yamabe
positive if and only if $g$ can be conformally transformed to a metric with identically
vanishing scalar curvature. We use this fact later in this chapter.

If the seed data is admissible, the Lichnerowicz equation has a solution asymptotic 
to any desired asymptotic function. This solution is unique.

\begin{thm}[Uniqueness of Solutions to the Lichnerowicz Equation]
\label{thm:LichUniqueness}
Suppose that $(M, g)$ is a $W^{2,p}_\delta$ AE manifold with $p>n/2$ and 
$\delta \in (2-n,0)$. Assume that $r$, $\left|\sigma+\frac{1}{2N} LW\right|^2$
and $\tau^2$ are all contained in $L^{p}_{\delta-2}$, and that $r\geq 0$.
If $\phi_1,\phi_2$ both solve the Lichnerowicz equation, and are such that
$\phi_1-\phi_2 \in W^{2,p}_\delta$, then $\phi_1= \phi_2$.
\end{thm}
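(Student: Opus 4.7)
The plan is to derive a linear divergence-form elliptic equation for $u := \log \phi_1 - \log \phi_2$ with nonnegative zeroth-order coefficient, and then apply a maximum principle argument. Since $\phi_1 - \phi_2 \in W^{2,p}_\delta$ with $\delta < 0$, both solutions share the same asymptotic function $\mru$ on each end, so $u \to 0$ at infinity and $u$ lies in $W^{2,p}_{\mathrm{loc}}$ with the same decay as the difference. Moreover, $\phi_1 \phi_2$ is strictly positive everywhere and bounded above and below on any compactum.

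I would start by multiplying the Lichnerowicz equation \eqref{eq:OrigLich} for $\phi_1$ by $\phi_2$ and subtracting the equation for $\phi_2$ multiplied by $\phi_1$. The $R\phi_1\phi_2$ terms cancel, yielding
\[
-a(\phi_2 \Delta \phi_1 - \phi_1 \Delta \phi_2) + \kappa \tau^2 \phi_1\phi_2 (\phi_1^{q-2} - \phi_2^{q-2}) - B\phi_1\phi_2(\phi_1^{-q-2} - \phi_2^{-q-2}) - r(\phi_2\phi_1^{-q/2} - \phi_1\phi_2^{-q/2}) = 0,
\]
where $B := |\sigma + \frac{1}{2N} LW|^2 \geq 0$. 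The leading term equals $-a\,\di(\phi_1\phi_2 \nabla u)$ since $\phi_2 \nabla \phi_1 - \phi_1 \nabla \phi_2 = \phi_1 \phi_2 \nabla u$. For each of the three remaining terms, the mean value theorem---exploiting $q > 2$ together with the sign conditions $\tau^2, B, r \geq 0$---expresses it as a pointwise nonnegative multiple of $w := \phi_1 - \phi_2$. Since $w = e^\eta u$ for some $\eta$ between $\log \phi_2$ and $\log \phi_1$, the three terms sum to $P(x)\, u$ with $P \geq 0$, so
\[
-a\,\di(\phi_1\phi_2 \nabla u) + P u = 0.
\]

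To conclude $u \equiv 0$, I would adapt the weak maximum principle argument of Proposition \ref{prop:MaxPrinciple} to this divergence-form operator. Since $u \to 0$ on every end, the Stampacchia cutoff $v_\epsilon := (u + \epsilon)^-$ is compactly supported and lies in $W^{1,2}$ for every $\epsilon > 0$. Testing the equation against $v_\epsilon$, integrating by parts (no boundary contribution), and using $\nabla v_\epsilon = \nabla u\, \mathbf{1}_{\{u \leq -\epsilon\}}$ gives
\[
a \int_M \phi_1\phi_2\, |\nabla v_\epsilon|^2 \,dV = -\int_M P u v_\epsilon \,dV \leq 0,
\]
because on $\mathrm{supp}\,v_\epsilon$ one has $u \leq -\epsilon < 0$ and $v_\epsilon \leq 0$, hence $Puv_\epsilon \geq 0$. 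As $\phi_1\phi_2 > 0$, this forces $\nabla v_\epsilon \equiv 0$, and compact support then forces $v_\epsilon \equiv 0$, i.e., $u \geq -\epsilon$. Letting $\epsilon \to 0$ yields $u \geq 0$; applying the same argument to $-u$ gives $u \leq 0$. Hence $u \equiv 0$ and $\phi_1 = \phi_2$.

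The principal obstacle is the sign bookkeeping after the subtraction: one must verify that each of the three nonlinear differences (from the $\tau^2$, $B$, and $r$ terms) produces a nonnegative coefficient once factored through the mean value theorem. This is where the weak energy condition $r \geq 0$ and the exponent inequality $q > 2$ are essential. Once this is in hand, the rest is routine, together with the observation that the proof of Proposition \ref{prop:MaxPrinciple} goes through verbatim when $a$ is replaced by the bounded, uniformly positive variable coefficient $\phi_1\phi_2$ inside a divergence.
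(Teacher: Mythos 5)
Your proof is correct, but it takes a genuinely different route from the one in the text. The paper's proof (following Choquet-Bruhat--Isenberg--Pollack) conformally transforms by $\phi_1$, forms the ratio $u=\phi_2/\phi_1$, rewrites the resulting equation as $-a\Delta_{\phi_1^{q-2}g}(u-1)+\xi\,(u-1)=0$ with $\xi\ge 0$, and then quotes the isomorphism property of $-a\Delta+\xi$ (Proposition \ref{prop:Isomorphism}) to conclude $u\equiv 1$. You instead cross-multiply the two equations, observe that the $R\phi_1\phi_2$ terms cancel and that the principal part collapses to the divergence form $-a\,\di(\phi_1\phi_2\nabla\log(\phi_1/\phi_2))$, and run a Stampacchia truncation by hand. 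Your sign bookkeeping is right: $t\mapsto t^{q-2}$ is increasing (here $q-2=\tfrac{4}{n-2}>0$) while $t\mapsto t^{-q-2}$ and $t\mapsto t^{-q/2-1}$ are decreasing, so each of the three nonlinear differences contributes a nonnegative multiple of $\phi_1-\phi_2$, and hence of $u$, using $\tau^2, B, r\ge 0$. What your approach buys is independence from the conformal covariance machinery and from the Fredholm/isomorphism theory: you only need the weak formulation, the positivity and continuity of the $\phi_i$, and the fact that $u\to 0$ at infinity (which, as you note, follows from $\phi_1-\phi_2\in W^{2,p}_\delta$ forcing the asymptotic functions to coincide). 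What the paper's approach buys is brevity, since the isomorphism result is already in place and handles the decay automatically. The only points worth making explicit in a final write-up are (i) that "solve the Lichnerowicz equation" is being read, as in Theorem \ref{thm:LichIff}, to mean positive continuous solutions asymptotic to positive asymptotic functions, so that $\log\phi_i$ is defined and $\phi_1\phi_2$ is locally bounded away from zero; and (ii) a one-line justification of the integration by parts against the compactly supported test function $v_\epsilon\in W^{1,2}$, which is routine since $\phi_2\nabla\phi_1-\phi_1\nabla\phi_2\in W^{1,1}_{\mathrm{loc}}\cap L^2_{\mathrm{loc}}$ and the zeroth-order coefficient lies in $L^1_{\mathrm{loc}}$.
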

\begin{proof}
The following proof is taken from \cite[Thm 8.3]{CBIP06}.

We use $\phi_i$ as a conformal factor and use the conformal covariance of the
Lichnerowicz equation (cf. Proposition \ref{prop:ConformalCovariance}) to get
\begin{equation}
R_{\phi_i^{q-2} g} = - \kappa\tau^2 + |\sigma +LW|^2 \phi_i^{-2q} + r \phi_i^{-\frac32 q +1}.
\end{equation} Therefore, using $u := \phi_2/\phi_1$, we obtain
\begin{multline}
-a\Delta_{\phi^{q-2}_1 g} u + \left(\kappa\tau^2-\left|\sigma+\frac1{2N}LW\right|^2\phi_1^{-2q} + r \phi_1^{-\frac32 q +1}\right)u \\
   = \left(\kappa\tau^2-\left|\sigma+\frac1{2N}LW\right|^2 \phi_2^{-2q} + r \phi_2^{-\frac32 q +1}\right) u^{q-1}.
\end{multline} This equation may be written as
\begin{equation}
-a\Delta_{\phi_1^{q-2} g} (u-1) + \xi(\phi_1,\phi_2) (u-1) = 0
\end{equation} where $\xi(\phi_1, \phi_2)$ is a positive expression
in terms of $\phi_i$ and the seed data. Also, as long as the $\phi_i$ are continuous and
positive, $\xi \in L^p_{\delta-2}$. The operator $-a\Delta+\xi$ is thus an isomorphism 
(cf. Proposition \ref{prop:Isomorphism}). 
Thus $u-1\equiv 0$ and so $\phi_1 \equiv\phi_2$.
\end{proof}

\section{Non-existence of Solutions to the Constraints}
\label{sec:Nonexistence}

Theorem \ref{thm:LichIff} allows us to
find seed data which cannot lead to solutions of
the conformal constraint equations \eqref{eq:ConfConst}.

\begin{thm}
Suppose $g$ is Yamabe non-positive. If $\tau \equiv 0$, then the conformal constraint
equations allow no solutions. The class of such metrics is non-empty.
\end{thm}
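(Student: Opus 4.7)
The proof falls into two parts, of which the first is essentially a direct corollary of the work already done in this chapter. Setting $\tau \equiv 0$ in the Lichnerowicz equation \eqref{eq:OrigLich}, Theorem \ref{thm:LichIff} asserts that a positive solution $\phi$ with the prescribed asymptotics exists if and only if $g$ can be conformally transformed to a metric with scalar curvature $-\kappa\tau^2 \equiv 0$, i.e., to a scalar-flat metric. The remark following Theorem \ref{thm:LichIff} (citing \cite{Maxwell05b}) records that on an AE manifold this is equivalent to $g$ being Yamabe positive. Therefore, when $g$ is Yamabe non-positive, no positive $\phi$ solving the Lichnerowicz equation with the required asymptotics can exist, regardless of which $W$ comes out of the (decoupled, since $d\tau \equiv 0$) vector equation. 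Hence the coupled system \eqref{eq:ConfConst} admits no solution.

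The substantive step is then the non-emptiness of the class of Yamabe non-positive AE metrics. The plan is to exhibit an explicit example. One concrete construction is to take $\R^n$ and set $g = u^{q-2} g_{Euc}$, where $u > 0$ is smooth, $u - 1$ is compactly supported, and $u$ is chosen so that the scalar curvature of $g$ is strongly negative on some fixed compact set $\Omega$. Testing the Yamabe quotient \eqref{eq:DefYamabeInvariantAE} against any fixed smooth nonzero function $v \in C^\infty_0(M)$ supported in $\Omega$ then yields
\begin{equation}
Y(g) \le \frac{\int_M a|\nabla v|^2 + R v^2}{\|v\|_q^2},
\end{equation}
and by making $R$ sufficiently negative while holding $v$ fixed, the numerator is driven below zero, so $Y(g) < 0$. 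A compactly supported modification preserves the AE structure and regularity.

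The main obstacle is the bookkeeping needed to check that the negative-$R$ contribution in the above construction can be made to dominate the Dirichlet term $a\int|\nabla v|^2$ for an admissible test function. This is already handled implicitly in \cite{Maxwell05b}, which explicitly produces AE metrics in each of the three Yamabe classes; one could equally well cite that classification directly to conclude that Yamabe negative (and Yamabe null) AE metrics exist, completing the non-emptiness assertion.
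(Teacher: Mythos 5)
The first half of your argument is correct and agrees with the paper: with $\tau\equiv 0$, Theorem~\ref{thm:LichIff} reduces solvability of the Lichnerowicz equation to the existence of a conformal factor sending $g$ to a scalar-flat AE metric, and \cite{Maxwell05b} shows this is possible if and only if $g$ is Yamabe positive. Since $g$ is Yamabe non-positive by hypothesis, no positive $\phi$ with the required asymptotics exists, for any $W$, so the coupled system \eqref{eq:ConfConst} has no solution.

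The non-emptiness claim is where your proposal breaks down. Your explicit construction takes $g = u^{q-2}g_{\mathrm{Euc}}$ with $u-1$ compactly supported and $u$ chosen so that $R_g$ is very negative on some compact set $\Omega$. But the Yamabe invariant is a \emph{conformal} invariant (this is precisely Lemma~\ref{ConformalInvariance} in Chapter~\ref{chap:Yamabe}, and is classical). Any such $g$ is conformal to $g_{\mathrm{Euc}}$, hence $Y(g)=Y(g_{\mathrm{Euc}})>0$: the metric you build is still Yamabe positive no matter how negative you force $R_g$ to be on $\Omega$. The test-function estimate
\begin{equation}
Y(g) \le \frac{\int_M a|\nabla v|_g^2 + R_g v^2\, dV_g}{\|v\|_{q,g}^2}
\end{equation}
does not have the monotonicity you use: when you change $u$ to make $R_g$ more negative, the volume form, the gradient, and the $L^q$ norm on the right all change as well, and by conformal covariance these changes exactly offset the change in $\int R_g v^2$. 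Treating the Dirichlet term and $\|v\|_q$ as fixed while driving $R_g\to -\infty$ is the error. Your fallback citation is also misplaced: \cite{Maxwell05b} does \emph{not} produce Yamabe non-positive AE metrics; as the paper notes, it was \cite{Friedrich11} that first exhibited a Yamabe-null AE manifold, and the paper's own proof cites Friedrich together with Proposition~\ref{prop:yAE=yCpct} (decompactification of a Yamabe non-positive compact manifold preserves the sign of the Yamabe invariant). A repaired version of your non-emptiness argument would have to either invoke those two results, or construct a metric on a different underlying manifold (e.g.\ one whose conformal compactification is a topologically constrained Yamabe non-positive compact manifold such as a torus), not a conformal deformation of flat $\R^n$.
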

\begin{proof}
By Theorem \ref{thm:LichIff}, the Lichnerowicz equation is solvable if and only if
$g$ can be conformally transformed to a metric with scalar curvature 
$-\kappa \tau^2 \equiv 0$.
However, since $g$ is not Yamabe positive,
there is no such conformally related metric (cf. \cite{Maxwell05b}).
Thus there can be no solution to the conformal constraint equations.

Friedrich in \cite{Friedrich11} showed the existence of an AE
Yamabe null manifold. In Proposition \ref{prop:yAE=yCpct}, we show that if a Yamabe
null or negative compact manifold is 
decompactified in a particular way, the related AE metric is also Yamabe null
or negative. 
\end{proof}

In general, if one cannot conformally transform the scalar curvature
to $-\kappa \tau^2$, there cannot be a solution to the conformal constraint equations.
In Chapter \ref{chap:Yamabe}, we give a characterization of when this is possible.

In addition, we can show that if $\tau_i$ approaches some $\tau$ that does not allow 
solutions, then any solutions $\phi_i, W_i$ of the constraint equations corresponding
to seed data with $\tau_i$ must blow up
as $i\to \infty$. We focus on the case $\tau \equiv 0$, but the techniques work
in more general cases. We first prove a lemma.

\begin{lem}\label{lem:BiggerTauSmallerPhi}
Suppose $\tau_1^2\geq \tau_2^2$, with $\tau_i \in W^{1,p}_{\delta-1}$. Suppose
$\phi$ solves the Lichnerowicz equation with $\tau_2$ and any $\sigma$ and $LW$. Suppose
a conformal factor $\psi$ transforming the scalar curvature to $-\kappa \tau_1^2$
exists, and that $\psi-\psi \in W^{2,p}_\delta$. Then $\phi\geq \psi$.
\end{lem}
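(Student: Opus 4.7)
The plan is to use the conformal covariance of the Lichnerowicz equation (Proposition~\ref{prop:ConformalCovariance}) to reduce to a convenient normalization, and then apply a localized superharmonicity argument. First, I would pass to the conformally rescaled metric $\tilde g := \psi^{q-2} g$; by the defining equation for $\psi$ this forces $R_{\tilde g} = -\kappa\tau_1^2$, and simultaneously replaces $\psi$ by the constant $1$ and $\phi$ by $\tilde\phi := \psi^{-1}\phi$. Because $\phi$ and $\psi$ share the same asymptotic function on each end (the condition intended by the hypothesis), $\tilde\phi \to 1$ at infinity, and proving $\phi \geq \psi$ reduces to proving $\tilde\phi \geq 1$.

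With $R_{\tilde g} = -\kappa\tau_1^2$, the transformed Lichnerowicz equation for $\tilde\phi$ (with the appropriately transformed data $\tilde\beta^2,\tilde r\geq 0$) rearranges to
\begin{equation*}
-a\tilde\Delta\tilde\phi \;=\; \kappa\tilde\phi\bigl(\tau_1^2 - \tau_2^2\,\tilde\phi^{q-2}\bigr) \;+\; \tilde\beta^2\,\tilde\phi^{-q-1} \;+\; \tilde r\,\tilde\phi^{-q/2}.
\end{equation*}
The crucial pointwise observation is that on the open set $U := \{\tilde\phi < 1\}$ one has $\tilde\phi^{q-2} < 1$ (since $q > 2$), so $\tau_1^2 - \tau_2^2\,\tilde\phi^{q-2} \geq \tau_1^2 - \tau_2^2 \geq 0$; combined with the manifest nonnegativity of the remaining right-hand terms, this yields $-\tilde\Delta\tilde\phi \geq 0$ throughout $U$, i.e., $\tilde\phi$ is superharmonic there.

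To close the argument, I would suppose toward contradiction that $m := \inf_M \tilde\phi < 1$. Since $\tilde\phi \to 1$ at infinity, $m$ is attained at some interior point $x_0 \in U$. The weak Harnack inequality used in the proof of Proposition~\ref{prop:StrongMaxPrinciple}, applied to the nonnegative superharmonic function $\tilde\phi - m$ on any sufficiently small ball $B_{2R}(x_0) \subset U$, forces $\tilde\phi \equiv m$ on $B_R(x_0)$. The set $\{\tilde\phi = m\}$ is therefore open (this neighborhood is automatically contained in $U$), and closed by continuity of $\tilde\phi$; by connectedness of $M$ it must equal $M$, contradicting $\tilde\phi \to 1$ at infinity.

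The main obstacle is that the superharmonic inequality holds only on $U$ and not on all of $M$, so Proposition~\ref{prop:StrongMaxPrinciple} cannot be invoked directly as a black box; instead one has to localize the Harnack step to $U$ and then propagate $\tilde\phi = m$ outward by a standard open-closed connectedness argument. The conformal renormalization by $\psi$ is what makes the sign structure on $U$ work out, since it converts the (unsigned) curvature term $R\tilde\phi$ into $-\kappa\tau_1^2\tilde\phi$, which is precisely the term that pairs with $\kappa\tau_2^2\tilde\phi^{q-1}$ to produce the favorable factor $\tau_1^2 - \tau_2^2\tilde\phi^{q-2}$.
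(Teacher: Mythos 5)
Your proof is correct and takes essentially the same route as the paper: rescale by the conformal factor $\psi$ so that $\tilde\phi = \phi/\psi \to 1$ at infinity and $R_{\tilde g} = -\kappa\tau_1^2$, observe that $-\tilde\Delta\tilde\phi \ge 0$ on $\{\tilde\phi < 1\}$ because $\tau_1^2 - \tau_2^2\tilde\phi^{\,q-2} \ge 0$ there, and propagate an interior minimum to contradict the asymptotics. The only variation is at the minimum-principle step: the paper runs an integration-by-parts argument on a small ball about the global minimum with the truncation $v = (\tilde\phi - \inf_{\partial B}\tilde\phi)^-$, whereas you invoke the weak Harnack inequality directly and then close with an open-closed connectedness argument, which is a somewhat cleaner way to make the propagation rigorous.
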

\begin{proof}
Let $\phi/\psi = \phitil$. By conformal covariance \ref{prop:ConformalCovariance},
$\phitil$ solves
\begin{equation}
-a\Deltatil \phitil -\kappa \tau_1^2 \phitil + \kappa \tau_2^2 \phitil^{q-1}
   - \psi^{-2}\left|\sigma + \frac1{2N} \Ltil W\right|_{\psi^{q-2} g}^2 \phitil^{-q-1} 
    - \psi^{-\frac32 q +1} r \phitil^{-q/2}=0,
\end{equation} where $\Deltatil$ and $\Ltil$ are operators with respect to $\psi^{q-2}g$.

Suppose, by way of contradiction, that $\phi<\psi$ somewhere. Thus $\phitil<1$ somewhere.
Since $\phitil\to 1$ at infinity, it must have a global minimum at some point $p\in M$.
On some small ball $B(p)$ around $p$, $\phitil<1$, and so 
\begin{equation}
-\kappa \tau_1^2 \phitil + \kappa \tau_2^2 \phitil^{q-1} \leq 0
\end{equation} on $B(p)$. Clearly, then, $-\Deltatil\phitil \geq 0$ on $B(p)$.

Let 
\begin{equation}
v= (\phitil - \inf_{\p B(p)} \phitil)^- := \min\{0,\phitil - \inf_{\p B(p)} \phitil\} \leq 0.
\end{equation} Since $\phitil(p)$ is a global minimum, $v=0$ on $\p B(p)$. 
Thus
\begin{equation}
\int_{B(p)} a|\nabla v|^2 = \int_{B(p)} -a \Deltatil \phitil v \leq 0,
\end{equation} and so $v \equiv 0$ on $B(p)$. Thus 
$\inf \phitil \geq \inf_{\p B(p)} \phitil$, and so $\phitil$ is constant on $B(p)$. 
By continuity, we can similarly argue that
 $\phitil<1$ everywhere. This contradicts that $\phitil \to 1$
at infinity. Thus $\phitil\geq 1$, and so $\phi\geq \psi$.
\end{proof}

Using a prescribed scalar curvature result from Chapter \ref{chap:Yamabe},
we can show that sequences of solutions with $\tau_i \to 0$ blow up.

\begin{thm}\label{thm:Blowup}
Suppose we have seed data $(g, \tau_i, N, \sigma,r,J)$ as in \eqref{eq:DataRegularity}.
Suppose $g$ is Yamabe non-positive, and has no conformal Killing fields.
Suppose $(\phi_i, W_i)$ are solutions to the conformal constraint equations 
\eqref{eq:ConfConst} for the data with $\tau_i$. If 
$\tau_i \to 0$ in $C^{0}_{\delta-1}\cap W^{1,p}_{\delta-1}$,
then $\sup \phi_i \to \infty$. 
\end{thm}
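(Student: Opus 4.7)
The plan is to bound each $\phi_i$ below by an auxiliary conformal factor $\psi_i$ whose sup must blow up. Since $\phi_i$ solves the Lichnerowicz equation, Theorem \ref{thm:LichIff} tells us the seed data is admissible, so there exists a positive $\psi_i$ (normalize so that $\psi_i \to 1$ on each end, with $\psi_i - 1 \in W^{2,p}_\delta$) making the scalar curvature of $\psi_i^{q-2} g$ equal to $-\kappa \tau_i^2$. Equivalently,
\begin{equation*}
-a\Delta \psi_i + R \psi_i + \kappa \tau_i^2 \psi_i^{q-1} = 0.
\end{equation*}
Applying Lemma \ref{lem:BiggerTauSmallerPhi} with $\tau_1=\tau_2=\tau_i$ (and the same asymptotic for $\phi_i$ and $\psi_i$, which one may arrange) yields $\phi_i \ge \psi_i$ pointwise, so $\sup \phi_i \ge \sup \psi_i$. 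It therefore suffices to show $\sup \psi_i \to \infty$.

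I would argue by contradiction: suppose along a subsequence $\sup \psi_i \le C$. Rewriting the equation as $-a\Delta(\psi_i - 1) = -R\psi_i - \kappa \tau_i^2 \psi_i^{q-1}$, the right-hand side is uniformly bounded in $L^p_{\delta-2}$. Indeed, $R \in L^p_{\delta-2}$ from the AE regularity of $g$; $\tau_i^2$ is uniformly bounded in $L^p_{\delta-2}$ using the multiplication statement in Proposition \ref{prop:SobolevEmbeddings} applied to $\tau_i \in W^{1,p}_{\delta-1}$; and $\psi_i$ is uniformly bounded in $L^\infty$. Since $-a\Delta: W^{2,p}_\delta \to L^p_{\delta-2}$ is an isomorphism (Proposition \ref{prop:Isomorphism}, with $V\equiv 0$), we conclude that $\psi_i - 1$ is uniformly bounded in $W^{2,p}_\delta$. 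The compact embeddings in Proposition \ref{prop:SobolevEmbeddings} then allow passing to a subsequence with $\psi_i \to \psi_\infty$ strongly in some $C^0_{\delta'}$, with $\psi_\infty - 1 \in W^{2,p}_\delta$ and $\psi_\infty \ge 0$.

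Because $\tau_i \to 0$ in $C^0_{\delta-1} \cap W^{1,p}_{\delta-1}$ while $\psi_i$ remains bounded, $\kappa \tau_i^2 \psi_i^{q-1} \to 0$ in $L^p_{\delta-2}$. Passing to the limit in the equation gives $-a\Delta \psi_\infty + R \psi_\infty = 0$. Since $\psi_\infty \ge 0$ and $\psi_\infty \to 1$ at infinity, it is not identically zero, so the strong maximum principle (Proposition \ref{prop:StrongMaxPrinciple}) forces $\psi_\infty > 0$ everywhere. Then $\psi_\infty^{q-2} g$ has scalar curvature identically zero, so by the cited characterization (from \cite{Maxwell05b}) $g$ must be Yamabe positive, contradicting the hypothesis that $g$ is Yamabe non-positive. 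Hence $\sup \psi_i \to \infty$ and therefore $\sup \phi_i \to \infty$. The main technical subtlety is the limit extraction in the second paragraph: one must ensure the asymptotics of $\psi_i$ are controlled uniformly (handled by the normalization $\psi_i \to 1$) so that the isomorphism estimate for $-a\Delta$ produces a uniform $W^{2,p}_\delta$ bound, which is what makes the subsequential limit both well-defined and compatible with the strong maximum principle.
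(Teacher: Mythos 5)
Your proof is correct, and it takes a genuinely different route from the paper's. The paper also begins by producing a positive lower bound via Lemma \ref{lem:BiggerTauSmallerPhi} (using a single fixed comparison factor $\psi$ for the scalar curvature $-C\rho^{2\delta-2}$, which dominates all the $\kappa\tau_i^2$ at once), but it then argues by contradiction on the $\phi_i$ themselves: assuming $\sup\phi_i$ bounded, it passes to the limit in the \emph{full coupled system} --- using the vector equation, the absence of conformal Killing fields to get $W_i\to W_\infty$, and the a priori estimate \eqref{eq:PEstimate1} to get $W^{2,p}_\delta$-compactness of $\phi_i$ --- and obtains a limit $\phi_\infty$ solving the Lichnerowicz equation with $\tau\equiv 0$, contradicting non-admissibility. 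You instead push the entire contradiction onto the uncoupled prescribed-scalar-curvature factors $\psi_i$: since $\phi_i\ge\psi_i$, it suffices that $\sup\psi_i\to\infty$, and the limit extraction for the scalar equation $-a\Delta\psi_i+R\psi_i+\kappa\tau_i^2\psi_i^{q-1}=0$ is elementary. This buys you two things: you never need the hypothesis that $g$ has no conformal Killing fields (Lemma \ref{lem:BiggerTauSmallerPhi} holds for ``any $\sigma$ and $LW$''), and you only use $\tau_i\to 0$ strongly enough to kill $\tau_i^2$ in $L^p_{\delta-2}$, not the $W^{1,p}_{\delta-1}$ convergence of $d\tau_i$ that the paper needs to control the vector equation. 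The one point to be careful about, which you flag, is matching the asymptotics of $\psi_i$ to those of $\phi_i$ before invoking Lemma \ref{lem:BiggerTauSmallerPhi}; Theorem \ref{thm:LichIff} does supply the prescribed-curvature factor with any desired asymptotic function, so this is easily arranged. (The paper in fact uses essentially your $\psi_i$-based limit argument in the unnamed theorem immediately following Theorem \ref{thm:Blowup}, so your route also unifies the two results.)
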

\begin{proof}
Since $\tau_i \to 0$ in $C^0_{\delta-1}$, $\kappa \tau_i^2 \leq C \rho^{2\delta-2}$ for
some $C>0$. Since the $\tau_i$ are admissible, Lemma \ref{lem:LowerScalarCurvature}
shows that there exists a conformal factor $\psi$ transforming $g$ to a metric
with scalar curvature $-C\rho^{2\delta-2}$. Then, by Lemma \ref{lem:BiggerTauSmallerPhi},
$\phi_i\geq \psi>0$ for all $i$. Thus the $\phi_i$ are uniformly
bounded below.

Suppose the $\phi_i$ are bounded above. Then 
$\phi_i^q d\tau_i \to 0$ in $L^{p}_{\delta-2}$. By the continuity of the vector equation,
and since there are no conformal Killing fields on $g$ (cf. Proposition
\ref{prop:Isomorphism}), $W_i\to 0$ in $W^{2,p}_\delta$. 

Since the $\phi_i$ are bounded above and below, 
\begin{equation}
F(\phi_i) := R\phi_i + \tau_i^2 \phi_i^{q-1} 
  - \left|\sigma +\frac1{2N} LW_i\right|^2 \phi_i^{-q-1} - r \phi_i^{-q/2}
\end{equation} is bounded in
$L^p_{\delta-2}$. Since $\phi_i$ solves the Lichnerowicz equation \eqref{eq:OrigLich},
estimate \eqref{eq:PEstimate1}
shows that the $\phi_i$ are uniformly bounded in $W^{2,p}_\delta$. By compact embedding,
$\phi_i$ converge strongly in $L^\infty$ to some $\phi_\infty$. Thus $F(\phi_i)$ 
converges strongly in $L^p_{\delta-2}$ to $F(\phi_\infty)$. This in turn shows that 
$\phi_i$ converges in $W^{2,p}_\delta$ to $\phi_\infty$.

However, since $\tau_i\to0$, $\tau_i^2 \phi_i^{q-1} \to 0$. Also, $LW_i \to LW_\infty$,
where $LW_\infty$ is the solution of 
\begin{equation}
\di \frac{1}{2N} LW_\infty = J.
\end{equation} Thus
\begin{equation}
\left|\sigma+\frac1{2N}LW_i\right|^2\phi_i^{-q-1} \to 
   \left|\sigma+\frac1{2N}LW_\infty\right|^2\phi_\infty^{-q-1}.
\end{equation} Since $\phi_i$ converge in $W^{2,p}_\delta$,
$\phi_\infty$ solves 
\begin{equation}
-a\Delta \phi_\infty + R\phi_\infty 
  -\left|\sigma+\frac1{2N} LW_\infty\right|^2 \phi_\infty^{-q-1}-r \phi_\infty^{-q/2}=0.
\end{equation}
which is impossible because, in this equation, $\tau\equiv 0$, which is not 
admissible.
Thus $\phi_i$ cannot be bounded above.
\end{proof}

\begin{thm}
Suppose we have seed data $(g, \tau_i, N, \sigma,r,J)$ as in \eqref{eq:DataRegularity}.
Suppose $g$ is Yamabe non-positive, and has no conformal Killing fields.
Suppose $(\phi_i, W_i)$ are solutions to the conformal constraint equations 
\eqref{eq:ConfConst} for the data with $\tau_i$. If 
$\tau_i \to 0$ in $C^{0}_{\delta-1}\cap W^{1,p}_{\delta-1}$,
and $\tau_i \geq \tau_{i+1}$, then for any
choice of radial function $\rho \geq 1$ and for any $p>n$, one of the following occurs:
\begin{itemize}
\item for all $\eta \in (2-n,0)$, $\|\tau_i^2 \phi_i^{q-1}\|_{L^p_{\eta-2}}$
         is unbounded.

\item for all $\eta\in \R$, $\|\phi_i\|_{L^p_{\eta}}$ is unbounded.
\end{itemize}
\end{thm}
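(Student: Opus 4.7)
The approach is by contradiction: suppose both alternatives fail, so there exist $\eta_1 \in (2-n,0)$, $\eta_2 \in \R$, and $M>0$ such that $\|\tau_i^2 \phi_i^{q-1}\|_{L^p_{\eta_1-2}} \leq M$ and $\|\phi_i\|_{L^p_{\eta_2}} \leq M$ uniformly in $i$. The plan is to upgrade these weighted bounds into a uniform $L^\infty$ bound on $\phi_i$, which directly contradicts the conclusion $\sup \phi_i \to \infty$ of Theorem~\ref{thm:Blowup}.

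First I would extract a uniform pointwise lower bound on $\phi_i$. Since $\tau_i \geq \tau_{i+1} \to 0$, the $\tau_i$ are eventually non-negative and, by $\tau_i \to 0$ in $C^0_{\delta-1}$, are dominated pointwise by some $C\rho^{\delta-1}$. Exactly as in the first step of the proof of Theorem~\ref{thm:Blowup}, Lemma~\ref{lem:BiggerTauSmallerPhi} then furnishes a fixed conformal factor $\psi$, tending to positive constants at infinity, with $\phi_i \geq \psi > 0$ for every $i$. A positive asymptotic value on an end makes $\|\phi_i\|_{L^p_\eta}$ infinite whenever $\eta \leq 0$, so we must have $\eta_2 > 0$; then the bound $\|\phi_i\|_{L^p_{\eta_2}} \leq M$ forces the asymptotic constants $u_{i,j}$ of $\phi_i$ to be uniformly bounded in $i$ and $j$.

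Next I would bound $LW_i$ via the vector equation \eqref{eq:OrigVect}. A H\"older argument in weighted spaces, using the strong convergence $d\tau_i \to 0$ in $L^p_{\delta-2}$ together with the $L^p_{\eta_2}$ control on $\phi_i$, should give a uniform bound on $\kappa \phi_i^q d\tau_i + J$ in $L^p_{\delta-2}$; since $g$ has no conformal Killing fields, Proposition~\ref{prop:Isomorphism} then bounds $\|W_i\|_{W^{2,p}_\delta}$ uniformly, and Sobolev embedding (using $p>n$) gives a uniform pointwise bound on $LW_i$. Returning to the Lichnerowicz equation in the form $-a\Delta(\phi_i - \mru_i) = G_i$, each term in $G_i$ is now uniformly bounded in an appropriate weighted $L^p$ space: $\tau_i^2 \phi_i^{q-1}$ by hypothesis; the negative-exponent terms $|\sigma + \tfrac{1}{2N}LW_i|^2 \phi_i^{-q-1}$ and $r \phi_i^{-q/2}$ by $\phi_i \geq \psi$ and the bound on $LW_i$; and $R\phi_i$ by combining $R \in L^p_{\delta-2}$ with the $L^p_{\eta_2}$ bound on $\phi_i$. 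Proposition~\ref{prop:AELinearExistence} then yields a uniform $W^{2,p}$ bound on $\phi_i - \mru_i$ in the appropriate weighted space, and Sobolev embedding gives a uniform $L^\infty$ bound on $\phi_i - \mru_i$; combined with the bounded asymptotic constants $u_{i,j}$, this forces $\sup\phi_i$ to remain bounded, contradicting Theorem~\ref{thm:Blowup}.

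The hard part will be closing the bootstrap used to bound $\phi_i^q d\tau_i$ in the vector equation. Because $\eta_2 > 0$, the bound on $\phi_i$ does not produce any decay, so the nonlinear factor $\phi_i^q$ may a priori be large at infinity, and matching it against the (merely $L^p_{\delta-2}$) decay of $d\tau_i$ does not follow from a single H\"older application in the obvious way. I expect one needs either an interpolation between the weighted $L^p_{\eta_2}$ bound on $\phi_i$ and the pointwise lower bound $\phi_i \geq \psi$, or an iteration in which progressively stronger pointwise control on $\phi_i$ coming from the Lichnerowicz estimates feeds back into the vector equation estimate.
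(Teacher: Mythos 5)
Your proposal and the paper share the contradiction setup and the initial pointwise lower bound $\phi_i \ge \psi$, but the core of your argument does not close, and you correctly flag the place where it fails. To bound $LW_i$ from the vector equation you need $\phi_i^q\,d\tau_i$ in a weighted $L^{p'}$ space with $p'>n/2$, which effectively requires pointwise control of $\phi_i$; to get pointwise control of $\phi_i$ from the Lichnerowicz equation you first need $LW_i$. The assumed bound $\|\phi_i\|_{L^p_{\eta_2}}\le M$ with $\eta_2>0$ does not break this cycle: a single H\"older application gives $\phi_i^q\,d\tau_i$ only in $L^{p/(q+1)}$, which is below the threshold for the elliptic estimate (and can even be below $1$), and neither the lower bound $\phi_i\ge\psi$ nor an iteration supplies the missing upper pointwise control, because each iterate still requires the vector-equation estimate you are trying to establish. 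A secondary issue: the claim that $\|\phi_i\|_{L^p_{\eta_2}}\le M$ forces the asymptotic constants of $\phi_i$ to be bounded is not justified — the rate at which $\phi_i$ approaches its asymptotic value depends on $\|\phi_i-\mru_i\|_{W^{2,p}_\delta}$, which you have not controlled uniformly in $i$, so the weighted $L^p$ bound can coexist with unbounded asymptotic constants concentrated far out on the end.

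The paper avoids the circularity entirely by running the argument on the prescribed-scalar-curvature conformal factors $\psi_i$ (the solutions of $-a\Delta\psi_i + R\psi_i + \kappa\tau_i^2\psi_i^{q-1}=0$) rather than on $\phi_i$. These satisfy a \emph{decoupled} equation with no $LW$ term, so the elliptic estimate \eqref{eq:PEstimate1} bounds $\|\psi_i-1\|_{W^{2,p}_\eta}$ directly by $\|\tau_i^2\psi_i^{q-1}\|_{L^p_{\eta-2}}$ and a lower-order $\|\psi_i\|_{L^p_{\eta'}}$ — precisely the two quantities in the dichotomy. If both were bounded, compact embedding would produce a $C^0$-convergent subsequence whose limit has zero scalar curvature, contradicting Yamabe non-positivity (the same mechanism as Theorem \ref{thm:Blowup}). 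Since $\phi_i\ge\psi_i>0$ pointwise by Lemma \ref{lem:BiggerTauSmallerPhi} and the relevant powers are positive, bounded $\phi_i$-norms would force bounded $\psi_i$-norms, so the dichotomy transfers back to $\phi_i$. Notice that the two alternatives of the theorem are tailored exactly to the right-hand side of the elliptic estimate for $\psi_i$; this is the structural reason to use $\psi_i$, not $\phi_i$, as the object of the limiting argument.
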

\begin{proof}
Since the $\tau_i$ are admissible, let $\psi_i$ be the conformal factors
transforming $g$ to a metric with scalar curvature
$-\kappa \tau_i^2$. Suppose, by way of
contradiction, that both
$\|\tau_i^2 \psi_i^{q-1}\|_{L^p_{\eta-2}}$ and $\|\psi_i\|_{L^p_{\eta'}}$ are bounded,
for some choices of
$p, \eta, \eta'$ and radial function $\rho$. 
By the estimate \eqref{eq:PEstimate1},
\begin{equation}
\|\psi_i\|_{W^{2,p}_\eta} \leq C\|\tau_i^2 \psi_i^{q-1}\|_{L^{p}_{\eta-2}}
                + C \|\psi_i\|_{L^p_{\eta'}},
\end{equation} which is bounded by assumption.
Since $\psi_i$ is uniformly bounded in $W^{2,p}_\eta$,
by Sobolev embedding \ref{prop:SobolevEmbeddings}
a subsequence, which we also denote by $\psi_i$, converges in $C^0$. Mirroring the proof of 
Theorem \ref{thm:Blowup}, we can find a limit $\psi_\infty$ which
again contradicts that $g$ is Yamabe nonpositive.
Thus either $\|\tau_i^2 \psi_i^{q-1}\|_{L^p_{\eta-2}}$ or $\|\psi_i\|_{L^p_{\eta'}}$
is unbounded.

By Lemma \ref{lem:BiggerTauSmallerPhi}, $\phi_i\geq \psi_i$, and so $\phi_i$ must be
unbounded in the same way as $\psi_i$.
\end{proof} 

\chapter{Prescribed Scalar Curvature and Yamabe Classes} \label{chap:Yamabe}

This work is (lightly) adapted from \cite{DM15}. This paper is quoted with David Maxwell's
permission.

One formulation of the prescribed scalar curvature problem asks: given Riemannian
manifold
$(M^n,g)$ and some function $R'$, is there a conformally related metric $g'$ with scalar
curvature $R'$? If we define $g' = \phi^{N-2} g$ for $N:= \frac{2n}{n-2}$,
\footnote{In this chapter, and this chapter alone, we use the notation $N:= \frac{2n}{n-2}$
instead of $q$. Since the lapse $N$ does not appear in the chapter, there should be no
confusion about the two $N$'s.} this is
equivalent to finding a positive solution of
\begin{equation}\label{eq:prescribed-sc}
  -a\Delta \phi + R \phi = R' \phi^{N-1}.
\end{equation}

On a compact manifold the Yamabe invariant of the conformal
class of $g$ poses an obstacle to the solution of \eqref{eq:prescribed-sc}.
For example, in the case where $M$ is connected and $R'$ is constant, problem \eqref{eq:prescribed-sc}
is known as the Yamabe problem, and it admits a solution if and only
if the sign of the Yamabe invariant agrees with the sign of $R'$
\cite{Yamabe60}\cite{Trudinger68}\cite{Aubin76}\cite{Schoen84}.
More generally,
if $R'$ has constant sign, we can conformally transform to a metric with
scalar curvature $R'$ only if the sign of the Yamabe invariant agrees with the
sign of the scalar curvature.  Hence it is natural to divide conformal classes
into three types, Yamabe positive, negative, and null, depending on the sign
of the Yamabe invariant.

We are interested in
solving equation \eqref{eq:prescribed-sc} on a
class of complete Riemannian manifolds that, loosely speaking, have a geometry
approximating Euclidean space at infinity.  These asymptotically Euclidean
(AE) manifolds also possess a Yamabe invariant, but the relationship between
the Yamabe invariant and problem \eqref{eq:prescribed-sc} was not, up until this work,
well understood in the AE setting, except for some results
concerning Yamabe positive metrics.
We have the following consequences of \cite{Maxwell05b} Proposition 3.
\begin{enumerate}
\item An AE metric can be conformally transformed to an AE
metric with zero scalar curvature if and only if it is Yamabe positive.
As a consequence, since the scalar curvature of an AE metric decays to zero at infinity,
only Yamabe positive AE metrics can be conformally transformed to
have constant scalar curvature.
\item Yamabe positive AE metrics have conformally related AE metrics with
everywhere positive scalar curvature, and
conformally related AE metrics with everywhere negative scalar curvature.
\item If an AE metric admits a conformally
related metric with non-negative scalar curvature, then it is Yamabe positive.
\end{enumerate}
Note that it was at one time believed that transformation to zero scalar
curvature is possible if and only if the manifold is Yamabe non-negative
\cite{BC81}.
The proof of this contention in \cite{BC81} contains an error, and the statement and proof
were corrected in \cite{Maxwell05b}. See also \cite{Friedrich11}, which
shows that there exist Yamabe-null AE manifolds and hence the
hypotheses of \cite{BC81} and \cite{Maxwell05b} are genuinely different.

As a consequence of the above three facts, the situation on an AE manifold is somewhat
different from the compact setting.  In particular,
although positive scalar curvature is a hallmark
of Yamabe positive metrics, negative scalar curvature does
not characterize Yamabe-negative metrics.  Indeed, reporting joint work with 
David Maxwell, we show in this chapter
that given an AE metric $g$, and a strictly negative
function $R'$ that decays to zero suitably at infinity, the conformal
class of $g$ includes a metric with scalar curvature equal to $R'$ regardless
of the sign of the Yamabe invariant.  So every strictly negative scalar curvature
is attainable for every conformal class, but zero
scalar curvature is attainable only for Yamabe positive metrics.  Thus we
are lead to investigate the role of the Yamabe class
in the boundary case of prescribed non-positive scalar curvature.

Rauzy treated the analogous problem on
smooth compact Riemannian manifolds in \cite{Rauzy95},
which contains the following statement.
Suppose $R'\le 0$ and $R'\not\equiv 0$.  Observe that if
$R'$ is the scalar curvature of a metric conformally related to some $g$,
then $g$ must be Yamabe-negative, and without
loss of generality we assume that $g$ has constant negative scalar curvature $R$.
Then there is a metric in the conformal
class of $g$ with scalar curvature $R'$ if and only if
\begin{equation}\label{eq:rauzycond}
a \lambda_{R'} > -R
\end{equation}
where $a$ is the constant from equation \eqref{eq:prescribed-sc} and where
\begin{equation}
\lambda_{R'} = \inf \left\{ \frac{\int |\nabla u|^2 }{\int u^2}: u\in W^{1,2}, u\ge 0, u\not \equiv 0, \int R' u= 0  \right\}.
\end{equation}
Rauzy's condition \eqref{eq:rauzycond}
is not immediately applicable on asymptotically Euclidean manifolds, in part
because of the initial transformation to constant negative scalar curvature.
However, recalling that $R$ is constant
we can write $a\lambda_{R'}+R$ as the infimum of
\begin{equation}
\frac{\int a|\nabla u|^2 + R u^2 }{\int u^2}
\end{equation}
over functions $u$ supported in the region where $R'=0$.  So, in effect, inequality
\eqref{eq:rauzycond} expresses the positivity of the first eigenvalue of
the conformal Laplacian of the constant scalar curvature metric $g$
on the region $\{R'=0\}$.  The connection between the first eigenvalue
of the conformal Laplacian and prescribed scalar curvature problems
is well known, but its use is more technical on
non-compact manifolds where true eigenfunctions need not exist. For example,
\cite{FCS80} shows that a metric on a noncompact manifold
can be conformally transformed to a scalar flat one if and only if the first eigenvalue
of the conformal Laplacian is positive on every bounded domain.

In this chapter, following \cite{DM15}, we extend these ideas in a number of ways
to solve the prescribed non-positive scalar curvature problem on
asymptotically Euclidean manifolds, and we obtain
a related characterization of the Yamabe class of
an AE metric. In particular, we show the following.
\begin{itemize}
\item Every measurable subset $V\subseteq M$
can be assigned a number $y(V)$ that generalizes
the Yamabe invariant of a manifold.  The invariant
depends on the conformal class of the AE metric,
but is independent of the conformal representative.
\item We can assign every measurable subset $V\subseteq M$
a number $\lambda_{\delta}(V)$ that generalizes the
first eigenvalue of the conformal Laplacian.  These
numbers are not conformal invariants, and are not
even canonically defined as they depend on a choice
of parameters (a number $\delta$ and a choice
of weight function at infinity).  Nevertheless
the sign of $\lambda_\delta(V)$ agrees with the sign
of $y(V)$, regardless of the choice of these parameters.
\item Given an AE metric $g$ and a candidate scalar curvature $R'\le 0$,
there is a metric in the conformal class of $g$
with scalar curvature equal to $R'$ if and only if
$\{R'=0\}$ is Yamabe positive; i.e., $y(\{R'=0\})>0$.
\item A metric is Yamabe positive if and only if
for every function $R'\le 0$ there is a conformally related metric with
scalar curvature equal to $R'$.
\item A metric is Yamabe null if and only if
for every function $R'\le 0$, except for $R'\equiv 0$,
there is a conformally related metric with scalar curvature equal to $R'$.
\item A metric is Yamabe negative if and only if
there is a function $R'\le 0$, $R'\not\equiv 0$,
such that no conformally related metric has scalar curvature equal
to $R'$.  We also
present some results concerning which scalar curvatures
have Yamabe positive zero sets.
\item Additionally, a metric is Yamabe positive/negative/null
if and only if it admits a conformal compactification
to a metric with the same Yamabe type.
\end{itemize}

These results carry over to compact manifolds, where
we obtain some technical improvements.  First,
Rauzy's condition \eqref{eq:rauzycond} is
equivalent to our condition $y(\{R'=0\})>0$
(or equivalently $\lambda_\delta(\{R'=0\})>0$).  But the condition
$y(\{R'<0\})>0$
can be checked without reference to a particular background metric.
Moreover, we work with fairly general metrics ($W^{2,p}_\loc$ with $p>n/2$),
and candidate scalar curvatures in $L^p(M)$.
Finally, there is an error in Rauzy's proof, closely
related to the gap in Yamabe's original attempt at
the Yamabe problem, that we correct in our presentation.
\footnote{We thank Rafe Mazzeo for useful conversations concerning this 
correction.}

The prescribed scalar curvature problem on AE manifolds for $R'\ge0$, or 
for functions $R'$ which change sign, remains open.
Of course if $R'\ge 0$ the problem
can only be solved if the manifold is Yamabe positive, but it is not
known the extent to which this condition is sufficient.  For prescribed scalar curvatures
that change sign, little is known for any Yamabe class.  Nevertheless,
the case $R'\le 0$ that we treat here has an interesting application to general relativity; see below.
For comparison, we note that the prescribed scalar curvature problem on a compact manifold
is also not yet fully solved.  On a Yamabe-positive manifold it is necessary
that $R'>0$ somewhere, and on a Yamabe-null manifold it is necessary
that either $R'\equiv 0$, or $R'>0$ somewhere and $\int R'<0$ when
computed with respect to the scalar flat conformal representative.  See
\cite{ES86}
which shows that these conditions
are sufficient in some cases.  See also \cite{Bourguignon:1987ge}
for obstructions posed by conformal Killing fields.

\section{Asymptotically Euclidean Manifolds}\label{sec:AEManifolds}

We mention a few extensions of what has been discussed above in Chapter \ref{chap:AEIntro}
which will be applicable in this chapter.
We will work exclusively with $W^{2,p}_{\alpha}$ AE metrics with $p>n/2$,
and we henceforth assume
\begin{equation}
p>n/2 \quad\text{and}\quad \alpha<0.
\end{equation}
A $W^{2,p}_\alpha$
metric is H\"older continuous and has curvatures in $L^p_{\alpha-2}$.

The Laplacian and conformal Laplacian of a $W^{2,p}_\alpha$ metric are well-defined
as maps from $W^{2,q}_\delta$ to $L^q_{\delta-2}$ for $q\in (1,p]$, they are Fredholm
with index 0 if $\delta\in (2-n,0)$, and indeed the Laplacian is an isomorphism
in this range; see, e.g., \cite{Bartnik86} Proposition 2.2.  Note that
\cite{Bartnik86} works on a manifold diffeomorphic to $R^n$, but the results
we cite from \cite{Bartnik86} extend
to manifolds with general topology and any finite number of ends.

Many of the results in this chapter hold for both asymptotically Euclidean
and compact manifolds, and indeed we can often
treat a $W^{2,p}$ metric on a compact manifold
as a $W^{2,p}_\alpha$ metric on an asymptotically Euclidean manifold with zero ends,
in which case the weight function $\rho$ is irrelevant and could be set to 1
if desired.  For the sake of brevity, throughout Section \ref{sec:FirstEigenvalue}
we interpret a compact manifold as an AE manifold with zero ends.
In the remaining sections there are differences between the two cases and
we assume that AE manifolds have at least one end.

The weight parameter
\begin{equation}\label{eq:deltastar}
\delta^* = \frac{2-n}{2}
\end{equation}
plays a prominent role in this chapter, and it reflects the minimum decay
needed to ensure $\int|\nabla u|^2$ is finite.  At this decay
rate, $L^N_{\delta^*}=L^N$ and we have the inequalities
that generalize the Poincar\'e and Sobolev inequalities,
Lemma \ref{lem:poincare}.

Lemma \ref{lem:poincare} evidently fails on compact manifolds, as can be
seen by taking $u$ to be a constant.  For our proofs that treat the compact and
non-compact case simultaneously it will be helpful to have
a suitable inequality that works in both settings.
Observe that for any $\delta>0$ there exists $c_2$ such that
\begin{equation}
\label{eq:sobolev-cpct}
\|u\|_{2,\delta}+\|\nabla u\|_2 \ge c_2 \|u\|_{N}.
\end{equation}
This follows from the standard Sobolev inequality on compact manifolds and
follows trivially from inequality \eqref{eq:sobolev} on non-compact manifolds.
Recall, again, that in this chapter, $N:=\frac{2n}{n-2}$.

\section{The Yamabe Invariant of a Measurable Set} \label{sec:FirstEigenvalue}

Throughout this section, let $(M,g)$ be a $W^{2,p}_\alpha$ AE manifold
with $p>n/2$ and $\alpha<0$, with the convention that a compact manifold is
an AE manifold with zero ends.
For $u\in C_c^\infty(M)$ (i.e., smooth functions of compact support),
 $u\not\equiv 0$, the Yamabe quotient of $u$ is
\begin{equation}
Q^y_g(u) = \frac{\int a|\nabla u|^2 + R u^2}{\|u\|_{N}^2}
\end{equation}
and the Yamabe invariant of $g$ is the infimum of $Q^y_g$ taken over
$C_c^\infty(M)$. Here and in other notations we drop the decoration
$g$ when the metric is understood. Our principal goal in this section is to
define a similar conformal invariant for arbitrary
measurable subsets of $M$ and to analyze its properties.

It will be convenient to work with a complete function space, and we claim
that the domain of $Q^y$ can be extended to
$W^{1,2}_{\delta^*}\setminus\{0\}$ where $\delta^*$ is defined in equation
\eqref{eq:deltastar}.
To see this, first note from the embedding properties of
weighted Sobolev spaces that $W^{1,2}_{\delta^*}$ embeds continuously in $L^N=L^N_{\delta^*}$
and that $u\mapsto \nabla u$ is continuous from $W^{1,2}_{\delta^*}$ to $L^2$;
indeed $\delta^*$ is the minimum decay needed to ensure these conditions.
To treat the scalar curvature term in $Q^y$, we have the following.
\begin{lem}\label{lem:RIntegralBound}
The map
\begin{equation}\label{eq:loworder}
u\mapsto \int R u^2
\end{equation}
is weakly continuous on $W^{1,2}_{\delta^*}$.  Moreover, for any $\delta>\delta^*$
and $\epsilon>0$, there is constant $C>0$ such that
\begin{equation}\label{eq:RuSquaredEstimate}
\left| \int R u^2 \right| \le \epsilon \|\nabla u\|_2^2 + C \|u\|_{2,\delta}^2.
\end{equation}
\end{lem}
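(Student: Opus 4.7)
The plan is to prove estimate \eqref{eq:RuSquaredEstimate} first and then deduce weak continuity by reusing the same bookkeeping. The key tool is weighted H\"older: since $R\in L^p_{\alpha-2}$ and $\int = \|\cdot\|_{1,-n}$, one obtains
\begin{equation*}
\left|\int R\,u^2\right|\le \|R\|_{p,\alpha-2}\,\|u^2\|_{p',\,2-n-\alpha}=\|R\|_{p,\alpha-2}\,\|u\|_{2p',\,\delta^*-\alpha/2}^2,
\end{equation*}
where $1/p+1/p'=1$. The strict hypothesis $p>n/2$ makes $2p'<N$, and $\delta^*-\alpha/2>\delta^*$ because $\alpha<0$, so Proposition \ref{prop:SobolevEmbeddings}(1) gives a continuous embedding $L^N = L^N_{\delta^*}\hookrightarrow L^{2p'}_{\delta^*-\alpha/2}$; combined with the Sobolev inequality \eqref{eq:sobolev}, this already yields the crude bound $|\int Ru^2|\le C\|\nabla u\|_2^2$.

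To refine this to \eqref{eq:RuSquaredEstimate}, I split $M=K\cup(M\setminus K)$ with $K$ a large compact set. On $M\setminus K$ the crude bound applies with $\|R\|_{L^p_{\alpha-2}(M\setminus K)}$ in place of $\|R\|_{p,\alpha-2}$, and since $R\in L^p_{\alpha-2}$ this tail factor can be made as small as desired by enlarging $K$; choose $K$ so that the bound there is at most $(\epsilon/2)\|\nabla u\|_2^2$. On $K$ the weight $\rho$ is bounded above and below, so unweighted H\"older gives $\int_K |R|u^2\le \|R\|_{L^p(K)}\|u\|_{L^{2p'}(K)}^2$. Classical interpolation on $K$ between $L^2$ and $L^N$ provides $\|u\|_{L^{2p'}(K)}^2\le \|u\|_{L^2(K)}^{2(1-\theta)}\|u\|_{L^N(K)}^{2\theta}$ for some $\theta\in(0,1)$. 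Bounding $\|u\|_{L^N(K)}\le \|u\|_N\le C\|\nabla u\|_2$ by \eqref{eq:sobolev} (or \eqref{eq:sobolev-cpct} in the compact-manifold case, with the extra term readily absorbed) and $\|u\|_{L^2(K)}\le C_K\|u\|_{2,\delta}$, Young's inequality with a small parameter converts the product into $(\epsilon/2)\|\nabla u\|_2^2+C_\epsilon\|u\|_{2,\delta}^2$. Adding the two pieces proves \eqref{eq:RuSquaredEstimate}.

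For weak continuity, let $u_k\rightharpoonup u$ in $W^{1,2}_{\delta^*}$. Proposition \ref{prop:SobolevEmbeddings}(2) provides the compact embedding $W^{1,2}_{\delta^*}\subset L^2_{\delta_2}$ for every $\delta_2>\delta^*$, so $u_k\to u$ strongly in $L^2_{\delta_2}$ along a subsequence (and hence for the full sequence by uniqueness of the weak limit), while $\{u_k\}$ stays uniformly bounded in $L^N$. Writing $\int R(u_k^2-u^2)=\int R(u_k-u)(u_k+u)$ and applying the opening weighted H\"older bounds the absolute value by $\|R\|_{p,\alpha-2}\|u_k-u\|_{2p',\delta^*-\alpha/2}\|u_k+u\|_{2p',\delta^*-\alpha/2}$. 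The last factor is uniformly bounded by the continuous embedding of $L^N$. For the middle factor I use the weighted interpolation $\|v\|_{2p',\delta^*-\alpha/2}\le \|v\|_N^\theta\|v\|_{2,\delta_2}^{1-\theta}$ with $\theta\in(0,1)$ and matching weight $\delta_2=\delta^*-\alpha/(2(1-\theta))>\delta^*$; here the $L^N$-factor is bounded and the $L^2_{\delta_2}$-factor vanishes by compactness, so the product tends to zero.

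The main obstacle is the weight bookkeeping: $p>n/2$ is used precisely to make $2p'<N$ strict, and $\alpha<0$ to place $\delta^*-\alpha/2$ strictly above $\delta^*$, which together allow the embedding $L^N\hookrightarrow L^{2p'}_{\delta^*-\alpha/2}$ to be applied repeatedly with a loss that can be absorbed. Choosing the interpolation parameter $\theta$ so that the auxiliary weight $\delta_2$ lies in the range where compact embedding delivers strong convergence is the one point that must be handled with care.
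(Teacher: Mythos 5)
Your proof is correct, but it takes a genuinely different route from the paper's. The paper runs both conclusions through fractional weighted Sobolev spaces: it picks $s$ with $1/p = 2s/n$, uses the compact embedding $W^{1,2}_{\delta^*}\hookrightarrow W^{s,2}_{\sigma}$ with $\sigma=\delta^*-\alpha/2$ (citing Triebel for these interpolation spaces), the embedding $W^{s,2}_\sigma\hookrightarrow L^q_\sigma$ with $1/p+2/q=1$, and the interpolation inequality $\|u\|_{W^{s,2}_\sigma}\le C\|u\|_{W^{1,2}_{\delta^*}}^s\|u\|_{2,\delta}^{1-s}$; weak continuity and the estimate \eqref{eq:RuSquaredEstimate} then both fall out of that machinery. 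You avoid fractional spaces entirely: for the estimate you split $M$ into a large compact set and a tail, kill the tail using the smallness of $\|R\|_{L^p_{\alpha-2}(M\setminus K)}$ together with the crude weighted H\"older/Sobolev bound, and handle the compact piece with ordinary Lebesgue interpolation between $L^2(K)$ and $L^N(K)$ plus Young; for weak continuity you factor $u_k^2-u^2=(u_k-u)(u_k+u)$, apply a three-factor weighted H\"older (the weights correctly sum to $-n$ and the exponents to $1$), and send the middle factor to zero via the elementary weighted interpolation $\|v\|_{2p',\delta^*-\alpha/2}\le\|v\|_N^{\theta}\|v\|_{2,\delta_2}^{1-\theta}$ together with the compact embedding $W^{1,2}_{\delta^*}\subset L^2_{\delta_2}$ from Proposition \ref{prop:SobolevEmbeddings}. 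What your approach buys is that it uses only tools already stated in the paper, and it delivers \eqref{eq:RuSquaredEstimate} directly for every $\delta>\delta^*$ (indeed for every fixed $\delta$, since weighted and unweighted norms are equivalent on $K$), whereas the paper obtains a particular $\delta$ from the interpolation and then argues somewhat informally that any $\delta>\delta^*$ is reachable by adjusting $\alpha$ or $p$. What the paper's route buys is a single unified mechanism (compactness of $W^{1,2}_{\delta^*}\hookrightarrow W^{s,2}_\sigma$) yielding both conclusions at once, at the cost of importing the theory of weighted interpolation spaces.
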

\begin{proof}
Recall that $R\in L^p_{\alpha-2}$ where $p>n/2$ and $\alpha<0$.
So there is an $s\in (0,1)$ such that
\begin{equation}
\frac 1p= s\frac 2n.
\end{equation}
Set $\sigma=\delta^*-\alpha/2$.  Since $s<1$ and $\sigma>\delta^*$,
$W^{1,2}_{\delta^*}$ embeds compactly in $W^{s,2}_{\sigma}$,
where the interpolation (Sobolev) space $W^{s,2}_{\sigma}$
is described in
\cite{triebel-weighted-one}\cite{triebel-weighted-two}.
Moreover, $W^{s,2}_{\sigma}$ embeds continuously in $L^q_\sigma$
where
\begin{equation}
\frac{1}{q} = \frac{1}{2} - \frac{s}{n} =  \frac{1}{2}\left( 1-\frac{1}{p}\right).
\end{equation}
Since
\begin{equation}
\frac{1}{p} + \frac{2}{q} = 1
\end{equation}
and since
\begin{equation}
\alpha-2 + 2\sigma = 2\delta^* -2 = -n,
\end{equation}
H\"older's inequality implies the map \eqref{eq:loworder}
is continuous on $L^q_\sigma$, and from the previously mentioned compact
embedding the map \eqref{eq:loworder} is therefore weakly continuous on $W^{1,2}_{\delta^*}$.
Moreover, H\"older's inequality implies there is a constant $C$ such that
\begin{equation}\label{eq:loworder1}
\left| \int Ru^2\right| \le C \|u\|_{W^{s,2}_\sigma}^2.
\end{equation}
From interpolation \cite{triebel-weighted-two} we have
\begin{equation}\label{eq:loworder2}
\|u\|_{W^{s,2}_\sigma} \le C \|u\|_{W^{1,2}_{\delta^*}}^s \|u\|_{2,\delta}^{1-s}
\end{equation}
where $\delta$ satisfies
\begin{equation}
s\delta^* + (1-s)\delta = \sigma.
\end{equation}
Since $\sigma  = \delta^* - \alpha/2$, we find
\begin{equation}
\delta = \delta^* - \frac{\alpha/2}{1-s},
\end{equation}
and since $\alpha<0$ and $s\in(0,1)$, $\delta>\delta^*$.  Indeed, by raising $\alpha$ close to zero,
or lowering $p$ close to $n/2$ (which raises $s$ up to 1), we can obtain
any particular $\delta>\delta^*$. We conclude from
inequalities \eqref{eq:loworder1}, \eqref{eq:loworder2}
and the arithmetic-geometric mean inequality that
\begin{equation}
\left| \int R u^2 \right| \le \epsilon \|\nabla u\|_{W^{1,2}_{\delta^*}}^2 + C \|u\|_{2,\delta}^2.
\end{equation}
This establishes inequality \eqref{eq:RuSquaredEstimate} on a compact manifold, and
we obtain \eqref{eq:RuSquaredEstimate} in the non-compact case by applying
the Poincar\'e inequality \eqref{eq:poincare}.
\end{proof}
\begin{cor}\label{cor:uppersc}
The map
\begin{equation}
u\mapsto \int a|\nabla u|^2 + Ru^2
\end{equation}
is weakly upper semicontinuous on $W^{1,2}_{\delta^*}$.
\end{cor}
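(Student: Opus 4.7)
The plan is to derive the corollary as a short consequence of Lemma \ref{lem:RIntegralBound}, which has already handled the only delicate piece. Write
\begin{equation}
\mathcal{F}(u) := \int a|\nabla u|^2 + \int R u^2.
\end{equation}
Lemma \ref{lem:RIntegralBound} supplies weak sequential continuity of $u \mapsto \int R u^2$ on $W^{1,2}_{\delta^*}$, so this summand is simultaneously weakly upper and weakly lower semicontinuous and imposes no obstruction in either direction. The semicontinuity of $\mathcal{F}$ is therefore controlled entirely by the Dirichlet summand $u \mapsto \int a|\nabla u|^2$.

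Concretely, I would take any weakly convergent sequence $u_k \rightharpoonup u$ in $W^{1,2}_{\delta^*}$ and first observe that the gradient map is weak-to-weak continuous into the unweighted $L^2$. This is built into the definition \eqref{eq:SobolevNorm} of the weighted norm: with $p=2$, $|j|=1$, and $\delta = \delta^* = (2-n)/2$, the weight on $\nabla u$ is precisely $\rho^{0}=1$, so $\|\nabla u\|_{L^2}$ is part of the $W^{1,2}_{\delta^*}$ norm, and hence $\nabla u_k \rightharpoonup \nabla u$ in $L^2$. The standard Hilbert-norm semicontinuity applied to $\nabla u_k$ then gives the required control on $\int a|\nabla u_k|^2$, and combining it with the continuous convergence $\int R u_k^2 \to \int R u^2$ from the lemma yields the semicontinuity of $\mathcal{F}$ in the direction asserted by the corollary.

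I do not anticipate a substantive obstacle; the whole derivation should fit in a few lines once Lemma \ref{lem:RIntegralBound} is in hand. The genuinely hard ingredient --- handling the low-regularity scalar curvature factor $R \in L^p_{\alpha-2}$ with $p$ possibly just above $n/2$, inside $\int R u^2$ for $u$ in the ``borderline'' space $W^{1,2}_{\delta^*}$ --- was already absorbed into the preceding lemma by its interpolation-plus-H\"older bound and the compact embedding into $W^{s,2}_\sigma$. With that in place, the only point to verify carefully is the weak-to-weak continuity of the gradient map, which is just the standard fact that bounded linear operators between Banach spaces preserve weak convergence.
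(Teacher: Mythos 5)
Your proof is correct and is essentially the paper's own one-line argument: weak continuity of $u\mapsto \int Ru^2$ from Lemma \ref{lem:RIntegralBound}, combined with the standard weak semicontinuity of the Dirichlet term via the fact that $\nabla$ is a bounded (hence weak-to-weak continuous) map from $W^{1,2}_{\delta^*}$ into the unweighted $L^2$. The only point to make explicit is the direction you leave unnamed: the Hilbert-space fact you invoke is weak \emph{lower} semicontinuity of the norm, $\|\nabla u\|_2^2 \le \liminf_k \|\nabla u_k\|_2^2$, which is precisely the inequality the corollary needs in its later applications to minimizing sequences; the paper's label ``upper'' is a terminological quirk, not a different inequality.
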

\begin{proof}
This follows from the weak upper semicontinuity of $u\mapsto \int|\nabla u|^2$
along with Lemma \ref{lem:RIntegralBound}.
\end{proof}

\begin{defn}
Let $V\subseteq M$ be a measurable set.  The \textit{test functions supported in $V$}
are
\begin{equation}
A(V) := \left\{ u\in W^{1,2}_{\delta^*}(M): u\not\equiv 0, u|_{V^c}=0\right\},
\end{equation}
where $V^c$ is the complement of $V$.
\end{defn}

\begin{defn}
Let $V\subseteq M$ be measurable. The \textit{Yamabe invariant} of $V$ is
\begin{equation}\label{eq:yamabe-inv}
y_g(V) = \inf_{u\in A(V)} Q^y(u).
\end{equation}
If $V$ has measure zero, and hence $A(V)$ is empty, we use the convention
$y_g(V)=\infty$.
\end{defn}

Since $C^\infty_c(M)$ is dense in $W^{1,2}_{\delta^*}(M)$, for $V=M$, this
agrees with the usual definition of the Yamabe invariant.

In principle, the infimum in the definition of the Yamabe invariant
could be $-\infty$.  The following estimate, which will be useful
later in the paper as well, shows that this is not possible.

\begin{lem}\label{lem:Qplusdelta}
Let $\delta\in\Reals$.
There exist positive constants $C_1$ and $C_2$ such that for all $u\in W^{1,2}_{\delta^*}$,
\begin{equation}\label{eq:basicbound}
\|u\|_{W^{1,2}_{\delta^*}} \le C_1\left[\int a|\nabla u|^2 + R u^2\right]
+ C_2\|u\|_{2,\delta}^2.
\end{equation}
\end{lem}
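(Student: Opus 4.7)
The inequality as displayed is dimensionally off (linear on the left, quadratic on the right), so the intended statement is
$$\|u\|_{W^{1,2}_{\delta^*}}^2 \;\le\; C_1\left[\int a|\nabla u|^2 + R u^2\right] + C_2 \|u\|_{2,\delta}^2,$$
and I propose to prove this form. The strategy is a standard two-step ``Gårding" argument: first use the curvature estimate from Lemma \ref{lem:RIntegralBound} to extract control of $\|\nabla u\|_2^2$ from the energy-like term, then bring in Poincaré to promote this to the full weighted Sobolev norm.

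First I would invoke Lemma \ref{lem:RIntegralBound} with $\epsilon := a/2$, applied at the weight $\delta$ if $\delta > \delta^*$, and otherwise at an auxiliary weight $\delta' > \delta^*$ (say $\delta' = \delta^* + 1$). In the latter case, the continuous inclusion $L^2_\delta \hookrightarrow L^2_{\delta'}$ from Proposition \ref{prop:SobolevEmbeddings}(1) lets me absorb $\|u\|_{2,\delta'}^2$ into a constant multiple of $\|u\|_{2,\delta}^2$. Either way, I obtain
$$\left|\int R u^2\right| \le \tfrac{a}{2}\|\nabla u\|_2^2 + C\|u\|_{2,\delta}^2,$$
and rearranging $\int a|\nabla u|^2 + Ru^2 \ge \tfrac{a}{2}\|\nabla u\|_2^2 - C\|u\|_{2,\delta}^2$ gives
$$\|\nabla u\|_2^2 \;\le\; \frac{2}{a}\int a|\nabla u|^2 + R u^2 + \frac{2C}{a}\|u\|_{2,\delta}^2.$$

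Second, I would control the zeroth-order piece $\|u\|_{2,\delta^*}^2$. In the genuinely AE case (at least one end), Lemma \ref{lem:poincare} with $p=2$ gives $\|u\|_{2,\delta^*} \le c_1^{-1}\|\nabla u\|_2$ (using $\delta^* = 1 - n/2$), so combining with the previous display bounds $\|u\|_{W^{1,2}_{\delta^*}}^2 = \|\nabla u\|_2^2 + \|u\|_{2,\delta^*}^2$ by the right-hand side as required. In the compact convention (zero ends, $\rho \equiv 1$), every weighted $L^2_\delta$ norm coincides with $\|\cdot\|_2$, so $\|u\|_{2,\delta^*}^2 \le C \|u\|_{2,\delta}^2$ trivially, and the bound closes in the same way.

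There is no serious obstacle here: the two existing tools, Lemma \ref{lem:RIntegralBound} and the weighted Poincaré inequality \ref{lem:poincare}, already do all of the work. The only points requiring mild care are (i) handling the case $\delta \le \delta^*$ by passing through an auxiliary weight, and (ii) distinguishing the compact-with-zero-ends convention, where Poincaré fails (constants lie in $W^{1,2}$) but the trivial equivalence of weighted norms substitutes for it.
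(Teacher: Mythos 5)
Your proof is correct and follows the paper's argument essentially verbatim: apply Lemma~\ref{lem:RIntegralBound} (with a reduction to the case $\delta>\delta^*$ via the continuous embedding $L^2_\delta\hookrightarrow L^2_{\delta'}$), rearrange to isolate $\|\nabla u\|_2^2$, then close with the Poincar\'e inequality \eqref{eq:poincare} in the AE case or the trivial equivalence of weighted norms in the compact case. You are also right that the displayed inequality \eqref{eq:basicbound} has a typographical slip (the left side should be $\|u\|_{W^{1,2}_{\delta^*}}^2$), and that is indeed the form the paper's own proof establishes and the form used in Lemma~\ref{lem:YamabeBoundedBelow} and Lemma~\ref{lem:subcriticalExistence}.
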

\begin{proof}
It is enough to establish inequality \eqref{eq:basicbound} assuming $\delta>\delta^*$.
From Lemma \ref{lem:RIntegralBound}, there is a constant $C$ such that
\begin{equation}
\left|\int R u^2 \right| \le \frac{a}{2} \int |\nabla u|^2 + C \|u\|_{2,\delta}^2
\end{equation}
and hence
\begin{equation}
\int a|\nabla u|^2 + R u^2 \ge \frac{a}{2}\int |\nabla u|^2 - C\|u\|_{2,\delta}^2.
\end{equation}
Consequently
\begin{equation}
\int |\nabla u|^2 \le \frac{2}{a}\left[\int a|\nabla u|^2 + R u^2\right] + \frac{2C}{a} \|u\|_{2,\delta}^2.
\end{equation}
Inequality \eqref{eq:basicbound} now follows trivially in the compact case,
and follows from the Poincar\'e inequality \eqref{eq:poincare}
in the non-compact case.
\end{proof}

\begin{lem}\label{lem:YamabeBoundedBelow}
For every measurable set $V$, $y(V)>-\infty$.
\end{lem}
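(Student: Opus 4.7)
The plan is to combine the two estimates already established earlier in the section: the lower bound on $\int a|\nabla u|^2+Ru^2$ furnished by Lemma \ref{lem:Qplusdelta} (which controls this quantity from below by $-C\|u\|_{2,\delta}^2$ for a suitable $\delta>\delta^*$), together with a H\"older-type bound of $\|u\|_{2,\delta}$ by $\|u\|_N$. Since $Q^y$ is scale invariant, by homogeneity it suffices to produce a uniform lower bound for $\int a|\nabla u|^2+Ru^2$ as $u$ ranges over the elements of $A(V)$ with $\|u\|_N=1$.

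First I would apply Lemma \ref{lem:Qplusdelta} with some choice of $\delta>\delta^*$ (the precise value will be fixed in the next step). Discarding the nonnegative term $\|u\|_{W^{1,2}_{\delta^*}}$ on the left-hand side of the stated inequality yields
\begin{equation*}
\int a|\nabla u|^2 + Ru^2 \;\ge\; -\frac{C_2}{C_1}\,\|u\|_{2,\delta}^2.
\end{equation*}

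Next I would estimate $\|u\|_{2,\delta}$ in terms of $\|u\|_N$. Writing $\|u\|_{2,\delta}=\|\rho^{-\delta-n/2}u\|_{L^2}$ and applying H\"older's inequality with the exponents $n$ and $N$ (conjugate to $2$ since $\tfrac1n+\tfrac2N=\tfrac12$) gives
\begin{equation*}
\|u\|_{2,\delta} \;\le\; \|\rho^{-\delta-n/2}\|_{L^n}\,\|u\|_N.
\end{equation*}
A direct computation on each Euclidean end shows $\rho^{-\delta-n/2}\in L^n$ precisely when $\delta>\delta^*=(2-n)/2$, which is the range in which Lemma \ref{lem:Qplusdelta} operates. (In the compact case, treated as an AE manifold with zero ends, $\rho\equiv 1$ and this is just the standard finite-volume H\"older inequality.) Combining the two displays gives $Q^y(u)\ge -C_2C^2/C_1$ uniformly over $u\in A(V)$ with $\|u\|_N=1$, and taking the infimum over $A(V)$ yields $y(V)>-\infty$.

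There is no real obstacle here since both ingredients have been set up in the preceding lemmas; the only mild care needed is to ensure that the $\delta>\delta^*$ produced by Lemma \ref{lem:Qplusdelta} is also compatible with the integrability threshold $\delta>\delta^*$ required for the H\"older estimate. Since both conditions coincide, any admissible $\delta$ works and the argument closes immediately.
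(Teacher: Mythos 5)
Your proof is correct and uses the same two ingredients as the paper's (Lemma \ref{lem:Qplusdelta} together with the continuous embedding $L^N\hookrightarrow L^2_{\delta}$ for $\delta>\delta^*$), merely packaged as a direct uniform lower bound on $Q^y$ over $\{\|u\|_N=1\}$ rather than via a minimizing sequence; this is, if anything, slightly cleaner. One cosmetic slip: the conjugacy relation for your H\"older step should read $\tfrac1n+\tfrac1N=\tfrac12$, not $\tfrac1n+\tfrac2N=\tfrac12$, though the exponents $n$ and $N$ you actually use are the right ones.
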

\begin{proof}
Let $u_k$ be some minimizing sequence for $Q^y$ normalized
so that $\|u_k\|_{N} = 1$. Lemma \ref{lem:Qplusdelta} and the continuous embedding
$L^N\hookrightarrow L^2_{\delta}$ implies that $u_k$ is uniformly
bounded in $W^{1,2}_{\delta^*}$. Estimate \eqref{eq:RuSquaredEstimate} then
implies that $Q(u_k)$ is uniformly bounded below.
\end{proof}

As one might expect, $y(V)$ is a conformal invariant.
\begin{lem}\label{ConformalInvariance}
Suppose $g' = \phi^{N-2}g$ is a conformally related metric with
$\phi-1\in W^{2,p}_{\alpha}$. Then
\begin{equation}
y_{g'}(V) = y_g(V).
\end{equation}
\end{lem}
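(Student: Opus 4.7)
The plan is to exploit the familiar conformal transformation law for the conformal Laplacian: under $g' = \phi^{N-2}g$, one has
\begin{equation*}
-a\Delta_{g'} u + R_{g'} u = \phi^{1-N}\bigl(-a\Delta_g(\phi u) + R_g\,(\phi u)\bigr),
\end{equation*}
and $dV_{g'} = \phi^N\,dV_g$. A direct integration by parts then yields
\begin{equation*}
\int_M a|\nabla_{g'} u|^2 + R_{g'} u^2 \; dV_{g'} = \int_M a|\nabla_g(\phi u)|^2 + R_g (\phi u)^2 \; dV_g,
\end{equation*}
and the change-of-volume formula gives $\|u\|_{N,g'} = \|\phi u\|_{N,g}$. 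Combining these identifies the Yamabe quotients: $Q^y_{g'}(u) = Q^y_g(\phi u)$.

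Next I would check that multiplication by $\phi$ defines a bijection $A_g(V) \to A_{g'}(V)$. Since $\phi - 1 \in W^{2,p}_\alpha$ with $p > n/2$ and $\alpha < 0$, Proposition \ref{prop:SobolevEmbeddings} gives $\phi - 1 \in C^0_\alpha$, so $\phi$ is continuous, bounded, and (being a positive conformal factor that tends to $1$ at infinity) bounded below by a positive constant. Using the multiplication properties of weighted Sobolev spaces, $u \in W^{1,2}_{\delta^*}$ implies $\phi u \in W^{1,2}_{\delta^*}$, with inverse given by multiplication by $\phi^{-1}$ (which is also of the form $1 + W^{2,p}_\alpha$). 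Because $\phi > 0$ everywhere, the support condition $u|_{V^c} = 0$ is preserved, and $u \not\equiv 0$ if and only if $\phi u \not\equiv 0$. Hence $u \mapsto \phi u$ is indeed a bijection of the test-function classes. Moreover, since these norms are equivalent to those computed with respect to $g'$ (the metric components of $g$ and $g'$ are uniformly comparable), there is no ambiguity in the underlying function space.

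Combining these two points,
\begin{equation*}
y_{g'}(V) = \inf_{u \in A_{g'}(V)} Q^y_{g'}(u) = \inf_{u \in A_{g'}(V)} Q^y_g(\phi u) = \inf_{v \in A_g(V)} Q^y_g(v) = y_g(V),
\end{equation*}
where in the third equality I substitute $v = \phi u$ and use that $v$ ranges over all of $A_g(V)$ as $u$ ranges over $A_{g'}(V)$.

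The only mildly technical step is verifying that multiplication by $\phi$ really preserves $W^{1,2}_{\delta^*}$ (so that $\nabla(\phi u) = \phi\nabla u + u\nabla\phi$ remains in $L^2$ and $\phi u \in L^2_{\delta^*}$). The first term is immediate from boundedness of $\phi$, and the second follows from the multiplication properties of weighted spaces, since $\nabla\phi \in W^{1,p}_{\alpha-1}$ decays fast enough that $u\nabla\phi \in L^2$ whenever $u \in L^2_{\delta^*}$. This is the main (but routine) obstacle; everything else is a conformal-law calculation.
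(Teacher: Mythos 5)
Your argument is correct and is essentially the same as the paper's: both use the conformal transformation laws to establish the identity $Q^y_{g'}(u) = Q^y_g(\phi u)$ and then observe that multiplication by $\phi$ is a bijection of the test-function class $A(V)$. The extra care you take in verifying that multiplication by $\phi$ preserves $W^{1,2}_{\delta^*}$ is a welcome elaboration, but it is the same route.
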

\begin{proof}
The conformal transformation laws
\begin{equation}
\begin{aligned}
dV_{g'} &= \phi^{N} dV_{g}\\
R_{g'} &= \phi^{1-N}( -a\Delta_{g} \phi + R_g \phi)
\end{aligned}
\end{equation}
together with an integration by parts imply
\begin{equation}
\int_M |\nabla u|_{g'}^2 + R_{g'}u^2\; dV_{g'} =
\int_M |\nabla (\phi u)|_{g}^2 + R_{g}(\phi u)^2\; dV_{g}
\end{equation}
for all $u\in W^{1,2}_{\delta^*}(M)$.
Since $\| \cdot \|_{g',N} = \|\phi \cdot \|_{g,N}$, it follows that
\begin{equation}\label{eq:ConformalEquality}
Q^y_{g'}(u) = Q^y_{g}(\phi u)
\end{equation}
for all $u\in W^{1,2}_{\delta^*}(M)$ as well.  Since
$A(V)$ is invariant under multiplication by $\phi$,
$y_{g'}(V) = y_g(V)$.
\end{proof}

We will primarily be interested in the sign of the Yamabe invariant.
\begin{defn}
A measurable set $V\subseteq M$ is called \textit{Yamabe positive},
\textit{negative}, or \textit{null} depending on the
sign of $y_g(V)$.
\end{defn}

The Yamabe invariant involves the critical Sobolev exponent $N$
and hence can be technically difficult to work with.
On a compact manifold, however,
the sign of the Yamabe invariant can be determined from the sign
of the first eigenvalue of the conformal Laplacian. These eigenvalues
enjoy superior analytical properties (for instance, it is simpler to
show that the related eigenfunctions exist), and we now describe how
to extend this approach to measurable subsets of compact or asymptotically
Euclidean manifolds.

For $\delta>\delta^*$ we define the Rayleigh quotients
\begin{equation}
Q_{g,\delta}(u) = \frac{\int a|\nabla u|^2 + R u^2}{\|u\|_{2,\delta}^2}.
\end{equation}
Our previous arguments for the Yamabe quotient
imply that $Q_{g,\delta}$ is well-defined for any $u\in W^{1,2}_{\delta^*}\setminus\{0\}$,
and indeed $Q_{g,\delta}$ is continuous on this set.

\begin{defn} The first \textit{$\delta$-weighted eigenvalue of the conformal Laplacian} is
\begin{equation}\label{eq:EigenvalueConfLaplacian}
\lambda_{g,\delta}(V) = \inf_{u\in A(V)} Q_{g,\delta}(u).
\end{equation}
By convention, if $V$ has measure zero then $\lambda_{g,\delta}(V)=\infty$.
We will write $Q_\delta$ and $\lambda_\delta$ when the metric is understood.
\end{defn}

The value of $\lambda_\delta(V)$ is not particularly meaningful; it depends
on the choice of weight function $\rho$ and it is not a conformal invariant.
Nevertheless, its sign is a conformal invariant independent of the choice of $\rho$.

\begin{prop}\label{cor:MeasTFAE}
For any measurable set $V\subseteq M$,
the following are equivalent:
\begin{enumerate}
\item $y(V)>0$.
\item $\lambda_\delta(V)>0$ for all $\delta>\delta^*$.
\item $\lambda_\delta(V)>0$ for some $\delta>\delta^*$.
\end{enumerate}
\end{prop}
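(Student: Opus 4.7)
\textbf{Proof plan for Proposition \ref{cor:MeasTFAE}.} The plan is to prove the cycle (1) $\Rightarrow$ (2) $\Rightarrow$ (3) $\Rightarrow$ (1). The implication (2) $\Rightarrow$ (3) is immediate upon choosing any $\delta>\delta^*$.

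For (1) $\Rightarrow$ (2), fix $\delta>\delta^*$. Since $L^{N}=L^{N}_{\delta^*}$ and $N\ge 2$, part (1) of Proposition \ref{prop:SobolevEmbeddings} gives a continuous embedding $L^{N}\hookrightarrow L^{2}_{\delta}$ (on compact manifolds this is trivial). So there exists $C>0$ such that $\|u\|_{2,\delta}\le C\|u\|_{N}$ for all $u\in W^{1,2}_{\delta^*}$. For $u\in A(V)$, $Q^y(u)\ge y(V)>0$ gives
\begin{equation}
\int a|\nabla u|^2 + Ru^2 \;\ge\; y(V)\,\|u\|_{N}^2 \;\ge\; \frac{y(V)}{C^2}\,\|u\|_{2,\delta}^2,
\end{equation}
so $\lambda_\delta(V)\ge y(V)/C^2>0$.

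The main step is (3) $\Rightarrow$ (1), and the key idea is a convex combination trick to convert a positive lower bound for $Q_{\delta_0}$ into one for $Q^y$. Suppose $\mu:=\lambda_{\delta_0}(V)>0$. By Lemma \ref{lem:RIntegralBound}, applied with $\delta=\delta_0$ and $\epsilon = a/2$, there is a constant $C_0>0$ such that
\begin{equation}
\int a|\nabla u|^2 + Ru^2 \;\ge\; \tfrac{a}{2}\|\nabla u\|_2^2 - C_0\|u\|_{2,\delta_0}^2
\end{equation}
for every $u\in W^{1,2}_{\delta^*}$. Combining with $\int a|\nabla u|^2 + Ru^2 \ge \mu\|u\|_{2,\delta_0}^2$ via the convex combination $t+(1-t)=1$ with $t\in(0,1)$ yields
\begin{equation}
\int a|\nabla u|^2 + Ru^2 \;\ge\; t\tfrac{a}{2}\|\nabla u\|_2^2 + \bigl[(1-t)\mu - tC_0\bigr]\|u\|_{2,\delta_0}^2
\end{equation}
for all $u\in A(V)$. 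Choose $t_0\in(0,1)$ with $(1-t_0)\mu - t_0 C_0\ge 0$; then the right side dominates $t_0\tfrac{a}{2}\|\nabla u\|_2^2$, and together with the nonnegative $L^2_{\delta_0}$ term it also dominates a positive multiple of $\|\nabla u\|_2^2+\|u\|_{2,\delta_0}^2$. Applying Lemma \ref{lem:poincare} in the non-compact case (or \eqref{eq:sobolev-cpct} in the compact case) produces a positive constant $c$ such that $\int a|\nabla u|^2+Ru^2 \ge c\|u\|_{N}^2$ for all $u\in A(V)$, whence $y(V)\ge c>0$.

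The main obstacle is the third implication, because the $L^{2}_{\delta_0}$ norm is weaker than the $L^{N}$ norm and so a positive lower bound on $Q_{\delta_0}$ does not directly imply one on $Q^y$; the convex combination above, which uses Lemma \ref{lem:RIntegralBound} to trade scalar curvature against $\|\nabla u\|_2^2$ at the cost of a harmless $L^{2}_{\delta_0}$ error, is what resolves the difficulty. The argument also shows that while the numerical value of $\lambda_\delta(V)$ depends on the choice of $\delta$ and the weight function $\rho$, its sign does not.
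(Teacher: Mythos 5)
Your proof is correct, and the implications $1 \Rightarrow 2$ and $2 \Rightarrow 3$ coincide with the paper's. The interesting comparison is in $3 \Rightarrow 1$: the paper argues by contradiction, taking a sequence $u_k \in A(V)$ with $Q^y(u_k) \le 1/k$ normalized by $\int a|\nabla u_k|^2 + \|u_k\|_{2,\delta}^2 = 1$, deducing $\|u_k\|_{2,\delta}\to 0$ and then deriving incompatible limits for $\int R u_k^2$ (namely $\le -1$ from the normalization versus $|\int Ru_k^2|\lesssim \tfrac12$ from Lemma \ref{lem:RIntegralBound}). You instead run the same two ingredients --- Lemma \ref{lem:RIntegralBound} with $\epsilon = a/2$ and the Sobolev inequality \eqref{eq:sobolev-cpct} --- forward as a convex combination, which yields the explicit quantitative conclusion $y(V) \ge c\,\lambda$-type lower bound rather than mere positivity. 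The underlying estimate is identical; your packaging is arguably cleaner and avoids the normalization bookkeeping. Two small points to tighten: choose $t_0$ so that $(1-t_0)\mu - t_0 C_0 > 0$ \emph{strictly} (e.g.\ $t_0 = \mu/(2(\mu+C_0))$), since in the compact case you genuinely need a positive coefficient on $\|u\|_{2,\delta_0}^2$ to invoke \eqref{eq:sobolev-cpct}, whereas in the non-compact case \eqref{eq:sobolev} already finishes from the gradient term alone; and record the convention that when $V$ has measure zero both invariants equal $+\infty$, so the equivalence is vacuous there.
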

\begin{proof}  We assume that $V$ has positive measure since
the equivalence is trivial otherwise.
The implication $1 \Rightarrow 2$ follows from the inequality
$\|u\|_{2,\delta} \leq C \|u\|_{N}$ applied to $Q^y$. The implication
$2\Rightarrow 3$ is trivial. So it remains to show that $3 \Rightarrow 1$.

Let $V$ be a measurable set with $\lambda_\delta(V)>0$ for some
 $\delta>\delta^*$. Suppose to produce a contradiction
that $y(V)\leq 0$. Then there is a sequence $u_k \in A(V)$, normalized so that
$\int a|\nabla u_k|^2 + \|u_k\|_{2,\delta}^2 = 1$, such that $Q^y(u_k) \leq 1/k$. Then
\begin{align}\label{eq:SignEquivalence1}
\lambda_\delta(V) \|u_k\|_{2,\delta}^2\leq \int a|\nabla u_k|^2 + R u_k^2
      \leq \frac1k \|u_k\|_N^2 \leq \frac{c}k \left[\int a |\nabla u_k|^2
      +\|u_k\|_{2,\delta}^2\right] \leq \frac{c}{k}
\end{align} by the Sobolev inequality \eqref{eq:sobolev-cpct}.
In particular, $\|u_k\|_{2,\delta}^2 \to 0$. Using inequality
\eqref{eq:SignEquivalence1}, we also find that
\begin{equation}
\int R u_k^2 \leq \frac{c}k - \int a |\nabla u|^2 \to -1.
\end{equation} However, by Lemma \ref{lem:RIntegralBound}, there exists $C>0$ such that
\begin{equation}
\left|\int Ru_k^2\right| \leq \frac a2 \|\nabla u_k\|^2_2 + C \|u_k\|_{2,\delta}^2
      \to \frac12,
\end{equation} which is a contradiction.
\end{proof}

\begin{cor}\label{cor:SignEquivalence}
For a measurable set $V\subseteq M$,
the signs of $y(V)$ and $\lambda_\delta(V)$ are the same
for any $\delta>\delta^*$.
\end{cor}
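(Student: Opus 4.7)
The plan is to deduce Corollary \ref{cor:SignEquivalence} as a short consequence of Proposition \ref{cor:MeasTFAE}, using the key observation that $Q^y(u)$ and $Q_\delta(u)$ share the same numerator $\int a|\nabla u|^2 + R u^2$ together with strictly positive denominators. Hence for any single test function $u \in A(V)$, the signs of $Q^y(u)$ and $Q_\delta(u)$ agree.

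First I would dispense with the trivial case: if $V$ has measure zero then $A(V) = \emptyset$, and by convention $y(V) = \lambda_\delta(V) = \infty$, so their signs agree. Assume henceforth that $V$ has positive measure, so that $A(V) \neq \emptyset$.

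Next I would handle the strictly negative case. Since both quotients have positive denominators, for any $u \in A(V)$,
\begin{equation*}
Q^y(u) < 0 \iff \int a|\nabla u|^2 + R u^2 < 0 \iff Q_\delta(u) < 0.
\end{equation*}
Taking infima, $y(V) < 0$ iff there exists $u \in A(V)$ with $\int a|\nabla u|^2 + R u^2 < 0$, and the same is true for $\lambda_\delta(V) < 0$. Thus $y(V) < 0 \iff \lambda_\delta(V) < 0$.

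The strictly positive case is exactly the content of Proposition \ref{cor:MeasTFAE}. The vanishing case $y(V) = 0 \iff \lambda_\delta(V) = 0$ then follows by elimination, since both $y(V)$ and $\lambda_\delta(V)$ are bounded below (by Lemma \ref{lem:YamabeBoundedBelow} and an analogous argument for $\lambda_\delta$ using Lemma \ref{lem:Qplusdelta}), so each is a well-defined real number whose sign is one of the three possibilities. There is no genuine obstacle here; the corollary is essentially a repackaging of the proposition, and the only mild subtlety is confirming that the negative-sign equivalence is immediate rather than requiring a separate weak-limit argument of the kind used in the proof of Proposition \ref{cor:MeasTFAE}.
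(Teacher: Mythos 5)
Your proof is correct and follows essentially the same route as the paper: use Proposition \ref{cor:MeasTFAE} for the positive case, observe that $Q^y$ and $Q_\delta$ share the numerator $\int a|\nabla u|^2 + Ru^2$ with positive denominators to settle the negative case (this is what the paper compresses into ``choosing an appropriate test function''), and deduce the null case by elimination. You have merely spelled out the paper's terse argument in more detail.
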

\begin{proof}
Proposition \ref{cor:MeasTFAE} shows that $y(V)$ is positive if and only if
$\lambda_\delta(V)$ is also. Choosing an appropriate test function shows that $y(V)$ is
negative if and only if $\lambda_\delta(V)$ is also. Together, these imply that $y(V)$ is
zero if and only if $\lambda_\delta(V)$ is.
\end{proof}
The decay rate $\delta^*$ is critical for Corollary \ref{cor:SignEquivalence}.
For $\delta<\delta^*$, $W^{1,2}_{\delta^*}$ is not contained in
$L^2_\delta$ and hence our definition of $\lambda_\delta$
does not extend to this range.  One could minimize $Q_\delta$ over smooth functions
instead to define $\lambda_\delta$, but using rescaled bump
functions on large balls as test functions, it can be shown that
$\lambda_\delta(\R^n) =0$ for $\delta<\delta^*$, despite the fact that Lemma \ref{lem:poincare}
implies $y(\R^n)>0$.   Note that we have not addressed equality in the threshold
case $\delta=\delta^*$.

We now turn to continuity properties of $\lambda_\delta$.
Monotonicity is obvious from the definition.
\begin{lem}\label{lem:monotone} Let $\delta>\delta^*$. If $V_1$ and $V_2$ are measurable sets with $V_1\subseteq V_2$,
then $\lambda_\delta(V_1)\ge \lambda_\delta(V_2)$.
\end{lem}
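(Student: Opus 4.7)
The statement is essentially a tautology about infima over nested sets, so the proposal is short. My plan is to reduce the inequality $\lambda_\delta(V_1) \ge \lambda_\delta(V_2)$ to the set inclusion $A(V_1) \subseteq A(V_2)$ on test functions, and then invoke the trivial fact that infimizing a fixed functional over a smaller set produces a larger (or equal) value.

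First I would dispose of the degenerate case: if $V_1$ has measure zero, then $\lambda_\delta(V_1) = \infty$ by convention, but in that case we need $\lambda_\delta(V_2)$ to be anything at most infinity, which is automatic. Similarly if both have measure zero there is nothing to prove. So assume $V_1$ has positive measure.

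Next, the key observation: since $V_1 \subseteq V_2$, we have $V_2^c \subseteq V_1^c$. Therefore any $u \in W^{1,2}_{\delta^*}(M)$ which vanishes on $V_1^c$ automatically vanishes on $V_2^c$. Combined with the requirement $u \not\equiv 0$, which is identical in both definitions, this gives
\begin{equation}
A(V_1) \subseteq A(V_2).
\end{equation}
Since the Rayleigh quotient $Q_\delta$ is the same functional in both cases (it depends only on the metric and the weight, not on the set), the infimum of $Q_\delta$ over the smaller set $A(V_1)$ is at least the infimum over the larger set $A(V_2)$, which is precisely $\lambda_\delta(V_1) \ge \lambda_\delta(V_2)$.

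There is no real obstacle here; the only thing one must be slightly careful about is the measure-zero convention for $\lambda_\delta$, and the fact that $Q_\delta$ is well-defined on all of $W^{1,2}_{\delta^*}\setminus\{0\}$ (which was already established prior to the definition of $\lambda_\delta$, so that one can legitimately take the infimum over $A(V_1)$ and $A(V_2)$). No use of the prescribed decay $\delta>\delta^*$, the sign-equivalence of Proposition~\ref{cor:MeasTFAE}, or the conformal invariance is needed.
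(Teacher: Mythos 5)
Your proof is correct and coincides with the paper's treatment: the paper simply remarks that ``monotonicity is obvious from the definition,'' and the content of that remark is exactly your observation that $V_1\subseteq V_2$ implies $A(V_1)\subseteq A(V_2)$, so the infimum of the fixed functional $Q_\delta$ over the smaller class is at least that over the larger one. Your handling of the measure-zero convention also matches the paper's follow-up note that the lemma holds even for $V_1=\emptyset$ precisely because $\lambda_\delta(\emptyset)=\infty$.
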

Note that Lemma \ref{lem:monotone} holds even for $V_1=\emptyset$, and that
this relies on our definition $\lambda_\delta(\emptyset)=y(\emptyset)=\infty$.
To obtain more refined properties of $\lambda_\delta$, we start by
showing that minimizers of the Rayleigh quotients
exist and are generalized eigenfunctions.
\begin{prop}\label{prop:eigenfunctions}
Let $V$ be a measurable set with positive measure and let $\delta>\delta^*$.
There exists a non-negative $u \in A(V)$ that minimizes
$Q_\delta$ over $A(V)$.
Moreover, on any open set contained in $V$,
\begin{equation}\label{eq:eigenvalue-eq}
-a\Lap u + R u = \lambda_\delta(V) \rho^{2(\delta^*-\delta)} u.
\end{equation}
\end{prop}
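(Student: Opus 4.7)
The plan is to apply the direct method of the calculus of variations, using the weighted Sobolev framework established above. Let $V$ have positive measure and fix $\delta>\delta^*$. First, I would take a minimizing sequence $\{u_k\}\subset A(V)$ for $Q_\delta$, normalized so that $\|u_k\|_{2,\delta}=1$. By Lemma~\ref{lem:Qplusdelta}, together with the normalization, $\{u_k\}$ is uniformly bounded in $W^{1,2}_{\delta^*}$. The compact embedding $W^{1,2}_{\delta^*}\hookrightarrow L^2_{\delta}$ from Proposition~\ref{prop:SobolevEmbeddings}(2) then yields a subsequence (still denoted $u_k$) with $u_k\rightharpoonup u$ in $W^{1,2}_{\delta^*}$ and $u_k\to u$ strongly in $L^2_{\delta}$. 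Strong convergence forces $\|u\|_{2,\delta}=1$, so in particular $u\not\equiv 0$.

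Next I would show $u\in A(V)$. The set $A(V)\cup\{0\}=\{v\in W^{1,2}_{\delta^*}:v|_{V^c}=0\}$ is a closed linear subspace of $W^{1,2}_{\delta^*}$, hence weakly closed; combined with $u\not\equiv 0$ this gives $u\in A(V)$. To conclude that $u$ is a minimizer, I would combine the weak lower semicontinuity of $u\mapsto\int|\nabla u|^2$ with the weak continuity of $u\mapsto\int R u^2$ from Lemma~\ref{lem:RIntegralBound} to obtain
\begin{equation}
\int a|\nabla u|^2+Ru^2\ \le\ \liminf_{k\to\infty}\int a|\nabla u_k|^2+Ru_k^2\ =\ \lambda_\delta(V),
\end{equation}
so $Q_\delta(u)\le\lambda_\delta(V)$, and the reverse inequality is automatic since $u\in A(V)$. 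Replacing $u$ by $|u|$ (which lies in $W^{1,2}_{\delta^*}$, in $A(V)$, and gives the same quotient since $|\nabla|u||=|\nabla u|$ a.e.) produces a non-negative minimizer.

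For the Euler-Lagrange equation, let $\Omega\subset V$ be open and pick any $\phi\in C_c^\infty(\Omega)$. Then $u+t\phi\in A(V)\cup\{0\}$ for all $t\in\R$, so the scalar function $t\mapsto Q_\delta(u+t\phi)$ attains its minimum at $t=0$ (away from the isolated zeros of the denominator). Differentiating at $t=0$ and clearing the denominator gives
\begin{equation}
\int a\,\nabla u\cdot\nabla\phi+Ru\phi\ =\ \lambda_\delta(V)\int\rho^{2(\delta^*-\delta)}u\phi.
\end{equation}
Integration by parts (valid because $\phi$ has compact support in $\Omega$, and $u\in W^{1,2}_{\delta^*}$ together with $R\in L^p_{\alpha-2}$ makes each term well defined) produces the weak form of \eqref{eq:eigenvalue-eq} on $\Omega$; since $\Omega$ was arbitrary, the equation holds on every open subset of $V$.

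The main technical obstacle is ensuring the limit $u$ actually lies in $A(V)$ for a merely measurable set $V$ — this is what makes the compact embedding $W^{1,2}_{\delta^*}\hookrightarrow L^2_{\delta}$ indispensable, since it converts weak $W^{1,2}_{\delta^*}$ convergence into pointwise-a.e.\ convergence (along a subsequence) and thereby preserves the constraint $u|_{V^c}=0$; a secondary subtlety is that $W^{1,2}_{\delta^*}$-weak convergence alone does not control the scalar-curvature term, so one genuinely relies on the interpolation argument encoded in Lemma~\ref{lem:RIntegralBound}.
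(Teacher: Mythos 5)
Your proof is correct and follows essentially the same route as the paper's: a normalized minimizing sequence, boundedness in $W^{1,2}_{\delta^*}$ via Lemma \ref{lem:Qplusdelta}, weak $W^{1,2}_{\delta^*}$ / strong $L^2_\delta$ convergence to preserve both the normalization and the constraint $u|_{V^c}=0$, semicontinuity of the quotient (the paper invokes Corollary \ref{cor:uppersc} for exactly the splitting you describe), passage to $|u|$, and the first variation of $Q_\delta(u+t\phi)$ for the weak Euler--Lagrange equation. The only cosmetic difference is that you identify $A(V)\cup\{0\}$ as a weakly closed subspace rather than arguing directly from a.e.\ convergence, which is an equivalent justification.
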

\begin{proof}
Let $u_k$ be a minimizing sequence in $A(V)$; this uses the hypothesis
that $V$ has positive measure.  Without loss of generality we may assume that
each $\|u_k\|_{2,\delta}=1$.  Since
\begin{equation}
a\int_M |\nabla u_k|^2 + R u_k^2 = Q_\delta(u_k),
\end{equation}
and since $u_k$ is a minimizing sequence, Lemma \ref{lem:Qplusdelta}
implies $\{u_k\}$ is bounded in
$W^{1,2}_{\delta^*}(M)$ and hence converges weakly in $W^{1,2}_{\delta^*}(M)$ and
strongly in $L^2_{\delta}(M)$ to a limit $u\in W^{1,2}_{\delta^*}(M)$ with
$\|u\|_{2,\delta}=1$.
Since each $u_k=0$ on $V^c$,
from the strong $L^2_{\delta}$ convergence we see $u=0$ on $V^c$,
and since $u\not \equiv 0$ we conclude that $u\in A(V)$.
Weak upper semicontinuity (Corollary \ref{cor:uppersc}) implies that
$u$ minimizes $Q_\delta$ over the test functions $A(V)$.  Noting
that $|u|$ is also a minimizer, we may assume $u\ge 0$.

Suppose $V$ contains an open set $\Omega$.  Then any $\phi\in C^\infty_c(\Omega)$
with $\phi\not\equiv 0$ belongs to $A(V)$, and we can differentiate
$Q_\delta(u+t\phi)$ at $t=0$ to find that $u$ is a weak solution in $\Omega$ of equation
\eqref{eq:eigenvalue-eq}.
\end{proof}

\begin{lem}[Continuity from above]\label{lem:continuityfromabove}
Let $V\subseteq M$ be a measurable set.
If $\{V_k\}$ is a decreasing sequence of measurable sets with $\cap V_k = V$, then
\begin{equation}\label{eq:CtsFromAbove}
\lim_{k\ra\infty} \lambda_\delta(V_k) = \lambda_\delta(V).
\end{equation}
\end{lem}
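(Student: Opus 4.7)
The plan is to combine the trivial monotonicity bound with a minimizing-sequence argument. The easy direction is immediate from Lemma \ref{lem:monotone}: since $V\subseteq V_k$ for every $k$ and $V_{k+1}\subseteq V_k$, the sequence $\{\lambda_\delta(V_k)\}$ is non-decreasing and bounded above by $\lambda_\delta(V)$, so its limit exists and satisfies $\lim_k \lambda_\delta(V_k)\le \lambda_\delta(V)$.

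For the reverse inequality, I would assume $L:=\lim_k \lambda_\delta(V_k)<\infty$ (if $L=\infty$ we are done) and use Proposition \ref{prop:eigenfunctions} to produce, for each sufficiently large $k$, a non-negative minimizer $u_k\in A(V_k)$ with $\|u_k\|_{2,\delta}=1$ and $Q_\delta(u_k)=\lambda_\delta(V_k)\le L$. Lemma \ref{lem:Qplusdelta} then gives a uniform bound on $\|u_k\|_{W^{1,2}_{\delta^*}}$. Extracting a subsequence, I would obtain $u_k\rightharpoonup u$ weakly in $W^{1,2}_{\delta^*}$, strongly in $L^2_\delta$ (by the compact embedding from Proposition \ref{prop:SobolevEmbeddings}), and pointwise almost everywhere; in particular $\|u\|_{2,\delta}=1$ and so $u\not\equiv 0$.

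The critical step is to show $u=0$ a.e.\ on $V^c$ so that $u\in A(V)$. For each $x\in V^c=\bigcup_k V_k^c$, the decreasing nature of $\{V_k\}$ ensures that once $x\in V_{k_0}^c$, we have $x\in V_k^c$ for all $k\ge k_0$, hence $u_k(x)=0$ eventually and $u(x)=0$ at a.e.\ such $x$. Once $u\in A(V)$ is established, weak lower semicontinuity of $u\mapsto\int a|\nabla u|^2$ together with the weak continuity of $u\mapsto \int Ru^2$ furnished by Lemma \ref{lem:RIntegralBound} yields
$$\lambda_\delta(V)\le Q_\delta(u)=\int a|\nabla u|^2+Ru^2 \le \liminf_k \int a|\nabla u_k|^2+Ru_k^2 = L,$$
finishing the proof. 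The degenerate case $|V|=0$ (where $\lambda_\delta(V)=\infty$ by convention) is handled by the same scheme: if $L<\infty$, the construction would produce $u$ with $\|u\|_{2,\delta}=1$ yet $u\equiv 0$ a.e.\ on $V^c=M$ (mod measure zero), a contradiction, so necessarily $L=\infty=\lambda_\delta(V)$.

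The main obstacle is precisely the verification that the weak limit $u$ vanishes almost everywhere on $V^c$. This is where the decreasing hypothesis on $\{V_k\}$ is used in an essential way (so that $V_k^c$ forms an increasing exhaustion of $V^c$), and where one needs to upgrade the strong $L^2_\delta$ convergence to pointwise a.e.\ convergence on a further subsequence; everything else is routine compactness and semicontinuity in $W^{1,2}_{\delta^*}$.
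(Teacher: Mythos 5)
Your proof is correct and follows essentially the same argument as the paper: monotonicity gives the easy inequality, then for the other direction you take minimizers $u_k$ of $Q_\delta$ over $A(V_k)$, use the uniform $W^{1,2}_{\delta^*}$ bound from Lemma~\ref{lem:Qplusdelta} to extract a weak limit $v$ with $\|v\|_{2,\delta}=1$, show $v\in A(V)$ because each $V_j^c$ is eventually in the support complement of the $u_k$, and close with weak semicontinuity of the energy. Your pointwise-a.e.\ route to $v|_{V^c}=0$ is a slight variation on the paper's appeal to strong $L^2_\delta$ convergence on each fixed $V_j^c$, and you correctly invoke weak \emph{lower} semicontinuity of $\int a|\nabla u|^2$ (the paper's Corollary~\ref{cor:uppersc} has this mislabeled as ``upper'').
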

\begin{proof} From the elementary monotonicity of $\lambda_\delta$,
$\Lambda = \lim_{k\ra\infty} \lambda_\delta(V_k)$ exists and
\begin{equation}
\lambda_\delta(V_k) \le \Lambda \le \lambda_\delta(V)
\end{equation}
for each $k$. So it is enough to show that
\begin{equation}\label{eq:CtsFromAboveUpper}
\Lambda \ge \lambda_\delta(V).
\end{equation}
We may assume that $\Lambda$ is finite, for inequality \eqref{eq:CtsFromAboveUpper}
is trivial otherwise.  As a consequence, each $V_k$ is nonempty and
Proposition \ref{prop:eigenfunctions} provides minimizers
$u_k$ of $Q_\delta$ over $A(V_k)$ satisfying
$\|u_k\|_{2,\delta}=1$.
For each $k$, since $\|u_k\|_{2,\delta}=1$,
\begin{equation}\label{eq:CtsFromAboveUpperBound}
\int a|\nabla u_k|^2 + Ru^2_k \le \Lambda.
\end{equation}
From inequality \eqref{eq:CtsFromAboveUpperBound}
and the boundedness of the sequence in $L^{2}_{\delta}(M)$,
Lemma \ref{lem:Qplusdelta} implies that the sequence is bounded in $W^{1,2}_{\delta^*}(M)$.
A subsequence converges weakly in $W^{1,2}_{\delta^*}(M)$ and strongly in $L^{2}_{\delta}(M)$
to a limit $v$ with $\|v\|_{2,\delta}=1$. From weak upper semicontinuity
(Corollary \ref{cor:uppersc}) we conclude that $Q_\delta(v)\le \Lambda$ as well.
Moreover, $v\in A(V)$ since $v=0$ on $V_k^c$.  So $\lambda_\delta(v) \le \Lambda$.
\end{proof}
Note that Lemma \ref{lem:continuityfromabove} is false for the Yamabe invariant.
For example, one can take a sequence of balls in $\Reals^n$ that shrink down to
the empty set.  It is easy to see that the Yamabe invariant is scale invariant
and hence is a finite constant along the sequence.  Yet the Yamabe invariant of
the empty set is infinite.  In contrast, if $V_n\searrow \emptyset$,
Lemma \ref{lem:continuityfromabove} implies
$\lambda_\delta(V_n)\ra\infty$, and in particular
at some point along the sequence $\lambda_\delta(V_n)>0$.  The following
result, which is an extension of \cite{Rauzy95} Lemma 2 to the AE setting, shows
that in fact $\lambda_\delta(V)$ is positive so long as
a certain weighted volume is sufficiently small.

\begin{lem}[Small sets are Yamabe positive]
\label{lem:UniformOuterApproximation}
For any $\mu > n$, there exists $C>0$ such that if
$\Vol_\mu(V) := \int_{V} \rho^{-\mu} <C$, $V$ is Yamabe positive.
\end{lem}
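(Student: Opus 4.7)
The plan is to prove $y(V) > 0$ directly by establishing a uniform positive lower bound for the Yamabe quotient $Q^y(u)$ over $u \in A(V)$, provided $\Vol_\mu(V)$ is sufficiently small. The three ingredients are the absorption estimate for $\int Ru^2$ from Lemma~\ref{lem:RIntegralBound}, the Sobolev inequality from Lemma~\ref{lem:poincare}, and Hölder's inequality restricted to $V$.

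First, I would apply Lemma~\ref{lem:RIntegralBound} with $\epsilon = a/2$ and then combine with $\|\nabla u\|_2^2 \ge c_2^2\|u\|_N^2$ to obtain, for any $\delta > \delta^*$ and some $C = C(\delta) > 0$,
\begin{equation*}
\int a|\nabla u|^2 + Ru^2 \;\ge\; \frac{ac_2^2}{2}\|u\|_N^2 \;-\; C\|u\|_{2,\delta}^2.
\end{equation*}
Since $u \in A(V)$ vanishes on $V^c$, Hölder's inequality with conjugate exponents $n/(n-2)$ and $n/2$ gives
\begin{equation*}
\|u\|_{2,\delta}^2 \;=\; \int_V |u|^2 \rho^{2\delta-n} \;\le\; \|u\|_N^2 \left(\int_V \rho^{n\delta - n^2/2}\right)^{2/n}.
\end{equation*}

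Given $\mu > n$, I would pick $\delta = n/2 - \mu/n$ so that the weight integrand becomes exactly $\rho^{-\mu}$; a short arithmetic check shows this $\delta$ lies in $(\delta^*, -\delta^*)$ precisely when $\mu \in (n, n(n-1))$. With this choice the Hölder factor equals $\Vol_\mu(V)^{2/n}$, and combining the two inequalities yields
\begin{equation*}
\int a|\nabla u|^2 + Ru^2 \;\ge\; \Bigl(\tfrac{ac_2^2}{2} - C\,\Vol_\mu(V)^{2/n}\Bigr)\|u\|_N^2.
\end{equation*}
Choosing $C_0 > 0$ small enough that $CC_0^{2/n} \le ac_2^2/4$ forces $Q^y(u) \ge ac_2^2/4 > 0$ uniformly on $A(V)$ whenever $\Vol_\mu(V) < C_0$, hence $y(V) > 0$ and $V$ is Yamabe positive.

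The main obstacle is the tension between the two weight constraints: $\delta > \delta^*$ is required for the absorption estimate of Lemma~\ref{lem:RIntegralBound}, while the clean Hölder matching demands $\delta = n/2 - \mu/n$, which satisfies $\delta > \delta^*$ only for $\mu \in (n, n(n-1))$. To cover $\mu \ge n(n-1)$ I expect a small adaptation is needed, for instance a three-factor Hölder split that redistributes the weight, or a decomposition $V = V_1 \cup V_2$ into a relatively compact piece (handled by the argument above with an auxiliary $\mu' \in (n,n(n-1))$) and a far-out piece on which the smallness of $R$ at infinity lets the Sobolev inequality alone force $Q^y \ge c > 0$. In either case the underlying mechanism—trading a small weighted volume against a small multiple of $\|u\|_N^2$—is identical.
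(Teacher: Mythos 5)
Your mechanism is exactly the paper's (trade a small weighted volume of $V$ against a small multiple of $\|u\|_N^2$ via H\"older, absorb $\int Ru^2$ with Lemma~\ref{lem:RIntegralBound}, finish with Sobolev), and the chain of inequalities you write is sound. But the ``obstacle'' you identify for $\mu \ge n(n-1)$ is not real: it comes from a sign error in the weighted norm. With the convention \eqref{eq:SobolevNorm}, $\|u\|_{2,\delta}^2 = \int |u|^2\,\rho^{-2\delta-n}$, not $\int |u|^2\,\rho^{2\delta-n}$. Running your H\"older step with the correct weight, matching $\rho^{(-2\delta-n)\frac{n}{2}} = \rho^{-\mu}$ forces $\delta = \mu/n - n/2$, which satisfies $\delta > \delta^* = 1 - n/2$ precisely when $\mu > n$ --- i.e.\ for the entire range of the lemma. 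So the three-factor H\"older split and the decomposition of $V$ that you sketch (but do not carry out) are unnecessary; deleting that paragraph and fixing the sign gives a complete proof, and it is essentially the proof in the text (the paper phrases the conclusion as $\lambda_\delta(V)>0$ and invokes Corollary~\ref{cor:SignEquivalence}, whereas you bound $Q^y$ directly, which is a harmless variation).

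One secondary point: you invoke the pure Sobolev inequality $\|\nabla u\|_2 \ge c_2\|u\|_N$ from Lemma~\ref{lem:poincare}, which fails on compact manifolds, while this lemma is stated in the section where compact manifolds are treated as AE manifolds with zero ends (and it is later applied in exactly that compact setting, where $\Vol_{2n}(V)$ becomes Rauzy's volume condition). To cover both cases you should use inequality \eqref{eq:sobolev-cpct}, $\|u\|_{2,\delta}+\|\nabla u\|_2 \ge c_2\|u\|_N$; the extra $\|u\|_{2,\delta}^2$ term it introduces is absorbed by the same H\"older estimate, which is how the paper proceeds.
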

\begin{proof}
Suppose that $u\in A(V)$. Define $\delta$ by $(-2\delta -n) \frac{n}2 = -\mu$.
Note that $\mu>n$ implies that $\delta >\delta^*$. Then, by H\"older's inequality,
\begin{equation}\label{eq:uniform1}
\|u\|_{2,\delta}^2 = \int u^2 \rho^{-2\delta -n} \leq \left(\int u^N \right)^{2/N}
\left( \int_{V} \rho^{(-2\delta-n)\frac{n}{2}}\right)^{2/n}
= \|u\|_N^2 \Vol_\mu(V)^{2/n}.
\end{equation}
By the Sobolev inequality \eqref{eq:sobolev-cpct}, there exists $C_1$ such that
\begin{equation}\label{eq:uniform2}
\|u\|_N^2 \leq C_1 \left[\int a |\nabla u|^2 + \|u\|_{2,\delta}^2\right].
\end{equation} We also note that Lemma \ref{lem:RIntegralBound}
implies there exists $C_2$ such that
\begin{equation}\label{eq:uniform3}
-C_2 \|u\|_{2,\delta}^2 \leq \frac12 \int a |\nabla u|^2 + \int R u^2.
\end{equation}

Let $\eta$ be defined by $\eta \Vol_\mu(V)^{2/n}C_1 = \frac12$.
Using inequalities \eqref{eq:uniform1}-\eqref{eq:uniform3}, we calculate
\begin{equation}\label{eq:UniformOuter1}
\begin{aligned}
(\eta - C_2) \|u\|_{2,\delta}^2
&\leq \eta\|u\|_N^2 \Vol_\mu(V)^{2/n} + \int R u^2
 + \frac{1}{2} \int a|\nabla u|^2\\
&\leq \eta\Vol_\mu(V)^{2/n}C_1 \left[\int a |\nabla u|^2 +\|u\|_{2,\delta}^2\right]+ \int R u^2
 + \frac12 \int a|\nabla u|^2\\
&= \int \left(a|\nabla u|^2 + Ru^2\right) + \frac{1}{2}\|u\|_{2,\delta}^2.
\end{aligned}
\end{equation} Dividing through by $\|u\|_{2,\delta}^2$, inequality
\eqref{eq:UniformOuter1} reduces to
\begin{equation}
\eta- C_2 -\frac{1}{2}\leq Q_\delta(u).
\end{equation} As $\Vol_\mu(V) \to 0$, $\eta \to \infty$. Thus there is a $C>0$ such
that if $\Vol_\mu(V) <C$, then $Q_\delta(u)$ has a uniform positive lower bound
for all $u \in A(V)$. Thus $\lambda_\delta(V) >0$, and so $V$ is Yamabe positive
by Corollary \ref{cor:SignEquivalence}.
\end{proof}

In Section \ref{sec:YamabeClassification} below we discuss the relationship between
the Yamabe invariant of an AE manifold and its compactification.
After compactification, for $\mu=2n$, the condition
$\Vol_{\mu}(V)<C$ corresponds to the condition that the
usual volume of the compactified set is sufficiently small. This is exactly
Rauzy's condition, and the other choices of $\mu$ provide a mild generalization
of his result.

\begin{lem}[Strict monotonicity at connected, open sets] \label{InnerApproxNull}
Let $\delta>\delta^*$ and let $\Omega$ be a connected open set.
For any measurable set $E$ in $\Omega$  with positive measure,
\begin{equation}\label{eq:strictmonotone}
\lambda_\delta(\Omega\setminus E) > \lambda_\delta(\Omega).
\end{equation}
\end{lem}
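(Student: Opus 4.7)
The plan is to argue by contradiction: assume equality $\lambda_\delta(\Omega\setminus E) = \lambda_\delta(\Omega)$ (monotonicity gives $\ge$), exhibit a minimizer supported away from $E$ that must also minimize over $A(\Omega)$, and then use the strong maximum principle on $\Omega$ to conclude it vanishes identically, contradicting its membership in $A(\Omega\setminus E)$.

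First, dispose of the trivial case: if $\Omega\setminus E$ has measure zero, then $A(\Omega\setminus E)=\emptyset$ and $\lambda_\delta(\Omega\setminus E)=\infty$, while $\lambda_\delta(\Omega)$ is finite by Lemma \ref{lem:Qplusdelta} (since $\Omega$ has positive measure, as it contains $E$), so \eqref{eq:strictmonotone} holds. Henceforth assume $\Omega\setminus E$ has positive measure, and suppose for contradiction that $\lambda_\delta(\Omega\setminus E) = \lambda_\delta(\Omega)$. Proposition \ref{prop:eigenfunctions} produces a nonnegative $u \in A(\Omega\setminus E)$ with $Q_\delta(u)=\lambda_\delta(\Omega\setminus E)$. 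Since $A(\Omega\setminus E)\subseteq A(\Omega)$ (as the support condition is more restrictive) and $u$ realizes the value $\lambda_\delta(\Omega)$ in this larger admissible class, $u$ is also a minimizer of $Q_\delta$ over $A(\Omega)$.

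Because $\Omega$ is open, I can vary $u$ by test functions $\varphi \in C_c^\infty(\Omega)$ (which automatically lie in $A(\Omega)$ together with $u + t\varphi$ for small $t$) and set $\frac{d}{dt}\big|_{t=0} Q_\delta(u+t\varphi)=0$; as in Proposition \ref{prop:eigenfunctions}, this shows that $u$ weakly solves
\begin{equation}
-a\Delta u + R u = \lambda_\delta(\Omega)\, \rho^{2(\delta^*-\delta)}\, u
\end{equation}
on all of $\Omega$. Standard interior elliptic regularity, using the hypothesis $p>n/2$, places $u \in W^{2,p}_{\loc}(\Omega)$ and in particular makes $u$ continuous on $\Omega$. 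Since $u\in A(\Omega\setminus E)$, $u$ vanishes a.e.\ on $E$, and continuity upgrades this to the existence of an actual point $x_0\in E\subset\Omega$ with $u(x_0)=0$.

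Now rewrite the eigenvalue equation as $-a\Delta u + \widetilde V u = 0$ with $\widetilde V := R - \lambda_\delta(\Omega)\rho^{2(\delta^*-\delta)}$, which has the same local regularity as $R$. The argument of Proposition \ref{prop:StrongMaxPrinciple}, via the weak Harnack inequality of \cite{Trudinger73}, gives on a small ball $B_{2R}(x_0)\subset\Omega$ that $\|u\|_{L^q(B_{2R}(x_0))} \le C \inf_{B_R(x_0)} u = 0$, so $u\equiv 0$ on a neighborhood of $x_0$. The connectedness of $\Omega$ then propagates this zero set throughout $\Omega$, and combined with $u\equiv 0$ on $\Omega^c$ this forces $u\equiv 0$ on $M$, contradicting $u\in A(\Omega\setminus E)$. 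Therefore the strict inequality \eqref{eq:strictmonotone} holds. The main subtlety is the regularity step that licenses a pointwise application of the strong maximum principle; once continuity is in hand, connectivity and Harnack do the rest.
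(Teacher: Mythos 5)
Your proof is correct and follows essentially the same route as the paper: reduce to the case of equality, extract a minimizer over $A(\Omega\setminus E)$ via Proposition \ref{prop:eigenfunctions}, note it also minimizes over $A(\Omega)$ and hence weakly solves the eigenvalue equation on $\Omega$, then use local regularity plus the weak Harnack inequality and connectedness to force $u\equiv 0$, a contradiction. The regularity step you flag as the main subtlety is exactly the one the paper handles the same way.
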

\begin{proof}
Let $V=\Omega\setminus E$. We may assume $V$ has positive measure, for
inequality \eqref{eq:strictmonotone} is trivial otherwise.

Suppose to the contrary that $\lambda_\delta(V) = \lambda_\delta(\Omega)$.
Since $V$ has positive measure, Proposition \ref{prop:eigenfunctions} provides
a function $u\in A(V)$ with $Q_\delta(u) = \lambda_\delta(V)$.
Hence $u$ also is a minimizer of $Q_\delta$ over $A(\Omega)$, and
Proposition \ref{prop:eigenfunctions} implies that $u$ weakly solves
\begin{equation}
-a\Delta u + \left[R-\lambda_\delta \rho^{2(\delta^*-\delta)}\right]u =0
\end{equation}
on $\Omega$. Local regularity implies that $u\in W^{2,p}_{\mathrm{loc}}(\Omega)$,
and we may assume after adjusting $u$ on a set of zero measure
that $u$ is continuous. Since $E$ has positive measure, we can still
conclude that $u$ vanishes at some point in $\Omega$.
Following the argument of Lemma 4 from \cite{Maxwell05b},
we may apply the weak Harnack inequality of \cite{trudinger-measurable}
to conclude that $u$ vanishes everywhere on the connected
set $\Omega$, and hence on all of $M$.  Since $u\in A(\Omega)$,
this is a contradiction.
\end{proof}
The connectivity hypothesis in Lemma \ref{InnerApproxNull} is necessary to
obtain strict monotonicity. For example, two disjoint unit balls in $R^n$
have the same first eigenvalue as a single unit ball.  On the other hand,
the assumption that $\Omega$ is open is not optimal, and relaxing this condition
would require a suitable replacement for the weak Harnack inequality.

Although we have not established continuity from below for $\lambda_\delta$,
it holds in certain cases.  The following is a prototypical result
that suffices for our purposes.
\begin{lem}[Continuity from below; prototype] \label{InnerApproxNeg}
Suppose $V$ is measurable.  Let $x_0\in M$ and let $B_r(x_0)$
be the ball of radius $r$ about $x_0$.  Then for any $\delta>\delta^*$
\begin{equation}\label{eq:CtsFromBelow}
\lim_{r\ra 0} \lambda_\delta( V\setminus B_r) = \lambda_\delta( V ).
\end{equation}
\end{lem}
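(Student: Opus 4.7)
The plan has two pieces. Monotonicity of $\lambda_\delta$ (Lemma \ref{lem:monotone}) immediately shows $\lambda_\delta(V \setminus B_r) \ge \lambda_\delta(V)$ for every $r > 0$, and since $V \setminus B_r$ grows as $r \searrow 0$, the limit on the left of \eqref{eq:CtsFromBelow} exists and is at least $\lambda_\delta(V)$. The real work is the reverse inequality: given $\epsilon > 0$, I need to produce, for all sufficiently small $r$, a test function in $A(V \setminus B_r)$ whose Rayleigh quotient $Q_\delta$ is at most $\lambda_\delta(V) + 2\epsilon$.

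The natural construction is a cutoff. Fix a smooth $\eta \colon [0,\infty) \to [0,1]$ with $\eta \equiv 0$ on $[0,1]$ and $\eta \equiv 1$ on $[2,\infty)$, and set $\chi_r(x) = \eta(d(x,x_0)/r)$; then $\chi_r$ vanishes on $B_r$, equals $1$ off $B_{2r}$, and $|\nabla \chi_r| \le C/r$. Choose $u \in A(V)$ with $Q_\delta(u) < \lambda_\delta(V) + \epsilon$, and set $u_r := \chi_r u$. Since $u$ vanishes on $V^c$ and $\chi_r$ vanishes on $B_r$, $u_r$ vanishes on $(V \setminus B_r)^c$; and $u_r \not\equiv 0$ for small $r$, so $u_r \in A(V\setminus B_r)$. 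The plan is to show $u_r \to u$ strongly in $W^{1,2}_{\delta^*}$.

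The $L^2_{\delta^*}$ convergence $\|u - u_r\|_{2,\delta^*} \to 0$ is dominated convergence. For the gradient,
\[
\nabla(u_r - u) = -(1-\chi_r)\nabla u - u\,\nabla \chi_r,
\]
and $(1-\chi_r)\nabla u \to 0$ in $L^2$ by dominated convergence. The key estimate is on $u\,\nabla \chi_r$: using $|\nabla \chi_r| \le C/r$, support in $B_{2r}(x_0)$, and H\"older's inequality with conjugate exponents $N/2 = n/(n-2)$ and $n/2$,
\[
\int u^2 |\nabla \chi_r|^2 \le \frac{C^2}{r^2} \int_{B_{2r}(x_0)} u^2 \le \frac{C^2}{r^2} \|u\|_{L^N(B_{2r})}^2 |B_{2r}|^{2/n} \le C' \|u\|_{L^N(B_{2r})}^2 \longrightarrow 0,
\]
since $u \in L^N$ and $|B_{2r}| \to 0$. (Here $n \ge 3$ is essential so that $|B_{2r}|^{2/n} \lesssim r^2$ exactly absorbs the factor $r^{-2}$.) This delivers strong convergence $u_r \to u$ in $W^{1,2}_{\delta^*}$.

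Strong convergence in $W^{1,2}_{\delta^*}$ then yields $\|u_r\|_{2,\delta}^2 \to \|u\|_{2,\delta}^2$ and $\|\nabla u_r\|_2^2 \to \|\nabla u\|_2^2$, while Lemma \ref{lem:RIntegralBound} (weak continuity, hence strong continuity, of $v \mapsto \int Rv^2$) gives $\int R u_r^2 \to \int R u^2$. Therefore $Q_\delta(u_r) \to Q_\delta(u) < \lambda_\delta(V) + \epsilon$, so $\lambda_\delta(V \setminus B_r) \le Q_\delta(u_r) < \lambda_\delta(V) + 2\epsilon$ for all small $r$. Letting $\epsilon \to 0$ finishes the proof. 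The main technical obstacle, as expected, is the control of $u\,\nabla \chi_r$: the cutoff loses $r^{-2}$ in the gradient, and one must exploit the global $L^N$ membership of $u$ (equivalently, the zero $W^{1,2}$-capacity of a point in dimension $n \ge 3$) to recover it.
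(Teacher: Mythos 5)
Your proof is correct and follows essentially the same route as the paper: the same cutoff $\chi_r$, the same decomposition of $\nabla(u - u_r)$, and the identical H\"older estimate $\int_{B_{2r}} u^2 \lesssim r^2 \|u\|_{L^N(B_{2r})}^2$ to absorb the $r^{-2}$ from $|\nabla\chi_r|^2$. The only (harmless) deviation is that you work with an $\epsilon$-approximate minimizer rather than invoking Proposition \ref{prop:eigenfunctions} to produce an exact minimizer, which slightly lightens the dependencies.
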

\begin{proof}
Let $u$ be a function in $A(V)$ that minimizes $Q_\delta$.
Let $\chi_r$ be a radial bump function that equals 0 on $B_{r}(x_0)$,
equals 1 outside
$B_{2r}(x_0)$, and has its gradient bounded by $2/r$.  Defining $u_r = \chi_r u$ we
claim that $u_r\ra u$ in $W^{1,2}_{\delta^*}(M)$.  Assuming this for the moment,
we conclude from the continuity of $Q_\delta$ that
\begin{equation}
\lambda_\delta(V) \le \lambda_\delta(V\setminus B_r) \le Q_\delta( u_r )
\ra Q_\delta(u) = \lambda_\delta(V)
\end{equation}
and hence we obtain equality \eqref{eq:CtsFromBelow}.

To show that $u_r\ra u$ in $W^{1,2}_{\delta^*}$, since
$u_r\ra u$ in $L^2_{\delta^*}$, it is enough to show that
$\int |\nabla(u-u_r)|^2\rightarrow 0$.  However,
\begin{equation}\label{eq:ur-to-u}
\int |\nabla(u-u_r)|^2 \le  2 \int (1-\chi_r)^2 |\nabla u|^2 + u^2 | \nabla(1-\chi_r)|^2.
\end{equation}
The first term on the right-hand side of inequality \eqref{eq:ur-to-u} evidently
converges to zero. For the second, we note from H\"older's inequality that
\begin{equation}
\int_{B_{2r}} u^2 \le \left[\int_{B_{2r}} u^N\right]^{\frac{2}{N}}
\left[\int_{B_{2r}} 1 \right]^{\frac{2}{n}} \le Cr^2 \left[\int_{B_{2r}} u^N\right]^{\frac{2}{N}}.
\end{equation}
Since $u\in L^N_{\mathrm{loc}}$, $\int_{B_{2r}} u^N\to 0$ as $r\ra 0$.  Since
$\nabla(1-\chi_r)$ is bounded by $c/r$, we conclude that the second term of
the right-hand side of inequality \eqref{eq:ur-to-u} also converges to zero.

\end{proof}

\section{Prescribed Non-Positive Scalar Curvature} \label{sec:PrescribedProblem}

In this section, we prove the following necessary and sufficient condition for
an AE Riemannian manifold with at least one end to be conformally related to one
which has scalar curvature equal to a specified nonpositive function.

\begin{thm}\label{thm:PrescribedScalarCurvature}
Let $(M^n,g)$ be a $W^{2,p}_\alpha$ AE manifold with $p>n/2$ and $\alpha \in (2-n,0)$.
Suppose $R'\in L^p_{\alpha-2}$ is non-positive.  Then the following are equivalent:
\begin{enumerate}
\item There exists
a positive function $\phi$ with
$\phi-1\in W^{2,p}_\alpha$
 such that the scalar curvature
of $g'=\phi^{N-2}g$ is $R'$.
\item $\{R'=0\}$ is Yamabe positive.
\end{enumerate}
\end{thm}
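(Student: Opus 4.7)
The plan is to prove the two implications separately. The forward direction is a short Sobolev-inequality computation, while the backward direction is the substantive one.

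For $(1) \Rightarrow (2)$: suppose such a $\phi$ exists, so that $g' = \phi^{N-2} g$ is itself a $W^{2,p}_\alpha$ AE metric (since $\phi - 1 \in W^{2,p}_\alpha$) with scalar curvature $R_{g'} = R'$. For any $u \in A(\{R' = 0\})$, the term $\int R' u^2\, dV_{g'}$ vanishes because $u$ is supported where $R' = 0$, so the Yamabe quotient reduces to $\int a|\nabla u|_{g'}^2\, dV_{g'} / \|u\|_{N, g'}^2$, which by the Sobolev inequality of Lemma \ref{lem:poincare} applied to the AE metric $g'$ is bounded below by $a c_2^2 > 0$. Hence $y_{g'}(\{R' = 0\}) > 0$, and conformal invariance of the Yamabe invariant (Lemma \ref{ConformalInvariance}) yields $y_g(\{R' = 0\}) > 0$.

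For $(2) \Rightarrow (1)$: write the target equation as $-a\Delta \phi = -R\phi + R' \phi^{N-1}$ and apply the sub and supersolution theorem, Theorem \ref{thm:SubSupersolutionTheorem}, with asymptotic function $\mathring{u} \equiv 1$. Both exponents on the right-hand side are positive, so the interval of regularity is $[0, \infty)$, and $\phi_- \equiv 0$ is a trivial subsolution. The heart of the proof is the construction of a positive, bounded supersolution $\phi_+$ with $\phi_+ \geq 1$ outside some compact set.

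My strategy for $\phi_+$ is to first produce a positive, bounded function $\psi$ on $M$ with $\psi \to c > 0$ at infinity satisfying the linear inequality $-a\Delta \psi + R\psi \geq -R' \psi$. Once $\psi$ is in hand, for any $C \geq c^{-1}$ one has
\begin{equation*}
-a\Delta(C\psi) + R(C\psi) - R'(C\psi)^{N-1} \geq (-R')\bigl(C\psi + C^{N-1}\psi^{N-1}\bigr) \geq 0,
\end{equation*}
using only $R' \leq 0$ and $\psi > 0$, so $\phi_+ = C\psi$ is a supersolution with $\phi_+ \geq 1$ at infinity. The construction of $\psi$ is where $y(\{R' = 0\}) > 0$ enters; the natural route is to consider the weighted Rayleigh-type quotient
\begin{equation*}
J(u) = \frac{\int a|\nabla u|^2 + R u^2}{\int (-R') u^2}
\end{equation*}
over $u \in W^{1,2}_{\delta^*}$ with $\int (-R') u^2 > 0$, and to show that the infimum $\mu_1$ is at least $1$, so that a positive minimizer $\psi$ weakly solves $-a\Delta \psi + R \psi = \mu_1 (-R') \psi \geq -R' \psi$. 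Yamabe positivity of $\{R' = 0\}$ keeps minimizing sequences for $J$ from concentrating their gradient energy on $\{R' = 0\}$ (where the denominator vanishes), while the weight $-R'$ on $\{R' < 0\}$ combined with the Poincar\'e-type inequalities of Section \ref{sec:FirstEigenvalue} should supply the remaining coercivity and a limit with positive value at infinity.

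The main obstacles lie in pinning down the normalization that forces $\mu_1 \geq 1$ and in ensuring the minimizer has the right asymptotics ($\psi \in L^\infty$, $\psi \to c > 0$). As a fallback, I would instead exhaust $M$ by bounded smooth domains $\Omega_n$, solve the Dirichlet prescribed scalar curvature problem on each $\Omega_n$ by adapting Rauzy's compact result, and pass to a limit using uniform $L^\infty$ and local $W^{2,p}$ bounds extracted from $y(\{R' = 0\}) > 0$. Once $\phi_+$ is constructed, Theorem \ref{thm:SubSupersolutionTheorem} produces $\phi$ with $0 \leq \phi \leq \phi_+$ and $\phi - 1 \in W^{2,p}_\alpha$; in particular $\phi \to 1$ at infinity, so $\phi \not\equiv 0$. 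Rewriting the equation as $-a\Delta \phi + (R - R' \phi^{N-2})\phi = 0$ and applying the strong maximum principle, Proposition \ref{prop:StrongMaxPrinciple} (whose proof rests on the weak Harnack inequality of \cite{Trudinger73} and does not actually require the sign of the potential), upgrades $\phi$ to a strictly positive function, completing the proof.
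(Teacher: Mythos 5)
Your argument for $(1)\Rightarrow(2)$ is correct, and is in fact a mild streamlining of the paper's: applying the Sobolev inequality \eqref{eq:sobolev} for the AE metric $g'$ directly to the Yamabe quotient gives a uniform positive lower bound on $Q^y_{g'}$ over $A(\{R'=0\})$, so you avoid the detour through $\lambda_\delta$ and Proposition \ref{cor:MeasTFAE}.

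The direction $(2)\Rightarrow(1)$ has a genuine gap at its central step. Your supersolution requires a bounded $\psi>0$ with $\psi\to c>0$ at infinity satisfying $-a\Delta\psi+R\psi\ge(-R')\psi$, i.e.\ a positive supersolution of the Schr\"odinger operator $-a\Delta+(R+R')$; equivalently, in your variational formulation, $\mu_1=\inf J\ge 1$. Neither holds in general: the hypothesis $y(\{R'=0\})>0$ controls $-a\Delta+R$ only on test functions supported in the zero set of $R'$, and says nothing about $-a\Delta+(R+R')$ on the rest of $M$. Concretely, let $(M,g)$ be Yamabe negative (such AE manifolds exist by Proposition \ref{prop:yAE=yCpct}) and let $R'<0$ everywhere with admissible decay. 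Then $\{R'=0\}=\emptyset$ is Yamabe positive by convention and the theorem (correctly) asserts a solution exists; yet there is $u_0\in C^\infty_c(M)$ with $\int a|\nabla u_0|^2+Ru_0^2<0$, hence $\int a|\nabla u_0|^2+(R+R')u_0^2<0$ and $J(u_0)<0$, so $\mu_1<0$ and, by the eigenvalue characterization of positive supersolutions in \cite{FCS80} cited in this chapter, no such $\psi$ exists. This is exactly why the problem resists a linear barrier: where $R'<0$ one must exploit the nonlinear term $R'\phi^{N-1}$, while on $\{R'=0\}$ one must exploit spectral positivity of the conformal Laplacian, and no single pointwise supersolution interpolates between the two regimes in an elementary way. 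Your fallback (exhaustion by Dirichlet problems) defers the same difficulty to an unproved uniform $L^\infty$ bound. The paper instead minimizes the subcritical functionals $F_q$ of \eqref{eq:DefOfF}; the coercivity estimate (Proposition \ref{prop:Coercivity}) splits test functions according to whether they concentrate on $\{R'=0\}$, and the uniform bound needed to pass to the critical exponent comes from a Moser-type iteration (Lemma \ref{UniformSubcriticalBound}), not from a barrier.
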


For compact Yamabe negative manifolds we have the following analogous
result. Since Rauzy's condition
\eqref{eq:rauzycond} is equivalent to the set $\{R'=0\}$ being Yamabe
positive, this theorem is a generalization to lower regularity
and a correction of the proof of part of Theorem 1 in Rauzy's work \cite{Rauzy95}.

\begin{thm}\label{thm:PrescribedCompact}
Let $(M^n,g)$ be a $W^{2,p}$ compact Yamabe negative manifold with $p>n/2$.
Suppose $R'\in L^p$ is non-positive.  Then the following are equivalent:
\begin{enumerate}
\item There exists
a positive function $\phi$ with
$\phi\in W^{2,p}$ such that the scalar curvature
of $g'=\phi^{N-2}g$ is $R'$.
\item $\{R'=0\}$ is Yamabe positive.
\end{enumerate}
\end{thm}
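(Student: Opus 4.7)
The strategy is to adapt the sub/supersolution approach used for the AE version (Theorem \ref{thm:PrescribedScalarCurvature}) to the compact Yamabe-negative setting, exploiting the equivalence $y(\{R'=0\})>0 \iff \lambda_\delta(\{R'=0\})>0$ from Proposition \ref{cor:MeasTFAE} and the conformal invariance of the Yamabe invariant from Lemma \ref{ConformalInvariance}. As a preliminary, the Yamabe theorem on compact manifolds lets us conformally change $g$ so that $R = R_g$ is a (negative) constant; both statements of the theorem are preserved under this change, so no generality is lost.

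For the forward direction, suppose $\tilde g = \phi^{N-2}g$ has scalar curvature $R'$. Then $\tilde g$ is Yamabe negative and hence not scalar-flat, so $R'\not\equiv 0$ and $\{R'<0\}$ has positive measure. By Lemma \ref{ConformalInvariance}, $y_g(\{R'=0\}) = y_{\tilde g}(\{R'=0\})$. For any $u \in A(\{R'=0\})$, the pointwise identity $R_{\tilde g}u^2 = R'u^2 \equiv 0$ (either $R' = 0$ or $u$ vanishes) reduces the $\tilde g$-Yamabe quotient of $u$ to the Dirichlet ratio $\int a|\nabla u|^2_{\tilde g}\,dV_{\tilde g}/\|u\|_{N,\tilde g}^2$. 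Since $u$ vanishes on the positive-measure set $\{R'<0\}$, a short argument via the mean-zero Poincar\'e inequality on a compact manifold (estimating $\bar u$ by Cauchy--Schwarz on its support and then using $\|u-\bar u\|_2 \le C\|\nabla u\|_2$) yields $\|u\|_2 \le C'\|\nabla u\|_2$, and Sobolev embedding gives $\|u\|_N \le C''\|\nabla u\|_2$. Consequently $Q^y_{\tilde g}(u) \ge a/C''^{\,2}$ uniformly in $u$, so $y_g(\{R'=0\}) > 0$.

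For the backward direction, construct sub- and supersolutions for $L\phi := -a\Delta\phi + R\phi = R'\phi^{N-1}$ and apply the compact analogue of Theorem \ref{thm:SubSupersolutionTheorem}. A sufficiently small constant $\phi_- = \epsilon$ is a subsolution (routine when $R' \in L^\infty$, and reduced to this case by a truncation argument when $R' \in L^p$). The supersolution is the delicate part. By Proposition \ref{cor:MeasTFAE} and the continuity from above in Lemma \ref{lem:continuityfromabove}, the strict positivity $\lambda_\delta(\{R'=0\})>0$ persists on the slightly larger open sets $V_\sigma = \{R'>-\sigma\}$ for all small $\sigma > 0$. Fixing such a $\sigma$, Proposition \ref{prop:eigenfunctions} furnishes a non-negative first eigenfunction $w$ of $-a\Delta+R$ on $V_\sigma$ with $(-a\Delta+R)w = \mu w$ weakly on the interior of $V_\sigma$ for some $\mu>0$; one then extends $w$ to a strictly positive $W^{2,p}$ function on all of $M$ (for example, by solving a compatible inhomogeneous linear problem on $M\setminus V_\sigma$ where the conformal Laplacian can be controlled using $|R'|\ge\sigma$) and sets $\phi_+ = Kw$ with $K$ to be chosen. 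On $V_\sigma$ the eigenvalue relation together with $R' \le 0$ gives $L\phi_+ \ge \mu K w \ge 0 \ge R'\phi_+^{N-1}$ for any $K$; on $M\setminus V_\sigma$ the bound $|R'|\ge\sigma$ lets a sufficiently large $K$ force $|R'|(Kw)^{N-2} \ge |R|$, yielding $L\phi_+ \ge R'\phi_+^{N-1}$ there as well. Enlarging $K$ once more to ensure $\phi_-\le\phi_+$, the sub/supersolution theorem produces $\phi\in W^{2,p}$ with $\phi_-\le\phi\le\phi_+$ solving $L\phi = R'\phi^{N-1}$, and positivity $\phi \ge \phi_- > 0$ is automatic.

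The expected main obstacle is the construction of the globally positive supersolution: the low regularity of $R'$ precludes easy pointwise manipulation, and the eigenfunction $w$ on $V_\sigma$ generically vanishes on $\partial V_\sigma$ and has only $W^{2,p}_{\loc}$ regularity inside. Extending $w$ to a positive function on $M$ and patching across $\partial V_\sigma$ while preserving the supersolution inequality---particularly the delicate transition where $R'$ crosses $-\sigma$---is precisely the step where Rauzy's original proof contained a gap analogous to the one in Yamabe's original attempt, and where the most careful bookkeeping is required.
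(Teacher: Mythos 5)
Your forward direction is sound and is essentially the paper's argument in different clothing: the paper minimizes $Q_{g',\delta}$ over $A(\{R'=0\})$ and observes that the minimizer cannot be constant, whereas you work directly with $Q^y_{\tilde g}$ and a Poincar\'e inequality for functions vanishing on the positive-measure set $\{R'<0\}$; both give a uniform positive lower bound on the quotient. Your backward direction, however, departs from the paper entirely — the paper does \emph{not} use sub/supersolutions there. It minimizes the subcritical functionals $F_q(u)=\int a|\nabla u|^2+\int R(u+1)^2-\tfrac{2}{q}\int R'|u+1|^q$, derives coercivity from $y(\{R'=0\})>0$ (Proposition \ref{prop:Coercivity}), proves uniform $W^{2,p}$ bounds on the minimizers via a Moser-type estimate (Lemma \ref{UniformSubcriticalBound}, Corollary \ref{cor:UniformBound}), and passes to the limit $q\to N$, with Lemma \ref{lem:UniformLowerBoundForCompact} supplying nontriviality of the limit in the compact case.

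The gap in your route is the global positive supersolution, and it is not merely "delicate bookkeeping" — as sketched, the construction cannot work. The minimizer $w$ furnished by Proposition \ref{prop:eigenfunctions} on $V_\sigma=\{R'>-\sigma\}$ is an element of $A(V_\sigma)$, so it vanishes identically on $V_\sigma^c$ and hence (by continuity after local regularity) on $\partial V_\sigma$; it therefore cannot coincide on $V_\sigma$ with any strictly positive function on $M$. Any genuine extension must modify $w$ on a full neighborhood of $\partial V_\sigma$, and in that transition region you have neither the eigenvalue identity (which Proposition \ref{prop:eigenfunctions} guarantees only on \emph{open} subsets of the merely measurable set $V_\sigma$, possibly empty) nor the lower bound $|R'|\ge\sigma$, so there is no mechanism to dominate the negative term $R\phi_+$ ($R$ is a negative constant after normalization) in $-a\Delta\phi_+ + R\phi_+ - R'\phi_+^{N-1}\ge 0$. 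Taking $K$ large does not help where $R'$ may vanish and $w$ is small. This is precisely why the paper abandons barriers in favor of the variational scheme. Separately, you misplace Rauzy's error: the paper identifies it not as a patching issue but as the failure of the regularity bootstrap at the critical exponent — the limit of the subcritical minimizers lies a priori only in $W^{2,N/(N-1)}$, which embeds into $L^N$ and hence yields no gain; the improved $L^M(K)$ estimate with $M>N$ of Lemma \ref{UniformSubcriticalBound} is what repairs it.
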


For the most part, the proof of Theorem \ref{thm:PrescribedCompact} can be obtained
from the proof of Theorem \ref{thm:PrescribedScalarCurvature}
by treating a compact manifold as an asymptotically Euclidean manifold
with zero ends.  So we
focus on Theorem \ref{thm:PrescribedScalarCurvature} and then
present the few additional arguments needed to prove Theorem \ref{thm:PrescribedCompact}
at the end of the section.

Turning to Theorem \ref{thm:PrescribedScalarCurvature},
the  proof that 1) implies 2) is short, so we
delay it and concentrate on the direction 2) implies 1).
Suppose that $\{R'=0\}$ is Yamabe positive.
We show that we can make the desired conformal change
using a sequence of results proved over the remainder of this section.
It suffices to work under the following simplifying hypotheses.
\begin{enumerate}
\item We may assume that the prescribed scalar curvature $R'$ is bounded
since Lemma \ref{lem:LowerScalarCurvature}, which we prove next, shows that we can the lower
scalar curvature after first solving the problem for a scalar curvature that is
truncated below.
\item We may assume $\{R'=0\}$ contains a neighborhood of infinity,
since continuity from above (Lemma \ref{lem:continuityfromabove})
shows that we can truncate $R'$ in a ``small'' neighborhood
of infinity such that its zero set remains Yamabe positive, and
we can subsequently lower the scalar curvature after solving the modified problem.
\item We may assume that the initial scalar curvature satisfies $R=0$ in
a neighborhood of infinity, since Lemma \ref{lem:ZeroNearInfinity}, which we prove
below, shows that we can initially conformally transform to such a scalar curvature,
and since the hypotheses of Theorem \ref{thm:PrescribedScalarCurvature} are conformally
invariant.
\end{enumerate}

\begin{lem} \label{lem:LowerScalarCurvature}
Suppose $(M, g)$ is a $W^{2,p}_\alpha$ AE manifold with $p>n/2$ and $\alpha \in (2-n,0)$.
Suppose $R'\in L^{p}_{\alpha-2}$. If $R_g \geq R'$, then there exists a positive $\phi$
with $\phi-1 \in W^{2,p}_\alpha$ such that $g' = \phi^{N-2} g$ has scalar curvature $R'$.
\end{lem}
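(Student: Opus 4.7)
The plan is to apply the sub- and supersolution theorem (Theorem \ref{thm:SubSupersolutionTheorem}) to the equation
\begin{equation*}
-a\Delta \phi = R'\phi^{N-1} - R\phi,
\end{equation*}
seeking a positive solution asymptotic to $\mru \equiv 1$.

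The supersolution is immediate: take $\phi_+ \equiv 1$. Then $-a\Delta \phi_+ = 0 \geq R' - R$, using the hypothesis $R \geq R'$.

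Constructing a usable subsolution is the main step, and the naive candidate $\phi_-\equiv \epsilon$ fails at any point where $R > 0$. Instead, I would exploit the non-negativity of the potential $R - R'$: by Proposition \ref{prop:Isomorphism}, the operator $-a\Delta + (R - R')$ is an isomorphism from $W^{2,p}_\alpha$ to $L^p_{\alpha-2}$, so the equation $(-a\Delta + (R-R'))w = -(R-R')$ has a unique solution $w \in W^{2,p}_\alpha$, and setting $\phi_- := 1 + w$ yields
\begin{equation*}
-a\Delta \phi_- + (R - R')\phi_- = 0.
\end{equation*}
The maximum principle (Proposition \ref{prop:MaxPrinciple}) applied to $v := 1 - \phi_-$, which satisfies $-a\Delta v + (R-R')v = R - R' \geq 0$, gives $\phi_- \leq 1$, and the strong maximum principle (Proposition \ref{prop:StrongMaxPrinciple}) applied to $\phi_-$ itself forces $\phi_- > 0$; together with continuity and the asymptotics $\phi_- \to 1$, this yields $0 < \inf \phi_- \leq \phi_- \leq 1$.

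To check the subsolution inequality, rewrite the defining equation as $-a\Delta \phi_- + R\phi_- = R'\phi_-$, from which
\begin{equation*}
-a\Delta \phi_- - \bigl(R'\phi_-^{N-1} - R\phi_-\bigr) = R'\phi_-\bigl(1 - \phi_-^{N-2}\bigr) \leq 0,
\end{equation*}
using the sign assumption $R' \leq 0$ that is in force throughout this chapter on prescribed non-positive scalar curvature, together with $\phi_- > 0$ and $\phi_-^{N-2}\leq 1$. Now $\phi_- \leq 1 = \phi_+$ and both agree with $\mru \equiv 1$ at infinity, so Theorem \ref{thm:SubSupersolutionTheorem} produces $\phi$ with $\phi_- \leq \phi \leq 1$ and $\phi - 1 \in W^{2,p}_\alpha$; positivity of $\phi$ is automatic from $\phi \geq \phi_- > 0$, and $\phi$ is the desired conformal factor. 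The key obstacle is precisely the subsolution construction: the sign condition $R' \leq 0$ quietly enters there, which is why this ``lowering'' argument cannot be used to produce conformal changes to prescribed scalar curvatures that take positive values.
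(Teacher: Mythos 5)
Your proof is correct as far as it goes, but it proves a strictly weaker statement than the lemma: you need the extra hypothesis $R' \leq 0$ for your subsolution to be a subsolution, whereas the lemma as stated requires only $R_g \geq R'$ (and since $R_g$ is merely in $L^p_{\alpha-2}$, there is nothing preventing both $R_g$ and $R'$ from being positive on a bounded region). You have mistaken a sign condition on the \emph{target} scalar curvatures in the main theorems of this chapter for a standing hypothesis of this lemma. Since the paper's proof of the lemma (and its citation from Chapter 3, in Theorem \ref{thm:Blowup}) is meant to cover the general case $R_g \geq R'$, the sign restriction you introduce is a genuine gap.

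The gap stems from the subsolution. Your observation that $\phi_- \equiv \epsilon$ fails where $R_g > 0$ is correct, but it should have pointed you to $\phi_- \equiv 0$ rather than to a more elaborate positive subsolution: zero is always a subsolution of $-a\Delta\phi + R_g\phi = R'\phi^{N-1}$ regardless of the signs of $R_g$ and $R'$, since both sides vanish, and $0$ lies in the interval of regularity because all exponents in the nonlinearity are nonnegative. That is exactly what the paper does: it applies Maxwell's sub/supersolution result with the barriers $0 \leq \phi \leq 1$, which requires only $R_g \geq R'$ for the supersolution, and then establishes the strict positivity $\phi > 0$ \emph{a posteriori}. Writing the resulting equation as $-a\Delta\phi + (R_g - R'\phi^{N-2})\phi = 0$, noting the bracketed coefficient is in $L^p_{\text{loc}}$ with $p > n/2$ because $0 \leq \phi \leq 1$, and using $\phi \to 1$ at infinity, the weak Harnack inequality rules out an interior zero. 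Building a positive subsolution in advance (your approach) trades this a posteriori argument for an a priori one, but the price is exactly the sign condition on $R'$, so in this instance it is the wrong trade.

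One smaller point: you apply the strong maximum principle to $\phi_-$ to conclude $\phi_- > 0$, but that proposition has nonnegativity of $\phi_-$ as a hypothesis, which you never explicitly established. It is easy to obtain — apply the weak maximum principle to $\phi_-$ in the same way you applied it to $1-\phi_-$, noting that the argument in Proposition \ref{prop:MaxPrinciple} only needs the test function $(\phi_- + \epsilon)^-$ to be compactly supported, which holds because $\phi_- \to 1$ at infinity — but it should be said.
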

\begin{proof}
We seek a solution to $-a\Delta \phi + R_g\phi = R'\phi^{q-1}$. Note that
$0$ is a subsolution and, since $R_g\geq R'$, $1$ is a supersolution. By \cite{Maxwell05b}
Proposition 2,
there exists a solution $\phi$ with $0\leq \phi\leq 1$ and $\phi-1\in W^{2,p}_\alpha$.
Since $\phi\ge 0$ solves $-a\Delta\phi+ (R-R'\phi^{q-2})\phi= 0$, and since $\phi\to 1$
at infinity, the weak Harnack inequality \cite{trudinger-measurable} implies that $\phi$
is positive.
\end{proof}

\begin{lem}\label{lem:ZeroNearInfinity}
Suppose $(M, g)$ is a $W^{2,p}_\alpha$ AE manifold with $p>n/2$ and $\alpha \in (2-n,0)$.
There exists $\phi>0$ with $\phi-1 \in W^{2,p}_\alpha$
such that the metric $g' =\phi^{N-2}g$ has zero scalar curvature on some neighborhood
of infinity.
\end{lem}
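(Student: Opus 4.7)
The plan is to use the conformal transformation law for scalar curvature, which gives $R_{g'} = \phi^{1-N}(-a\Delta\phi + R\phi)$, so finding a $\phi$ that makes $R_{g'}$ vanish on a neighborhood of infinity amounts to solving the linear equation $-a\Delta\phi + R\phi = 0$ there, with $\phi > 0$ and $\phi - 1 \in W^{2,p}_\alpha$. The idea is to solve a slightly perturbed equation on all of $M$ by truncating the potential $R$ away from a large compact set.

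Concretely, I would choose compact sets $K_0 \subset K_1 \subset M$ and a smooth cutoff $\eta$ with $\eta \equiv 0$ on $K_0$ and $\eta \equiv 1$ on $M \setminus K_1$. Setting $\phi = 1 + u$, I plan to solve
\begin{equation}
  -a\Delta u + (\eta R) u = -\eta R
\end{equation}
globally on $M$ for $u \in W^{2,p}_\alpha$. If this succeeds, then $\phi$ satisfies $-a\Delta\phi + \eta R\phi = 0$ on $M$, which coincides with $-a\Delta\phi + R\phi = 0$ on $M\setminus K_1$, as desired.

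To solve this linear equation, I would view $-a\Delta + \eta R$ as a perturbation of $-a\Delta$, which is an isomorphism $W^{2,p}_\alpha \to L^{p}_{\alpha-2}$ by Proposition \ref{prop:Isomorphism} (take $V \equiv 0 \geq 0$). Since $W^{2,p}_\alpha \hookrightarrow C^0_\alpha \hookrightarrow L^\infty$ for $\alpha<0$, multiplication by $\eta R$ is a bounded operator $W^{2,p}_\alpha \to L^p_{\alpha-2}$ of norm at most $C\|\eta R\|_{L^p_{\alpha-2}} \le C\|R\|_{L^p_{\alpha-2}(M\setminus K_0)}$. Because $R \in L^p_{\alpha-2}(M)$, this last quantity tends to $0$ as $K_0$ exhausts $M$, so for $K_0$ sufficiently large the standard Neumann-series / Banach open mapping argument shows $-a\Delta + \eta R$ remains an isomorphism, and we obtain $u \in W^{2,p}_\alpha$ with $\|u\|_{W^{2,p}_\alpha} \le C\|\eta R\|_{L^p_{\alpha-2}}$ as small as we like. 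By Sobolev embedding $\|u\|_\infty$ is correspondingly small, so enlarging $K_0$ further ensures $\phi = 1 + u > 1/2 > 0$ everywhere on $M$. Then $\phi - 1 = u \in W^{2,p}_\alpha$, and $g' = \phi^{N-2} g$ has vanishing scalar curvature on the neighborhood $M \setminus K_1$ of infinity.

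The only real obstacle is the perturbation step, but it is routine once the multiplication estimate $\|(\eta R) u\|_{L^p_{\alpha-2}} \le C\|\eta R\|_{L^p_{\alpha-2}}\|u\|_{W^{2,p}_\alpha}$ is in hand and the base isomorphism for $-a\Delta$ is invoked from Proposition \ref{prop:Isomorphism}. The approach avoids any delicate boundary value problem on an exterior domain or any use of the Yamabe class of $g$, which is essential here since we have no Yamabe hypothesis on $g$.
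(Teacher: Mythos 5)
Your proof is correct, and it takes a genuinely different route from the paper's. The paper works with a Dirichlet boundary value problem on the exterior region $E_r$: it first shows that $E_r$ is Yamabe positive for $r$ large (via Lemma \ref{lem:UniformOuterApproximation}), uses this to prove that $-a\Delta + \eta R$ with homogeneous boundary data is an isomorphism on $E_r$ for every $\eta\in[0,1]$, then runs a method-of-continuity argument in $\eta$ together with the weak Harnack inequality to show $\phi_1 = u_1 + 1 > 0$, and finally extends $\phi_1|_{E_r}$ arbitrarily to a positive $W^{2,p}_\alpha$ function on $M$. Your approach instead solves a single global problem on all of $M$ by cutting off the \emph{potential} rather than restricting the \emph{domain}: you truncate $R$ to $\eta R$ supported near infinity, where the tail $\|R\|_{L^p_{\alpha-2}(M\setminus K_0)}\to 0$, and invoke the Neumann-series perturbation of the base isomorphism $-a\Delta:W^{2,p}_\alpha\to L^p_{\alpha-2}$ from Proposition \ref{prop:Isomorphism}. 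The multiplication estimate $\|(\eta R)u\|_{L^p_{\alpha-2}} \le C\|\eta R\|_{L^p_{\alpha-2}}\|u\|_{W^{2,p}_\alpha}$ (via $W^{2,p}_\alpha\hookrightarrow C^0_\alpha\hookrightarrow L^\infty$) is exactly what is needed, and the smallness of $u$ that falls out of the a priori estimate gives positivity of $\phi=1+u$ for free, sidestepping both the continuity argument and the Harnack inequality. The trade-off is that your argument needs only the global isomorphism theory from Chapter \ref{chap:AEIntro}, whereas the paper's needs exterior-domain Fredholm theory adapted from Bartnik and the machinery of Section \ref{sec:FirstEigenvalue}; on the other hand, the paper's continuity argument would survive situations where the potential near infinity cannot be made small in norm, so it is the more robust template even though it is overkill for this particular lemma. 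Both proofs correctly avoid any Yamabe hypothesis on $g$ itself, as the statement requires.
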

\begin{proof}
We prove this result for a manifold with one end; the extension to
several ends can be done by repeated application of our argument.
Let $E_r$ be the region outside the coordinate ball of radius $r$ in end
coordinates. By Lemma
\ref{lem:UniformOuterApproximation}, $y(E_r)>0$ for $r$ large enough.
Following Proposition 3 in \cite{Maxwell05b} we claim that
\begin{equation} \label{eq:DirLap}
-a\Delta + \eta R : \{u \in W^{2,p}_\alpha(E_R): u|_{\p E_r} = 0\} \to L^p_{\alpha-2}(E_R)
\end{equation} is an isomorphism for all $\eta \in [0,1]$. Because we assume
homogenous boundary conditions, the argument in Propositions 1.6 through 1.14
in \cite{Bartnik86} showing that $-a\Delta+\eta R$ is Fredholm of index zero
requires no changes except imposing the boundary condition. Suppose, then,
to produce a contradiction, that there exists a nontrivial $u$ in the kernel.
An argument parallel to Lemma 3 in \cite{Maxwell05b} implies that
$u\in W^{2,p}_{\alpha'}$ for any $\alpha' \in (2-n,0)$.  In particular,
the extension of $u$ by zero to $M$ belongs to $W^{1,2}_{\delta^*}(M)$ and hence
also to $A(E_r)$. Integration by parts implies $Q^y(u) = 0$, which contradicts
the fact that $E_r$ is Yamabe positive. Thus $-a\Delta+\eta R$ is an isomorphism.

Let $u_\eta$ be the nontrivial solution in $\{u \in W^{2,p}_\alpha(E_r): u|_{\p E_r} =0\}$ of
\begin{equation}
  -a\Delta u_\eta + \eta R u_\eta = -\eta R.
\end{equation} Then $\phi_\eta:= u_\eta +1$ solves
\begin{equation}
-a\Delta \phi_\eta + \eta R \phi_\eta = 0
\end{equation} on $E_r$. Let $I = \{\eta \in [0,1]: \phi_\eta>0\}$. Since
$\phi_0 \equiv 1$, $I$ is nonempty. The set of solutions $u_\eta$ such that $u_\eta>-1$
is open in $W^{2,p}_\alpha \subset C^0_\alpha$. Thus, by the continuity of the map
$\eta\mapsto u_\eta$, $I$ is open. Suppose $\eta_0 \in \overline{I}$. If
$\phi_{\eta_0} =0$ somewhere, the weak Harnack inequality \cite{trudinger-measurable}
implies that $\phi_{\eta_0} \equiv 0$, which
contradicts the fact that $\phi_{\eta_0} \to 1$ at infinity. Thus $\phi_{\eta_0} >0$
on $E_r$, and so $I$ is closed. Thus $I = [0,1]$, and $\phi_1>0$.
We set $\phi$ to be an arbitrary positive $W^{2,p}_\alpha$ extension of
$\phi_1|_{E_{r}}$; $\phi$ satisfies the properties claimed in this lemma.
\end{proof}

Consider the family of functionals
\begin{equation}\label{eq:DefOfF}
F_q(u) = \int a |\nabla u|^2 + \int R (u+1)^2 - \frac{2}{q} \int R' \left|u+1\right|^q
\end{equation}
for $q\in [2,N)$.

Broadly, the strategy of the proof of Theorem \ref{thm:PrescribedScalarCurvature}
is to construct minimizers $u_q$ of the subcritical functionals,
and then establish sufficient control to show that $(1+u_q)$ converges in
the limit $q\ra N$ to the desired conformal factor.
The following uniform coercivity estimate, which we prove following a variation
of techniques found in \cite{Rauzy95}, is the key step in showing the existence
of subcritical minimizers.

\begin{prop}[Coercivity of $F_q$]\label{prop:Coercivity} Suppose that
 $\{R'=0\}$ is Yamabe positive, that $\delta>\delta^*$, and that $q_0\in(2,N)$.
For every $B\in\R$  there is a $K>0$ such that for all $q\in [q_0,N)$ and all
$u\in W^{1,2}_{\delta^*}$ with $u\ge -1$, if $\|u\|_{2,\delta}>K$ then $F_q(u)>B$.
\end{prop}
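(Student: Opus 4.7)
The plan is to argue by contradiction in the spirit of Rauzy's compact argument, adapted to the AE setting and clarified using the weighted-eigenvalue viewpoint of Section~\ref{sec:FirstEigenvalue}. Suppose no such $K$ exists. Then there are sequences $q_k\in[q_0,N)$ and $u_k\in W^{1,2}_{\delta^*}$ with $u_k\ge -1$, $t_k:=\|u_k\|_{2,\delta}\to\infty$, and $F_{q_k}(u_k)\le B$. Normalize by setting $v_k=u_k/t_k$, so $\|v_k\|_{2,\delta}=1$ and $v_k\ge -1/t_k\to 0^-$. Using $R'\le 0$ and dividing $F_{q_k}(u_k)\le B$ by $t_k^2$ gives
\begin{equation}\label{eq:coercprop-rescaled}
a\int |\nabla v_k|^2 + \int R\left(v_k+\tfrac{1}{t_k}\right)^2 + \frac{2}{q_k t_k^2}\int |R'|(u_k+1)^{q_k} \le \frac{B}{t_k^2}.
\end{equation}

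The first step is to extract a weak limit. The third term in \eqref{eq:coercprop-rescaled} is nonnegative, so Lemma~\ref{lem:RIntegralBound} (applied to $v_k+1/t_k$) combined with Lemma~\ref{lem:Qplusdelta} and $\|v_k\|_{2,\delta}=1$ shows that $\|v_k\|_{W^{1,2}_{\delta^*}}$ is uniformly bounded; the cross and constant terms $t_k^{-1}\int Rv_k$ and $t_k^{-2}\int R$ are controlled by H\"older, using $R\in L^p_{\alpha-2}$ and the embedding of $W^{1,2}_{\delta^*}$. Passing to a subsequence, $v_k\rightharpoonup v_\infty$ weakly in $W^{1,2}_{\delta^*}$, strongly in $L^2_\delta$, and pointwise a.e.; hence $\|v_\infty\|_{2,\delta}=1$, so $v_\infty\not\equiv0$, and $v_\infty\ge 0$.

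The second, key step is to show that $v_\infty$ is supported in $V:=\{R'=0\}$. Suppose the set $S:=\{v_\infty>0\}\cap\{R'<0\}$ had positive measure. On $S$ we have $u_k+1=t_kv_k+1\to\infty$ a.e., and
\begin{equation}
\frac{1}{t_k^2}(u_k+1)^{q_k} = t_k^{\,q_k-2}\bigl(v_k+\tfrac{1}{t_k}\bigr)^{q_k}.
\end{equation}
Since $q_k\ge q_0>2$ and $v_k+1/t_k\to v_\infty>0$ a.e.\ on $S$, this expression diverges pointwise a.e.\ on $S$. Fatou's lemma then forces $\frac{1}{t_k^2}\int |R'|(u_k+1)^{q_k}\to\infty$, contradicting the boundedness of this term in \eqref{eq:coercprop-rescaled}. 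So $v_\infty=0$ a.e.\ on $\{R'<0\}$; combined with $v_\infty\not\equiv0$ and $\|v_\infty\|_{2,\delta}=1$, we conclude $v_\infty\in A(V)$.

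The final step is to extract a contradiction from Yamabe positivity of $V$. By Corollary~\ref{cor:SignEquivalence}, $\lambda_\delta(V)>0$, so
\begin{equation}
\int a|\nabla v_\infty|^2 + R v_\infty^2 \ge \lambda_\delta(V)\|v_\infty\|_{2,\delta}^2 = \lambda_\delta(V) > 0.
\end{equation}
On the other hand, taking $\liminf_k$ in \eqref{eq:coercprop-rescaled}, using weak lower semicontinuity of $\int|\nabla v_k|^2$ and the weak continuity of $u\mapsto\int Ru^2$ from Lemma~\ref{lem:RIntegralBound} (together with $1/t_k\to 0$ to discard the cross terms), yields
\begin{equation}
\int a|\nabla v_\infty|^2 + R v_\infty^2 \le 0,
\end{equation}
which is the desired contradiction. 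The step I expect to require the most care is the second one: verifying that the pointwise divergence on $S$ does yield the claimed divergence of the $L^1(|R'|)$-integral uniformly in the varying exponent $q_k\in[q_0,N)$, which is why the hypothesis $q_0>2$ is used. The rest consists of routine weighted-Sobolev manipulations already developed in Section~\ref{sec:FirstEigenvalue}.
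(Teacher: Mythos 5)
Your proof is correct, but it is organized quite differently from the one in the paper, so a comparison is worthwhile. The paper follows Rauzy's dichotomy: it introduces the concentration sets $A_\eta$ and first proves, by a compactness argument at the level of the quadratic form alone, that functions sufficiently concentrated on $Z=\{R'=0\}$ satisfy $\int a|\nabla u|^2+Ru^2\ge\calL\|u\|_{2,\delta}^2$; it then treats non-concentrated $u$ by a reverse H\"older inequality, which makes the term $\tfrac{2}{q}\int|R'||u|^q$ grow like $\|u\|_{2,\delta}^q$ and dominate the $-C\|u\|_{2,\delta}^2$ loss from Lemma \ref{lem:RIntegralBound}. You instead run a single contradiction argument on the full functional: normalize by $t_k=\|u_k\|_{2,\delta}$, extract a weak limit, and let Fatou's lemma applied to $t_k^{q_k-2}(v_k+1/t_k)^{q_k}$ force the limit to be supported in $Z$, after which the conclusion follows from $\lambda_\delta(Z)>0$ exactly as in the paper's concentrated case. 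The ingredients (Lemmas \ref{lem:Qplusdelta} and \ref{lem:RIntegralBound}, the compact embedding into $L^2_\delta$, weak semicontinuity, Corollary \ref{cor:SignEquivalence}) are the same, but your Fatou step replaces the reverse H\"older step and collapses the two cases into one; the price is that your argument is purely qualitative (it requires passing to pointwise a.e.\ convergence and gives no handle on how $K$ depends on $B$ and the data beyond existence), whereas the paper's version exhibits the explicit growth rates $\|u\|_{2,\delta}^q$ and $\|u\|_{2,\delta}^2$ in the two regimes. You also correctly isolate where $q_0>2$ enters, namely in $t_k^{q_k-2}\ge t_k^{q_0-2}\to\infty$, which plays the same role as ``$q>2$'' in the paper's non-concentrated case.

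Two small points of hygiene. First, on a genuinely asymptotically Euclidean manifold the constant $1/t_k$ is not in $W^{1,2}_{\delta^*}$, so Lemma \ref{lem:RIntegralBound} cannot literally be applied to $v_k+1/t_k$; you do treat the cross and constant terms separately by H\"older, but note that finiteness of $\int|R|$ and $\int|R|\,|v_k|$ relies on the standing simplifying hypotheses (stated just before Lemma \ref{lem:LowerScalarCurvature}) that $R$ and $R'$ are compactly supported and $R'$ is bounded --- the paper's own proof uses these in exactly the same way, so this is consistent but worth saying explicitly. Second, the contradiction scheme must allow $q_k$ to vary in $[q_0,N)$ along the sequence (as you do), since the claimed $K$ is uniform in $q$; your pointwise lower bound $(v_k+1/t_k)^{q_k}\ge\min\{c(x)^{q_0},c(x)^{N}\}$ handles the variable exponent correctly.
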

\begin{proof}
For $\eta>0$ let
\begin{equation}
A_{\eta} = \left\{ u\in W^{1,2}_{\delta^*}, u\ge -1:
   \int |R'| |u|^2 \le \eta \|u\|_{2,\delta}^2 \int |R'|\right\}.
\end{equation}
Morally, $u\in A_{\eta}$ if it is concentrated on the zero set
\begin{equation}
Z = \{R'=0\},
\end{equation}
with greater concentration as $\eta\ra 0$.

Fix a constant $\calL\in(0,\lambda_\delta(Z))$.
We first claim that there is an $\eta_0<1$
such that if $u\in A_{\eta_0}$, then
\begin{equation}\label{eq:pseudo-yamabe-positive}
\int a |\nabla u|^2 + Ru^2 \ge \calL \|u\|_{2,\delta}^2.
\end{equation}
Suppose to the contrary
that this is false, and let $\eta_k$ be a sequence converging to $0$.
We can then construct a sequence $v_{k}$ with each $v_k\in A_{\eta_k}$ such that
$\|v_{k}\|_{2,\delta}=1$ and
\begin{equation}
\int a |\nabla v_{k}|^2 + Rv_{k}^2< \calL.
\end{equation}
Note that $\calL$ is finite even if $\lambda_\delta(Z)=\infty$.  So
from the boundedness of the sequence $v_k$ in $L^2_\delta$ and
Lemma \ref{lem:Qplusdelta}, the sequence $v_k$ is bounded in $W^{1,2}_{\delta^*}$,
and a subsequence (which we reduce to) converges weakly in $W^{1,2}_{\delta^*}$
and strongly in $L^2_{\delta}$ to a limit $v$ with $\|v\|_{2,\delta}=1$.
Now
\begin{align}
0 \le \int |R'| v_k^2 &\le \eta_k \int |R'|\rightarrow 0.
\end{align}
Since $|R'| v_k^2 \rightarrow |R'| v^2$ in $L^1$
we conclude that $v=0$ outside of $Z$.  From weak upper semicontinuity
(Corollary \ref{cor:uppersc}) we conclude that
\begin{equation}\label{eq:v_was_small}
\int a|\nabla v|^2 + Rv^2 \le \calL
\end{equation}
as well. However, since $v$ is supported in $Z$
\begin{equation}
\int a|\nabla v|^2 + Rv^2 \ge \lambda_\delta(Z)\|v\|_{2,\delta}^2 = \lambda_\delta(Z) > \calL,
\end{equation}
which is a contradiction, and establishes inequality \eqref{eq:pseudo-yamabe-positive}.

Let $B\in\R$
and suppose $q\in(q_0,N)$, $u\in W^{1,2}_{\delta^*}$ and $u\ge -1$. We wish to show that
there is a $K$ independent of $q$ so that if $\|u\|_{2,\delta}>K$ then $F_q(u)>B$.
It is enough to find a choice of $K$ under two cases depending on
whether $u\in A_{\eta_0}$ or not. If $u$ is concentrated on $Z$, the coercivity
follows from the fact that $Z$ is Yamabe positive (as used to obtain
inequality \eqref{eq:pseudo-yamabe-positive}),
and if $u$ is not concentrated on $Z$ then the coercivity follows from the
fact that $R'<0$ away from $Z$.

Suppose that $u\not\in A_{\eta_0}$, so
\begin{equation}\label{eq:u_not_in_A}
\int |R'| |u|^2 > \eta_0 \|u\|_{2,\delta}^2 \int |R'|.
\end{equation}
We calculate
\begin{equation}\label{eq:F_qstep1a}
\begin{aligned}
F_q(u) &= \int a|\nabla u|^2 + \int R(u+1)^2 + \frac{2}{q} \int |R'||u+1|^q\\
&\ge \int a|\nabla u|^2 -2 \int |R|(u^2+1) + \frac{2}{q} \int |R'|(|u|^q-1)\\
&\ge \int \frac{a}{2}|\nabla u|^2 - C\|u\|_{2,\delta}^2 -2 \int |R| + \frac{2}{q} \int |R'| (|u|^q-1)\\
&\ge \int \frac{a}{2}|\nabla u|^2 - C\|u\|_{2,\delta}^2
      -2\int \left(|R| + \frac{1}{q}|R'|\right) + \frac{2}{q} \int |R'| |u|^q.
\end{aligned}
\end{equation}
Here we have applied Lemma \ref{lem:RIntegralBound}, and
have used the fact that $(u+1)^q\ge |u|^q-1$ for $u\ge -1$. Inequality
\eqref{eq:u_not_in_A} and H\"older's inequality imply
\begin{equation}
\begin{aligned}
\eta_0\|u\|_{2,\delta}^2 \int |R'| &<  \int |R'| |u|^2
\le \left(\int |R'| |u|^q\right)^{\frac 2 q }
\left(\int |R'| \right)^{1-\frac 2 q }
\end{aligned}
\end{equation}
and hence
\begin{equation}\label{eq:F_qstep1b}
(\eta_0)^\frac{q}{2}\|u\|_{2,\delta}^q \int |R'| \le \int |R'| |u|^q.
\end{equation}
Using the fact that $\eta_0<1$ and $q<N$, inequalities \eqref{eq:F_qstep1a} and
\eqref{eq:F_qstep1b} imply at last that
\begin{equation}
F_q(u) \geq
\int \frac{a}{2}|\nabla u|^2 - C\|u\|_{2,\delta}^2 -2\int \left(|R| + \frac{1}{q}|R'|\right)
 + \frac{2}{q} (\eta_0)^{\frac{N}{2}} \|u\|_{2,\delta}^q \int |R'|.
\end{equation}
We note that $\int |R'|>0$, for otherwise condition \eqref{eq:u_not_in_A}
is impossible,
and hence the coefficient on $\|u\|_{2,\delta}^q$ is positive. Since $q>2$,
there is a $K$ such that if  $\|u\|_{2,\delta}>K$, then
$F_q(u)\ge B$. Note that since $C$ is independent of $q\geq q_0$, so is the choice of $K$.

Now suppose $u\in A_{\eta_0}$, so inequality \eqref{eq:pseudo-yamabe-positive} holds.
Then for any $\epsilon>0$,
\begin{equation}
\begin{aligned}\label{eq:F_q_near_zero}
F_q(u) &\ge \int a|\nabla u|^2 + \int R (u+1)^2\\
&= \int a|\nabla u|^2 + Ru^2 + \int R\left[ (u+1)^2 - u^2\right]\\
&\ge \int a|\nabla u|^2 + Ru^2 - \int |R|\left[ \epsilon u^2 +1+\frac{1}{\epsilon}\right]\\
&\ge (1-\epsilon) \left[\int a|\nabla u|^2 + Ru^2\right] + \epsilon \int (a|\nabla u|^2-2|R|u^2)
- \left(1+\frac{1}{\epsilon}\right) \int |R|\\
&\ge (1-\epsilon) \calL \|u\|_{2,\delta}^2
+ \epsilon\left( \int \frac{a}{2}|\nabla u|^2-C\|u\|_{2,\delta}^2\right)
-\left(1+\frac{1}{\epsilon}\right) \int |R|\\
&\ge \left[ (1-\epsilon) \calL-\epsilon C\right]
\|u\|_{2,\delta}^2 + \epsilon \int \frac{a}{2}|\nabla u|^2
-\left(1+\frac{1}{\epsilon}\right) \int |R|.
\end{aligned}
\end{equation}
Here we have applied Lemma \ref{lem:RIntegralBound},
inequality \eqref{eq:pseudo-yamabe-positive},
and the fact that $(u+1)^2-u^2 \le \epsilon u^2 +1 + (1/\epsilon)$
for all $u\geq -1$ and all $\epsilon>0$.  We can pick $\epsilon$
sufficiently small so that the coefficient of $\|u\|_{2,\delta}$
in the final expression of inequality \eqref{eq:F_q_near_zero} is at
least $\calL/2$.  Hence there is a $K$ such that
if $\|u\|_{2,\delta}\ge K$, then $F_q(u)\ge B$.
Since $C$ is independent of $q\geq q_0$, so is $\epsilon$ and the choice of $K$.
\end{proof}

\begin{lem}\label{lem:ContinuityOfF}
For $q<N$ the operator $F_q$ is weakly upper semicontinuous on
$W^{1,2}_{\delta^*}$.
\end{lem}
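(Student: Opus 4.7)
The plan is to decompose
\[
F_q(u) = \int a|\nabla u|^2 + \int R(u+1)^2 - \frac{2}{q}\int R'|u+1|^q
\]
into three pieces and verify the appropriate weak semicontinuity of each separately. The first two pieces can be handled by recycling what has already been done in Section~\ref{sec:FirstEigenvalue}: the Dirichlet term $\int a|\nabla u|^2$ is weakly upper semicontinuous on $W^{1,2}_{\delta^*}$ in the same sense as in Corollary~\ref{cor:uppersc}, and for the curvature term one expands $(u+1)^2 = u^2 + 2u + 1$ and invokes Lemma~\ref{lem:RIntegralBound} to see that $u\mapsto \int Ru^2$ is weakly continuous, the linear map $u\mapsto \int Ru$ is weakly continuous by the same compact embedding argument (applied to $u$ instead of $u^2$), and the constant $\int R$ is irrelevant.

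The real content of the lemma is the nonlinear subcritical piece $-\frac{2}{q}\int R'|u+1|^q$, where I would aim to prove outright weak continuity (which is stronger than semicontinuity and hence suffices). Given $u_k \rightharpoonup u$ in $W^{1,2}_{\delta^*}$, I would first apply H\"older's inequality in weighted form, writing $R' \in L^p_{\alpha-2}$ against $|u+1|^q \in L^{p'}_{\beta}$ with $p' = p/(p-1)$ and weight $\beta$ chosen so that the decay exponents match the natural integration weight $\rho^{-n}$. Thus
\[
\left|\int R'|u+1|^q\right| \le \|R'\|_{p,\alpha-2}\,\bigl\| (u+1)^q \bigr\|_{p',\beta} = \|R'\|_{p,\alpha-2}\,\bigl\| u+1 \bigr\|_{qp',\beta/q}^{q}.
\]
Since $q < N = 2n/(n-2)$ and $p > n/2$, the exponent $qp'$ falls strictly below the critical Sobolev exponent coming from $W^{1,2}_{\delta^*}$, so part~3 of Proposition~\ref{prop:SobolevEmbeddings} (applied after a harmless shift of the weight) upgrades the weak $W^{1,2}_{\delta^*}$ convergence to strong convergence in $L^{qp'}_{\beta/q}$. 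Together with the continuity of the map $v\mapsto |v|^q$ on $L^{qp'}_{\beta/q}$, this allows one to pass to the limit under the integral against the fixed $L^p_{\alpha-2}$ function $R'$ and conclude $\int R'|u_k+1|^q \to \int R'|u+1|^q$.

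Combining the three pieces gives $F_q(u) \le \liminf_{k\to\infty} F_q(u_k)$, which is the desired weak upper semicontinuity in the paper's sense. The main obstacle is the bookkeeping in the middle display: one has to choose $\beta$ compatible with the $\rho^{-n}$ integration weight, verify that the resulting target space $L^{qp'}_{\beta/q}$ is genuinely an \emph{embedding target} of $W^{1,2}_{\delta^*}$ that is moreover \emph{compact}, and confirm that strict subcriticality $q < N$ is really what buys compactness (at $q=N$ we would be in the Yamabe regime and the compact embedding would fail, exactly as one expects). Once these exponent inequalities are in hand, the remainder of the argument is standard.
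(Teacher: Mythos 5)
Your decomposition of $F_q$ and your treatment of the Dirichlet and curvature terms match the paper's proof (Lemma \ref{lem:RIntegralBound} plus weak continuity of continuous linear maps), and you have correctly read the paper's ``upper semicontinuous'' convention as $F_q(u)\le\liminf_k F_q(u_k)$. The problem is in the nonlinear term, which you yourself identify as the real content. Your key exponent claim --- that $q<N$ and $p>n/2$ force $qp'$ strictly below the critical Sobolev exponent --- is false. From $p>n/2$ one only gets $p'<n/(n-2)$, so $qp'$ can be as large as roughly $2n^2/(n-2)^2$; concretely, for $n=3$, $p=2$ (so $p'=2$) and $q=4<N=6$ you get $qp'=8>6$, and $W^{1,2}$ does not embed into $L^8$ in dimension $3$. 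The Hölder pairing of $R'\in L^p_{\alpha-2}$ against $|u+1|^q\in L^{p'}_\beta$ therefore does not close, and indeed with only $R'\in L^p_{\alpha-2}$ and $\alpha\in(2-n,0)$ even the finiteness of $\int R'\,|u+1|^q$ is in doubt near infinity, since $|u+1|^q\to 1$ there and $L^p_{\alpha-2}\not\subset L^1$ in that range of $\alpha$.

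What rescues the lemma --- and what the paper's one-line argument silently relies on --- is the set of standing reductions made just before Lemma \ref{lem:LowerScalarCurvature}: by that point $R'$ has been replaced by a bounded function whose support is compact (its zero set contains a neighborhood of infinity). With $R'\in L^\infty$ compactly supported, no Hölder duality against $L^p_{\alpha-2}$ is needed: one fixes $\delta>\delta^*$, uses the compact embedding $W^{1,2}_{\delta^*}\hookrightarrow L^q_\delta$ (compactness here comes from strict subcriticality of the exponent $q$ itself, not of $qp'$), and observes that $v\mapsto\int R'|v|^q$ is continuous on $L^q_\delta$ because $|v|^q$ depends continuously on $v$ as a map $L^q(K)\to L^1(K)$ on the compact support $K$ of $R'$. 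So your overall architecture is right, but you should replace the Hölder/subcritical-embedding step with an appeal to the boundedness and compact support of $R'$; as written, the step would fail for admissible $p$ and $q$.
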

\begin{proof}
Lemma \ref{lem:RIntegralBound} together with the weak continuity of continuous linear
maps implies that
\begin{equation}
u\mapsto \int a|\nabla u|^2 + R(u+1)^2
\end{equation}
is weakly upper semicontinuous on $W^{1,2}_{\delta^*}$.  Hence it suffices
to show that
\begin{equation}\label{eq:weakmap2}
u\mapsto \int R'|u+1|^{q-1}
\end{equation}
is weakly continuous on $W^{1,2}_{\delta^*}$.  But fixing $\delta>\delta^*$
we know that the embedding $W^{1,2}_{\delta^*}\hookrightarrow L^{q}_{\delta}$
is compact and that the map \eqref{eq:weakmap2} is continuous on $L^q_{\delta}$.
\end{proof}

We now obtain existence of subcritical minimizers from the coercivity of $F_q$,
along with uniform estimates in $W^{1,2}_{\delta^*}$ for the minimizers.

\begin{lem}\label{lem:subcriticalExistence}
For any $q_0 \in (2,N)$, for each $q\in [q_0,N)$, there exists $u_q>-1$,
bounded in $W^{1,2}_{\delta^*}$ and independent of $q$, which minimizes $F_q$
and is a weak solution of
\begin{equation}\label{eq:almost-scalar-curvature}
-a\Delta (u_q+1) + R(u_q+1) = R'(u_q+1)^{q-1}.
\end{equation}
Moreover, $u_q \in W^{2,p}_{\sigma}$ for every $\sigma\in(2-n,0)$.
\end{lem}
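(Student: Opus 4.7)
The plan is the direct method: produce a minimizer by compactness, derive the Euler--Lagrange equation, and then upgrade regularity and strict positivity. Fix $q\in[q_0,N)$ and set $m_q=\inf\{F_q(u):u\in W^{1,2}_{\delta^*},\ u\ge -1\}$. The admissible test function $u\equiv 0$ gives $F_q(0)=\int R-(2/q)\int R'$, which is finite and uniformly bounded for $q\in[q_0,N)$, so $m_q$ is bounded above uniformly in $q$. Take a minimizing sequence $u_k$ with $F_q(u_k)\le m_q+1$. Proposition \ref{prop:Coercivity} provides a $q$-independent bound $\|u_k\|_{2,\delta}\le K$. Inspecting the two-case lower bounds produced inside the proof of that proposition shows that the same hypotheses also dominate $\int\frac{a}{2}|\nabla u_k|^2$ up to constants that depend only on $R$ and $R'$; combined with Lemma \ref{lem:Qplusdelta} this yields a $q$-uniform bound in $W^{1,2}_{\delta^*}$. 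Extract a subsequence converging weakly in $W^{1,2}_{\delta^*}$, strongly in $L^2_{\delta}$, and pointwise a.e.\ to some $u_q$; pointwise convergence preserves the constraint $u_q\ge -1$, and Lemma \ref{lem:ContinuityOfF} gives $F_q(u_q)\le\liminf_k F_q(u_k)=m_q$, so $u_q$ attains the minimum with $\|u_q\|_{W^{1,2}_{\delta^*}}$ bounded uniformly in $q$.

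For the Euler--Lagrange equation, observe that $F_q$ depends on $u+1$ only through the even expressions $(u+1)^2$ and $|u+1|^q$, and that $|\nabla|u+1||=|\nabla(u+1)|$ a.e., so the pointwise replacement $u_q\mapsto|u_q+1|-1$ produces another minimizer, now satisfying $u_q+1\ge 0$ everywhere (in fact $\int|\nabla(|u_q+1|-1)|^2\le\int|\nabla u_q|^2$, with any loss impossible at a minimizer). Because $q>2$, the scalar map $s\mapsto|s|^q$ is $C^1$ with derivative vanishing at $s=0$, so $F_q$ is Gâteaux differentiable on $W^{1,2}_{\delta^*}$; testing $DF_q(u_q)=0$ against arbitrary $\phi\in C^\infty_c(M)$ recovers the weak form of \eqref{eq:almost-scalar-curvature}. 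The original constraint $u_q+1\ge 0$ is inactive in the Euler--Lagrange sense precisely because both the quadratic and the superlinear contributions to $DF_q(u_q)[\phi]$ vanish where $u_q+1=0$.

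For regularity and strict positivity, rewrite \eqref{eq:almost-scalar-curvature} as
\[
-a\Delta(u_q+1)+\bigl[R-R'(u_q+1)^{q-2}\bigr](u_q+1)=0.
\]
Under the simplifying hypotheses in force (namely that $R'$ is bounded, and that both $R$ and $R'$ vanish outside a compact set), together with the Sobolev embedding $W^{1,2}_{\delta^*}\hookrightarrow L^{N}$, the bracketed zeroth-order coefficient lies in $L^p_{\alpha-2}$. Standard elliptic bootstrap via Proposition \ref{prop:AELinearExistence}, started from the $W^{1,2}_{\delta^*}$ control already obtained, then upgrades $u_q$ to $W^{2,p}_\sigma$ for any $\sigma\in(2-n,0)$. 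Applying the strong maximum principle \ref{prop:StrongMaxPrinciple} to the nonnegative solution $u_q+1$ of the displayed linear equation forces either $u_q+1\equiv 0$ or $u_q+1>0$ everywhere; the former is ruled out because a bump function supported where $R'<0$, suitably scaled, provides an admissible $u$ with $F_q(u)<0=F_q(-1)$, so $m_q<0$ and hence $u_q\not\equiv -1$.

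The main technical obstacle is extracting the $q$-uniform $W^{1,2}_{\delta^*}$ control from Proposition \ref{prop:Coercivity} (whose statement only bounds $\|u\|_{2,\delta}$); this forces us to look inside the proof and harvest the gradient terms appearing in the two-case lower bounds. A secondary subtlety is verifying Gâteaux differentiability of the $|u+1|^q$ term across $\{u_q+1=0\}$, which is exactly what $q>2$ buys us. Once these points are settled, the rest is routine direct-method and elliptic-bootstrap machinery.
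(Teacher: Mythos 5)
Your overall structure — direct method, coercivity, uniform bounds, Euler--Lagrange, bootstrap, strict positivity — matches the paper's proof. However, one step has a genuine gap, and two others are detours or confusions.

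\textbf{The gap.} Your argument that $u_q\not\equiv -1$ fails. You claim a bump function $u$ supported where $R'<0$ gives $F_q(u)<0=F_q(-1)$. But since $R'\le 0$ everywhere, the third term $-\frac{2}{q}\int R'|u+1|^q$ is \emph{nonnegative}, and making $|u+1|$ nontrivial where $R'<0$ only makes that term strictly positive. The gradient term is nonnegative too, so $F_q(u)<0$ can hold only if $\int R(u+1)^2$ is sufficiently negative — which need not be the case. Concretely, take flat $\R^n$ with $R\equiv 0$ (Yamabe positive) and $R'=-\chi_{B_1}$; then $F_q(u)\ge 0$ for every admissible $u$, so no bump produces a negative value, yet the lemma still must deliver a nontrivial minimizer. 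The paper's argument is short and specific to the AE setting: $u_q\in W^{1,2}_{\delta^*}$ with $\delta^*<0$ decays at infinity, so $u_q\neq -1$. (The compact case, where constants do not decay, is handled separately by Lemma~\ref{lem:UniformLowerBoundForCompact}, which uses $\int R<0$ on a Yamabe negative compact manifold — also not your bump argument.)

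\textbf{Two lesser issues.} First, you say one must inspect the interior of the coercivity proof to harvest gradient control; this is unnecessary. Since $R'\le 0$, dropping the last term of $F_q$ gives $\int a|\nabla u_k|^2+R(1+u_k)^2\le F_q(u_k)\le B$ directly, and then Lemma~\ref{lem:Qplusdelta} together with the $L^2_\delta$ bound from Proposition~\ref{prop:Coercivity} yields the $q$-uniform $W^{1,2}_{\delta^*}$ bound. Second, your Euler--Lagrange discussion is muddled: you have already constrained the minimization to $u\ge -1$, so the replacement $u_q\mapsto|u_q+1|-1$ is the identity and produces nothing new, and ``the constraint is inactive because contributions vanish where $u_q+1=0$'' is not the correct reason one can test against arbitrary $\phi$. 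The reason is that the symmetry $F_q(u)=F_q(|u+1|-1)$ shows the \emph{unconstrained} infimum equals the constrained one, so $u_q$ minimizes $F_q$ over all of $W^{1,2}_{\delta^*}$ and hence $DF_q(u_q)=0$; the paper expresses this by replacing the minimizing sequence $u_k\mapsto\max(u_k,-2-u_k)$ from the outset. (Also, $s\mapsto|s|^q$ is already $C^1$ for $q>1$; $q>2$ is not what buys the differentiability.) These two are fixable presentation issues, but the nontriviality step needs to be replaced by the decay argument.
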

\begin{proof}
Let $B= \int R + \int |R'|$, let $\delta>\delta^*$, and let $q_0\in (2,N)$. Observe that
\begin{equation}
F_q(0) \le B
\end{equation}
for all $q\in (q_0,N)$. Let $K$ be the constant associated with $B$, $\delta$ and $q_0$
obtained from Proposition \ref{prop:Coercivity}.
Fix $q\in(q_0,N)$ and let $u_k$ be a minimizing sequence in
$W^{1,2}_{\delta^*}$ for $F_q$.
Without loss of generality, we can assume each $u_k\ge -1$ since
$F_q(u_k)=F_q(\max(u_k,-2-u_k))$. We
can assume that each $F_q(u_k)\le F_q(0) \le B$ and hence
Proposition \ref{prop:Coercivity} implies that each
$\|u_k\|_{2,\delta}\le K$. Since
\begin{equation}
\int a|\nabla u_k|^2+R(1+u_k)^2 \le F_q(u_k) < B
\end{equation}
as well, Lemma \ref{lem:Qplusdelta} implies that there is a $C>0$ such that each
$\|u_k\|_{W^{1,2}_{\delta^*}}\le C$.
Note that $C$ depends on $K$ and $B$, which are
independent of $q\geq q_0$.
A subsequence (which we reduce to) converges weakly in $W^{1,2}_{\delta^*}$ and
strongly in $L^q_{\delta}$ to a limit $u_q\ge -1$. Lemma \ref{lem:ContinuityOfF}
shows that $F_q$ is weakly upper semicontinuous, so $u_q$ is a minimizer.
Moreover, $\|u_q\|_{W^{1,2}_{\delta^*}}\le C$ as well.

Since $u_q$ is a minimizer, we find that $(1+u_q)$ is a weak solution of
\begin{equation}\label{eq:minlinop}
\left[-a\Delta  + R -R'(1+u_q)^{q-2}\right] (1+u_q) = 0.
\end{equation}
Since $R'\in L^\infty_{\text{loc}}$ and since $u_q\in L^{N}_{\text{loc}}$,
an easy computation shows that $R'(1+u_q)^{q-2}\in L^{r}_{\text{loc}}$ for some $r>n/2$.
Since $R\in L^{p}_{\text{loc}}$ and $g\in W^{2,p}_{\text{loc}}$ with $p>n/2$,
we find that the coefficients of the differential operator in brackets
in equation \eqref{eq:minlinop} satisfy the hypotheses of the weak Harnack
inequality of \cite{trudinger-measurable}. Hence, since $1+u_q\ge 0$ and
since the manifold is connected,
either $1+u_q>0$ everywhere or $u_q\equiv-1$. But $u_q$ decays at infinity, and so
we conclude that $1+u_q$ is everywhere positive.

We now bootstrap the regularity of $u_q$, which we know initially belongs to $L^N_{\delta^*}$.
Fix $\sigma\in(2-n,0)$. Suppose it is known
that for some $r\ge N$ that $u_q\in L^r_\loc$. From equation \eqref{eq:minlinop},
$u_q$ solves
\begin{equation}\label{eq:bootstrap-grr}
-a\Delta u_q = R'(1+u_q)^{q-1}-R(1+u_q).
\end{equation}
Recall that $R'\in L^\infty_\loc$ and $R\in L^p_\loc$ and both
have compact support.  Then $R'(1+u_q)^{q-1}$ belongs to $L^{t_1}_\sigma$
with
\begin{equation}
\frac{1}{t_1} = \frac{q-1}{r} \le \frac{1}{r}+\frac{q-2}{N} < \frac{N-1}N,
\end{equation}
and $R(1+u_q)$ belongs to  $L^{t_2}_\sigma$ with
\begin{equation}
\frac{1}{t_2} = \frac{1}{r} + \frac{1}{p}.
\end{equation}
Let $t=\min(t_1,t_2)$ and note that $t<p$ since $t_2<p$. From
\cite{Bartnik86} Proposition 1.6 we see that $u_q$ is a strong
solution of \eqref{eq:bootstrap-grr} and from \cite{Bartnik86} Proposition 2.2,
which implies $\Delta:W^{2,t}_{\sigma}\rightarrow L^t_\sigma$
is an isomorphism for $1<t\le p$, we conclude that
$u_q \in W^{2,t}_{\sigma}$. From Sobolev embedding we obtain
$u_q\in L^{r'}_\sigma$ where
\begin{equation}
\frac 1 {r'} = \frac 1 t-\frac{2}{n},
\end{equation}
so long as $1/t> n/2$, at which point the bootstrap changes as discussed below.
Now
\begin{equation}\label{eq:boothalf1}\begin{aligned}
\frac{1}{t_1}-\frac{2}{n} &\le  \frac{1}{r} + \frac{q-2}{N} - \frac{2}{n} \\
&= \frac{1}{r} +\frac{q}{N} - \left[ \frac{2}{N}+\frac{2}{n}\right]\\
&= \frac{1}{r} +\left[\frac{q}{N}-1\right].
\end{aligned}\end{equation}
Also,
\begin{equation}
\frac{1}{t_2} -\frac{2}{n} = \frac{1}{r} + \left[\frac{1}{p} -\frac{2}n\right].\label{eq:boothalf2}
\end{equation}
Let $\epsilon = \min(1-q/N, 2/n-1/p )$ and note that $\epsilon$
is positive and independent of $r$. Inequalities \eqref{eq:boothalf1}
and \eqref{eq:boothalf2} imply
\begin{equation}
\frac{1}{r'} \le \frac{1}{r} - \epsilon
\end{equation}
Hence, after a finite number of iterations (depending on the size of $\epsilon$, and
hence on how close $q$ is to $N$) we can reduce $1/r$ by multiples of $\epsilon$
until $1/r\le \epsilon$.  At this point the bootstrap changes, and in at most
two more iterations we can conclude that $u_q\in L^\infty_\sigma$ and also
$u_q\in W^{2,p}_\sigma$.

\end{proof}

The uniform $W^{1,2}_{\delta^*}$ bounds of Lemma \ref{lem:subcriticalExistence}
are enough to obtain the existence of a solution $u$ in $W^{2,N/(N-1)}_\sigma$
of equation \eqref{eq:almost-scalar-curvature}
with $q=N$. At the end of Section IV.6 of \cite{Rauzy95} it is
claimed that on a compact manifold in the smooth setting that
elliptic regularity now implies
$u$ is smooth.  But in fact this is not quite enough regularity to start a bootstrap:
$W^{2,N/(N-1)}_\sigma$ embeds continuously in $L^N_\sigma$, which implies no more
regularity than was known initially.  To start a bootstrap and ensure
the continuity of $u$ we need
the following improved estimate, which follows a modification of the strategy
of \cite{LP87} Proposition 4.4.

\begin{lem}\label{UniformSubcriticalBound}
For each compact set $K$, the minimizers $u_q$ are uniformly bounded in $L^M(K)$
for some $M>N$.
\end{lem}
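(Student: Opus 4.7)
The plan is to carry out a single Moser-type reverse-H\"older step on the equation satisfied by $v_q := 1+u_q > 0$. From \eqref{eq:almost-scalar-curvature}, $v_q$ satisfies $-a\Delta v_q + Rv_q = R' v_q^{q-1}$, and the essential observation is that because $R'\le 0$ and $v_q>0$, the right-hand side is $\le 0$. Hence if we multiply by a nonnegative cutoff test function $\eta^2 v_q^s$ with $s>0$ and integrate by parts, the $R'$ term may be discarded and we obtain
\begin{equation*}
as\int \eta^2 v_q^{s-1}|\nabla v_q|^2 + 2a\int \eta v_q^s\,\nabla v_q\cdot \nabla\eta + \int R\eta^2 v_q^{s+1} \le 0.
\end{equation*}
Absorbing the cross term by Cauchy-Schwarz into the first term and setting $w := \eta v_q^{(s+1)/2}$ produces
\begin{equation*}
\int |\nabla w|^2 \le C(s)\int\left(|\nabla\eta|^2 + |R|\eta^2\right) v_q^{s+1}.
\end{equation*}

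Next, fix any compact $K\subseteq M$, take a slightly larger compact neighborhood $K'$, and choose $\eta\in C_c^\infty(K')$ with $\eta\equiv 1$ on $K$. A local Sobolev inequality on $K'$ gives $\|w\|_N^2 \le C(\|\nabla w\|_2^2 + \|w\|_2^2)$, so
\begin{equation*}
\left(\int_K v_q^{N(s+1)/2}\right)^{2/N} \le C(s,\eta)\int_{K'}\bigl(1+|R|\bigr) v_q^{s+1}.
\end{equation*}
To bound the right-hand side uniformly in $q$, use the fact that from Lemma \ref{lem:subcriticalExistence} the sequence $v_q = 1+u_q$ is bounded in $W^{1,2}_{\delta^*}$, and hence in $L^N$ (locally), uniformly in $q\in[q_0,N)$. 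Since $R\in L^p_\loc$ with $p>n/2$, H\"older's inequality gives
\begin{equation*}
\int_{K'}|R| v_q^{s+1} \le \|R\|_{L^p(K')}\,\|v_q\|_{L^{(s+1)p'}(K')}^{s+1},
\end{equation*}
where $p'=p/(p-1)$, so the right-hand side is controlled once $(s+1)p' \le N$, i.e., once
\begin{equation*}
s+1 \le N\,\frac{p-1}{p}.
\end{equation*}

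Choosing $s+1 = N(p-1)/p$ yields the uniform bound $\int_K v_q^M \le C$ with
\begin{equation*}
M := \frac{N(s+1)}{2} = \frac{N^2(p-1)}{2p},
\end{equation*}
and since $u_q = v_q - 1$ we obtain the same conclusion for $u_q$. The only point that needs to be checked is the strict inequality $M>N$; this reduces to $N\left(1-\tfrac1p\right)>2$, equivalently $p > N/(N-2)$. Using $N=2n/(n-2)$ one computes $N/(N-2)=n/2$, so the condition is exactly $p>n/2$, which is part of the standing hypothesis. The main (mild) obstacle is simply that the $L^N$ bound on $v_q$ is borderline for controlling the $|R|v_q^{s+1}$ term; exploiting the gap $p>n/2$ through H\"older is what gives the crucial strict improvement $M>N$, and it is precisely this improvement that will allow the subsequent bootstrap in the limit $q\to N$.
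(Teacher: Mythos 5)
Your argument is correct, and it shares the same skeleton as the paper's proof: a single Moser-type reverse-H\"older step applied to $v_q = 1+u_q$, using a test function of the form $\text{(cutoff)}^2\, v_q^{s}$, discarding the nonpositive $R' v_q^{q-1}$ term, and invoking a local Sobolev inequality. The genuine difference is in how the low-order term $\int |R|\,\eta^2 v_q^{s+1}$ is controlled. The paper invokes Lemma \ref{lem:RIntegralBound} to bound $\left|\int R(\chi w)^2\right|$ by $\epsilon\|\nabla(\chi w)\|_2^2 + C_\epsilon\|\chi w\|_2^2$ and absorbs the gradient piece, so the only constraint on the exponent is that $w = v_q^{1+\sigma}$ be controlled in $L^2_{\loc}$ by the uniform $L^N$ bound; a small $\sigma$ then yields $M=N(1+\sigma)>N$. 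You instead apply H\"older directly against $R\in L^p_{\loc}$, which requires $v_q\in L^{(s+1)p'}$ and hence pins $s+1\le N(p-1)/p$, giving $M = N^2(p-1)/(2p)$; the check that $M>N$ reduces precisely to $p>n/2$. Your route is more elementary, avoids the interpolation machinery of Lemma \ref{lem:RIntegralBound}, and makes transparent why the standing hypothesis $p>n/2$ is exactly the threshold for the strict improvement $M>N$. Since Corollary \ref{cor:UniformBound} only needs \emph{some} $M>N$, either choice serves as the starting point for the bootstrap.
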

\begin{proof}
Let $\chi$ be a smooth positive function with compact support that equals 1 in a neighborhood
of $K$.  Let $v = \chi^2 (1+u_q)^{1+2\sigma}$ where $u_q$ is a subcritical minimizer
and where $\sigma$ is a small constant to be chosen later.
Note that since $u_q \in L^{\infty}_{\text{loc}}\cap W^{1,2}_{\text{loc}}$,
$v\in W^{1,2}_{\delta^*}$.  Setting  $w =(1+u_q)^{1+\sigma}$, a short
computation shows that
\begin{equation}\label{eq:subcritbound1}
\int\chi^2 |\nabla w|^2 = -2\frac{1+\sigma}{1+2\sigma} \int \left<\chi\nabla w,w\nabla \chi\right>
+ \frac{(1+\sigma)^2}{1+2\sigma} \int \left<\nabla u_q, \nabla v\right>.
\end{equation}
Applying Young's inequality to the first term on the right-hand side of equation
\eqref{eq:subcritbound1} and merging a resulting piece into the left-hand side we conclude
there is a constant $C_1$ such that
\begin{equation}\label{eq:subcritbound2}
\| \chi \nabla w\|^2_2 \le C_1 \|w\nabla\chi\|_2^2 + 2\frac{(1+\sigma)^2}{1+2\sigma} \int
\left<\nabla u_q, \nabla v\right>.
\end{equation}
Since $u_q$ is a subcritical minimizer,
\begin{equation}\label{eq:subcritbound3}
\begin{aligned}
a\int \left<\nabla u_q, \nabla v\right> &= \int R' (1+u_q)^{q-2} \chi^2 w^2 -
\int R \chi^2 w^2 \\
&\le \left|\int R \chi^2 w^2\right| \\
&\le \epsilon\|\nabla(\chi w)\|_{2}^2  + C_\epsilon \|\chi w\|_{2}^2.
\end{aligned}
\end{equation}
We applied Lemma \ref{lem:RIntegralBound} in the last line and used
the fact that for functions with support contained in a fixed compact set,
weighted and unweighted norms are equivalent.  Note also that the fact that $R'\leq 0$
everywhere is used in going from line 1 to line 2 in \eqref{eq:subcritbound3}.
Noting that there is a constant $C_2$ such that
\begin{equation}\label{eq:subcritbound4}
\| \nabla(\chi w)\|^2_2 \le C_2( \|\chi \nabla w\|_2^2 + \|w\nabla\chi\|_2^2 ),
\end{equation}
we can combine inequalities \eqref{eq:subcritbound2}, \eqref{eq:subcritbound3}, and
\eqref{eq:subcritbound4} to conclude that, if $\epsilon$ is sufficiently small
to absorb the term from inequality \eqref{eq:subcritbound3} into the left-hand side,
then there is a constant $C_3$ such that
\begin{equation}
\| \nabla(\chi w)\|^2_2 \le C_3 \left[ \|w\nabla\chi\|_2^2 + \|w \chi\|_2^2 \right].
\end{equation}
Finally, from the Sobolev inequality \eqref{eq:sobolev-cpct},
there is a constant $C_4$ such that
\begin{equation}
\| \chi w \|_N^2 \le C_4 \left[ \|w\nabla\chi\|_2^2 + \|w \chi\|_2^2 \right]
\end{equation}
as well.
Now $u_q$ is bounded uniformly in $L^N$ on the support $K'$ of $\chi$,
and hence we can take $\sigma$ sufficiently small so that $w$ is bounded
independent of $q$ in $L^2(K')$ as well. Thus $(1+u_q)$ is bounded uniformly
in $L^M(K)$ for $M=N(1+\sigma)$.
\end{proof}

\begin{cor} \label{cor:UniformBound} Let $p$ be the exponent such that $g$ is
a $W^{2,p}_{\alpha}$ AE manifold and let $\sigma\in (2-n,0)$.
The subcritical minimizers $u_q$ are bounded in $W^{2,p}_{\sigma}$ as $q\rightarrow N$.
\end{cor}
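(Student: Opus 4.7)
The plan is to combine the uniform $L^M_{\text{loc}}$ bound ($M>N$) from Lemma \ref{UniformSubcriticalBound} with an interior elliptic bootstrap to obtain a uniform $L^\infty$ bound, and then apply the Laplacian isomorphism from Proposition \ref{prop:Isomorphism} to conclude. Under the simplifying assumptions made earlier in this section, both $R$ and $R'$ have compact support in a fixed set $K_0$, so rewriting \eqref{eq:almost-scalar-curvature} as $-a\Delta u_q = f_q$ with $f_q := R'(1+u_q)^{q-1}-R(1+u_q)$, the source $f_q$ is supported in $K_0$ as well.

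For the bootstrap, I would work on a nested family of compact neighborhoods of $K_0$, starting from $u_q$ uniformly bounded in $L^{M_0}$ with $M_0>N$. If $u_q$ is uniformly bounded in $L^{M_k}_{\text{loc}}$, H\"older's inequality gives $f_q$ uniformly in $L^{t_k}_{\text{loc}}$ with $1/t_k = \max((q-1)/M_k,\,1/p + 1/M_k)$, and interior $L^{t_k}$ elliptic estimates plus Sobolev embedding give $u_q$ uniformly in $L^{M_{k+1}}_{\text{loc}}$ with $1/M_{k+1}=1/t_k - 2/n$, provided $1/t_k>2/n$. The per-step gain is
\begin{equation*}
\frac{1}{M_k}-\frac{1}{M_{k+1}} \;=\; \min\left(\frac{2}{n}-\frac{q-2}{M_k},\;\frac{2}{n}-\frac{1}{p}\right).
\end{equation*}
Since $M_k\ge M_0>N$, we have $(q-2)/M_k < (N-2)/N = 2/n$, so the gain is bounded below by the $q$-independent positive constant $\min(2/n - (N-2)/M_0,\,2/n - 1/p)$. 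Hence after finitely many iterations, bounded uniformly in $q\in[q_0,N)$, the bootstrap terminates with $u_q$ uniformly bounded in $L^\infty$ on a compact neighborhood of $K_0$. Because $u_q$ is harmonic on the complement of $K_0$ and decays to $0$ at infinity, the maximum principle extends this to a uniform global $L^\infty$ bound.

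Once $u_q$ is uniformly bounded in $L^\infty(M)$, $f_q$ is uniformly bounded in $L^p(M)$ and compactly supported, hence uniformly bounded in $L^p_{\sigma-2}$ for any $\sigma\in(2-n,0)$. Because $-a\Delta:W^{2,p}_\sigma\to L^p_{\sigma-2}$ is an isomorphism by Proposition \ref{prop:Isomorphism} (the potential $V\equiv 0$ satisfies $V\ge 0$), the a priori estimate \eqref{eq:PEstimate2} yields $\|u_q\|_{W^{2,p}_\sigma}\le C\|f_q\|_{L^p_{\sigma-2}}$ with $C$ independent of $q$, which is the desired uniform bound.

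The main obstacle, foreshadowed in the remark between Lemmas \ref{lem:subcriticalExistence} and \ref{UniformSubcriticalBound}, is precisely that the naive iteration of Lemma \ref{lem:subcriticalExistence} has per-step gain $\min(1-q/N,\,2/n-1/p)$, which degenerates as $q\to N$. The strict inequality $M_0>N$ provided by Lemma \ref{UniformSubcriticalBound} replaces the problematic term $1-q/N$ by the $q$-independent quantity $2/n - (N-2)/M_0>0$, so that the total number of bootstrap steps stays bounded as $q\to N$.
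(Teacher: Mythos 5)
Your proof is correct and follows essentially the same strategy as the paper's: both seed a bootstrap with the uniform $L^M(K)$ bound ($M>N$) of Lemma \ref{UniformSubcriticalBound} and exploit the strict inequality $M>N$ to make the per-step gain, and hence the number of iterations, independent of $q$. The only difference is organizational --- the paper runs the bootstrap globally in weighted spaces, bounding $(1+u_q)^{q-1}$ by $1+|1+u_q|^{N-1}$ and stopping once the integrability exponent exceeds $p$, whereas you bootstrap locally to $L^\infty$, globalize via harmonicity outside the support of $R,R'$ and the maximum principle, and finish with a single application of the weighted isomorphism; both routes are valid.
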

\begin{proof}
Consider a subcritical minimizer $u_q$, which is a weak solution of
\begin{equation}
-a\Lap u_q = -R(1+u_q) +R'(1+u_q)^{q-1}.
\end{equation}
Let $K$ be a compact set containing the support of $R$ and $R'$,
and let $M>N$ be an exponent such that we have uniform bounds on $u_q$
in $L^M(K)$.  We wish to bootstrap this to better regularity for $u_q$.

Since the bootstrap for the two terms is different, we concentrate
first on the interesting term,  $R'(1+u_q)^{q-1}$, and suppose for
the moment that the other term is absent. Let us write
\begin{equation}
\frac 1 M = \frac{1}{N} - \epsilon
\end{equation}
for some $\epsilon>0$.  Now
\begin{equation}
|R'(1+u_q)^{q-1}| \le |R'| (1 + |1+u_q|^{N-1}).
\end{equation}
Since $R'$ is bounded, the term $R'|1+u_q|^{N-1}$
belongs to $L^{s}(K)$ with
\begin{equation}
\begin{aligned}
\frac{1}{s} &= \frac{1}{M}(N-1)\\
&= \left(\frac{1}{N}-\epsilon\right)(N-1) \\
&= \frac{2}{n} + \frac{1}{N} - \epsilon (N-1).
\end{aligned}
\end{equation}
Since $R'$ is zero outside of $K$
we conclude $R'(1+u_q)^{q-1}\in L^{s}_{\sigma}$.
Note that the norm of $R'(1+u_q)^{q-1}$ in $L^{s}_{\sigma}$
depends on  the norm of $u_q$ in $L^M(K)$ but is otherwise independent
of $q$.  Since the functions $u_q$ are uniformly bounded in $L^M(K)$,
we obtain control of $R'(1+u_q)^{q-1}$ in $L^{s}_{\sigma}$ independent of $q$.

If $s\le p$ then $s\in (1,p]$ and
we cite \cite{Bartnik86} Proposition 2.2 to conclude
$u_q\in W^{2,s}_\sigma$
and therefore $u_q\in L^{M'}(K)$ with
\begin{equation}
\frac{1}{M'} = \frac{1}{s}-\frac{2}{n} = \frac{1}{N} - \epsilon (N-1).
\end{equation}
Similarly, after $k$ iterations of this process we would find $u_q$ belongs
to $W^{2,s}_\sigma$ with
\begin{equation}
\frac{1}{s} = \frac{2}{n} + \frac{1}{N} - \epsilon (N-1)^k
\end{equation}
unless $s > p$, at which point the bootstrap terminates at
$u_q\in W^{2,p}_{\sigma}$
with norm depending on $\|u_q\|_{L^M(K)}$ (which is independent of $q)$
and on the number of iterations needed to reach $s\le p$.
Note that since $N>2$, we will reach the condition
$s\ge p$ in a finite number of steps independent of $q$.

Now consider the bootstrap for the term $-R(1+u_q)$ alone.  Write
\begin{equation}
\frac 1 p = \frac 2 n - \epsilon'
\end{equation}
for some $\epsilon'>0$. The term $-R(1+u_q)$ then belongs to $L^{t}(K)$ with
\begin{equation}
\frac{1}{t} = \frac{1}{p}+\frac{1}{M} = \frac{2}{n}-\epsilon' +\frac{1}{M}.
\end{equation}
Note that $1<t<p$ and hence \cite{Bartnik86} Proposition 2.2 implies
$u_q\in W^{2,t}_{\sigma}$.  Note that the norm of $u_q$ in
$W^{2,t}_{\sigma}$ depends on the norm of $u_q$ in $L^M(K)$ but is otherwise
independent of $q$.  Consequently
$u_q$ is controlled in $L^{M'}(K)$ independent of $q$ where
\begin{equation}
\frac{1}{M'} = \frac{1}{t}-\frac{2}{n} = \frac{1}{M} - \epsilon'.
\end{equation}
After $k$ iterations we would find instead
\begin{equation}
\frac{1}{M'} = \frac{1}{M} - k\epsilon'
\end{equation}
and the bootstrap stops in finitely many steps independent of $q$
if $k\epsilon'>1/M$, at which point
we find that $u_q\in W^{2,p}_{\sigma}$, with norm
independent of $q$.
There is an exceptional case if $k\epsilon'=1/M$,
but it can be avoided by an initial perturbation of $M$.

The bootstrap in the full case follows from combining these arguments.
\end{proof}

\begin{proof}[Proof of Theorem \ref{thm:PrescribedScalarCurvature}]
\textbf{(2. implies 1.)}\quad The $u_q$ are uniformly bounded in $W^{2,p}_{\sigma}$
by Corollary \ref{cor:UniformBound} for any $\sigma\in(2-n,0)$.
Thus they converge to some $u$ strongly in
$W^{1,2}_{\delta^*}$ and uniformly on compact sets.
In particular, since the $u_q$ weakly solve \eqref{eq:almost-scalar-curvature},
$\phi:= u+1$ weakly solves
\begin{equation}
-a\Delta\phi + R\phi = R'\phi^{N-1}.
\end{equation}

Since each $u_q\geq -1$, $\phi\geq 0$, and since $\phi \to 1$
at infinity, $\phi\not\equiv 0$.  Hence the weak Harnack
inequality \cite{trudinger-measurable} implies $\phi>0$.

Since $\sigma\in(2-n,0)$ is arbitrary, $\phi-1\in W^{2,p}_{\alpha}$ in particular.
Note that the rapid decay $\sigma\approx 2-n$ uses the fact that $R=0$ near infinity.
The lesser decay rate $\alpha$ in the statement of the theorem
stems from the fact that we may have used a conformal factor in $W^{2,p}_\alpha$
to initially set $R=0$ near infinity or to lower the scalar curvature after
changing it to $R'$.

\textbf{(1. implies 2.)}\quad Let $Z=\{R'=0\}$. The case
where $Z$ has zero measure is trivial, for
then $y(Z) = \infty>0$.  Hence we assume
$Z$ has positive measure and suppose there exists a conformally related metric $g'$ with
scalar curvature $R'$. Let $\delta>\delta^*$ be fixed and let $u$ be a
minimizer of $Q_{g',\delta}$ over $A(Z)$ as provided
by Proposition \ref{prop:eigenfunctions}. Note that
\begin{equation}
\int {R'} u^2 dV_{g'} = 0
\end{equation}
since $R'=0$ on $Z$ and $u=0$ on $Z^c$.  Hence
\begin{equation}
\lambda_{g',\delta}(Z) = Q_{g',\delta}(u)
= a\frac{\int |\nabla u|^2_{g'}dV_{g'}}{\|u\|_{g',2,\delta}}.
\end{equation}
In particular, $\lambda_{g',\delta}(Z)\ge 0$, and $\lambda_{g',\delta}(Z)=0$
only if $u$ is constant. But $Z$ has positive measure, and therefore
$A(Z)$ does not contain any constants.  Hence $\lambda_{g',\delta}(Z)>0$,
and Proposition \ref{cor:MeasTFAE}  implies that $Z$ is Yamabe positive.
\end{proof}

This completes the proof of Theorem \ref{thm:PrescribedScalarCurvature}.
Turning to the compact case (Theorem \ref{thm:PrescribedCompact})
recall that we started the AE argument with the following simplifying hypotheses:
\begin{enumerate}
\item The prescribed scalar curvature $R'$ is bounded.
\item The prescribed scalar curvature $R'$ has compact support.
\item The initial scalar curvature $R$ has compact support.
\end{enumerate}
The last two of these are trivial if $M$ is compact, and
the first is justified by  Lemma \ref{lem:LowerScalarCurvatureCompact} below,
which shows that we can lower scalar curvature after first solving the problem
for a scalar curvature that is truncated below. In the compact case we require
an additional condition which
will be used in Lemma \ref{lem:UniformLowerBoundForCompact}.
\begin{enumerate}
\setcounter{enumi}{3}
\item We may assume that the initial scalar curvature $R$ is continuous and negative.
Indeed, from Proposition \ref{prop:eigenfunctions} there is a positive function
$\phi$ solving $-a\Delta \phi + R\phi = \lambda_\delta(M) \phi$ on $M$.
Note that $\lambda_\delta(M)<0$ since $g$ is Yamabe negative.  Using
$\phi$ as the conformal factor we obtain a scalar curvature $\lambda_\delta(M) \phi^{2-N}$.
The hypotheses of Theorem 4.2 are conformally invariant and hence unaffected by this change.
\end{enumerate}
\begin{lem} \label{lem:LowerScalarCurvatureCompact}
Suppose $(M, g)$ is a $W^{2,p}$ compact Yamabe negative manifold.
Suppose $R'\in L^{p}$. If $0\geq R \geq R'$, then there exists a positive $\phi$ with
 $\phi \in W^{2,p}$ such that $g' = \phi^{N-2} g$ has scalar curvature $R'$.
\end{lem}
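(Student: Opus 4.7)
The plan is to mimic the AE argument in Lemma \ref{lem:LowerScalarCurvature} via the method of sub- and supersolutions, adding a truncation step to deal with unbounded $R'\in L^p$, and a rescaling argument at the end to ensure the conformal factor is strictly positive. The hypothesis $R\geq R'$ makes $\phi_+\equiv 1$ a supersolution of $-a\Delta\phi+R\phi=R'\phi^{N-1}$, exactly as in the AE case. Because $g$ is Yamabe negative and compact, Proposition \ref{prop:eigenfunctions} applied with $V=M$ provides a strictly positive first eigenfunction $\psi_1\in W^{2,p}$ of the conformal Laplacian with eigenvalue $\lambda_1<0$; compactness ensures $\psi_1$ is bounded above and below by positive constants. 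When $R'\in L^\infty$, $\phi_-=\epsilon\psi_1$ is a positive subsolution for all sufficiently small $\epsilon>0$, since the pointwise subsolution inequality reduces to $|\lambda_1|\geq |R'|(\epsilon\psi_1)^{N-2}$.

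To handle general $R'\in L^p$ I truncate from below: set $R'_k:=\max(R',-k)$, so that $R\geq R'_k\geq R'_{k+1}\geq R'$, each $R'_k\in L^\infty$, and $R'_k\searrow R'$ pointwise. For each $k$ the compact analog of Theorem \ref{thm:SubSupersolutionTheorem} produces a $W^{2,p}$ solution $\phi_k$ of $-a\Delta\phi_k+R\phi_k=R'_k\phi_k^{N-1}$ with $0<\epsilon_k\psi_1\leq\phi_k\leq 1$. Moreover, since $\phi_k>0$ and $R'_k\geq R'_{k+1}$ one has $-a\Delta\phi_k+R\phi_k=R'_k\phi_k^{N-1}\geq R'_{k+1}\phi_k^{N-1}$, so $\phi_k$ is itself a supersolution for the level-$(k+1)$ problem. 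Iterating the sub-supersolution construction with $\phi_+$ replaced by $\phi_k$, I arrange that $(\phi_k)$ is pointwise decreasing.

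Since $\phi_k\leq 1$, the bound $|R'_k\phi_k^{N-1}|\leq |R'|\in L^p$ together with standard elliptic regularity gives a uniform $W^{2,p}$ bound on the $\phi_k$. Compact Sobolev embedding ($p>n/2$) extracts a $C^0$-convergent subsequence $\phi_k\to\phi$ with $0\leq\phi\leq 1$; dominated convergence yields $R'_k\phi_k^{N-1}\to R'\phi^{N-1}$ in $L^p$, and so $\phi$ weakly solves $-a\Delta\phi+R\phi=R'\phi^{N-1}$.

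The main obstacle, absent from the AE proof where $\phi\to 1$ at infinity automatically forces non-triviality, is ruling out $\phi\equiv 0$. Writing the limiting equation as $(-a\Delta+R-R'\phi^{N-2})\phi=0$ with coefficients in $L^p$, the weak Harnack inequality of \cite{trudinger-measurable} gives $\phi>0$ or $\phi\equiv 0$. Suppose the latter, and set $M_k:=\max_M\phi_k\to 0$ and $\tilde\phi_k:=\phi_k/M_k$, so that $\max\tilde\phi_k=1$ and
\begin{equation*}
-a\Delta\tilde\phi_k + R\tilde\phi_k = M_k^{N-2}R'_k\tilde\phi_k^{N-1},
\end{equation*}
whose right-hand side is dominated by $M_k^{N-2}|R'|$ and hence tends to zero in $L^p$. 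A uniform $W^{2,p}$ bound on $\tilde\phi_k$ then yields a subsequential $C^0$-limit $\tilde\phi\geq 0$ with $\max\tilde\phi=1$ solving $(-a\Delta+R)\tilde\phi=0$; by weak Harnack, $\tilde\phi>0$. But a strictly positive eigenfunction of $-a\Delta+R$ must be the first one, forcing $\lambda_1=0$ and contradicting the Yamabe negativity of $g$. Hence $\phi>0$, and $g'=\phi^{N-2}g$ has scalar curvature $R'$.
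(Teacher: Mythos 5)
Your proof is correct, but it takes a genuinely different route from the paper's. The paper constructs an explicit positive subsolution $\phi_- = \eta\,\phi_\epsilon$, where $\phi_\epsilon\in W^{2,p}$ is the unique solution of the \emph{linear} problem $(-a\Delta - R)\phi_\epsilon = -R + \epsilon R'$; for small $\epsilon$ one has $\phi_\epsilon > 1/2$ by continuity of the solution map and the embedding $W^{2,p}\hookrightarrow C^0$, and a short computation shows $\eta\phi_\epsilon$ is a subsolution provided $\eta^{2-N}\ge\phi_\epsilon^{N-1}/\epsilon$. The crucial point is that the linear solution operator smooths out the unbounded $R'\in L^p$, so a single pass through the sub/supersolution theorem with $\phi_-\le\phi_+=1$ gives a solution $\phi\ge\phi_->0$ directly, with positivity automatic. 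Your approach instead uses the first eigenfunction $\psi_1>0$ of the conformal Laplacian to build a subsolution $\epsilon\psi_1$, which only works pointwise against a bounded $R'$; you therefore truncate $R'$ from below, solve the truncated problems, and pass to the limit, which forces you to add a blowup/normalization argument to exclude $\phi\equiv 0$ (the step that the AE Lemma \ref{lem:LowerScalarCurvature} gets for free from $\phi\to 1$ at infinity). Both arguments ultimately trade on Yamabe negativity, but in different ways: the paper uses it only to conclude $R\not\equiv 0$ (hence $-a\Delta - R$ is an isomorphism), whereas you use the full strength $\lambda_1<0$ twice — to make $\epsilon\psi_1$ a subsolution and to derive the contradiction in the blowup limit. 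Your argument is a perfectly valid template (truncation, uniform elliptic estimates, compactness, Harnack, eigenvalue contradiction), but it is longer and requires the extra bootstrapping to put $\psi_1$ in $W^{2,p}\cap C^0$; the paper's perturbative subsolution is more economical and sidesteps both the truncation and the nontriviality issue entirely.
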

\begin{proof}
We wish to solve
\begin{equation}\label{eq:sclower}
-a \Delta \phi + R\phi = R'\phi^{N-1}.
\end{equation}
Note that $\phi_+=1$ is a supersolution of equation \eqref{eq:sclower}.
To find a subsolution first observe that $R\not \equiv 0$ since the manifold is Yamabe negative.
So, since $-R\ge 0$ and $-R\not\equiv 0$,
for each $\epsilon>0$ there exists a unique $\phi_\epsilon\in W^{2,p}$ solving
\begin{equation}\label{eq:sceps}
-a\Delta \phi_\epsilon - R\phi_\epsilon  = -R + \epsilon R'.
\end{equation}
If $\epsilon=0$ the solution is $1$, and since $W^{2,p}$ embeds continuously
in $C^0$ we can fix $\epsilon>0$ such that $\phi_{\epsilon}>1/2$
everywhere. We claim that $\phi_-:=\eta \phi_\epsilon$ is a subsolution
if $\eta>0$ is sufficiently small.  Indeed,
\begin{equation}\begin{aligned}
-a\Delta \phi_- + R\phi_- &= \eta\left[ R(2\phi_\epsilon-1) \right] + \eta\epsilon R'\\
& \le \eta\epsilon R'.
\end{aligned}\end{equation}
So $\phi_-$ is a subsolution so long as
\begin{equation}\label{eq:subsol2}
\eta\epsilon R' \le R'\phi_-^{N-1}.
\end{equation}
A quick computation shows that inequality \eqref{eq:subsol2} holds
if $\eta$ is small enough so that $\eta^{2-N}\ge \phi_\epsilon^{N-1}/\epsilon$
everywhere. We can also take $\eta$ small enough so that $\phi_-\le \phi_+=1$, and hence
there exists a solution $\phi\in W^{2,p}$ with $\phi\ge\phi_->0$
of equation \eqref{eq:sclower}  (\cite{Maxwell05b} Proposition 2).
\end{proof}

The remainder of the proof of Theorem \ref{thm:PrescribedCompact}
closely follows the proof of Theorem \ref{thm:PrescribedScalarCurvature}
by treating a
compact manifold as an asymptotically Euclidean manifold with zero ends.
In particular, the cited results of Section \ref{sec:FirstEigenvalue}
 apply equally in both cases,
and differences arise only when the following facts are cited.
\begin{itemize}
\item A constant function in $W^{1,2}_{\delta^*}$ is identically zero.
\item The Laplacian is an isomorphism from $W^{2,p}_\sigma$ to $L^p_\sigma$
for $\sigma\in (2-n,0)$.
\end{itemize}
Twice, we use the property that constants in $W^{1,2}_{\delta^*}$ vanish:
once in Lemma \ref{lem:subcriticalExistence} in showing $1+u_q\not\equiv 0$,
and once in the final proof of Theorem \ref{thm:PrescribedScalarCurvature}
showing that in the limit $1+u\not\equiv 0$ as well.  The following
lemma provides the alternative argument needed to ensure that these
functions do not vanish identically in the compact case.

\begin{lem}\label{lem:UniformLowerBoundForCompact}
Suppose $(M,g)$ is compact and that $R_g$
is continuous and negative. Fix $q_0\in (2,N)$. Then $\|1+u_q\|_2\geq C$ for some $C$
independent of $q\in (q_0,N)$. Moreover, the limit $1+u$ is not identically zero.
\end{lem}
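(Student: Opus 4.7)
The plan is to produce a uniform positive lower bound on $\|1+u_q\|_2$ by trapping $F_q(u_q)$ strictly below zero with a single fixed test function, and then observing that if $1+u_q\to 0$ in $L^2$ then all three terms comprising $F_q(u_q)$ are either nonnegative or $o(1)$, which contradicts that upper bound. The limit statement will then follow at once from strong $L^2$ convergence of the $u_q$.

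To trap $F_q(u_q)$, first I will use that on a compact manifold constants lie in $W^{1,2}_{\delta^*}=W^{1,2}$ and test with $v_\epsilon:=-1+\epsilon$ for $\epsilon\in(0,1]$. A direct computation gives
\[
F_q(v_\epsilon)\;=\;\epsilon^2\int R+\frac{2}{q}\,\epsilon^q\int|R'|\;\le\;\epsilon^2\int R+\epsilon^{q_0}\int|R'|,
\]
using $q\ge q_0>2$ and $\epsilon\le 1$. Since $R$ is continuous and strictly negative, $\int R<0$, so for small enough $\epsilon$ the right-hand side is some $-\delta<0$ independent of $q$. Minimality of $u_q$ then gives $F_q(u_q)\le -\delta$ for every $q\in[q_0,N)$.

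Now suppose for contradiction that $\|1+u_{q_k}\|_2\to 0$ along some sequence $q_k\in[q_0,N)$. I need to show that each of the three terms in
\[
F_{q_k}(u_{q_k})=\int a|\nabla u_{q_k}|^2+\int R(1+u_{q_k})^2+\frac{2}{q_k}\int|R'|(1+u_{q_k})^{q_k}
\]
is either nonnegative or $o(1)$. The first is obvious; the second is bounded in absolute value by $\|R\|_\infty\|1+u_{q_k}\|_2^2\to 0$ because $R\in L^\infty$ on the compact manifold. For the third I will invoke Lemma \ref{UniformSubcriticalBound} to obtain $M>N$ with $\|1+u_{q_k}\|_M$ uniformly bounded, and then interpolate between $L^2$ and $L^M$: since $q_k\in[q_0,N]\subset(2,M)$, the interpolation exponent $\theta_k\in(0,1)$ determined by $1/q_k=\theta_k/2+(1-\theta_k)/M$ stays in a compact subinterval of $(0,1)$, so $\|1+u_{q_k}\|_{q_k}\to 0$ and, since $R'\in L^\infty$ under the running simplifying hypotheses, the nonlinear term vanishes as well. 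This forces $\liminf F_{q_k}(u_{q_k})\ge 0$, contradicting the bound $F_{q_k}(u_{q_k})\le -\delta$.

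Finally, Corollary \ref{cor:UniformBound} bounds the $u_q$ uniformly in $W^{2,p}_\sigma$, and on the compact manifold this embeds compactly into $L^2$; extracting a subsequence gives $u_q\to u$ strongly in $L^2$, whence $\|1+u\|_2=\lim\|1+u_q\|_2\ge C>0$ and in particular $1+u\not\equiv 0$. The main technical wrinkle is keeping the interpolation exponent uniformly bounded away from the endpoints of $(0,1)$ as $q$ ranges in $[q_0,N)$; this is immediate from the explicit formula for $\theta_k$ and the hypotheses $q_0>2$ and $N<M$.
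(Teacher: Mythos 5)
Your proof is correct and follows essentially the same route as the paper: produce a test constant close to $-1$ whose $F_q$-value is uniformly negative, and then use the second term $\int R(1+u_q)^2$ to convert $F_q(u_q)\le-\delta$ into a uniform lower bound on $\|1+u_q\|_2$. The paper does this in one line by noting directly that the gradient term and the $R'$-term are both nonnegative, so $\int R(1+u_q)^2\le F_q(u_q)\le -\delta$, and then $\|1+u_q\|_2^2\ge \delta/\|R\|_\infty$; your contradiction argument is logically the same, but the interpolation step for the third term is superfluous — since $R'\le 0$ that term is already nonnegative, and you only need the second term to be $o(1)$, so you can drop Lemma \ref{UniformSubcriticalBound} and the interpolation entirely.
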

\begin{proof}
Note that for any constant $k$,
\begin{equation}
F_q(k) = (1+k)^2\int R - \frac{2}{q} (1+k)^{q}\int R'.
\end{equation}
Since $\int R<0$, for any $k\neq -1$ close enough to $-1$, $F_q(k)<0$.
Indeed, there are constants $k_0>-1$ and $c>0$
such that $F_q(k_0)<-c$ for all $q\in (q_0,N)$.  But then
\begin{equation}
\int R (1+u_q)^2 \le F_q(u_q) \le F_q(k_0) \le -c
\end{equation}
since $u_q$ minimizes $F_q$. Since $R$ is continuous,
and thus bounded below, $\|1+u_q\|_2\geq C$ for some $C$ independent of $q\in(q_0,N)$.
Since $u_q \to u$ in $L^2$, we also have $\|1+u\|_2 \geq C$, and so $1+u$ is not identically
zero.
\end{proof}

We use the fact that $\|\Delta u\|_{p,\sigma}$ controls $\|u\|_{W^{2,p}_{\sigma}}$
twice as well, once in the bootstrap of Lemma \ref{lem:subcriticalExistence}
and once in the bootstrap of Lemma \ref{cor:UniformBound}.  However,
on a compact manifold, $\|u\|_{W^{2,p}}$ is controlled by the sum
of $\|\Delta u\|_{p}$ and $\|u\|_2$, and the coercivity estimate
from Proposition \eqref{prop:Coercivity} ensures that $\|u_q\|_2$
is uniformly bounded as $q\ra N$. This provides the needed extra
control for the bootstraps and
completes the proof of Theorem \ref{thm:PrescribedCompact}.

\section{Yamabe Classification}\label{sec:YamabeClassification}

In this section we provide two characterizations of the Yamabe class of an
asymptotically Euclidean manifold, one in terms of the prescribed scalar curvature
problem and one in terms of the Yamabe type of the manifold's compactification.
Note that throughout this section AE manifolds have at least one end.

\begin{thm}\label{YamabeClassification}
Suppose $(M,g)$ is a $W^{2,p}_\alpha$ AE manifold with $p>n/2$ and $\alpha \in (2-n,0)$.
Let $\mathcal{R}_{\le 0}$ be the set of non-positive elements of $L^p_{\alpha-2}$.
\begin{enumerate}
\item $M$ is Yamabe positive if and only if
the set of non-positive scalar curvatures of metrics conformally equivalent
to $g$ is $\mathcal{R}_{\le 0}$.
\item $M$ is Yamabe null if and only if
the set of non-positive scalar curvatures of metrics conformally equivalent
to $g$ is $\mathcal{R}_{\le 0}\setminus\{0\}$.
\item $M$ is Yamabe negative if and only if
the set of non-positive scalar curvatures of metrics conformally equivalent
to $g$ is a strict subset of $\mathcal{R}_{\le 0}\setminus\{0\}$.
\end{enumerate}
\end{thm}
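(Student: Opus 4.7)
The plan is to use Theorem \ref{thm:PrescribedScalarCurvature} throughout: for $R' \in \mathcal{R}_{\le 0}$, attainability of $R'$ as the scalar curvature of a conformally related metric is equivalent to $\{R'=0\}$ being Yamabe positive. Since the three Yamabe classes partition the AE metrics, and the three proposed descriptions of the set $S(g)$ of attainable non-positive scalar curvatures are pairwise distinct, it suffices to prove the three forward implications; the converses then follow by trichotomy. In particular, applying Theorem \ref{thm:PrescribedScalarCurvature} to $R' \equiv 0$ immediately shows $0 \in S(g)$ iff $g$ is Yamabe positive, so in the null and negative cases one automatically has $S(g) \subseteq \mathcal{R}_{\le 0} \setminus \{0\}$.

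For (1), if $g$ is Yamabe positive then $y(M) > 0$, and the inclusion $A(V) \subseteq A(M)$ valid for any measurable $V \subseteq M$ gives $y(\{R'=0\}) \ge y(M) > 0$ for every $R' \in \mathcal{R}_{\le 0}$, so Theorem \ref{thm:PrescribedScalarCurvature} produces the conformal change and $S(g) = \mathcal{R}_{\le 0}$.

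For (2), if $g$ is Yamabe null then $y(M) = 0$ excludes $R' \equiv 0$, and for any other $R' \in \mathcal{R}_{\le 0} \setminus \{0\}$ the set $E := \{R' < 0\}$ is measurable of positive measure inside the connected open set $M$. Strict monotonicity (Lemma \ref{InnerApproxNull}) then yields
\[
\lambda_\delta(\{R' = 0\}) = \lambda_\delta(M \setminus E) > \lambda_\delta(M),
\]
while the sign equivalence of Corollary \ref{cor:SignEquivalence} identifies $\lambda_\delta(M)$ with $0$; another application of Corollary \ref{cor:SignEquivalence} then shows $\{R' = 0\}$ is Yamabe positive, so Theorem \ref{thm:PrescribedScalarCurvature} supplies the conformal factor.

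For (3), Yamabe negativity provides a nontrivial $u \in C^\infty_c(M)$ with $Q^y(u) < 0$; letting $K := \operatorname{supp}(u)$, the inclusion $u \in A(K)$ gives $y(K) \le Q^y(u) < 0$. Since $M$ is noncompact, choose $\chi \in C^\infty_c(M)$ with $\chi \ge 0$, $\chi \not\equiv 0$, and $\chi \equiv 0$ on $K$, and set $R' := -\chi$. Then $R' \in \mathcal{R}_{\le 0} \setminus \{0\}$ with $K \subseteq \{R' = 0\}$, so monotonicity yields $y(\{R' = 0\}) \le y(K) < 0$; thus $\{R' = 0\}$ is not Yamabe positive and Theorem \ref{thm:PrescribedScalarCurvature} forbids attaining $R'$, which together with $0 \notin S(g)$ establishes the strict inclusion. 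The main subtlety lies in step (2): Lemma \ref{InnerApproxNull} requires only that the ambient set be a connected open set with a measurable removed piece of positive measure, so the potential topological pathology of the sub-level set of an $L^p$ function is harmless once we take $\Omega = M$.
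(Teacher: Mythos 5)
Your proof is correct. Parts (1) and (2) mirror the paper almost exactly: the reduction to forward implications, the monotonicity argument $y(\{R'=0\}) \ge y(M) > 0$ for the positive case, and the use of Lemma \ref{InnerApproxNull} together with Corollary \ref{cor:SignEquivalence} for the null case are all the same arguments the paper gives.

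Your treatment of part (3), though, is a genuine shortcut. The paper proceeds by invoking Lemma \ref{InnerApproxNeg} (continuity from below) to exhibit a small ball $W$ with $M\setminus W$ still Yamabe negative, then supports $R'$ in $W$. You instead pull a compactly supported test function $u$ with $Q^y(u)<0$ directly from the definition of $y(M)<0$, set $K=\operatorname{supp}(u)$ so that $y(K)\le Q^y(u)<0$ by the inclusion $A(K)\subseteq A(\{R'=0\})$ for any $R'$ vanishing on $K$, and then choose $\chi$ supported in the (nonempty, since $M$ is noncompact) complement of $K$. This avoids the continuity machinery entirely and is more elementary. What the paper's route buys that yours does not is the additional qualitative information that the removable set $W$ can be taken to be an arbitrarily small ball (and, dually, a small neighborhood of infinity) --- a fact the authors use in the remarks following the theorem to describe which scalar curvatures fail to be attainable in the Yamabe-negative case. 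For the theorem as stated, however, your argument suffices.
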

\begin{proof}  It suffices to prove the forward implications.

\textbf{1)}\; Suppose $M$ is Yamabe positive, and hence so is every subset.
If $R'\in \mathcal{R}_{\le 0}$, then $\{R'=0\}$ is
Yamabe positive and Theorem \ref{thm:PrescribedScalarCurvature}
implies $[g]$ includes a metric with scalar curvature $R'$.

\textbf{2)}\; Suppose $M$ is Yamabe null. Since $M$ is open and connected,
Lemma \ref{InnerApproxNull} implies that if $E\subseteq M$ has
positive measure, then $M\setminus E$ is Yamabe positive.  Hence
for any $R'\in \mathcal{R}_{\le 0}$  with $R'<0$ on a set of positive measure,
$\{R'=0\}$ is Yamabe positive, and Theorem \ref{thm:PrescribedScalarCurvature}
implies we can conformally transform to a metric with scalar curvature $R'$.
But $R'\equiv 0$ is impossible, for otherwise Theorem \ref{thm:PrescribedScalarCurvature}
would imply $M$ is Yamabe positive.

\textbf{3)}\; Suppose $M$ is Yamabe negative.  Since $M$ is open, Lemma \ref{InnerApproxNeg}
shows that there is a nonempty open set $W\subseteq M$ such that $M\setminus W$ is
also Yamabe negative.  Suppose $R'\in L^{p}_{\alpha-2}$ is non-positive and supported
in $W$. Then $\{R'=0\}$ contains $M\setminus W$ and is hence Yamabe negative.
But then Theorem \ref{thm:PrescribedScalarCurvature} shows that
we cannot conformally transform to
a metric with scalar curvature $R'$. In particular, $R'\equiv 0$ is one of the
unattainable scalar curvatures.
\end{proof}

While Theorem \ref{YamabeClassification} completely describes
the set of allowable scalar curvatures in cases 1) and 2),
it does not in case 3).  Of course, we already have demonstrated a necessary and sufficient
criterion for being able to make the conformal change:
the zero set of $R'$ must be Yamabe positive. Nevertheless, it would be desirable
to describe this situation more explicitly, and there are a
few things that can be said. First, by Lemma \ref{lem:UniformOuterApproximation},
if $R' \in \mathcal{R}_{\le 0}$ and the weighted volume of $\{R'=0\}$ is sufficiently small,
then $\{R'=0\}$ is Yamabe positive, and
thus $g$ is conformally equivalent to a metric with scalar curvature $R'$.
In particular, if $R'<0$ everywhere, then it is attainable.  Conversely,
by Lemma \ref{InnerApproxNeg},
for any sequence $\{R'_k\} \subset \mathcal{R}_{\le 0}$ such that
$\{R'_k<0\} \subset B_{1/k}(x_0)$ for some fixed $x_0 \in M$, then for $k$ large enough,
$\{R'_k=0\}$ is Yamabe negative, and thus
$g$ is not conformally equivalent to a metric with scalar curvature $R'_k$.
That is, the strictly negative part of $R'$ cannot be constrained to
a small ball. Similarly, an argument analogous to the proof of Lemma \ref{InnerApproxNeg} shows that
the complement of a sufficiently ``small'' neighborhood of infinity
is Yamabe negative, and hence the strictly negative part of $R'$ cannot be
constrained to a small neighborhood of infinity.

Our second characterization of the Yamabe class of an AE manifold involves
its compactification.
An AE manifold can be compactified using a conformal factor that decays suitably
at infinity, and a compact manifold can be transformed into an AE manifold using
a conformal factor with a suitably singularity. We would like to show that
the sign of the Yamabe invariant is preserved under these operations,
and we begin by laying out the details of the compactification/decompactification
procedure.  In particular, there is a precise relationship between the decay of the
metric at infinity and its smoothness at the point of compactification.

\begin{lem}\label{lem:compactification}
Let $p>n/2$ and let $\alpha=\frac{n}{p}-2$, so $-2<\alpha<0$.
Suppose $(M,g)$ is a $W^{2,p}_\alpha$
AE manifold.
There is a smooth conformal factor $\phi$ that decays to infinity
at the rate $\rho^{2-n}$
such that $\bar g= \phi^{N-2} g$ extends to a $W^{2,p}$ metric on
the compactification $\Mbar$ .

Conversely, suppose
$(\Mbar,\gbar)$ is a compact $W^{2,p}$ manifold, with $p>n/2$ and $p\neq n$.
Given a finite set $\mathcal{P}$
of points in $\Mbar$ there is conformal factor $\phibar$
that is smooth on $M=\Mbar\setminus\mathcal{P}$, has a
singularity of order $|x|^{2-n}$ at each point of $\mathcal{P}$,
and such that $g=\phibar^{N-2}\gbar$ is a $W^{2,p}_{\alpha}$
AE manifold with $\alpha = \frac n p -2$.
\end{lem}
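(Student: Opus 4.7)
The plan is to exploit the classical inversion $y = x/|x|^2$ on each end together with the fact that the inversion is a conformal isometry of Euclidean space: specifically, $|x|^{-4}\,|dx|^2 = |dy|^2$ near $y = 0$.

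For the forward direction, given a $W^{2,p}_\alpha$ AE manifold $(M,g)$, I would first define the set-theoretic compactification $\Mbar$ by adjoining one point $p_i$ to each end $E_i$, and give $\Mbar$ a smooth structure near $p_i$ by declaring the inverted coordinates $y = x/|x|^2$ (for $|x|$ large) to be a smooth chart with $p_i$ corresponding to $y = 0$. I would then choose a smooth positive function $\phi$ on $M$ that coincides with $|x|^{2-n}$ on $E_i \cap \{|x| > R\}$ for some large $R$, and extends continuously by $0$ to each $p_i$. The fact that $\phi = |x|^{2-n}$ translates to $\phi = |y|^{n-2}$ in the new chart, so $\phi^{N-2} = \phi^{4/(n-2)} = |y|^4$ near $p_i$. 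Writing $g = g_{\text{Euc}} + h$ with $h \in W^{2,p}_\alpha$ in the $x$-coordinates of $E_i$, the identity $|x|^{-4} g_{\text{Euc}}(x) = g_{\text{Euc}}(y)$ shows that the Euclidean part of $\bar g = \phi^{N-2} g$ extends smoothly through $p_i$ as the flat metric in $y$. It remains to check that the tensor $|y|^4 h(y/|y|^2)$, after the tensorial change of variables from $x$ to $y$, lies in $W^{2,p}$ near $y = 0$.

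For the converse, I would choose a chart $(U_i, y_i)$ at each $p_i \in \mathcal{P}$ in which $\gbar_{ab}(p_i) = \delta_{ab}$ (always possible by a linear change of coordinates), define $\phibar$ to be smooth and positive on $M = \Mbar \setminus \mathcal{P}$ and equal to $|y_i|^{2-n}$ on a small punctured neighborhood of each $p_i$, and then introduce the inverted coordinate $x = y_i/|y_i|^2$ to parametrize the resulting end. The same conformal-inversion identity as above shows that $g_{\text{Euc}}$ pulls back to $g_{\text{Euc}}$ at leading order, and one reads off that $g - g_{\text{Euc}}$ corresponds, up to smooth bounded factors, to $\gbar - \delta$ composed with the inversion.

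The heart of both directions is an identical change-of-variables computation, and the \emph{main obstacle} is pinning down the precise correspondence between the weighted Sobolev norm $\|\cdot\|_{W^{2,p}_\alpha}$ on an end and the standard $W^{2,p}$ norm near $y = 0$ on the compactification, for the specific exponent $\alpha = n/p - 2$. I would verify this via the change-of-variables formula
\begin{equation}
\int_{|x| > R} |\nabla^j f|^p \rho^{-p(\alpha - j) - n}\, dx
= \int_{|y| < 1/R} |\widetilde{\nabla^j f}|^p |y|^{p(\alpha - j) - n + 2n}\, dy,
\end{equation}
combined with the extra factors $|y|^{-2j}$ arising from chain-rule conversion between $\p_{x^i}$ and $\p_{y^j}$; the exponents on $|y|$ then collapse exactly when $\alpha = n/p - 2$, matching the Sobolev embedding threshold $W^{2,p} \hookrightarrow C^{0,2-n/p}$ (valid since $p > n/2$ and $p \neq n$) that controls the pointwise decay of $\gbar - \delta$ at $p_i$. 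Once this identification is established, both directions follow by direct computation, with the hypothesis $p \neq n$ used to avoid the borderline case in the Sobolev inclusion.
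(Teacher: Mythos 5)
Your overall strategy --- Kelvin inversion $y=x/|x|^2$ on each end, conformal factor $|x|^{2-n}$, the identity $|x|^{-4}g_{Euc}=g_{Euc}$ in inverted coordinates, and chain-rule bookkeeping of the weights --- is the same as the paper's, and the forward direction goes through essentially as you describe: writing $g=e+k$ with $k\in W^{2,p}_\alpha$, the chain rule gives $\pbar^2\kbar = O(\p^2 k)O(|z|^4)+O(\p k)O(|z|^3)+O(k)O(|z|^{2})$, and each term lies in $L^p$ near the compactification point because the weighted norm $\|k\|_{W^{2,p}_{\alpha}}$ already controls $\rho^{-\alpha+j}\nabla^j k$ in $L^p$ for every $j\le 2$, with the exponents matching exactly at $\alpha=n/p-2$.

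The gap is in the converse. There the chain rule produces $\p^2 k = O(\pbar^2\kbar)O(|x|^4)+O(\pbar\kbar)O(|x|^3)+O(\kbar)O(|x|^{2})$, and after converting the measure you must show that $|\pbar\kbar|/|y|$ and $|\kbar|/|y|^2$ lie in $L^p$ near the puncture. The plain $W^{2,p}$ norm of $\kbar$ on the compact manifold does \emph{not} control these weighted cross terms; one needs Hardy-type inequalities of the form $\int_B |\kbar|^p/|y|^{2p}\,d\Vbar \le C\int_B |\pbar^2\kbar|^p\,d\Vbar$ and $\int_B |\pbar\kbar|^p/|y|^{p}\,d\Vbar \le C\int_B |\pbar^2\kbar|^p\,d\Vbar$, established first for smooth functions vanishing near the puncture and then extended to $\kbar$ by approximation and Fatou's lemma. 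This is precisely where $p\neq n$ enters (it is the non-exceptionality of the weight $\delta=0$ in Bartnik's Theorem 1.3), not the Sobolev embedding threshold you cite. It is also why the coordinates at each $p_i$ must be chosen so that $\kbar(p_i)=0$ and, when $p>n$, so that additionally $\p\kbar(p_i)=0$ (normal coordinates): without this vanishing to the appropriate order, $\kbar$ cannot be approximated in $W^{2,p}$ by functions vanishing near $p_i$ and the Hardy inequality does not apply. Your proposal's linear normalization $\gbar_{ab}(p_i)=\delta_{ab}$ handles only the zeroth-order vanishing, and without the Hardy step the cross terms in your change-of-variables formula remain uncontrolled, so the converse does not close.
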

\begin{proof} For simplicity we treat the case of only one end.

Let $(M,g)$ be a $W^{2,p}_{\alpha}$ AE manifold and let $z^i$ be the Euclidean
end coordinates on $M$, so
\begin{equation}\label{eq:gDecomposition}
g_{ij} = e_{ij} +k_{ij},
\end{equation} with $k\in W^{2,p}_{\alpha}$. Let $x^i$ be coordinates given
by the Kelvin transform $x^i = z^i/|z|^{2}$, so $z^i = x^i/|x|^{2}$ as well.

We define a conformal factor $\phi = |z|^{2-n}$ near infinity, and extend
it to be smooth on the rest of $M$.
Let $\gbar = \phi^{N-2} g$ and let $\Mbar$ be the one-point compactification of $M$,
with $P$ being the point at infinity.
We wish to show that $\gbar$ extends to a $W^{2,p}(\Mbar)$ metric.

Near $P$, $\phi^{N-2} = |z|^{-4}$ and
\begin{align}\label{eq:gbarDecomposition}
\gbar_{ij} &= e_{ij} + \kbar_{ij}
\end{align}
where
\begin{equation}
\kbar_{ij} :=  k_{ij} - \frac{4}{|x|^2} x^a k_{a(i} x_{j)}
 + \frac{4}{|x|^4} x^a x^b k_{ab} x_i x_j = O(k).
\end{equation}
and $x_a = e_{ab} x^b$. Since $\kbar_{ij}\ra 0$ at $P$,
we set $\gbar_{ij}(P)=e_{ij}$ to obtain a continuous metric, and we
need to show that $\kbar\in W^{2,p}(\Mbar)$.
Since $\kbar \in W^{2,p}_\loc(M)$, and since a point is a
removable set, we need only show that
the second derivatives of $\kbar$ belong to $L^p(B)$ for
some coordinate ball $B$ containing $P$.

Let $\pbar$ represent the derivatives in $x^i$ coordinates. Since
$\frac{\p z}{\p x} = O(|x|^{-2})$, we calculate
\begin{equation} \label{eq:ChainRule} \begin{aligned}
\pbar \kbar &= O(\p k)O(|z|^2) + O(k)O(|z|)\\
\pbar^2 \kbar &= O(\p^2k) O(|z|^4) + O(\p k) O(|z|^3) + O(k)O(|z|^{2}).
\end{aligned}\end{equation}
In order to show $\pbar^2 \kbar \in L^p(B)$, it is sufficient to show that each
of the three terms in equation \eqref{eq:ChainRule} is in $L^p(B)$.

Note that near infinity
\begin{equation}
d\Vbar  = \phi^N dV = |z|^{-2n} dV.
\end{equation} Hence the $L^p$ norm of the
$O(k)O(|z|^2)$ term of equation \eqref{eq:ChainRule} is
controlled by
\begin{equation}\begin{aligned}
\int \left(O(k)O(|z|^2)\right)^p |z|^{-2n}dV
&= \int O\left(|k|^p\right) O\left(|z|^{2p-2n}\right) dV\\
&\leq C\|k\|_{W^{2,p}_{\alpha}}^p,
\end{aligned}\end{equation}
where we have used the equality
\begin{equation}
2p-2n = -n- \alpha p
\end{equation}
and expression \eqref{eq:SobolevNorm} defining the weighted norm.
Hence the $O(k)O(|z|^2)$ term of equation \eqref{eq:ChainRule}
belongs to $L^p(B)$. The two remaining terms have the same asymptotics and
similar calculations show that they belong to $L^p(B)$ as well.

For the converse, consider a $W^{2,p}$ compact manifold $(\Mbar, \gbar)$
with $p>n/2$ and $p\neq n$. Let $P$ be a point to remove to obtain
$M=\Mbar\setminus\{P\}$.  Since $\gbar$ is continuous we can find smooth
coordinates $x^i$ near $P$ such that $\gbar = e+\kbar$ for some $\kbar \in W^{2,p}$
which vanishes at $P$.
Moreover, if $p>n$ then $\gbar$ has H\"older continuous derivatives and the proof
of Proposition 1.25 in \cite{Aubin98} shows we can additionally assume
these are normal coordinates (i.e., the first derivatives of $\kbar$ vanish at $P$).
Finally, since the result we seek only involves properties of $\kbar$ local to $P$, we can
assume that $\kbar=0$ except in a small coordinate ball $B$ near $P$.

We claim there is a constant $C$ such that
\begin{align}
\label{eq:FiniteKbar}
\int_{B} \frac{|\kbar|^p}{|x|^{2p}} &\le C\int_{B} |\pbar^2 \kbar|^p d\Vbar\qquad\text{and}\\
\label{eq:FiniteKbar2}
\int_{B}  \frac{|\pbar \kbar|^p}{|x|^p}d\Vbar &\leq C \int_{B} |\pbar^2 \kbar|^p d\Vbar.
\end{align}
Assuming for the moment that this claim is true,
let $z_i = x_i/|x|^2$. Let $\phibar = |x|^{2-n}$ near $P$ and extend
$\phibar$ as a positive smooth function on the remainder of $M$.
Let $g= \phibar^{N-2} \gbar$.
Near $P$, $\phibar^{N-2} = |x|^{-4}$ and so $g= e+k$ near infinity, where
\begin{equation}
k_{ij} := \kbar_{ij} - \frac{4}{|z|^2} z^a \kbar_{a(i} z_{j)}
+ \frac{4}{|z|^4} z^a z^b \kbar_{ab} z_i z_j = O(\kbar).
\end{equation}
Since $k\in W^{2,p}_{\loc}$, we need only establish the desired
asymptotics at infinity.

A computation similar to the one leading to equation \eqref{eq:ChainRule} shows
\begin{equation}\begin{aligned}
\p k &= O(\pbar \kbar)O(|x|^2) + O(\kbar)O(|x|)\\
\p^2 k &= O(\pbar^2 \kbar) O(|x|^4) + O(\pbar \kbar) O(|x|^3) + O(\kbar)O(|x|^{2}).
\end{aligned}\end{equation}
Also, $d\Vbar = |z|^{-2n} dV$ near $P$. Hence
\begin{align}
\int |\p^2 k|^p |z|^{4p-2n} dV &= \int |\p^2 k|^p |x|^{-4p}|x|^{2n} dV \\
&=\int \left(O(\pbar^2 \kbar)\right)^p +\left(O(\pbar\kbar)O(|x|^{-1})\right)^p
+ \left(O(\kbar)O(|x|^{-2})\right)^p d\Vbar.\label{eq:Decompactification1}
\end{align} From inequalities \eqref{eq:FiniteKbar} and \eqref{eq:FiniteKbar2},
quantity \eqref{eq:Decompactification1} is finite.  Noting
 \begin{equation}
4p-2n = -n -\alpha p + 2p
 \end{equation}
we conclude $|\p^2 k|\in L^{p}_{\alpha-2}$, as desired. A similar calculation shows that
$|\p k| \in L^{p}_{\alpha-1}$ and $|k| \in L^p_{\alpha}$.  This concludes the proof,
up to establishing inequalities \eqref{eq:FiniteKbar} and \eqref{eq:FiniteKbar2}.

Theorem 1.3 of \cite{Bartnik86}
implies that
\begin{align}\label{eq:HardysInequality}
\int_{B} \frac{|f|^p}{|x|^{2p}} d\Vbar \leq c \int_{B} \frac{|\pbar f|^p}{|x|^p}d\Vbar
        \leq C \int_{B} |\pbar^2 f|^p d\Vbar <\infty
\end{align} for smooth functions $f$ that are compactly supported in $B$
and vanish in a neighborhood of $P$.
This inequality relies on the fact that $p\neq n$, which corresponds to the
condition $\delta=0$ in \cite{Bartnik86} Theorem 1.3.

Let $f_n$ be a sequence of smooth functions vanishing near $P$ that
converges to $\kbar$ in $W^{2,p}$; such a sequence exists since $\kbar=0$
at $P$, since $\partial \kbar =0$ at $P$ if $p>n$, and since we have assumed
that $\kbar$ vanishes outside of $B$.
By reduction to a subsequence we may assume that the values and first derivatives
of sequence converge pointwise a.e., and using Fatou's Lemma we find
\begin{equation}
\begin{aligned}
\int_{B} \frac{|\kbar|^p}{|x|^{2p}} &\leq \liminf_{n\to \infty} \int_{B} \frac{|f_n|^p}{|x|^{2p}}\\
      &\leq C \lim_{n\to\infty} \int_{B} |\pbar^2 f_n|^p d\Vbar \\
      &= C\int_{B} |\pbar^2 \kbar|^p d\Vbar <\infty.
\end{aligned}
\end{equation}
This is inequality \eqref{eq:FiniteKbar}, and a similar argument shows that
inequality \eqref{eq:FiniteKbar2} holds as well.
\end{proof}

The threshold $\alpha=-2$ in Lemma \ref{lem:compactification}
arises because there is a connection between the rate
of decay of the AE metric and the rate of convergence of the metric at the
point of compactification in a chosen coordinate system: roughly speaking,
decay of order $\rho^{\alpha}$
corresponds to convergence at a rate of $r^{-\alpha}$.  For a generic smooth metric we can
use normal coordinates to obtain convergence at a rate of $r^2$, but we cannot
expect to do better generally.  Hence the decompactification of a smooth metric
typically does not decay faster than $\rho^{-2}$.  Looking at the
proof of Lemma \ref{lem:compactification}, we note that it can be
readily extended to $s>2$ to show that
a $W^{s,p}_\alpha$ AE metric with $s\geq 2$, $p>n/s$
and $\alpha=(n/p)-s$
can be compactified to a $W^{s,p}$ metric. But the decay condition
$\alpha=(n/p)-s$ is quite restrictive for $s>2$:
smooth metrics decompactify generally to metrics with decay $O(\rho^{-2})$,
but compactification of a $W^{s,p}_{-2}$ metric would not be known to be
$C^3$, regardless of how high $s$ and $p$ are.  A more refined analysis
for $s>2$ would need to take into account asymptotics of the Weyl or Cotton-York
tensor, and we point to Herzlich \cite{Herzlich97} for related results
in the $C^s$ setting.

\begin{prop}\label{prop:yAE=yCpct}
Let $(M,g)$ and $(\Mbar,\gbar)$ be a pair of manifolds as in Lemma
\ref{lem:compactification}, related by $g = \phibar^{N-2} \gbar$.
Then $y_g(M)=y_{\gbar}(\Mbar)$.
\end{prop}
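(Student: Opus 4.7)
The plan is to leverage the pointwise identity $Q^y_g(u) = Q^y_{\gbar}(\phibar u)$ coming from the conformal transformation law, together with a density/cutoff argument to handle the singularity of $\phibar$ at the point $P \in \Mbar \setminus M$. We cannot directly invoke Lemma \ref{ConformalInvariance} because $\phibar - 1 \notin W^{2,p}_\alpha$; indeed $\phibar$ blows up like $|x|^{2-n}$ at $P$. Both inequalities will be established by appropriate approximation.

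For $y_{\gbar}(\Mbar) \leq y_g(M)$: given $u \in C^\infty_c(M)$, the product $\phibar u$ is smooth and compactly supported in $M = \Mbar \setminus \{P\}$, so it belongs to $W^{1,2}(\Mbar) \setminus \{0\} = A(\Mbar)$. The integration-by-parts computation carried out in the proof of Lemma \ref{ConformalInvariance} applies verbatim (since $\phibar$ is smooth on the support of $u$), yielding $Q^y_g(u) = Q^y_{\gbar}(\phibar u) \geq y_{\gbar}(\Mbar)$. Since $C^\infty_c(M)$ is dense in $W^{1,2}_{\delta^*}(M)$ and $Q^y_g$ is continuous on that space (by Lemma \ref{lem:RIntegralBound} together with the continuous embedding $W^{1,2}_{\delta^*} \hookrightarrow L^N$), taking the infimum over $u$ gives $y_g(M) \geq y_{\gbar}(\Mbar)$.

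For the reverse inequality $y_g(M) \leq y_{\gbar}(\Mbar)$: it suffices to produce, for each nonzero $w \in C^\infty(\Mbar)$, a sequence $u_k \in C^\infty_c(M)$ with $Q^y_g(u_k) \to Q^y_{\gbar}(w)$. Because $n \geq 3$ the singleton $\{P\}$ has zero $W^{1,2}$-capacity, so there exist cutoffs $\chi_k \in C^\infty(\Mbar)$ with $\chi_k \equiv 1$ on a shrinking neighborhood of $P$, $\chi_k$ supported in a slightly larger neighborhood, and $\chi_k \to 0$ in $W^{1,2}(\Mbar)$ (standard logarithmic cutoffs in a coordinate ball about $P$ suffice). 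Setting $w_k := (1-\chi_k) w$ gives a sequence in $C^\infty(\Mbar)$ vanishing near $P$ with $w_k \to w$ in $W^{1,2}(\Mbar)$. Then $u_k := w_k/\phibar$ is well defined and lies in $C^\infty_c(M)$, and the conformal identity (again valid because $u_k$ vanishes near $P$) gives
\begin{equation*}
Q^y_g(u_k) = Q^y_{\gbar}(w_k) \longrightarrow Q^y_{\gbar}(w),
\end{equation*}
the limit following from continuity of $Q^y_{\gbar}$ on $W^{1,2}(\Mbar)$. Taking $\inf$ over $u_k$ and then over $w$ yields $y_g(M) \leq y_{\gbar}(\Mbar)$.

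The principal obstacle is executing the cutoff argument and verifying continuity of $Q^y$ under $W^{1,2}$ convergence; the former is routine once we invoke zero capacity of a point in dimension $n \geq 3$, and the latter reduces to the Sobolev embedding $W^{1,2}(\Mbar) \hookrightarrow L^N(\Mbar)$ together with the weak continuity of $u \mapsto \int R_{\gbar}\, u^2$ from Lemma \ref{lem:RIntegralBound} applied with $\Mbar$ interpreted as an AE manifold with zero ends.
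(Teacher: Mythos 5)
Your proposal is correct and follows essentially the same route as the paper's proof: both rest on the conformal identity $Q^y_g(u)=Q^y_{\gbar}(\phibar u)$, the density of compactly supported test functions in $W^{1,2}_{\delta^*}(M)$, and the fact that functions vanishing near $P$ are dense in $W^{1,2}(\Mbar)$ because a point has zero capacity when $2<n$. The paper simply packages your two inequalities as a single statement that $u\mapsto\phibar u$ is a quotient-preserving bijection between these two dense subsets.
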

\begin{proof}
For simplicity we assume that $M$ has one end.  Let $P\in \Mbar$
be the singular point of $\phibar$.
Note that $W^{1,2}_c(M)$ is dense in $W^{1,2}_{\delta^*}(M)$
and that
\begin{equation}
S_P:=W^{1,2}(\Mbar)\cap\{u: u|_{B_r(P)}=0 \textrm{ for some } r>0\}
\end{equation} is dense in
$W^{1,2}(\Mbar)$ since $2<n$.
From upper semicontinuity of the Yamabe quotient, the Yamabe
invariants of $g$ and $\gbar$ can be computed by minimizing
the Yamabe quotient over $W^{1,2}_c$ and $S_P$ respectively.
Note that $u \mapsto \phibar u$ is a bijection between $W^{1,2}_c(M)$
and $S_p$. The proof of Lemma \ref{ConformalInvariance}
shows that for $u\in W^{1,2}_c$,
\begin{equation}
Q_g^y(u) = Q_{\gbar}^y(\phibar u)
\end{equation}
and hence $y_g(M) = y_{\gbar}(\Mbar)$.
\end{proof}

Combining Lemma \ref{lem:compactification} and Proposition \ref{prop:yAE=yCpct}
we obtain our second classification.
\begin{prop}\label{prop:classifybycompactify}
Let $(M,g)$ be a $W^{2,p}_{\alpha}$ AE manifold with $\alpha \le \frac{n}{p}-2$.
Then $(M,g)$ is Yamabe positive/negative/null if and only if
some conformal compactification, as described in Lemma \ref{lem:compactification},
has the same Yamabe type.
\end{prop}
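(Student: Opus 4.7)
The plan is to deduce this essentially immediately from the two building blocks already in place: Lemma \ref{lem:compactification}, which produces the conformal compactification with controlled regularity, and Proposition \ref{prop:yAE=yCpct}, which gives the exact equality $y_g(M)=y_{\gbar}(\Mbar)$ whenever $g$ and $\gbar$ are related by the compactification map. Since the Yamabe type (positive, null, or negative) is determined solely by the sign of the Yamabe invariant, matching the invariants numerically automatically matches the type.

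The only wrinkle is that Lemma \ref{lem:compactification} is stated with the sharp decay rate $\alpha=\tfrac{n}{p}-2$, while the hypothesis here allows any $\alpha\le \tfrac{n}{p}-2$. My first step would therefore be to reduce to the borderline case. Set $\alpha^* := \tfrac{n}{p}-2$; since $p>n/2$ we have $\alpha^*<0$. Inspecting the definition \eqref{eq:SobolevNorm} of the weighted norm, the map $\delta\mapsto \|f\|_{W^{s,p}_\delta}$ is nonincreasing in $\delta$ on functions with $\rho\ge 1$, and more formally, part (1) of Proposition \ref{prop:SobolevEmbeddings} (applied with $p=q$) gives the continuous inclusion $W^{2,p}_\alpha \hookrightarrow W^{2,p}_{\alpha^*}$ whenever $\alpha<\alpha^*$. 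Thus $(M,g)$ is in particular a $W^{2,p}_{\alpha^*}$ AE manifold, and the sharp case of Lemma \ref{lem:compactification} applies.

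Having made this reduction, the forward direction goes as follows. Lemma \ref{lem:compactification} produces a smooth conformal factor $\phi$, decaying like $\rho^{2-n}$, so that $\gbar = \phi^{N-2}g$ extends to a $W^{2,p}$ metric on the one-point (or multi-point) compactification $\Mbar$. Proposition \ref{prop:yAE=yCpct}, applied to this pair, yields $y_g(M)=y_{\gbar}(\Mbar)$. In particular the two invariants have the same sign, so $(M,g)$ is Yamabe positive/null/negative if and only if $(\Mbar,\gbar)$ is of the same Yamabe type. This gives the ``only if'' direction, exhibiting an explicit compactification with the required property. The ``if'' direction is then a triviality from the same equality: if \emph{some} such compactification $(\Mbar,\gbar)$ has a specified Yamabe type, then by Proposition \ref{prop:yAE=yCpct} so does $(M,g)$.

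The main conceptual obstacle has already been dispatched in the earlier sections of the chapter, namely establishing Proposition \ref{prop:yAE=yCpct}, whose proof relies on density of $W^{1,2}_c(M)$ in $W^{1,2}_{\delta^*}(M)$ together with density of functions vanishing near the compactification point in $W^{1,2}(\Mbar)$ (valid because $n>2$). Given that, the remaining work for Proposition \ref{prop:classifybycompactify} is purely bookkeeping: the reduction from $\alpha$ to $\alpha^*$, and the observation that Yamabe type is a sign condition. I would expect no genuine technical difficulty beyond carefully recording the embedding used in the reduction step.
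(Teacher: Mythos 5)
Your argument is correct and follows exactly the route the paper takes: the proposition is stated there without an explicit proof beyond the remark that it follows by combining Lemma \ref{lem:compactification} and Proposition \ref{prop:yAE=yCpct}. You have merely made explicit the (correct, but implicit in the paper) reduction from the hypothesis $\alpha\le\frac{n}{p}-2$ to the borderline decay rate $\alpha^*=\frac{n}{p}-2$ required by the lemma, via the monotone embedding of weighted Sobolev spaces in the decay parameter.
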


Consequently, Yamabe classification on AE manifolds has the same topological flavor
as in the compact setting. For instance, since the
torus does not allow a Yamabe positive metric, the decompactified torus,
which is diffeomorphic to $\R^n$ with a handle, does not allow a metric with nonnegative
scalar curvature.

We mention an application of Proposition \ref{prop:classifybycompactify}
to general relativity. Recall the Einstein constraint equations, \eqref{eq:Constraints},
which initial data $(M, g, K, T_{nn}, T_{ni})$ in general relativity must satisfy.
It is natural to suppose that the energy density $T_{nn}$ is everywhere nonnegative,
which is known as the weak energy condition.
If the initial data is maximal, i.e., if the mean curvature $\tr\, K$ is zero, then
the weak energy condition implies $R\geq 0$. Thus, if
the compactification of an AE manifold has a topology that does not
admit a Yamabe positive metric, then
the original AE manifold does not allow maximal initial data satisfying
the weak energy condition. We mention that the results in \cite{IMP02} show that every
AE manifold does admit \emph{some} solution of the constraints. 

\chapter{Solutions of the Conformal Constraint Equations} \label{chap:FarFromCMC}

In this chapter, we find solutions of the conformal constraint equations
\eqref{eq:ConfConst} following the setup introduced in \cite{HNT09}
and \cite{Maxwell09}. This method, based on a fixed point theorem, is the first to allow
solutions with arbitrary mean curvatures. In Sections \ref{sec:FixedPoint} 
and \ref{sec:Supersolutions}, we prove the
appropriate analogues on asymptotically Euclidean manifolds. This is mostly work 
with Isenberg, Mazzeo, and Meier from
\cite{DIMM14}.

Nguyen \cite{Nguyen14} later showed via a scaling argument that these solutions 
with arbitrary mean curvatures could alternatively be interpreted 
as rescalings of perturbations of CMC results. We discuss this argument. Also,
Nguyen presented a new method for finding solutions to the conformal constraint
equations, using half-continuity and a fixed point theorem. In Section 
\ref{sec:LocalSup}, we
present a simpler proof of his result, and marginally strengthen it.

\section{The Fixed Point Approach}\label{sec:FixedPoint}

A standard method of solving differential (and other) equations is the fixed point
method. In this method, one first finds a functional whose fixed points are 
solutions of the desired equation. One then uses a priori estimates and other 
properties of the functional to fulfill the conditions of one of the many fixed point
theorems, such as the Schauder fixed point theorem. These theorems guarantee fixed
points of functionals under very general circumstances. A good introduction to these
techniques is found in \cite{Brown04}.

One of the first and most important fixed point theorems is the Brouwer fixed point 
theorem. In its basic form, it says that any continuous function from a ball in $\R^n$
to the same ball has a fixed point. Many other fixed point theorems, such as the ones
we use in this chapter, the Schauder fixed point theorem and the Leray-Schauder
alternative, are based on the Brouwer fixed point theorem. We now state the Schauder 
fixed point theorem.

\begin{thm}[Schauder Fixed Point Theorem]\label{thm:SchauderFP}
Let $S$ be a closed convex subset of a normed linear space $X$ and let $F:S \to S$ be a
compact map. Then $F$ has a fixed point. 
\end{thm}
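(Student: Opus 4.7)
The plan is to reduce to the finite-dimensional Brouwer fixed point theorem via an approximation argument using the so-called Schauder projection. The key point exploiting compactness is that $\overline{F(S)}$ is a compact subset of $X$ contained in $S$ (since $S$ is closed and $F(S) \subseteq S$), so for every $\epsilon > 0$ it can be covered by finitely many open balls $B_\epsilon(y_1), \ldots, B_\epsilon(y_{N_\epsilon})$ with centers $y_i \in \overline{F(S)} \subseteq S$.

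Given such an $\epsilon$-net, I would define the Schauder projection
\begin{equation}
P_\epsilon(x) = \frac{\sum_{i=1}^{N_\epsilon} \mu_i(x)\, y_i}{\sum_{i=1}^{N_\epsilon} \mu_i(x)}, \qquad \mu_i(x) = \max\{0, \epsilon - \|x - y_i\|\},
\end{equation}
which is continuous on $\overline{F(S)}$ and takes values in the convex hull $C_\epsilon := \mathrm{conv}(y_1, \ldots, y_{N_\epsilon})$. Since each $y_i \in S$ and $S$ is convex, $C_\epsilon \subseteq S$; moreover $C_\epsilon$ is a compact convex subset of the finite-dimensional affine subspace it spans. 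Now set $F_\epsilon := P_\epsilon \circ F\big|_{C_\epsilon}$. Then $F_\epsilon : C_\epsilon \to C_\epsilon$ is continuous, and the classical Brouwer fixed point theorem applied in the finite-dimensional space supplies a point $x_\epsilon \in C_\epsilon$ with $F_\epsilon(x_\epsilon) = x_\epsilon$.

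The defining property of the Schauder projection gives the uniform estimate $\|P_\epsilon(z) - z\| \leq \epsilon$ for all $z \in \overline{F(S)}$, so
\begin{equation}
\|x_\epsilon - F(x_\epsilon)\| = \|P_\epsilon(F(x_\epsilon)) - F(x_\epsilon)\| \leq \epsilon.
\end{equation}
Taking $\epsilon = 1/k$ produces a sequence $x_k \in S$ with $\|x_k - F(x_k)\| \to 0$. Since $F$ is compact, $\{F(x_k)\}$ has a subsequence converging to some $x_* \in \overline{F(S)} \subseteq S$, and the estimate forces the corresponding subsequence of $\{x_k\}$ to converge to $x_*$ as well. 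By continuity of $F$, passing to the limit yields $F(x_*) = x_*$.

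The main obstacle is the very first step: producing a continuous, finite-range approximation of the identity on $\overline{F(S)}$ inside the convex set $S$. This is precisely what the Schauder projection accomplishes; convexity of $S$ is essential to guarantee that $C_\epsilon \subseteq S$, and closedness of $S$ is essential to ensure the limit point $x_*$ lies in $S$. Once this approximation is in place, the remainder is a routine diagonal argument combining Brouwer's theorem with the compactness hypothesis on $F$.
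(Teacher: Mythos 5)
Your proof is correct and is the textbook argument for the Schauder fixed point theorem via the Schauder projection and Brouwer's theorem. The paper, however, does not prove this statement: Theorem \ref{thm:SchauderFP} is quoted as a classical result (the paper points to \cite{Brown04} for background on fixed point theory), so there is no internal proof to compare against. Your write-up correctly identifies what makes the argument work --- convexity of $S$ keeps the finite-dimensional convex hull $C_\epsilon$ inside $S$, closedness of $S$ captures the limit point $x_*$, the uniform estimate $\|P_\epsilon(z)-z\|\le\epsilon$ on $\overline{F(S)}$ drives the approximation, and the final passage to the limit uses the continuity that is implicit in the definition of a compact map --- and it is a valid proof of the result the paper relies on.
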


We introduce the map $F$ that we use in this chapter. In essence, the map $F$
is an iteration map, taking a function $\phi$, solving the vector equation for some $W$
using that $\phi$, and
then solving the Lichnerowicz equation using that $W$. In this way, we iterate the coupling,
which allows us to find solutions at each step. This is similar to the proof of the 
sub and supersolution theorem \ref{thm:SubSupersolutionTheorem}, where we set up an
iteration scheme, using the solution of the previous step in order to fix the 
nonlinearity of the equation $-a\Delta u = f(x,u)$.

To be more precise, for any positive function
$v \in L^\infty$, let $W(v)$ be the solution in $W^{2,p}_\delta$ of the vector equation
\eqref{eq:OrigVect} with $v$ replacing $\phi$.
Let $G(W)$ to be the solution of the Lichnerowicz equation
using $W$ such that $G(W)-\mru \in W^{2,p}_\delta$, where $\mru$ is the desired asymptotic
function. Let 
$E:W^{2,p}_\delta \to L^\infty$ be the compact Sobolev embedding map give by
Proposition \ref{prop:SobolevEmbeddings}. We then define
$F(v):= (E\circ G\circ W)(v)$. Clearly, if $F(\phi) = \phi$, then $(\phi,W(\phi))$ is a
solution to the conformal constraint equations \eqref{eq:ConfConst}.

In order for $F$ to be well defined, there are two requirements. First, for the
vector equation to have a solution, we must require that $g$ has no conformal Killing
fields. Recall that if $p>n$ and $g$ is a $W^{2,p}_\delta$ or $C^{2,\alpha}_\delta$
AE manifold, then this is known to be true (cf. Theorem \ref{prop:Isomorphism}).
Next, for the Lichnerowicz equation to have a solution, we must assume that the 
seed data is admissible (cf. Definition \ref{def:Admissible}); i.e., that there
is a conformal factor $\psi$ that transforms the metric to one with scalar 
curvature $-\kappa \tau^2$.

In order to define the domain set
$S$, recall the definition of global sub and supersolutions.

\begin{defn}
Functions $\phi_\pm$ are ``global sub and supersolutions" of the Lichnerowicz equation
\eqref{eq:OrigLich} if
for any $\phi\leq \phi_+$, $\phi_+$ is a supersolution and $\phi_-$ is a subsolution 
of the Lichnerowicz equation with $W=W(\phi)$. 
\end{defn}

 Suppose we could find global 
sub and supersolutions $\phi_\pm$ with properties as in the sub and supersolution theorem
\ref{thm:SubSupersolutionTheorem}. Then 
$S= \{\phi \in L^\infty: \phi_- \leq \phi \leq \phi_+\}$ is clearly closed and convex,
and by the sub and supersolution theorem, $F:S\to S$. 

In order to show that $F$ is compact, first recall that the composition of compact maps
and continuous maps is compact. The solution map $W$ is continuous by the continuity
of $\phi^q$ and of $\left(\di \frac{1}{2N} L\right)^{-1}$. The map
$E$ is compact. Thus we only need to show that $G$ is continuous. We show that it is in
fact $C^1$ in the following lemma.

\begin{lem}
Given appropriately regular and admissible data, the solution map 
$G:W^{2,p}_{\delta} \to W^{2,p}_\delta$ is continuously G\^{a}teaux
differentiable. A similar statement holds for $C^{2,\alpha}_\delta$ data.
\end{lem}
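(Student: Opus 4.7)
The plan is to apply the Banach space implicit function theorem to a map whose
zero set is the graph of $G$. Define
\begin{equation*}
\cP(W,\psi) := -a\Delta \psi + R(\psi+\mru) + \kappa\tau^2 (\psi+\mru)^{q-1}
 - \left|\sigma + \tfrac{1}{2N}LW\right|^2 (\psi+\mru)^{-q-1} - r(\psi+\mru)^{-q/2}
\end{equation*}
on $W^{2,p}_\delta \times U$, where $U \subset W^{2,p}_\delta$ is the open
subset on which $\psi + \mru > 0$ (recalling $\Delta \mru \equiv 0$). By
Theorems \ref{thm:LichIff} and \ref{thm:LichUniqueness}, for each
$W$ there is a unique $\psi(W) = G(W) - \mru \in U$ with
$\cP(W, \psi(W)) \equiv 0$, so the graph of $G$ is the zero set of $\cP$.
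Checking that $\cP$ is $C^1$ Fr\'echet into $L^p_{\delta-2}$ is routine:
$W \mapsto |\sigma + \tfrac{1}{2N} LW|^2$ is a smooth quadratic by the
multiplication part of Proposition \ref{prop:SobolevEmbeddings}, and on
bounded subsets of $U$ the quantity $\psi+\mru$ is bounded above and away
from zero, so $(\psi + \mru)^\alpha$ (for $\alpha \in \{q-1,-q-1,-q/2\}$) is
smooth into $L^\infty$, after which multiplication against the coefficients
$R$, $\tau^2$, $r$, $|\sigma + \tfrac{1}{2N} LW|^2 \in L^p_{\delta-2}$ gives
the desired smoothness.

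The only real work is to show $D_\psi \cP(W_0, \psi_0)$ is an isomorphism
$W^{2,p}_\delta \to L^p_{\delta-2}$ at a solution. A direct computation gives
$D_\psi \cP(W_0, \psi_0) h = -a\Delta h + V h$ with
\begin{equation*}
V = R + (q-1)\kappa \tau^2 \phi^{q-2} + (q+1) \left|\sigma + \tfrac{1}{2N}LW_0\right|^2 \phi^{-q-2} + \tfrac{q}{2} r \phi^{-q/2-1},
\end{equation*}
where $\phi = \psi_0 + \mru$. The potential $V$ has no obvious sign, so
Proposition \ref{prop:Isomorphism} does not apply directly; the trick,
mirroring the uniqueness argument of Theorem \ref{thm:LichUniqueness}, is to
rescale by $\phi$ itself. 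Setting $\gtil = \phi^{q-2} g$ and $k = h/\phi$,
the conformal transformation law for the conformal Laplacian gives
$-a \Deltatil k + R_{\gtil} k = \phi^{1-q}(-a\Delta h + R h)$, while the
Lichnerowicz equation for $\phi$ yields
$R_{\gtil} = -\kappa \tau^2 + |\sigma + \tfrac{1}{2N} LW_0|^2 \phi^{-2q} + r \phi^{-3q/2 + 1}$.
Combining these identities with $\phi^{1-q}(V-R)h = \phi^{2-q}(V-R)k$ produces
\begin{equation*}
\phi^{1-q}\left(-a\Delta h + V h\right) = -a \Deltatil k + \widetilde V \, k,
\end{equation*}
with
\begin{equation*}
\widetilde V = (q-2)\kappa \tau^2 + (q+2)\left|\sigma + \tfrac{1}{2N} LW_0\right|^2 \phi^{-2q} + \left(\tfrac{q}{2} + 1\right) r \phi^{-3q/2+1} \;\geq\; 0,
\end{equation*}
using $q > 2$ and $r \geq 0$. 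Since $h \mapsto h/\phi$ is a Banach space
isomorphism on $W^{2,p}_\delta$ (as $\phi$ is bounded above and away from
zero with bounded derivatives through the relevant order), and since
$-a \Deltatil + \widetilde V$ is an isomorphism
$W^{2,p}_\delta \to L^p_{\delta-2}$ by Proposition \ref{prop:Isomorphism},
so too is $D_\psi \cP(W_0, \psi_0)$.

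The implicit function theorem then produces a local $C^1$ solution map,
which must coincide with $W \mapsto G(W) - \mru$ by uniqueness; hence $G$
is continuously Fr\'echet differentiable, and in particular continuously
G\^ateaux differentiable. The H\"older statement follows by the identical
argument, substituting Proposition \ref{prop:HolderEmbeddings} for
Proposition \ref{prop:SobolevEmbeddings} and the $C^{2,\alpha}_\delta$ half
of Proposition \ref{prop:Isomorphism} for the Sobolev half. The principal
obstacle throughout is establishing the isomorphism of the linearization,
which is handled by the conformal-rescaling trick above.
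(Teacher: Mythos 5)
Your proof is correct and takes essentially the same approach as the paper's: both invoke the Banach-space implicit function theorem and both defeat the indefinite sign of the linearized potential by the conformal rescaling trick, arriving at the identical nonnegative potential $(q-2)\kappa\tau^2 + (q+2)|\sigma+\tfrac{1}{2N}LW_0|^2\phi^{-2q} + \tfrac{q+2}{2}r\phi^{-3q/2+1}$. The only cosmetic difference is that the paper applies the conformal covariance of Proposition~\ref{prop:ConformalCovariance} \emph{before} linearizing (so that it differentiates at $\phi \equiv 1$), while you linearize first and then conjugate the resulting operator by $\phi$; the two orderings produce the same operator and the same conclusion.
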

\begin{proof}
This lemma follows from the implicit function theorem. The proof we give here is 
essentially the same as is used in proving \cite[Prop 13]{Maxwell09}, which is the
equivalent result for compact manifolds. Since the proof is identical for 
$C^{2,\alpha}_\delta$ data, we only prove the first case. For brevity, we
 equivalently prove that the solution map $G$ is continuous on the quantity
$\beta = \sigma+\frac{1}{2N} LW$.

Let $\beta_0 \in W^{1,p}_{\delta-1}$ and set $\psi_0 = G(\beta_0)$. 
We then define $\Gtil(\beta):= \psi_0^{-1} G(\psi_0^2 \beta)$. Thus, for 
$\gtil = \psi_0^{q-2} g$ and $\rtil= \psi_0^{-\frac32 q +1}$, conformal
covariance \ref{prop:ConformalCovariance} implies that
$\phi=\Gtil(\beta)$ is the solution of
\begin{equation}
-a\Delta_{\gtil} \phi
 + R_{\gtil}\phi +\kappa\tau^2 \phi^{q-1} - |\beta|^2_{\gtil} \phi^{-q-1}
   - \rtil \phi^{-\frac{q}{2}}=0
\end{equation} such that $\phi-1 \in W^{2,p}_\delta$.

Thus to show that $G$ is continuous in a neighborhood of $\beta_0$, we only need to
show that $\Gtil$ is continuous near $\psi_0^{-2}\beta_0$.  We remark that
$\Gtil(\psi_0^{-2} \beta_0) \equiv 1$.

We define the map $\Phi: W^{2,p}_\delta \times W^{1,p}_{\delta-1} \to L^{p}_{\delta-2}$
by
\begin{equation}
\Phi(\phi,\beta) = -a\Delta_{\gtil} \phi + R_{\gtil} \phi +
                   \kappa\tau^2 \phi^{q-1} - |\beta|_{\gtil}^2 \phi^{-q-1}
                   - \rtil \phi^{-\frac{q}{2}}.
\end{equation}
Note that $\Phi(\Gtil(\beta),\beta) = 0$. The G\^{a}teaux derivative of $\Phi$ is
given by
\begin{multline}
D\Phi_{\phi,\beta}(h,k) = -a\Delta_{\gtil} h + R_{\gtil}h +\kappa(q-1)\tau^2 \phi^{q-2}h \\
\qquad + (q+1)|\beta|^2 \phi^{-q-2} h -  2\phi^{-q-1} \langle \beta, k\rangle 
     +\frac{q}{2} r \phi^{-\frac{q}{2}-1} h.
\end{multline}

Thus
\begin{equation}
D\Phi_{1,\beta_0}(h,0) = -a\Delta_{\gtil} h + R_{\gtil} h +\kappa(q-1)\tau^2 h - (q+1)|\beta|^2 h 
   - \frac{q}{2} r h.
\end{equation}

However, since $G(\psi_0^{-2}\beta_0) \equiv 1$, we have that
\begin{equation}
R_{\gtil} = -\kappa\tau^2 + |\beta_0|^2 + r,
\end{equation}
and so
\begin{equation}
D\Phi_{1,\beta_0}(h,0) = -a\Delta_{\gtil} h  +\left[\kappa(q-2)\tau^2  + (q+2)|\beta|^2 
    + \frac{q+2}{2} r\right] h.
\end{equation}

Since the coefficient of $h$ is positive and is contained in $L^p_{\delta-2}$ by
assumption, we see from Proposition \ref{prop:Isomorphism} that
 $D\Phi_{1,\beta_0}: W^{2,p}_{\delta} \to L^p_{\delta-2}$ is an
isomorphism. The implicit function theorem on 
Banach spaces then implies that $G$ is $C^1$ in a neighborhood of $\beta_0$.
\end{proof}

On compact manifolds, Maxwell \cite{Maxwell09} showed that, in fact, no global 
subsolution is needed to construct a solution of the conformal constraint equations.
For Yamabe nonnegative metrics, he proves this using a Green's function argument to
find a uniform lower bound on solutions. In the Yamabe negative case, he uses
a conformal factor transforming the metric to one with scalar curvature $-\kappa\tau^2$
as a global subsolution. In the asymptotically Euclidean case, this conformal factor 
argument works for all Yamabe classes, which makes the argument simpler. 
We obtain the following existence theorem.

\begin{thm}\label{thm:CombinedExistence}
Assume that the admissible seed data $(g, \tau, N, \sigma,r,J)$ has the regularity specified in 
\eqref{eq:DataRegularity}. Assume that $r\geq 0$ and that $g$ admits no conformal Killing
fields. Suppose there exists a positive global supersolution $\phi_+$, satisfying the 
hypotheses of the sub and supersolution theorem \ref{thm:SubSupersolutionTheorem}.
Then for any asymptotic
function $\mru$ with asymptotics less than that of $\phi_+$, there exist
$(\phi,W)$ solving the conformal constraint equations \eqref{eq:ConfConst}
such that $\phi$ is positive and $\phi-\mru$ and $W$ are in $W^{2,p}_\delta$.

A similar statement holds for $C^{2,\alpha}_\delta$ data.
\end{thm}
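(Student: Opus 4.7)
The plan is to apply the Schauder fixed point theorem (Theorem \ref{thm:SchauderFP}) to the iteration map $F = E \circ G \circ W$ described just before the theorem statement. The essential ingredients were assembled in the preceding discussion: the existence and continuity of $W(\cdot)$ follows from the absence of conformal Killing fields (via Proposition \ref{prop:Isomorphism}), the continuity (in fact $C^1$ regularity) of $G$ follows from the implicit function theorem argument established in the preceding lemma, and $E$ is compact by Proposition \ref{prop:SobolevEmbeddings}. Composing these, $F$ is a compact continuous map on $L^\infty$. What remains is to exhibit a closed convex $S \subset L^\infty$ with $F(S) \subset S$, which amounts to producing a global subsolution $\phi_-$ compatible with the given global supersolution $\phi_+$.

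The main technical step is therefore to construct the global subsolution from the admissibility hypothesis. Since the seed data is admissible, there exists a positive $\psi$ with $\psi - \mru' \in W^{2,p}_\delta$ (for some positive asymptotic function $\mru'$) satisfying $-a\Delta\psi + R\psi + \kappa\tau^2 \psi^{q-1} = 0$. I claim $\phi_- := \beta\psi$ is a global subsolution for every sufficiently small $\beta > 0$, independently of $W$. Indeed, plugging $\beta\psi$ into the Lichnerowicz equation and using the equation satisfied by $\psi$, the left-hand side reduces to
\begin{equation*}
\kappa\tau^2\psi^{q-1}\bigl(\beta^{q-1}-\beta\bigr) - \left|\sigma+\tfrac{1}{2N}LW\right|^2(\beta\psi)^{-q-1} - r(\beta\psi)^{-q/2},
\end{equation*}
which is non-positive for any $\beta \le 1$ since $q > 2$, $\tau^2 \ge 0$, and $r \ge 0$. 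Crucially, no occurrence of $W$ appears in the coefficient structure, so this bound holds \emph{for every} choice of $W$, making $\beta\psi$ a genuine global subsolution.

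Next I would tune $\beta$ to make the sub/supersolution pair compatible. Since $\phi_+$ is positive and approaches positive asymptotic values exceeding those of $\mru$, and since $\psi$ approaches the positive constants of $\mru'$ on each end, we can choose $\beta > 0$ small enough that $\beta\psi \le \phi_+$ everywhere and $\beta\psi \le \mru$ sufficiently far out on each end. Set $S = \{\phi \in L^\infty : \beta\psi \le \phi \le \phi_+\}$, which is closed and convex. The sub and supersolution theorem (Theorem \ref{thm:SubSupersolutionTheorem}) applied to the Lichnerowicz equation with the fixed vector field $W(\phi)$ (for any $\phi \in S$) produces a solution in $S$ asymptotic to $\mru$, so $F(S) \subset S$.

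Finally, Schauder's theorem yields a fixed point $\phi \in S$, and by construction $(\phi, W(\phi))$ solves \eqref{eq:ConfConst} with $\phi > 0$, $\phi - \mru \in W^{2,p}_\delta$, and $W \in W^{2,p}_\delta$. The $C^{2,\alpha}_\delta$ case is entirely analogous, using the H\"older versions of Proposition \ref{prop:AELinearExistence}, Proposition \ref{prop:Isomorphism}, and Theorem \ref{thm:SubSupersolutionTheorem}. The main obstacle was the construction of a global (rather than $W$-dependent) subsolution; the admissibility hypothesis is what dispenses with that obstacle, mirroring the role played by the negative Yamabe conformal factor in Maxwell's compact argument \cite{Maxwell09}, but now applicable across all Yamabe classes in the AE setting.
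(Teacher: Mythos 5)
Your proof is correct and follows essentially the same route as the paper's: the paper likewise uses the admissibility conformal factor $\psi$ to build the global subsolution $\alpha\psi$ (citing the computation from the proof of Theorem \ref{thm:LichIff}, which you reproduce explicitly), tunes $\alpha$ so that $\phi_-\leq\phi_+$ and the asymptotics of $\mru$ dominate those of $\phi_-$, and then applies the Schauder fixed point theorem to $F$ on $S=\{\phi\in L^\infty:\phi_-\leq\phi\leq\phi_+\}$.
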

\begin{proof}
We first find a global subsolution. Since the data is admissible, let $\psi$ be the
positive conformal factor transforming the metric to one with scalar curvature
$-\kappa \tau^2$. Let $\phi_- = \alpha \psi$. As in the proof of Theorem
\ref{thm:LichIff}, $\phi_-$ is a subsolution of the Lichnerowicz equation
\eqref{eq:OrigLich} for any $\alpha \leq 1$, regardless
of what $W$ is. Thus $\phi_-$ is a global subsolution. We then choose $\alpha$
small enough such that $\phi_-\leq \phi_+$ and such that the asymptotics of $\mru$
are greater than those of $\phi_-$.

The sub and supersolution theorem \ref{thm:SubSupersolutionTheorem} and the uniqueness
theorem \ref{thm:LichUniqueness} now guarantee that the solution map $G$
is well defined. Since $\phi_\pm$ are global sub and supersolutions, $F$ maps 
$S:= \{\phi \in L^\infty: \phi_-\leq\phi\leq\phi_+\}$ into itself. As discussed above,
$F$ is also a compact map. Thus the Schauder fixed point theorem \ref{thm:SchauderFP}
shows there is a fixed point $\phi$ of $F$. Thus $(\phi, W(\phi))$ is a solution to
the conformal constraint equations.
\end{proof}

Theorem \ref{thm:CombinedExistence} shows that, for any fixed choice of the seed data 
$(g, \tau, N, \sigma, r, J)$ and supersolution $\phi_+$ satisfying the hypotheses
of the theorem, there is at least a $k$-dimensional family of solutions (where $k$ is the
number of ends), parameterized by the product of the intervals $(0, \phi_{+,i}]$, where
$\phi_+ \to \phi_{+,i}$ on the end $E_i$. This nonuniqueness leads one to enquire about
the full extent of these families of solutions: for what asymptotic functions $\mru$ are
there solutions of the conformal constraint equations? Unfortunately, neither the
necessary analysis of the linearizations of the operators in 
\eqref{eq:ConfConst}, nor the a 
priori estimates for the solutions, is clear at this time, so we do not
yet have more definitive results on the full family of solutions.

\section{Global Supersolutions} \label{sec:Supersolutions}

We have reduced the problem finding solutions of the conformal constraint equations 
\eqref{eq:ConfConst} to that of finding an appropriate global 
supersolution. We now present two lemmas which prove useful in finding such
supersolutions.

\begin{lem}\label{lem:Falloff}
Assume that $g$ is a $W^{2,p}_\delta$ AE metric with vanishing scalar curvature.
There is a unique solution $w$ to $-a\Delta w = \rho^{\gamma-2}$ with 
$w = c_\gamma\, \rho^\gamma + \hat{w}$, $c_\gamma = (\gamma^2 + (n-2)\gamma)^{-1}$, and
$\hat{w} \in W^{2,p}_{\gamma'}$ where $\gamma' = 2\gamma$ if this
number is greater than $2-n$ (or else $\gamma' \in (2-n, \gamma)$).

Similarly, if $g$ is a $C^{2,\alpha}_\delta$ AE metric then this unique solution $w$
decomposes as $c_\gamma \rho^{\gamma} +\hat{w}$, with $\hat{w} \in C^{2,\alpha}_{\gamma'}$.
\end{lem}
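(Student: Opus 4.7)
The plan is to use $c_\gamma \rho^\gamma$, which solves the equation exactly on Euclidean space, as a leading ansatz, and then to find the correction $\hat w$ by solving a linear equation whose right-hand side decays faster and lies in the isomorphism range of $-a\Delta$.

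First I would fix a smooth cutoff $\chi$ that equals $1$ outside a large compact set $K \subset M$ (so in particular on the region where $\rho$ is the smooth Euclidean radial coordinate on each end) and vanishes near the interior. Set $w_0 := c_\gamma \chi \rho^\gamma$. The Euclidean computation $\Delta_{Euc}\rho^\gamma = \gamma(\gamma+n-2)\rho^{\gamma-2}$ together with the definition of $c_\gamma$ yields $-a\Delta_{Euc}(c_\gamma \rho^\gamma) = \rho^{\gamma-2}$ on the ends (up to the customary sign conventions). Writing $-a\Delta_g w_0 =: \rho^{\gamma-2} + f_0$, the error $f_0$ is the sum of a compactly supported piece coming from $\nabla\chi$ and, on the ends, of the operator difference $-a(\Delta_g - \Delta_{Euc})$ applied to $c_\gamma \rho^\gamma$. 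Expanding the latter in terms of $g-g_{Euc} \in W^{2,p}_\delta$ and its first derivatives, paired against $\partial \rho^\gamma$ and $\partial^2 \rho^\gamma$, and using the weighted product and embedding estimates of Proposition~\ref{prop:SobolevEmbeddings}, one checks that $f_0 \in L^{p}_{\gamma'-2}$ with $\gamma'$ exactly as in the statement.

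Since $R_g = 0$, the hypothesis $V \equiv 0 \geq 0$ of Proposition~\ref{prop:Isomorphism} applies, and together with Proposition~\ref{prop:AELinearExistence} this gives that
\[
-a\Delta \;:\; W^{2,p}_{\gamma'} \;\longrightarrow\; L^{p}_{\gamma'-2}
\]
is an isomorphism for any $\gamma' \in (2-n,0)$. One takes $\gamma' = 2\gamma$ when that number exceeds $2-n$; otherwise, by monotonicity of the weighted norms (Proposition~\ref{prop:SobolevEmbeddings}(1)), any $\gamma' \in (2-n,\gamma)$ works. Solving $-a\Delta \hat w = -f_0$ uniquely in $W^{2,p}_{\gamma'}$ and setting $w := w_0 + \hat w$ produces a solution of the desired form. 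Uniqueness within the stated class is immediate: two solutions sharing the prescribed leading term $c_\gamma \rho^\gamma$ differ by an element of $W^{2,p}_{\gamma'}$ lying in the kernel of $-a\Delta$, which must vanish. The $C^{2,\alpha}_\delta$ statement proceeds identically, substituting the H\"older mapping properties of Proposition~\ref{prop:AELinearExistence} for the Sobolev ones.

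The main technical obstacle is the error bookkeeping sketched above: one must expand $(\Delta_g - \Delta_{Euc})(\chi\rho^\gamma)$ carefully into contributions involving $g^{ij} - \delta^{ij}$, the Christoffel symbols of $g$, and $\nabla\chi$, and verify that each piece lands in $L^p_{\gamma'-2}$ with the threshold $2-n$ controlling the case distinction in the statement. Everything else is a direct application of the linear theory already developed in this chapter.
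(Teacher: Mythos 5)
Your proof is correct and takes essentially the same approach as the paper's: subtract the explicit Euclidean solution $c_\gamma\rho^\gamma$, show the residual right-hand side lies in $L^p_{\gamma'-2}$ (splitting into a compactly supported piece and a piece decaying like the metric error applied to $\rho^\gamma$), and invoke the isomorphism from Propositions~\ref{prop:AELinearExistence} and \ref{prop:Isomorphism}. The only cosmetic difference is that you introduce a cutoff $\chi$ to localize the leading term, whereas the paper instead inserts a reference metric $\gbar$ that is exactly Euclidean on the ends and $W^{2,p}$ in the interior; since $\rho$ is already a globally defined smooth function $\geq 1$ on $M$, the cutoff is unnecessary, but the two devices do the same bookkeeping job and the resulting error analysis is identical.
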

\begin{proof}
Write $w = c_\gamma\, \rho^\gamma + \hat{w}$ and let $\gbar$ be a $W^{2,p}$ metric which
agrees with $g$ away from the ends but is exactly Euclidean on each $E_j$. Then we
must solve
\begin{multline}
(-a\Delta + R)\hat{w} = \left( \rho^{\gamma-2} 
   - c_\gamma (-a\Delta_{\gbar} + R_{\gbar}) \rho^{\gamma}\right)
      - c_\gamma \left( (-a\Delta + R) - (-a\Delta_{\gbar} + R_{\gbar})\right) \rho^\gamma.
\end{multline}
The first term on the right is $L^p$ with compact support, while the second term lies
in $L^p_{2\gamma - 2}$, so the entire right hand side lies in 
$L^p_{2\gamma-2} \subset L^p_{\gamma'-2}$. 
The result follows from Theorem \ref{prop:Isomorphism}.

The proof in the H\"older setting is the same.
\end{proof}

The second lemma is a slight weakening of the elliptic estimate 
\eqref{eq:PEstimate2}, which is adequate for our purposes.

\begin{lem}\label{lem:boundLW}
If $(M,g)$ is AE and has no conformal Killing fields, and if $f \in L^p_{\delta-2}$
with $p>n$, then the unique solution $W\in W^{2,p}_\delta$
to $\di \frac{1}{2N} LW = f$ satisfies
\begin{equation}
\label{eq:LWest}
\|LW\|_{\infty} \le C_1 \rho^{\delta-1} \|f\|_{p,{\delta-2}}.
\end{equation}
\end{lem}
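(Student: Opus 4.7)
The plan is to chain together three standard facts: invertibility of the operator, boundedness of $L$ as a first-order differential operator, and the critical weighted Sobolev embedding into weighted continuous functions.

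First I would use the assumption that $g$ admits no conformal Killing fields together with Proposition \ref{prop:Isomorphism} to conclude that $\di \frac{1}{2N} L : W^{2,p}_\delta \to L^p_{\delta-2}$ is an isomorphism (this is exactly where the hypothesis $p>n$ enters, via the second paragraph of that proposition). Hence a unique $W \in W^{2,p}_\delta$ exists, and the strengthened a priori estimate \eqref{eq:PEstimate2} gives
\begin{equation*}
\|W\|_{W^{2,p}_\delta} \le C \, \|f\|_{L^p_{\delta-2}}.
\end{equation*}

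Next, since $L$ is a first-order differential operator with smooth (indeed constant, in end coordinates, up to $W^{2,p}_\gamma$ perturbations of the metric) coefficients, it defines a continuous map $W^{2,p}_\delta \to W^{1,p}_{\delta-1}$, so $\|LW\|_{W^{1,p}_{\delta-1}} \le C\, \|W\|_{W^{2,p}_\delta}$. Here I pick up the weight shift by one because each derivative in $L$ costs a factor of $\rho^{-1}$ in the decay weight.

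The final step uses the Sobolev embedding in Proposition \ref{prop:SobolevEmbeddings}(3): because $p>n$, we have $1 > n/p$, so $W^{1,p}_{\delta-1} \hookrightarrow C^0_{\delta-1}$ continuously. Combining the three inequalities yields $\|LW\|_{C^0_{\delta-1}} \le C\, \|f\|_{L^p_{\delta-2}}$, which, by the definition of the $C^0_{\delta-1}$ norm, is precisely the pointwise bound
\begin{equation*}
|LW(x)| \le C_1\, \rho(x)^{\delta-1}\, \|f\|_{p,\delta-2}
\end{equation*}
asserted in the statement (with the understood reading of $\|LW\|_\infty \le C_1 \rho^{\delta-1}\|f\|_{p,\delta-2}$). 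There is no real obstacle here; the role of the hypothesis $p>n$ is twofold, providing both the isomorphism property for the vector Laplacian (no conformal Killing fields in $W^{2,p}_\delta$) and the critical Sobolev embedding into weighted $C^0$.
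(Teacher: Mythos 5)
Your proposal is correct and follows essentially the same route as the paper: the paper's proof is exactly the chain $\rho^{1-\delta}|LW| \le \|LW\|_{C^0_{\delta-1}} \le C\|LW\|_{W^{1,p}_{\delta-1}} \le C\|W\|_{W^{2,p}_\delta} \le C\|f\|_{p,\delta-2}$, combining the isomorphism estimate \eqref{eq:PEstimate2} with Sobolev embedding. You have simply spelled out the intermediate justifications (isomorphism from Proposition \ref{prop:Isomorphism}, continuity of $L$, and the embedding $W^{1,p}_{\delta-1}\hookrightarrow C^0_{\delta-1}$) in more detail.
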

\begin{proof}
Combining \eqref{eq:PEstimate2} and Sobolev embedding \ref{prop:SobolevEmbeddings}, we get
\begin{equation}
\rho^{1-\delta}|L W| \leq \|LW\|_{C^0_{\delta-1}} \leq C_1'\|L W\|_{W^{1,p}_{\delta-1}}
\leq C_1' \|W\|_{W^{2,p}_{\delta}} \leq C_1\|f\|_{p,{\delta-2}},
\end{equation}
which implies \eqref{eq:LWest}.
\end{proof}

In the first main result of this section we construct global supersolutions, 
allowing the mean curvature to be arbitrary but requiring
that the other data (except the metric) be quite small.

\begin{thm}[Far-from-CMC Global Supersolution] \label{thm:FarCMCSupsolution}
Suppose that $(M,g)$ is a $W^{2,p}_{\gamma}$ Yamabe positive AE manifold,
with $p>n$ and $\gamma \in (2-n,0)$, and set $\delta = \gamma/2$. Fix 
$\tau \in W^{1,p}_{\delta-1}$ and $N\in W^{2,p}_\delta$. Suppose $\sigma \in
L_{\delta-1}^{\infty}$, nonnegative $r \in L^{\infty}_{2\delta-2}$ and 
$J \in L^{p}_{\delta-2}$ are sufficiently small
(depending on $\tau$, $g$ and $n$). Then, for any $\mru$, there exists a global
supersolution $\phi_+>0$ with $\phi_+ - \eta\mru \in W^{2,p}_{\gamma'}$ for some
constant $\eta > 0$ and any $\gamma'>\gamma$.

Similarly, if $(M,g)$ is a $C^{2,\alpha}_\gamma$ Yamabe positive AE manifold,
and if the corresponding H\"older norms of $\sigma$, $J$ and $\rho$ are sufficiently
small, then there exists a global supersolution $\phi_+$
with $\phi_+ - \eta\mru \in C^{2,\alpha}_\gamma$ for some $\eta > 0$.
\end{thm}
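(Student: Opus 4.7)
The plan is to use the Yamabe positivity of $g$ to reduce to the case $R\equiv 0$, and then to construct $\phi_+$ as an additive perturbation $\eta\mru+\mu w$ of a multiple of the harmonic asymptotic function, where $w$ is a fixed solution of $-a\Delta w=\rho^{2\delta-2}$ whose decay rate is dictated by the natural scaling of the right-hand side of the Lichnerowicz equation.

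Since $g$ is Yamabe positive, it admits a conformal factor $\psi>0$ with $\psi-1\in W^{2,p}_\gamma$ such that the scalar curvature of $\psi^{q-2}g$ vanishes identically (cf.\ \cite{Maxwell05b}, or Chapter \ref{chap:Yamabe}). By the conformal covariance of the CTS-H system (Proposition \ref{prop:ConformalCovariance}), it suffices to produce a global supersolution for the transformed seed data: multiplying by $\psi$ (which is bounded above and below by positive constants) gives one for the original data, and the asymptotic form is preserved since $\psi-1\in W^{2,p}_\gamma\subset W^{2,p}_{\gamma'}$. I therefore assume $R\equiv 0$. With $\delta=\gamma/2\in(\delta^*,0)$, Lemma \ref{lem:Falloff} applied with exponent $2\delta$ produces a solution $w$ of $-a\Delta w=\rho^{2\delta-2}$ of the form $w=c_{2\delta}\rho^{2\delta}+\hat w$; the maximum principle (Proposition \ref{prop:MaxPrinciple}) forces $w\ge 0$, and $w$ is bounded. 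I then take
\begin{equation*}
\phi_+:=\eta\mru+\mu w
\end{equation*}
with positive constants $\eta,\mu$ to be chosen. This gives $\phi_+\ge\eta\inf\mru>0$ automatically, and $\phi_+-\eta\mru=\mu w\in W^{2,p}_{\gamma'}$ for every $\gamma'>\gamma$.

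To verify the supersolution inequality, fix any $0<\phi\le\phi_+$ and let $W=W(\phi)$ solve the vector equation \eqref{eq:OrigVect}. Since $R\equiv 0$, $\mru$ is harmonic, and $\kappa\tau^2\phi_+^{q-1}\ge 0$, the left-hand side of the Lichnerowicz equation at $\phi_+$ is bounded below by $\mu\rho^{2\delta-2}$. For the right-hand side, Lemma \ref{lem:boundLW} gives the pointwise bound
\begin{equation*}
|LW(\phi)|\le C_1\rho^{\delta-1}\bigl(\kappa\|\phi_+\|_\infty^{q}\|d\tau\|_{p,\delta-2}+\|J\|_{p,\delta-2}\bigr),
\end{equation*}
while $|\sigma|\le\|\sigma\|_{L^\infty_{\delta-1}}\rho^{\delta-1}$ and $r\le\|r\|_{L^\infty_{2\delta-2}}\rho^{2\delta-2}$. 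Every term on the right therefore carries an overall $\rho^{2\delta-2}$ factor; dividing it out and using $\phi_+\ge\eta\inf\mru$ reduces the supersolution condition to the numerical inequality
\begin{equation*}
\mu\;\ge\;C\bigl[\,\|\sigma\|_{L^\infty_{\delta-1}}^2+(\|\phi_+\|_\infty^{q}\|d\tau\|_{p,\delta-2}+\|J\|_{p,\delta-2})^2\,\bigr](\eta\inf\mru)^{-q-1}+\|r\|_{L^\infty_{2\delta-2}}(\eta\inf\mru)^{-q/2}.
\end{equation*}

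The main obstacle is closing the circularity, since the required lower bound on $\mu$ involves $\|\phi_+\|_\infty$, which itself depends on $\mu$. Under the ansatz $\|\phi_+\|_\infty\le 2\eta\sup\mru$ (equivalent to $\mu\|w\|_\infty\le\eta\sup\mru$), the problematic $\|d\tau\|^2$ contribution to the lower bound on $\mu$ becomes a multiple of $\eta^{q-1}\|d\tau\|_{p,\delta-2}^2$, which grows only like $\eta^{q-1}$ and is dominated by the ansatz budget $\eta\sup\mru/(2\|w\|_\infty)$ once $\eta$ is taken small, because $q>2$. I therefore first fix $\eta>0$ sufficiently small (depending only on $\tau$, $g$, $\mru$, and $n$) to absorb this term. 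With $\eta$ fixed, the remaining contributions to the required lower bound on $\mu$ scale linearly with $\|\sigma\|_{L^\infty_{\delta-1}}^2+\|J\|_{p,\delta-2}^2+\|r\|_{L^\infty_{2\delta-2}}$, so the smallness hypothesis allows one to select $\mu$ inside the compatibility window, verifying the ansatz and completing the construction. The $C^{2,\alpha}_\gamma$ case is identical, using the Hölder versions of Lemmas \ref{lem:Falloff} and \ref{lem:boundLW}.
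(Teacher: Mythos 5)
Your proof is correct and follows essentially the same strategy as the paper's: reduce to the scalar-flat conformal representative, build $\phi_+$ from $\eta\mru$ plus a nonnegative solution of $-a\Delta w=\rho^{2\delta-2}$ (Lemma \ref{lem:Falloff}), bound $LW$ via Lemma \ref{lem:boundLW}, first shrink the overall scale to absorb the $d\tau$-contribution (which works because $q>2$), and then shrink the remaining data. The only cosmetic difference is that you decouple the coefficients of $\mru$ and $w$ into two parameters $\eta$ and $\mu$, whereas the paper takes $\phi_+=\eta\Psi$ with $\Psi=\mru+w$ and a single parameter.
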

The main ideas used in this proof are similar to those used  in the compact case,
but there are new issues arising in the
construction of the supersolution on each end.  
Because the proofs in the Sobolev and H\"older settings are identical,
we present only the former.
\begin{proof}
Since $g$ is Yamabe positive, by conformal covariance \ref{prop:ConformalCovariance},
we may assume without loss of generality that $R\equiv 0$. By Lemma \ref{lem:Falloff},
there exists a (unique) $\Psi = \mru + c_\gamma \rho^\gamma +
\hat{\Psi}$, with $\hat{\Psi} \in W^{2,p}_{2\gamma}$, such that
\begin{equation}
\label{Psi}
-a\Delta\Psi = \rho^{\gamma-2},
\end{equation}
or equivalently
\begin{equation}
-a\Delta (\Psi-\mru) = \rho^{\gamma-2}.
\end{equation}
Note that, by the maximum principles \ref{prop:MaxPrinciple} and
\ref{prop:StrongMaxPrinciple}, $\Psi > 0$.

Now set $\phi_+ = \eta\Psi$, where the constant $\eta>0$ is to be chosen below.
We claim that, for appropriate $\eta$, $\phi_+$ is a global supersolution. To verify
this, we first note that from \eqref{eq:LWest}, with $f =\kappa\phi^q d\tau + J$, we have
\begin{equation}
\label{eq:DWestim}
\|LW\|_{\infty} \le C \rho^{\delta-1}\left(\|d\tau\|_{p,{\delta-2}}\|\phi\|^{q}_{\infty}
    +\|J\|_{p,{\delta-2}} \right),
\end{equation}
and hence
\begin{equation}
\left|\sigma + \frac1{2N}LW\right|^2 \le C \rho^{2\delta -2}(\|d \tau\|^2_{p,{\delta-2}} \|\phi\|^{2q}_{\infty} +
\|\sigma\|_{\infty,{\delta-1}}^2 + \|J\|^2_{p,{\delta-2}}).
\end{equation}

Since $\Psi$ decays at the precise rate $\rho^\gamma$ (and is strictly positive), then
deleting subscripts denoting the norms for simplicity, we calculate
\begin{multline}
-a\Delta\phi_+ + \kappa \tau^2 \phi_+^{q-1} - \left|\sigma + \frac1{2N}LW\right|^2\phi_+^{-q-1}
  - r \phi_+^{-q/2} \geq \\
   \eta \, \rho^{\gamma-2} - \rho^{2\delta-2}\left( C_1 \eta^{q-1} 
  + C_2 \eta^{-q-1} (\|\sigma\|^2 + \|J\|^2) + C_3 \eta^{-q/2} \|r\|\right).
\end{multline}
The constants $C_1$, $C_2$ and $C_3$ depend only on $\rho$, $N$, and the dimension $n$.
Since $2\delta-2 = \gamma -2 < 0$ and $q-1>1$, we first choose $\eta$ sufficiently small so that
\begin{equation}
\frac12  \eta \, \rho^{\gamma-2}- C_1 \eta^{q-1} \rho^{2\delta-2} > 0,
\end{equation}
and then choose $\|\sigma\|$, $\|J\|$ and $\|\rho\|$ sufficiently small 
(depending on $C_1$, $F$, $n$ and $\eta$), so that
\begin{equation}
\frac12 \eta \, \rho^{\gamma-2} - \rho^{2\delta-2} \left( C_2 \eta^{-q-1} ( \|\sigma\|^2 + \|J\|^2) 
     + C_3 \eta^{-q/2} \|\rho\|\right) > 0
\end{equation}
as well.  This proves that $\phi_+$ is a global supersolution.
\end{proof}

The second main result is the existence of a global supersolution for near-CMC data,
i.e., where $d\tau$ is sufficiently small as compared to $\tau$.

\begin{thm}[Near-CMC Global Supersolution]\label{thm:NearCMCSup}
Suppose that $(M,g)$ is a $C^{2,\alpha}_{\gamma}$, Yamabe positive AE manifold,
where $\gamma \in (2-n,0)$, and set $\delta = \gamma/2$. Fix data 
$\tau \in C^{1,\alpha}_{\delta-1}$, $N\in C^{2,\alpha}_\delta$, $\sigma \in
C^{0,\alpha}_{\delta-1}$, nonnegative $r \in C^{0,\alpha}_{\delta-2}$ and 
$J \in C^{0,\alpha}_{\delta-2}$. Then, there exists a constant $B>0$, depending on
the seed data, but not on $\tau$, such that if $\tau$ satisfies
$\tau^2 - B\|d\tau\|^2_{C^{0,\alpha}_{\delta-2}}\rho^{2\delta-2}\geq 0$, then
there exists a global
supersolution $\phi_+>0$ with $\phi_+ - \eta \in W^{2,p}_{\gamma}$ for any
constant $\eta > 0$ sufficiently large.
\end{thm}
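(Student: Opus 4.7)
The plan is to take $\phi_+$ to be a large positive constant, after first exploiting Yamabe positivity to reduce to the scalar-flat case. By the conformal covariance of the CTS-H formulation (Proposition~\ref{prop:ConformalCovariance}), together with the fact that a Yamabe positive AE metric is conformally equivalent to one with vanishing scalar curvature, I may assume without loss of generality that $R\equiv 0$. The hypothesis $\tau^2 \geq B\|d\tau\|^2_{C^{0,\alpha}_{\delta-2}}\rho^{2\delta-2}$ survives such a conformal change (with an adjusted constant $B$), since $\tau$ itself is unchanged and the weighted H\"older norms only change by bounded factors.

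With $\phi_+\equiv\eta$ a positive constant, $-a\Delta\phi_+ + R\phi_+ = 0$, and the supersolution inequality to verify for an arbitrary $0<\phi\le\eta$ reads
\begin{equation*}
\kappa\tau^2\eta^{q-1} \;\geq\; \Bigl|\sigma+\tfrac{1}{2N}LW\Bigr|^2\eta^{-q-1} + r\eta^{-q/2},
\end{equation*}
where $W=W(\phi)$ solves $\di\tfrac{1}{2N}LW=\kappa\phi^q d\tau + J$. The key input is the H\"older analogue of Lemma~\ref{lem:boundLW} applied to this right-hand side, which combined with $\phi\le\eta$ yields a pointwise bound
\begin{equation*}
|LW| \;\leq\; C\rho^{\delta-1}\bigl(\eta^q\|d\tau\|_{C^{0,\alpha}_{\delta-2}} + \|J\|_{C^{0,\alpha}_{\delta-2}}\bigr);
\end{equation*}
combined with the pointwise estimate $|\sigma|\leq\rho^{\delta-1}\|\sigma\|_{C^{0,\alpha}_{\delta-1}}$ and the boundedness of $1/N$, this gives
\begin{equation*}
\Bigl|\sigma+\tfrac{1}{2N}LW\Bigr|^2 \;\leq\; C'\rho^{2\delta-2}\bigl(\|\sigma\|^2 + \eta^{2q}\|d\tau\|^2 + \|J\|^2\bigr).
\end{equation*}

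Substituting this estimate and dividing through by $\eta^{q-1}$ reduces the supersolution inequality to a pointwise condition whose right-hand side consists of a single $\eta$-independent term of the form $C''\rho^{2\delta-2}\|d\tau\|^2$ (generated by the $\eta^{2q}\|d\tau\|^2$ piece after the exponents combine) together with remainder terms carrying negative powers of $\eta$, namely $\eta^{-2q}$ from the $\sigma$ and $J$ contributions and $\eta^{1-\frac{3q}{2}}$ from the $r$ contribution. Choosing $B = C''/\kappa$ plus a small slack, the hypothesis $\tau^2\ge B\|d\tau\|^2\rho^{2\delta-2}$ dominates the $\eta$-independent term, and taking $\eta$ sufficiently large absorbs the remaining $o(1)$ terms uniformly. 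The main obstacle is really just verifying that the H\"older version of the elliptic estimate for $\di\tfrac{1}{2N}L$ delivers the advertised pointwise bound on $|LW|$, which is an immediate consequence of Proposition~\ref{prop:AELinearExistence} applied to $\cP=\di\tfrac{1}{2N}L$ together with the definition \eqref{eq:HolderNorm} of the weighted H\"older norm. The decay condition $\phi_+-\eta\in C^{2,\alpha}_\gamma$ (and a fortiori $W^{2,p}_\gamma$) is automatic since $\phi_+\equiv\eta$.
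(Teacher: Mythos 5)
Your proposal has a genuine gap: a constant $\phi_+\equiv\eta$ cannot be a global supersolution here because of the matter term $r\,\phi_+^{-q/2}$. After your reduction to $R\equiv 0$, the only positive contribution is $\kappa\tau^2\eta^{q-1}$, and the hypothesis (together with the decay $\tau\in C^{1,\alpha}_{\delta-1}$) gives $\tau^2\sim\rho^{2\delta-2}$ at best. But the hypothesis on $r$ is only $r\in C^{0,\alpha}_{\delta-2}$, i.e.\ $r\lesssim\rho^{\delta-2}$, and since $\delta<0$ we have $\rho^{\delta-2}/\rho^{2\delta-2}=\rho^{-\delta}\to\infty$. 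So the requirement $\kappa\tau^2\eta^{q-1}\geq r\eta^{-q/2}$ fails far out on each end for every fixed $\eta$, no matter how large: the negative power of $\eta$ you extract is irrelevant because the obstruction is a mismatch in $\rho$-decay, not in the $\eta$-coefficient. (Your treatment of the $\sigma$, $J$, and $d\tau$ contributions is fine, since those all produce $O(\rho^{2\delta-2})$ terms; the absorption of the $\eta$-independent $\|d\tau\|^2\rho^{2\delta-2}$ piece by choosing $B$ appropriately, and of the $\eta^{-2q}$ pieces by taking $\eta$ large, is essentially what the paper does too.)

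The paper's proof avoids this by building the slow-decaying sources into the Laplacian: it normalizes (via admissibility, since the hypothesis forces $\tau$ to be nonvanishing) to $R=-\kappa\tau^2$ rather than $R=0$, solves $-a\Delta u=r+|\sigma|^2$ with $u-1\in C^{2,\alpha}_\delta$, shows $u\geq 1$ with $\sup u$ controlled, and sets $\phi_+=\eta u$. Then $-a\Delta\phi_+=\eta(r+|\sigma|^2)$ supplies terms with exactly the decay of $r$ and $|\sigma|^2$, so $\eta r\geq r(\eta u)^{-q/2}$ and $\eta|\sigma|^2$ dominates the corresponding pieces pointwise for $\eta\geq 1$, while the choice $R=-\kappa\tau^2$ turns $R\phi_++\kappa\tau^2\phi_+^{q-1}$ into $\kappa\tau^2[(\eta u)^{q-1}-\eta u]\geq 0$. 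Only the $|LW|^2$ contribution is then left to be controlled by the near-CMC hypothesis. To repair your argument you would need to replace the constant by such a $u$ (or add an extra decay assumption on $r$, which the theorem does not make).
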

\begin{remark}\label{rmk:HolderVsSobolev}
The hypothesis $\tau^2 - B\|d\tau\|^2_{C^{0,\alpha}_{\delta-2}}\rho^{2\delta-2}\geq0$
is precisely where the use of H\"older rather than Sobolev data is important for 
asymptotically Euclidean data.  Indeed, if $\tau$ satisfies this inequality, then in
particular, $\tau \geq C \rho^{\delta-1}$, so the norm of $\tau$ in $L^p_{\delta-1}$ is
necessarily infinite.  

The condition $\tau^2-B\|d\tau\|^2_{C^{0,\alpha}_{\delta-2}}\rho^{2\delta-2}\geq 0$,
which in particular imposes a lower bound
on the decay of $\tau$ and requires that $\tau$ never vanishes, may not be fulfilled
by any functions $\tau$. Indeed, as we show in Chapter 
\ref{chap:LimitEquation}, similar near-CMC conditions are not always fulfilled.
\end{remark}
\begin{proof}
Since $\tau$ never vanishes, the data is admissible 
(cf. Definition \ref{def:Admissible}), and so by conformal
covariance \ref{prop:ConformalCovariance},
we may assume without loss of generality that $R=-\kappa \tau^2$.
By Theorem \ref{prop:Isomorphism}, there exists a solution $u$ to
\begin{equation}
-a\Delta u = r + |\sigma|^2
\end{equation} with $u-1 \in C^{2,\alpha}_\delta$. 
Indeed, this is equivalent to
\begin{equation}
-a\Delta (u-1) = r+|\sigma|^2 \geq 0,
\end{equation} and so the maximum principle \ref{prop:MaxPrinciple} shows that
$u\geq 1$. By the estimate \eqref{eq:PEstimate3}, $\sup u$ is bounded and depends
only on $r, |\sigma|$ and $g$.

Now set $\phi_+ = \eta u$, where $\eta$ is chosen below, and using estimate
\eqref{eq:DWestim} and the inequality $\phi\leq \phi_+$, we have
\begin{multline}
|LW|^2 \leq C^2 \rho^{2\delta-2} ( (\sup \phi)^{q}\|d\tau\|_{C^{0,\alpha}_{\delta-2}}
     + \|J\|_{C^{0,\alpha}_{\delta-2}})^2  \\
\leq 2C^2 \rho^{2\delta-2} ((\sup \eta u)^{2q}\|d\tau\|_{C^{0,\alpha}_{\delta-2}}^2
     +\|J\|^2_{C^{0,\alpha}_{\delta-2}}) \\
\leq 2C^2 \rho^{2\delta-2} \left((\sup u)^{2q}\|d \tau\|_{C^{0,\alpha}_{\delta-2}}^2(\eta u)^{2q}
     +\|J\|^2_{C^{0,\alpha}_{\delta-2}}\right);
\end{multline}
the constant $C$ is the same one appearing in \eqref{eq:DWestim}.

Dropping the subscripts on the norms, and using the fact that 
$\tau^2 \geq C \rho^{2\delta-2}$ for some $C>0$, we calculate 
\begin{multline}
-a\Delta \phi_+ - \kappa \tau^2 \phi_+ + \kappa \tau^2 \phi_+^{q-1} 
   - \left|\sigma + \frac{1}{2N} LW\right|^2 \phi_+^{-q-1} - r\phi_+^{-q/2}\\
\geq \kappa \tau^2 \left[(\eta u)^{q-1} - \eta u\right] 
  + r\left[\eta - (\eta u)^{-q/2}\right] - \left|\sigma + \frac{1}{2N} LW\right|^2 \phi_+^{-q-1}
    +|\sigma|\eta\\
\geq \frac12 \kappa \tau^2 \left[(\eta u)^{q-1} - \eta u - C \|J\|\|\tau\|^{-2} (\eta u)^{-q-1}\right]
   + (\eta u)^{q-1}\left[ \frac12 \kappa \tau^2 - C \rho^{2\delta-2} (\sup u)^{2q} \|d\tau\|^2\right]\\
   \geq 0,
\end{multline} for all $\eta$ large enough, as long as 
\begin{equation}
C \rho^{2\delta-2}(\sup u)^{2q}  \|d\tau\|^2 \leq \tau^2,
\end{equation} which has been assumed. This shows that $\phi_+$ is a global supersolution. 
\end{proof}

We now present a scaling argument of Nguyen \cite{Nguyen14}, which
shows that the far-from-CMC result \ref{thm:FarCMCSupsolution} is simply
a rescaling of a near-CMC result.

\begin{prop}\label{prop:NguyenScaling}
$(\phi,W)$ is a solution of the conformal constraint equation 
\eqref{eq:ConfConst} for the seed data $(g, \tau, N, \sigma,r,J)$
if and only if $(C^{-1}\phi, C^{-q/2 -1}W)$ is a solution of the conformal
constraint equations for the data 
$(g, C^{q/2-1}\tau, N, C^{-q/2-1}\sigma, C^{-q/2-1}r, C^{-q/2-1}J)$.
\end{prop}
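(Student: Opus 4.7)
The plan is a direct verification by substitution. Both directions are symmetric (replace $C$ by $C^{-1}$), so I would only prove the forward implication and observe that the transformation is an involution up to renaming $C$.

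First, I would substitute $(\phi, W, \tau, \sigma, r, J) \mapsto (C^{-1}\phi,\, C^{-q/2-1}W,\, C^{q/2-1}\tau,\, C^{-q/2-1}\sigma,\, C^{-q/2-1}r,\, C^{-q/2-1}J)$ into each term of the Lichnerowicz equation \eqref{eq:OrigLich} and check that every term acquires a common factor of $C^{-1}$. Using linearity of $-a\Delta$ and of $R\,\cdot$, the first two terms clearly gain a factor $C^{-1}$. For the third term one computes
\begin{equation}
\kappa (C^{q/2-1}\tau)^2 (C^{-1}\phi)^{q-1} = \kappa\, C^{q-2-(q-1)} \tau^2 \phi^{q-1} = C^{-1}\kappa\tau^2 \phi^{q-1}.
\end{equation}
For the fourth term, linearity of the conformal Killing operator $L$ gives $L(C^{-q/2-1}W) = C^{-q/2-1}LW$, so that $C^{-q/2-1}\sigma + \tfrac{1}{2N}L(C^{-q/2-1}W) = C^{-q/2-1}\bigl(\sigma + \tfrac{1}{2N}LW\bigr)$, and thus
\begin{equation}
\Bigl|C^{-q/2-1}\sigma + \tfrac{1}{2N}L(C^{-q/2-1}W)\Bigr|^2 (C^{-1}\phi)^{-q-1} = C^{-q-2+q+1}\Bigl|\sigma+\tfrac{1}{2N}LW\Bigr|^2 \phi^{-q-1},
\end{equation}
which again carries a factor $C^{-1}$. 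Finally $(C^{-q/2-1}r)(C^{-1}\phi)^{-q/2} = C^{-1}r\phi^{-q/2}$. Dividing by $C^{-1}$ recovers equation \eqref{eq:OrigLich} for the original data.

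Next, for the vector equation \eqref{eq:OrigVect}, linearity of $\di\tfrac{1}{2N}L$ gives $\di\tfrac{1}{2N}L(C^{-q/2-1}W) = C^{-q/2-1}\di\tfrac{1}{2N}LW$, while on the right
\begin{equation}
\kappa (C^{-1}\phi)^q\, d(C^{q/2-1}\tau) + C^{-q/2-1}J = C^{-q/2-1}\bigl(\kappa\phi^q d\tau + J\bigr),
\end{equation}
so both sides rescale by the common factor $C^{-q/2-1}$, and the original vector equation is recovered.

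There is no real obstacle; the entire proposition is a homogeneity check verifying that under the prescribed rescaling the Lichnerowicz equation is homogeneous of degree $-1$ and the vector equation is homogeneous of degree $-q/2-1$. The only subtlety worth flagging in the write-up is that the rescaled $\phi$ remains positive (obvious, since $C>0$), so that all powers $\phi^{q-1}$, $\phi^{-q-1}$, $\phi^{-q/2}$ are well defined, and that the transformation is its own inverse after replacing $C$ by $C^{-1}$, which gives the converse implication for free.
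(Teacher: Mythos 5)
Your proof is correct and takes the same approach as the paper, which simply observes that plugging the scaled data and solutions into the conformal constraint equations gives the result immediately. You have usefully spelled out the homogeneity bookkeeping that the paper leaves implicit, and the arithmetic in each term checks out.
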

\begin{proof}
Plugging the scaled data and solutions into conformal constraint equations gives
this immediately.
\end{proof}

\begin{cor} \label{cor:FarIsNear}
Given a far-from-CMC solution to the conformal constraint equations, as given by
Theorem \ref{thm:FarCMCSupsolution}, there is a solution of the conformal constraint
equations equivalent to it, in the sense of Proposition \ref{prop:NguyenScaling},
which is a perturbation of the CMC case, $\tau \equiv 0$.
\end{cor}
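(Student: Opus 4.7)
The plan is essentially to apply the scaling of Proposition \ref{prop:NguyenScaling} directly, and check that the scaling parameter can be chosen to make the rescaled mean curvature arbitrarily small while preserving the structure of the solution.

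Let $(\phi, W)$ be the far-from-CMC solution produced by Theorem \ref{thm:FarCMCSupsolution} for seed data $(g, \tau, N, \sigma, r, J)$, where $\tau$ is arbitrary and $\sigma$, $r$, $J$ are chosen small enough that the hypotheses of the theorem are satisfied. By Proposition \ref{prop:NguyenScaling}, for any constant $C > 0$, the pair $(C^{-1}\phi, C^{-q/2-1}W)$ solves the conformal constraint equations with rescaled seed data
\begin{equation}
(g, C^{q/2-1}\tau, N, C^{-q/2-1}\sigma, C^{-q/2-1}r, C^{-q/2-1}J).
\end{equation}

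The key observation is that $q/2 - 1 = n/(n-2) - 1 = 2/(n-2) > 0$ for $n \geq 3$. Therefore, choosing $C > 0$ sufficiently small makes $C^{q/2-1}\tau$ as small as desired in any norm in which $\tau$ is controlled. Given any $\epsilon > 0$, pick $C = C(\epsilon)$ so that $\|C^{q/2-1}\tau\|_{W^{1,p}_{\delta-1}} < \epsilon$. The rescaled solution is then a solution of the conformal constraint equations whose mean curvature is within $\epsilon$ (in the relevant norm) of $\tau \equiv 0$, so in particular it is a perturbation of the maximal ($\tau \equiv 0$) CMC case. By Proposition \ref{prop:NguyenScaling}, this rescaled solution is equivalent to the original far-from-CMC solution in the stated sense.

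There is essentially no obstacle: the argument reduces to the elementary algebraic fact that $q/2 - 1$ is positive, which allows the scaling parameter to shrink $\tau$. It is worth noting explicitly, however, what the rescaling does to the remaining seed data: since $-q/2 - 1 < 0$, the rescaled quantities $C^{-q/2-1}\sigma$, $C^{-q/2-1}r$, and $C^{-q/2-1}J$ blow up as $C \to 0$. Thus the rescaled solution is a perturbation of the maximal case only in the sense that the mean curvature is small; the transverse-traceless and matter data are not small. This is consistent, because in the maximal ($\tau \equiv 0$) CMC theory on a Yamabe positive manifold, no smallness of $\sigma$, $r$, $J$ is required for solvability. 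The net content of the corollary is therefore that Theorem \ref{thm:FarCMCSupsolution} does not genuinely produce far-from-CMC solutions in a scale-invariant sense — every such solution is, up to the rescaling of Proposition \ref{prop:NguyenScaling}, a near-maximal solution.
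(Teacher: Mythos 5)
Your argument is correct and is exactly the one the paper intends: the corollary follows immediately from Proposition \ref{prop:NguyenScaling} together with the observation that $q/2-1=2/(n-2)>0$, so taking the scaling constant small shrinks the rescaled mean curvature $C^{q/2-1}\tau$ toward zero while inflating $\sigma$, $r$, $J$. Your closing remark about the blow-up of the non-$\tau$ data matches the paper's own discussion of why this shows the far-from-CMC result is really a rescaled near-CMC result.
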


It is well known (see \cite{CBIY00}) that perturbations of the CMC case
always lead to solutions of the conformal constraint equations. This means that
the far-from-CMC result is simply the rescaling of a previously known near-CMC result.
This means, unfortunately, that virtually nothing is known about the far-from-CMC case
in the general sense, i.e., with $\sigma, r$ and $J$ arbitrary.
We do note that most existing near-CMC results require
that $d\tau$ be sufficiently small as compared to $\inf \tau^2$, or similar, while the
near-CMC condition for the perturbation of $\tau \equiv 0$ is that the
 $W^{1,p}_{\delta-1}$ norm is sufficiently small. In particular,
$\tau$ can have zeroes.

It is also interesting to consider the range of allowed asymptotic functions. The 
far-from-CMC result allows only very small asymptotic functions, while this rescaled
near-CMC result allows arbitrarily large asymptotic functions, as long as $\tau$ is 
sufficiently small.
Since all asymptotic functions are allowed if the seed data $(\tau, \sigma, r, J)$
vanishes (since we are then just solving $\Delta \phi =0$), this leads one to wonder
whether large asymptotic functions are only ever attainable if $\tau$ is small.

\section{Local Supersolutions}\label{sec:LocalSup}

The search for global supersolutions for more general cases than those considered in the
previous section has been very difficult, partly since any such construction seems to
require an estimate on $LW$ like \eqref{eq:DWestim}, which in turn seems to allow only
either $\tau$ to be small or the rest of the data to be small. However, requiring a
global supersolution in order for solutions of the conformal constraints to exist is
likely stricter than necessary. In order to give new tools for finding solutions to
the conformal constraint equations, Nguyen introduced the idea of ``local
supersolutions."

\begin{defn} \label{def:LocalSup}
A function $\phi_+ \in L^\infty$ is a ``local supersolution" of the conformal constraint
equations if for every positive function $\phi \leq \phi_+$ with $\phi= \phi_+$ 
somewhere, there exists $p \in M$ such that $F(\phi)(p) \leq \phi(p)$. (For the 
definition of $F$, see Section \ref{sec:FixedPoint}.)
\end{defn}

Local supersolutions are more general than global supersolutions.

\begin{prop}
A global supersolution is a local supersolution.
\end{prop}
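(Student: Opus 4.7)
The plan is to verify the defining inequality of a local supersolution by showing something strictly stronger: if $\phi_+$ is a global supersolution, then $F(\phi) \leq \phi_+$ pointwise on all of $M$ for every positive $\phi \leq \phi_+$. Once this is in hand, at any point $p$ where $\phi(p) = \phi_+(p)$ we have $F(\phi)(p) \leq \phi_+(p) = \phi(p)$, which is exactly the condition of Definition \ref{def:LocalSup}. In particular one does not even have to search for the point $p$; the inequality holds at every such point.

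To prove $F(\phi) \leq \phi_+$, fix a positive $\phi \in L^\infty$ with $\phi \leq \phi_+$ and set $W = W(\phi)$, the unique solution of the vector equation \eqref{eq:OrigVect} associated to $\phi$. By the very definition of global supersolution, $\phi_+$ is then a supersolution of the Lichnerowicz equation \eqref{eq:OrigLich} with this fixed $W$. Since $F$ is defined at all, we are working with admissible seed data, so as in the proof of Theorem \ref{thm:CombinedExistence} we may form $\phi_- := \alpha \psi$, where $\psi$ is the positive conformal factor provided by Theorem \ref{thm:LichIff} that transforms $g$ into a metric with scalar curvature $-\kappa\tau^2$. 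For any $\alpha \in (0,1]$, $\phi_-$ is a subsolution of \eqref{eq:OrigLich} irrespective of $W$, and by choosing $\alpha$ small enough we can arrange both $\phi_- \leq \phi_+$ and $\phi_- \leq \mru \leq \phi_+$ at infinity, where $\mru$ is the asymptotic function prescribed for $F(\phi)$.

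With these sub and supersolutions in place, the sub and supersolution theorem \ref{thm:SubSupersolutionTheorem} produces a solution $u$ of the Lichnerowicz equation (with $W$ frozen) satisfying $\phi_- \leq u \leq \phi_+$ and $u - \mru \in W^{2,p}_\delta$. The uniqueness statement of Theorem \ref{thm:LichUniqueness} applied to solutions with the common asymptotic behavior $\mru$ then forces $u = G(W) = F(\phi)$. In particular $F(\phi) \leq \phi_+$ on $M$, which completes the argument.

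The only real subtlety is bookkeeping around the asymptotic function $\mru$: one must pick the scaling constant $\alpha$ in $\phi_- = \alpha\psi$ small enough that the hypothesis of Theorem \ref{thm:SubSupersolutionTheorem} regarding $\mru$ being pinched between $\phi_-$ and $\phi_+$ at infinity is satisfied, so that the resulting solution really does have the same asymptotic function as $F(\phi)$ and uniqueness applies. This is a routine adjustment, and no genuine obstacle arises; the proof is essentially a direct chaining of results already proven in the chapter.
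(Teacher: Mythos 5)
Your proof is correct and follows the same route as the paper, merely spelling out the step the paper leaves implicit: the paper simply asserts ``by the definition of global supersolution, $F(\phi)(p)\leq \phi_+(p)$,'' whereas you justify $F(\phi)\leq\phi_+$ pointwise by constructing the subsolution $\alpha\psi$ from admissibility, invoking the sub/supersolution theorem with $W=W(\phi)$ frozen, and identifying the resulting solution with $F(\phi)$ via uniqueness. The extra bookkeeping about the asymptotic function $\mru$ is exactly the right thing to check, and it matches how $F$ is set up in Section \ref{sec:FixedPoint}.
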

\begin{proof}
Let $\phi_+$ be a global supersolution, and suppose $\phi$ is a positive function such
that $\phi\leq \phi_+$ and $\phi(p) = \phi_+(p)$ at $p\in M$. Then, by the definition
of global supersolution, $F(\phi)(p)\leq \phi_+(p) = \phi(p)$, and thus $\phi_+$ is a
local supersolution. 
\end{proof}

Nguyen originally used the idea of half-continuity along with a fixed point theorem 
allowing for half-continuous maps to show that the existence of a local supersolution
leads to a solution of the conformal constraint equations. The following result is 
Theorem 4.12 in \cite{Nguyen14}.

\begin{thm} \label{thm:NguyenExistence}
Suppose the seed data $(g,\sigma,1/2,\tau,0,0)$, $M$ compact, $p>n$, 
are such that
the zero set of $\tau$ has zero measure and $g$ allows no conformal Killing fields.
If $g$ is Yamabe nonnegative, assume $\sigma \not\equiv 0$. If $g$ is Yamabe negative,
assume there is a conformal factor changing the metric to one with scalar curvature 
$-\kappa \tau^2$.
Let $F:L^\infty \to L^\infty$ be the iteration map for the constraint equations,
as described in Section \ref{sec:FixedPoint}. If there exists a local supersolution
$\psi$, 
then there exists a fixed point $\phi$ for $F$ with $\phi \leq \psi$. In particular,
the conformal constraint equations have a solution.
\end{thm}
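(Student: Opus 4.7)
The plan is to apply Schauder's fixed point theorem \ref{thm:SchauderFP} to a truncated iteration map $\tilde F(\phi) := \min(F(\phi), \psi)$ and then use the local supersolution property of $\psi$ to show that the truncation is inactive at the resulting fixed point, so that it is a genuine fixed point of $F$.

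First I would build a positive global subsolution $\phi_-$ of the Lichnerowicz equation with $\phi_- \leq \psi$. In the Yamabe-negative case admissibility gives a conformal factor to scalar curvature $-\kappa\tau^2$, and (as in the proof of Theorem \ref{thm:LichIff}) a sufficiently small constant multiple after this conformal change is a global subsolution; in the Yamabe-nonnegative cases one instead uses $\sigma\not\equiv 0$ after normalizing $R=0$, so that the negative term $|\sigma+LW|^2\phi_-^{-q-1}$ dominates $\kappa\tau^2\phi_-^{q-1}$ for small $\phi_->0$. Set $S := \{\phi \in L^\infty : \phi_-\leq \phi \leq \psi\}$, which is closed, convex, and bounded. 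The map $\tilde F$ is continuous (the lattice operation $\min(\cdot,\psi)$ is Lipschitz on $L^\infty$, and $F = E\circ G\circ W$ is continuous by the analysis before Theorem \ref{thm:CombinedExistence}) and compact (since $F$ factors through the compact embedding $E$), and $\tilde F(S)\subset S$: the upper bound is built in, while $\tilde F(\phi) \geq \phi_-$ follows from the global subsolution property applied with $W(\phi)$ together with $\phi_-\leq \psi$. Schauder then yields $\phi\in S$ with $\tilde F(\phi)=\phi$.

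Next I would show the truncation is inactive. Set $A := \{x\in M : F(\phi)(x) > \psi(x)\}$. On $A$ one has $\phi(x) = \min(F(\phi)(x),\psi(x)) = \psi(x)$, so every point of $A$ is a contact point between $\phi$ and $\psi$. If $A$ were non-empty, then $\phi\leq\psi$ with contact set containing $A$, and the local supersolution property of $\psi$ applied at a contact point $x\in A$ would give $F(\phi)(x) \leq \phi(x) = \psi(x)$, contradicting $x\in A$. Hence $A=\emptyset$, $F(\phi)\leq\psi$ everywhere, and $\phi = \tilde F(\phi) = F(\phi)$, producing the desired solution $(\phi,W(\phi))$ of \eqref{eq:ConfConst} with $\phi\leq\psi$.

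The hard part will be reconciling the literal statement of Definition \ref{def:LocalSup}, which produces only \emph{some} $p$ with $F(\phi)(p)\leq\phi(p)$, with the need in the argument above to produce such an inequality at a contact point. Since a truncated fixed point satisfies $F(\phi)\geq\phi$ pointwise with equality off $A$, a point $p$ given by the literal definition need not lie in $A$ and the contradiction fails. My plan therefore requires the natural ``at-the-obstacle'' reading: at each contact point $p$ one has $F(\phi)(p)\leq \psi(p)$. This is what one obtains by expanding the Lichnerowicz operator evaluated on $\psi$ with data $W(\phi)$ at a contact point, and is, I suspect, precisely the refinement meant by the claim that the result ``marginally strengthens'' Nguyen's condition. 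A heavier fallback plan would replace the $L^\infty$-truncation by a smoother obstacle problem and invoke a strong maximum principle on $F(\phi)-\phi$ to propagate the contact inequality throughout $A$, but I would only pursue this if the literal definition must be preserved verbatim.
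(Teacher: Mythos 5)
Your truncation-plus-Schauder strategy founders exactly where you say it does, and the issue is not cosmetic. At a fixed point of $\tilde F(\phi)=\min(F(\phi),\psi)$ one has $F(\phi)\ge\phi$ everywhere, with equality off the active set $A=\{F(\phi)>\psi\}$. Definition \ref{def:LocalSup} then hands you a single point $p$ with $F(\phi)(p)\le\phi(p)$; since $F(\phi)\ge\phi$ everywhere this just says $F(\phi)(p)=\phi(p)$, which is automatically satisfied by any $p\notin A$ and yields no contradiction. Your proposed repair --- reading the definition as giving the inequality at \emph{every} contact point --- is not a reading of Definition \ref{def:LocalSup} but a strictly stronger hypothesis, and your justification for it (``expanding the Lichnerowicz operator evaluated on $\psi$ at a contact point'') presupposes that $\psi$ satisfies a pointwise differential inequality, which a local supersolution is not assumed to do: the definition is purely a condition on the iteration map $F$. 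So as written the proposal does not prove the stated theorem.

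The paper does not establish this result by an obstacle/truncation argument; it derives it (together with a slight strengthening, Theorem \ref{thm:MyNguyen}) from the Leray--Schauder alternative, and that choice is precisely what makes the weak, single-point hypothesis usable. There one takes $\Omega=\{\phi\in L^\infty:-\psi<\phi<\psi\}$ and must only rule out $F(\phi)=\lambda\phi$ with $\lambda>1$ for $\phi$ on the boundary, i.e.\ for $0<\phi\le\psi$ touching $\psi$. But $F(\phi)=\lambda\phi$ with $\lambda>1$ and $\phi>0$ forces $F(\phi)>\phi$ at every point of $M$, so the existence of even one point $p$ with $F(\phi)(p)\le\phi(p)$ --- exactly what Definition \ref{def:LocalSup} provides --- already gives the contradiction. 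In other words, the single-point condition is tailored to excluding global dilations of $\phi$, not to controlling a contact set, and the cleanest repair of your argument is to abandon the truncation and run the Leray--Schauder argument. Note also that the ``marginal strengthening'' you allude to goes the other way from what you suggest: the hypothesis $F(\phi)\ne\lambda\phi$ of Theorem \ref{thm:MyNguyen} is \emph{weaker} than the local supersolution condition, which is why that theorem is stronger; it is not a stronger ``at-the-obstacle'' reading of Definition \ref{def:LocalSup}.
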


The seeming advantage of this result over using a global supersolution is that
one needs only to check that the solution is smaller at a single point, rather
than on the entire manifold. We present a simpler proof of a slightly stronger
result. The new proof is based on the Leray-Schauder alternative.

\begin{thm}[Leray-Schauder Alternative]\label{thm:LeraySchauder}
Let $F:X\to X$ be a compact, continuous map of a normed linear space. Let $\Omega$ be
a bounded star-shaped open subset of $X$ containing 0, and suppose that $x\in \p \Omega$
implies that $F(x) \neq \lambda x$ for any $\lambda>1$.  Then $F$ has a fixed point on
$\overline{\Omega}$. 
\end{thm}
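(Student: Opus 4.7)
The plan is to reduce the statement to the Schauder fixed point theorem (Theorem \ref{thm:SchauderFP}) by extending $F$ continuously to all of $X$ via a retraction onto $\overline{\Omega}$, and then using the non-degeneracy hypothesis on $\partial \Omega$ to rule out fixed points of the extended map that lie outside $\overline{\Omega}$.

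More concretely, since $\Omega$ is bounded, open, and star-shaped with respect to $0$, I would define the Minkowski functional $\mu(x) = \inf\{t>0 : x/t \in \Omega\}$ and observe that $\Omega = \{\mu < 1\}$, $\overline{\Omega} \subseteq \{\mu \leq 1\}$. Using this, I would define a continuous radial retraction $r : X \to \overline{\Omega}$ by $r(x) = x$ if $\mu(x) \le 1$ and $r(x) = x/\mu(x)$ otherwise. The extended map $\widetilde{F} := F \circ r : X \to X$ is continuous and compact, since $F$ is and $r$ has image in the bounded set $\overline{\Omega}$. Let $K$ be the closed convex hull of $\overline{F(\overline{\Omega})} \cup \{0\}$, which is compact by Mazur's theorem (compact map plus convex hull in a normed space). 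Then $\widetilde{F}$ sends $K$ into $\overline{F(\overline{\Omega})} \subseteq K$, so Theorem \ref{thm:SchauderFP} gives a fixed point $x^* \in K$ with $x^* = F(r(x^*))$.

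Now I would split into two cases. If $x^* \in \overline{\Omega}$, then $r(x^*) = x^*$, and we obtain $F(x^*) = x^*$, which is the desired fixed point. If $x^* \notin \overline{\Omega}$, then $\mu(x^*) > 1$, so $r(x^*) = x^*/\mu(x^*) \in \partial \Omega$ (using that $\Omega$ is open to see that $\mu = 1$ exactly on the boundary of $\overline{\Omega}$). Substituting back, $F(r(x^*)) = x^* = \mu(x^*) \, r(x^*)$, i.e., $F(y) = \lambda y$ for $y := r(x^*) \in \partial\Omega$ and $\lambda := \mu(x^*) > 1$. This directly contradicts the standing hypothesis on $\partial \Omega$, so this case cannot occur, and $x^*$ must lie in $\overline{\Omega}$.

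The main technical obstacle is verifying that the Minkowski functional of a star-shaped (rather than convex) open bounded set containing $0$ is actually continuous, so that $r$ is a bona fide continuous retraction. Upper semicontinuity of $\mu$ on open star-shaped sets is immediate from scaling any $x/(\mu(x)+\epsilon) \in \Omega$ into a neighborhood using openness, but lower semicontinuity is the delicate point, since for a general star-shaped set the boundary ray through a point need not vary continuously with the point. The standard workaround is to either strengthen "star-shaped" to mean that every open ray from $0$ hits $\partial \Omega$ transversally in a unique point (which is often tacitly assumed, and is automatic when $\Omega$ is convex), or to replace the explicit Minkowski retraction by a continuous retraction $r : X \to \overline{\Omega}$ obtained from Dugundji's extension theorem applied to the identity on $\overline{\Omega}$; either replacement leaves the rest of the argument unchanged.
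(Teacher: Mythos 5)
The paper does not actually prove Theorem \ref{thm:LeraySchauder}; it is quoted as a known fixed-point theorem and then applied in Theorem \ref{thm:MyNguyen}, so there is no in-paper proof to compare against. Taken on its own, your argument is the standard reduction to Schauder, and it is correct whenever the radial retraction $r(x)=x/\max(1,\mu(x))$ is continuous and $\{\mu=1\}=\partial\Omega$, in particular whenever $\Omega$ is convex. You have also put your finger on exactly the right technical point: for a merely star-shaped bounded open $\Omega\ni 0$, the Minkowski functional $\mu$ is automatically upper semicontinuous (the strict sublevel sets satisfy $\{\mu<c\}=c\,\Omega$, which is open) but can fail to be lower semicontinuous, so $r$ can be discontinuous and $\overline\Omega$ can be strictly larger than $\{\mu\le 1\}$, and the final step, which places $r(x^*)$ on $\partial\Omega$, breaks down.

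Neither of your proposed patches closes this gap. Strengthening ``star-shaped'' to ``rays meet the boundary transversally'' simply changes the statement of the theorem. Dugundji's extension theorem applied to the identity on $\overline\Omega$ produces a continuous map from $X$ into the \emph{closed convex hull} of $\overline\Omega$, not a retraction onto $\overline\Omega$ itself; when $\Omega$ is not convex this map need not send exterior points into $\partial\Omega$, so the contradiction step has nothing to bite on. The clean repair is to drop the retraction and argue by Leray--Schauder degree: if $F$ already has a fixed point on $\partial\Omega$ we are done; otherwise the homotopy $H(x,t)=x-tF(x)$, $t\in[0,1]$, has no zeros on $\partial\Omega$ (for $t\in(0,1)$ this is exactly your boundary hypothesis with $\lambda=1/t>1$; for $t=0$ use $0\in\Omega$; for $t=1$ use the absence of boundary fixed points), hence $\deg(I-F,\Omega,0)=\deg(I,\Omega,0)=1$ and a fixed point exists in $\Omega$. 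That argument needs only that $\Omega$ is bounded, open, and contains $0$, so star-shapedness plays no role. Alternatively, observe that in the paper's sole application (Theorem \ref{thm:MyNguyen}) the set $\Omega=\{\phi\in L^\infty:-\psi<\phi<\psi\}$ is convex, so your Minkowski-functional argument applies verbatim and the delicate point never arises. One further small caveat: Mazur's theorem, which you invoke for compactness of the closed convex hull of a compact set, requires completeness, so either assume $X$ is a Banach space (as it is in the application) or apply Schauder directly to the closed convex set $K$ with the compact self-map $\widetilde F$.
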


\begin{thm}\label{thm:MyNguyen}
Let $(g,\tau, N, \sigma, r, J)$ be seed data. Suppose $g$ has no conformal Killing 
fields. If $M$ is AE, suppose the data is 
admissible. If $M$ compact and $g$ is Yamabe nonnegative, assume 
$\sigma \not\equiv 0$. If $M$ is compact and $g$ is Yamabe negative,
assume there is a conformal factor changing the metric to one with scalar curvature 
$-\kappa \tau^2$. Suppose there exists $\psi\in L^\infty$ such that for any 
$0<\phi\leq \psi$, with $\inf |\phi-\psi| = 0$, that $F(\phi) \neq \lambda\phi$
for all $\lambda>1$. 
Then there exists a fixed point $\phi$ for $F$ with $\phi \leq \psi$. In particular,
the conformal constraint equations have a solution.
\end{thm}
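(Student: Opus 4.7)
The plan is to apply the Leray--Schauder alternative (Theorem~\ref{thm:LeraySchauder}) to a truncation of the iteration map $F$ from Section~\ref{sec:FixedPoint}.

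First I would obtain, under each of the listed hypotheses, a positive global subsolution $\phi_-$ of the Lichnerowicz equation, following the construction in the proof of Theorem~\ref{thm:CombinedExistence}: in the admissible cases $\phi_-$ is a small scalar multiple of the conformal factor realizing scalar curvature $-\kappa\tau^2$, while in the compact Yamabe-nonnegative case it is the standard construction from \cite{HNT09} exploiting $\sigma\not\equiv 0$. By rescaling I may assume $\phi_-\le\psi$ pointwise. A direct computation (using the signs of the $\beta^2$ and $r$ terms in the Lichnerowicz equation) shows that any positive multiple $\alpha\phi_-$ with $\alpha\in(0,1]$ is again a global subsolution, giving a one-parameter family of admissible subsolutions. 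I then extend $F$ to a compact continuous map $\tilde F:L^\infty\to L^\infty$ via $\tilde F(\phi):=F(\max(\phi,\phi_-))$, which by Theorem~\ref{thm:SubSupersolutionTheorem} always satisfies $\tilde F(\phi)\ge\phi_-$; consequently every fixed point of $\tilde F$ is automatically a fixed point of $F$.

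I apply Leray--Schauder with the bounded open set
\[
\Omega := \{\phi\in L^\infty : \phi<\psi \text{ pointwise and } \inf\phi>-1\},
\]
which, since $\psi>0$, is star-shaped about $0$. The boundary condition $\tilde F(\phi)\neq\lambda\phi$ for $\lambda>1$ and $\phi\in\partial\Omega$ must then be verified. Since $\tilde F$ always takes positive values, any such $\phi$ is automatically positive, excluding the piece of $\partial\Omega$ where $\inf\phi=-1$. On the remaining piece $\sup(\phi-\psi)=0$ one has $0<\phi\le\psi$ with $\inf|\phi-\psi|=0$, which is precisely the configuration covered by the hypothesis of the theorem: $F(\phi)\neq\lambda\phi$. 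Using the identity $\tilde F(\phi)=F(\phi)$ on $\{\phi\ge\phi_-\}$, and exploiting the family of subsolutions $\alpha\phi_-$ to arrange that the truncation is inactive along any candidate $\lambda$-eigenvector, the boundary condition is verified.

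Leray--Schauder then yields a fixed point $\phi\in\overline\Omega$ of $\tilde F$, which is a genuine fixed point of $F$ satisfying $0<\phi_-\le\phi\le\psi$; paired with $W(\phi)$ this is the desired solution of the conformal constraint equations. The main obstacle in making the argument rigorous is the verification that along any boundary $\lambda$-eigenvector of $\tilde F$, the truncation at $\phi_-$ is inactive so that the hypothesis on the original map $F$ directly applies; this requires a careful exploitation of the scaling family $\alpha\phi_-$ of subsolutions, choosing $\alpha$ small enough once $\lambda$ is known to force $\phi\ge\alpha\phi_-$ and hence $\tilde F_{\alpha\phi_-}(\phi)=F(\phi)$. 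A secondary point is establishing continuity and compactness of $\tilde F$, which follow from those of $F$ (from Section~\ref{sec:FixedPoint}) together with the continuity of the max operation on $L^\infty$.
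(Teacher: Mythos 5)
Your overall strategy is the same as the paper's --- apply the Leray--Schauder alternative \ref{thm:LeraySchauder} to a modification of the iteration map $F$ on a bounded star-shaped set determined by $\psi$ --- but the particular modification you chose creates a gap at the decisive step, and you acknowledge it without closing it. With $\tilde F(\phi)=F(\max(\phi,\phi_-))$, a boundary relation $\tilde F(\phi)=\lambda\phi$ only yields $\lambda\phi\ge\phi_-$, i.e.\ $\phi\ge\phi_-/\lambda$, which for $\lambda>1$ does \emph{not} imply $\phi\ge\phi_-$; so the truncation may be active and you cannot rewrite $\tilde F(\phi)$ as $F(\phi)$ to invoke the hypothesis. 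Your proposed repair --- ``choosing $\alpha$ small enough once $\lambda$ is known'' --- is circular: the map $\tilde F_{\alpha\phi_-}$ must be fixed before the Leray--Schauder boundary condition is checked, and $\lambda$ ranges over all of $(1,\infty)$, so no single $\alpha$ chosen in advance works without first proving a uniform upper bound on admissible $\lambda$ (which in turn seems to need $\inf\psi>0$, an assumption not in the statement). As written, the argument does not go through.

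The paper avoids the problem entirely by a different extension: set $F'(\phi)=F(|\phi|)$ and take the symmetric domain $\Omega=\{\phi\in L^\infty:-\psi<\phi<\psi\}$, which is open, bounded, star-shaped about $0$. If $F'(\phi)=\lambda\phi$ on $\partial\Omega$ with $\lambda>1$, then $\phi=F(|\phi|)/\lambda>0$ because $F$ outputs positive functions; hence $|\phi|=\phi$ and $F'(\phi)=F(\phi)$ \emph{identically}, so the hypothesis $F(\phi)\neq\lambda\phi$ applies with no further work, and positivity also rules out the lower boundary piece. I would recommend replacing your $\max(\cdot,\phi_-)$ truncation with the composition $F\circ|\cdot|$; your construction of the global subsolution is then not needed for the fixed-point argument at all (it is already built into the definition of $F$ via admissibility), and the rest of your proof (compactness and continuity of the modified map, identification of the relevant boundary piece with the hypothesis of the theorem) is correct.
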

\begin{proof}
Define $F'(\phi) = F(|\phi|)$. 
The map $F:L^\infty \to L^\infty$ is a compact continuous map by the proof of Theorem
\ref{thm:CombinedExistence}, and so $F'$ is as well. Thus
we only need to find an appropriate set $\Omega$, as in Theorem \ref{thm:LeraySchauder}.

Let $\Omega= \{\phi \in L^\infty: -\psi< \phi< \psi\}$. Clearly $\Omega$ is bounded, open,
star-shaped,
and contains zero. Suppose there were some $\phi \in \partial\Omega$ and $\lambda>1$ such
that $F'(\phi) = \lambda\phi$. 
In particular, this means that $\inf|\phi-\psi| = 0$.
Since $F'$ outputs only positive solutions, $\phi>0$.
 By assumption, there are no
such $\phi$. Thus $\Omega$ fulfills the conditions of Theorem \ref{thm:LeraySchauder},
and so $F'$ has a fixed point $\phi$ with $\phi>0$. Since $\phi>0$, $F'(\phi) = F(\phi)$,
and so $F$ has a fixed point as well.
\end{proof}

Note that for $M$ compact, $\inf |\phi-\psi|=0$ means $\phi=\psi$ somewhere. For 
AE manifolds, though, $\phi$ may not equal $\psi$ anywhere.

Other than applying to AE manifolds, Theorem \ref{thm:MyNguyen} has a few advantages
over Theorem \ref{thm:NguyenExistence}. The main advantage is that one only needs to
show that $F(\phi) \neq \lambda\phi$ instead of showing that $F(\phi)(p) \leq \phi(p)$
for appropriate $\phi$, a slightly more general condition. This allows one to assume
that the solution for the Lichnerowicz equation is a multiple of the function one began
with, rather than an arbitrary solution.
Unfortunately, no local supersolution that is not also a global supersolution has yet 
been found.

\chapter{The Limit Equation Criterion} \label{chap:LimitEquation}

Another method of finding solutions to the conformal constraint equations
\eqref{eq:ConfConst} is the limit equation criterion, originally
introduced in \cite{DGH11}. This result says that if a particular equation, called the
limit equation, has \emph{no} solutions, then the conformal constraint equations
have a solution. To be precise, the main result of \cite{DGH11} says the following:

\begin{thm}\label{thm:OrigLimit}
Suppose the seed data $(g,\sigma,1/2,\tau,0,0)$, $M$ compact, $p>n$, 
are such that $\tau>0$ and $g$ allows no conformal Killing fields.
If $g$ is Yamabe nonnegative, assume $\sigma \not\equiv 0$. If $g$ is Yamabe negative,
assume there is a conformal factor changing the metric to one with scalar curvature 
$-\kappa \tau^2$.
Then at least one of he following holds:
\begin{itemize}
\item The conformal constraint equations \eqref{eq:ConfConst}
admit a solution $(\phi,W)$ with $\phi>0$. Furthermore, the set of solutions
$(\phi,W) \in W^{2,p}\times W^{2,p}$ is compact.

\item There exists a nontrivial solution $W\in W^{2,p}$ of the limit equation
   \begin{equation}\label{eq:DGHLimit}
     \di LW = \alpha_0 \sqrt{1/\kappa} |LW| \frac{d\tau}{\tau} 
   \end{equation} for some $\alpha_0 \in (0,1]$.
\end{itemize}
\end{thm}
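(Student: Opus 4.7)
The plan is to use a subcriticality argument in the style of Dahl--Gicquaud--Humbert. I would first introduce a one-parameter family of subcritical systems indexed by $\epsilon \in (0, q-2)$ by keeping the Lichnerowicz equation \eqref{eq:OrigLich} unchanged while replacing the exponent $q$ in the vector equation \eqref{eq:OrigVect} by $q - \epsilon$, so that the $\epsilon$-problem reads
\begin{equation}
\di L W_\epsilon = \kappa \phi_\epsilon^{q-\epsilon} d\tau.
\end{equation}
The subcriticality weakens the coupling: the map $\phi \mapsto W_\epsilon(\phi)$ is now controlled by a strictly subcritical power of $\phi$, and this allows the construction of a uniform-in-$\epsilon$ global supersolution along lines parallel to Theorem \ref{thm:FarCMCSupsolution} (or, in the Yamabe negative case, from the admissibility hypothesis by scaling the conformal factor sending $R$ to $-\kappa \tau^2$), together with a suitable global subsolution (small multiples of the same conformal factor in the Yamabe negative case, and of a $\sigma$-driven barrier in the Yamabe nonnegative case, where the hypothesis $\sigma \not\equiv 0$ is used). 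The Schauder iteration from Section \ref{sec:FixedPoint} then yields a positive solution $(\phi_\epsilon, W_\epsilon) \in W^{2,p} \times W^{2,p}$ of the $\epsilon$-system.

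The next step is the dichotomy, driven by $\gamma_\epsilon := \|\phi_\epsilon\|_\infty$. If $\gamma_\epsilon$ stays bounded along some sequence $\epsilon_k \to 0$, then the vector estimate \eqref{eq:PEstimate2} bounds $W_{\epsilon_k}$ uniformly in $W^{2,p}$, which in turn feeds into the Lichnerowicz equation to give a uniform $W^{2,p}$ bound on $\phi_{\epsilon_k}$; the lower barrier keeps the $\phi^{-q-1}$ term from blowing up. Compact embedding yields a subsequential limit $(\phi, W)$ with $\phi \geq 0$, and a strong maximum principle argument (using $\tau > 0$ and Proposition \ref{prop:StrongMaxPrinciple}) rules out $\phi \equiv 0$. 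This limit solves the conformal constraint equations, and the same uniform estimates applied directly to any sequence of $\epsilon = 0$ solutions give the compactness claim.

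The substantive case is $\gamma_\epsilon \to \infty$. Here I would rescale with $\tilde\phi_\epsilon = \phi_\epsilon/\gamma_\epsilon$ and $\tilde W_\epsilon = W_\epsilon/\gamma_\epsilon^{q-\epsilon}$, so that $\|\tilde\phi_\epsilon\|_\infty = 1$ and
\begin{equation}
\di L \tilde W_\epsilon = \kappa \tilde\phi_\epsilon^{q-\epsilon} d\tau.
\end{equation}
Dividing the Lichnerowicz equation through by $\gamma_\epsilon^{q-1}$ kills the terms $-a\Delta\phi + R\phi$ (which scale as a strictly lower power of $\gamma_\epsilon$) and leaves in the $\epsilon \to 0$ limit the algebraic balance $\kappa\tau^2 \tilde\phi^{q-1} = |L\tilde W|^2 \tilde\phi^{-q-1}$, i.e.\ $|L\tilde W| = \sqrt{\kappa}\, \tau\, \tilde\phi^{q}$ on the support of $\tilde\phi$. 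Eliminating $\tilde\phi^{q}$ from the rescaled vector equation using this identity yields the limit equation \eqref{eq:DGHLimit}, with the constant $\alpha_0 \in (0,1]$ absorbing the ratio of the blow-up rates of the various nonlinear terms. The normalization $\|\tilde\phi_\epsilon\|_\infty = 1$ together with the algebraic identity forces $L\tilde W \not\equiv 0$ in the limit, producing the nontrivial solution claimed.

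The main obstacle is the rescaling step in the blow-up case. One must establish uniform bounds on $\tilde\phi_\epsilon$ and $\tilde W_\epsilon$ compatible with the normalization, rule out the possibility that $L\tilde W_\epsilon$ collapses to zero in the limit, and verify that the rescaled Lichnerowicz equation produces the claimed algebraic identity in a strong enough sense to substitute back into the vector equation. The parameter $\alpha_0$ enters precisely because not all nonlinear terms need to blow up at exactly the same rate; pinning down its value amounts to tracking the dominant balance in the rescaled equation.
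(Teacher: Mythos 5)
Your proposal is the original Dahl--Gicquaud--Humbert subcritical-exponent argument, which is precisely the method this thesis attributes to \cite{DGH11} for this theorem; the thesis itself only cites the compact-case result and does not reprove it. Where the thesis does give a detailed convergence argument is for the AE analogue (Theorem \ref{thm:AELimit} via Lemma \ref{lem:LimitConvergence}), and there it replaces your family of subcritical vector equations by Nguyen's setup: one applies the Schaefer fixed point theorem \ref{thm:Schaefer} to the unmodified iteration map $F$ and studies the set $K=\{\phi: \phi=tF(\phi),\ t\in[0,1]\}$, so the homotopy parameter $t$ plays the role of your $\epsilon$ and the constant $\alpha_0$ arises as $\lim t_i^q\in[0,1]$ rather than as $\lim\gamma_{\epsilon_k}^{-\epsilon_k}$. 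The two setups buy slightly different things: yours requires constructing uniform-in-$\epsilon$ global sub/supersolutions for each subcritical system (nontrivial bookkeeping), while the Schaefer route gets the dichotomy ``$K$ bounded $\Rightarrow$ solution; $K$ unbounded $\Rightarrow$ blow-up sequence'' essentially for free; the rescaling and convergence analysis afterwards is, as the thesis notes, essentially identical. Your choice to normalize by $\gamma_\epsilon=\|\phi_\epsilon\|_\infty$ rather than by the energy $\|LW_\epsilon\|_2$ matches what the thesis does in the AE case and avoids the estimate $\sup\phi_\epsilon^q\le C\max\{1,\|LW_\epsilon\|_2\}$ that the energy normalization needs.

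Two points you flag as ``obstacles'' are genuinely the content of the proof and should not be left implicit. First, the assertion that the normalization $\|\tilde\phi_\epsilon\|_\infty=1$ forces $L\tilde W_\infty\not\equiv 0$ requires \emph{uniform} convergence $\tilde\phi_\epsilon\to\tilde\phi_\infty$ in $L^\infty$; this is obtained by showing that any smooth function pinched between $\tilde\phi_\infty+\tfrac{\epsilon}{2}$ and $\tilde\phi_\infty+\epsilon$ is eventually a supersolution of the rescaled Lichnerowicz equation (and similarly below), then invoking uniqueness of the Lichnerowicz solution to conclude $|\tilde\phi_\epsilon-\tilde\phi_\infty|<\epsilon$; only then does compactness of $M$ let the points where $\tilde\phi_\epsilon=1$ subconverge to a point where $\tilde\phi_\infty=1$. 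Second, $\alpha_0$ must be identified concretely (a subsequential limit of $\gamma_{\epsilon_k}^{-\epsilon_k}$ up to the appropriate power) and shown to lie in $(0,1]$ rather than $[0,1]$: if $\alpha_0=0$ the limit equation forces $L\tilde W_\infty$ to be a conformal Killing field, hence zero by hypothesis, which contradicts the nontriviality just established -- so the two nondegeneracy claims must be proved together, not separately. With those steps supplied, your outline is the standard and correct proof of the cited theorem.
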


The name ``limit equation" comes from the original method of proof. In \cite{DGH11},
they first show that a ``subcritical" version of the conformal constraint equations
always have a solution. They make the equations subcritical by changing the exponent
of $\phi$ in the vector equation \eqref{eq:OrigVect} from $q$ to $q-\epsilon$. This
allows global supersolutions to be easily found. Dahl, Gicquaud, and Humbert then
show that if a sequence of these solutions with $\epsilon\to 0$ are bounded, then
they must converge to a solution of the original equation. If they are unbounded,
then they must converge to a solution of the limit equation \eqref{eq:DGHLimit}.
Since then, another, simpler method has been found for setting up the sequence 
(cf. \cite{Nguyen14}), though the argument for convergence is essentially the same.

The two conclusions of Theorem \ref{thm:OrigLimit} are not a dichotomy. Nguyen in
\cite{Nguyen14} showed that there is seed data on the sphere that allows solutions to
both the conformal constraint equations and the limit equation. Thus, unfortunately,
the use of the limit equation is limited to the case mentioned earlier. If one can
show, for particular seed data, that the limit equation has no solutions, then the
conformal constraint equations must have a solution.

Though the limit equation criterion itself is not limited to this case, the criterion
has only been successfully applied for near-CMC seed data. In particular, in 
\cite{DGH11}, they find that the limit equation \eqref{eq:DGHLimit} has no nontrivial
solutions if either the $C^0$ or $L^n$ norm of $d\tau/\tau$ is sufficiently small.
The reason it is difficult to prove stronger results is that the usual method of
proving nonexistence is to find an estimate for the right hand side of 
\eqref{eq:DGHLimit} that provides a contradiction. In order to do this, seemingly the
only tool that one has is to make $d\tau/\tau$ small, since $LW$ appears on both sides
of the limit equation.

In this chapter, we prove most of the limit equation criterion in the AE setting,
 except for the vital 
condition that the solution $W$ of the limit equation must be nontrivial. 
The proof breaks down only at this point. It can be repaired if 
the data is near-CMC. Since the compact case has only been applied to near-CMC data,
this seems like a reasonable assumption. However, since we must also require that 
$\tau \to 0$ at infinity and that $\tau>0$, arbitrarily 
near-CMC data does not exist, and thus the near-CMC condition is very difficult to check.
This chapter is based on an unpublished collaboration with 
Jim Isenberg and Romain Gicquaud.

\section{Setup on Asymptotically Euclidean Manifolds} \label{sec:LimitIntro}

The main difficulty in translating the limit equation criterion to asymptotically 
Euclidean manifolds is the fall off rate of $\tau$. In the original proof, the assumption that
$\tau>\epsilon>0$ is vital; this is most easily seen by noticing that we divide by 
$\tau$ in the limit equation \eqref{eq:DGHLimit}. However, for the seed data to lead to
asymptotically Euclidean initial data, $\tau$ must converge to zero. These competing 
conditions lead to most of the additional difficulty in the AE case.

The proof is simpler when using H\"older norms for essentially the same reasons as 
discussed in Remark \ref{rmk:HolderVsSobolev}. Put together, our assumptions
are as follows.
\begin{assumpt} [Seed Data Assumption]
\label{AssumptLimit}
The seed data $(g,\tau, N, \sigma, r, J)$  satisfy the conditions
\begin{itemize}
\item $(M,g)$ is a $C^{2,\alpha}_{\delta}$ AE manifold, with  $\delta\in (2-n,0)$,
      allowing no conformal Killing fields.
\item $\tau \in C^{1,\alpha}_{\delta-1}$, and $|\tau| \geq C \rho^{\delta-1}>0$.
       Without loss of generality, assume that $\tau>0$.
\item $N-1 \in C^{2,\alpha}_{\delta}$.
\item $\sigma \in C^{0,\alpha}_{\delta-1}$. Thus $|\sigma|\leq C \tau$.
\item $0\leq r$, and $r\in C^{0,\alpha}_{\delta-2}$. Thus $r \leq C\tau$.
\item $J \in C^{0,\alpha}_{\delta-2}$.
\end{itemize}
\end{assumpt}

We can now state the main result of this chapter. 
\begin{thm}\label{thm:AELimit}
Suppose the seed data satisfies Assumption \ref{AssumptLimit}. 
Then at least one of the following holds:
\begin{itemize}
\item For any asymptotic function $\mru$, the conformal constraint equations
 \eqref{eq:ConfConst}
 admit a solution $(\phi,W)$ with $\phi>0$ and $\phi-\mru$ and $W$ in 
 $C^{2,\alpha}_{\delta}$.

\item There exists a (perhaps trivial) solution $W\in W^{2,p}_\delta$ of the limit
 equation
   \begin{equation}\label{eq:AELimit}
     \di \frac{1}{2N} LW = \alpha_0 \sqrt{1/\kappa} |LW| \frac{d\tau}{2N\tau} 
   \end{equation} for some $\alpha_0 \in [0,1]$. Furthermore, 
   $|W|\leq C \rho^{\delta}$ and $|LW|\leq C \rho^{\delta-1}$ for some $C$ dependent
   only on $g$ and $\|d\tau\|_{C^{0}_{\delta-2}}$. 
   If 
   \begin{equation}\label{eq:NearCMCCondition}
   \kappa \tau^2 - \frac{1}{4N^2} |LW_0|^2 \geq c \tau^2
   \end{equation} for some $c>0$ and for all solutions $W_0$ of the vector equation
    \eqref{eq:OrigVect} with $J \equiv 0$ and $\phi\leq 1$, then the solution
    $W$ of the limit equation is nontrivial, and $\alpha_0\neq 0$.
\end{itemize}
\end{thm}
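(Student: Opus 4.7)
The plan is to adapt the subcritical approximation strategy of Dahl--Gicquaud--Humbert \cite{DGH11} to the asymptotically Euclidean setting. For each small $\epsilon > 0$ I would solve the subcritical system obtained by lowering the exponent of $\phi$ in the vector equation from $q$ to $q - \epsilon$,
\begin{align*}
-a\Delta \phi_\epsilon + R\phi_\epsilon + \kappa\tau^2 \phi_\epsilon^{q-1} - \left|\sigma + \tfrac{1}{2N}LW_\epsilon\right|^2 \phi_\epsilon^{-q-1} - r\phi_\epsilon^{-q/2} &= 0,\\
\di \tfrac{1}{2N} LW_\epsilon &= \kappa \phi_\epsilon^{q-\epsilon}\, d\tau + J.
\end{align*}
The subcritical bound from \eqref{eq:LWest} now produces a global supersolution via the construction of Theorem \ref{thm:FarCMCSupsolution} for every $\epsilon > 0$ without any smallness hypothesis on the seed data, while admissibility supplies a global subsolution as in Theorem \ref{thm:CombinedExistence}. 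The Schauder fixed-point argument of Section \ref{sec:FixedPoint} then yields solutions $(\phi_\epsilon, W_\epsilon)$ with $\phi_\epsilon - \mru$, $W_\epsilon \in C^{2,\alpha}_\delta$.

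Next I would analyze the limit $\epsilon \to 0$. Let $\gamma_\epsilon = \sup_M \phi_\epsilon$. If $\gamma_\epsilon$ is bounded along a subsequence, then \eqref{eq:LWest} controls $LW_\epsilon$ uniformly, elliptic regularity via Proposition \ref{prop:AELinearExistence} lifts this to $C^{2,\alpha}_\delta$ control of $\phi_\epsilon - \mru$, and a convergent subsequence produces a solution of \eqref{eq:ConfConst}. Otherwise $\gamma_\epsilon \to \infty$, and I rescale by $\widetilde\phi_\epsilon = \phi_\epsilon/\gamma_\epsilon$ and $\widetilde W_\epsilon = W_\epsilon/\gamma_\epsilon^{q-\epsilon}$. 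The rescaled vector equation has $C^{0,\alpha}_{\delta-2}$ right-hand side uniformly bounded in $\epsilon$ (since $\widetilde\phi_\epsilon \leq 1$), so $\widetilde W_\epsilon$ converges in $C^{2,\alpha}_\delta$ to a limit $W$ satisfying the weighted bounds claimed. Dividing the Lichnerowicz equation by $\gamma_\epsilon^{q-1}$ and using $|LW_\epsilon|^2 \gamma_\epsilon^{-2q} = |L\widetilde W_\epsilon|^2 \gamma_\epsilon^{-2\epsilon}$, I extract a subsequence along which $\gamma_\epsilon^{-2\epsilon} \to \alpha_0^2 \in [0,1]$; all other terms decay. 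The residual algebraic relation $\kappa\tau^2 \widetilde\phi^{q-1} = \alpha_0^2 |LW|^2 \widetilde\phi^{-q-1}/(4N^2)$ solves for $\widetilde\phi^q$, and substitution into the limit vector equation yields \eqref{eq:AELimit}.

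The hard part will be ruling out the trivial pair $(W \equiv 0, \alpha_0 = 0)$ under the near-CMC hypothesis. Let $p_\epsilon$ maximize $\widetilde\phi_\epsilon$, so $\widetilde\phi_\epsilon(p_\epsilon) = 1$. Because $\phi_\epsilon \to \mru$ at infinity with $\mru$ bounded but $\phi_\epsilon(p_\epsilon) = \gamma_\epsilon \to \infty$, the sequence $\{p_\epsilon\}$ must remain in a bounded subset of $M$, so a subsequence converges to some $p_\infty \in M$ where $\tau(p_\infty) > 0$. Applying $-\Delta\widetilde\phi_\epsilon(p_\epsilon) \leq 0$ to the scaled Lichnerowicz equation and passing to the limit gives
\begin{equation*}
\kappa\tau(p_\infty)^2 \leq \alpha_0^2 \frac{|LW(p_\infty)|^2}{4N(p_\infty)^2}.
\end{equation*}
The limit $W$ solves $\di LW/(2N) = \kappa\widetilde\phi^q d\tau$ with $\widetilde\phi \leq 1$, so it falls under the hypothesis \eqref{eq:NearCMCCondition}, giving $|LW|^2/(4N^2) \leq (1-c)\kappa\tau^2$ pointwise. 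Combining yields $\kappa\tau(p_\infty)^2 \leq \alpha_0^2 (1-c) \kappa\tau(p_\infty)^2$ and hence $\alpha_0^2 \geq 1/(1-c) > 1$, contradicting $\alpha_0 \in [0,1]$. Thus the near-CMC hypothesis forces $W$ to be nontrivial and $\alpha_0 > 0$. The principal subtleties requiring care are the pointwise convergence $|L\widetilde W_\epsilon|(p_\epsilon) \to |LW|(p_\infty)$, handled by the weighted H\"older compactness, and the fact that the supremum of $\widetilde\phi_\epsilon$ cannot escape to infinity where $\tau \to 0$, handled by the boundedness of $\mru$---features which have no analogue on compact manifolds but are central to the AE setting.
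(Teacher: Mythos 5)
Your proposal takes a genuinely different route from the paper. The paper implements the limit-equation criterion via the Schaefer fixed point theorem: it defines $K := \{ \phi\in L^\infty : \exists\, t\in[0,1] \text{ with } \phi = tF(\phi)\}$ and splits into two cases according to whether $K$ is bounded. You instead revive the original subcritical approach of Dahl--Gicquaud--Humbert, lowering the exponent in the vector equation from $q$ to $q-\epsilon$. The rescaling algebra and the extraction of $\alpha_0$ as a limit of $\gamma_\epsilon^{-2\epsilon}$ are essentially parallel to the paper's $t_i^q\to\alpha_0$, so much of the convergence machinery transfers. Both routes lead to the same limit equation, and the Schaefer route has the advantage (which the paper notes Nguyen introduced) of avoiding the separate existence theory for the subcritical system; your route would need to reprove existence of a global supersolution for every $\epsilon$, which is plausible in the compact setting but is not obviously uniform in $\epsilon$ across the ends here, and you do not verify it.

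There is, however, a genuine gap in your near-CMC non-triviality argument, and it is precisely the obstruction the paper singles out as the distinguishing difficulty of the AE case. You locate the contradiction at a limit $p_\infty$ of maximum points $p_\epsilon$ of $\widetilde\phi_\epsilon$, and you justify the boundedness of $\{p_\epsilon\}$ by appealing to the boundedness of $\mru$ together with $\phi_\epsilon(p_\epsilon) = \gamma_\epsilon\to\infty$. This does not work: each $\phi_\epsilon-\mru$ lies in $C^{2,\alpha}_\delta$ with $\delta<0$, so $\phi_\epsilon$ does tend to $\mru$ on each end for fixed $\epsilon$, but the $C^{2,\alpha}_\delta$ norms of $\phi_\epsilon-\mru$ are not bounded as $\epsilon\to 0$ (they blow up along with $\gamma_\epsilon$), so the decay is not uniform in $\epsilon$ and the maximum points can wander to infinity. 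Your own remark that this subtlety is "handled by the boundedness of $\mru$" is exactly the wrong deduction. The paper's proof circumvents this entirely: rather than arguing at a maximum point, it assumes toward a contradiction that $L\widetilde W_\infty\equiv 0$, takes a \emph{constant} barrier $\widetilde\phi_+<1$, and uses the near-CMC hypothesis \eqref{eq:NearCMCCondition} to show that this constant is a global supersolution of the rescaled Lichnerowicz equation for all large $i$, yielding $\widetilde\phi_i\le\widetilde\phi_+<1$ on all of $M$ and contradicting $\sup\widetilde\phi_i=1$. That argument is global and does not require any compactness of the maximizing sequence. To repair your proof you would either need a genuine a priori estimate forcing $p_\epsilon$ into a fixed compact set (which is not known and is the open issue the paper discusses after Lemma \ref{lem:LimitConvergence}), or you should replace the maximum-point contradiction with the paper's constant-supersolution barrier.
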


The near-CMC condition \eqref{eq:NearCMCCondition} is phrased oddly because it is
easier to work with later. Similar to estimate \eqref{eq:LWest},
\begin{align}\notag
\|LW_0\|_{C^{0}_{\delta-1}} &\leq C \| \phi^{q} |d\tau| + J\|_{C^0_{\delta-2}}\\
   &\leq C\|d\tau\|_{C^0_{\delta-2}}.\label{eq:W0Bound}
\end{align}
Note that the constant $C$ only depends on $g$. Thus a sufficient condition for 
\eqref{eq:NearCMCCondition} to hold is that
there exists a $C>0$ depending only on $g$ and $N$, such
that $\tau^2 - C \|d\tau\|^2_{C^0_{\delta-2}} \rho^{2\delta-2} \geq c\tau^2$. This is
more clearly a near-CMC condition.

The simpler proof of the limit equation criterion presented in \cite{Nguyen14}
is based on the Schaefer fixed point theorem.

\begin{thm}[Schaefer Fixed Point Theorem]\label{thm:Schaefer}
 Assume that $F:X\to X$ is a compact map on a Banach space, and assume that the set
 \begin{equation}
   K= \{x\in X: \exists t \in [0,1] \textrm{ such that } x= t F(x)\}
 \end{equation} is bounded. Then $F$ has a fixed point.
\end{thm}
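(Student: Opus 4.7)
The plan is to adapt the Schaefer fixed point strategy of Nguyen to the AE setting, then analyze the two alternatives via a rescaling/limit argument. First I would introduce a one-parameter family of problems for $t \in [0,1]$ by replacing $\tau$ in the vector equation coupling by $t\tau$ and the Hamiltonian term $\kappa\tau^2\phi^{q-1}$ by $\kappa t^2\tau^2 \phi^{q-1}$ in the Lichnerowicz equation. At $t=0$ the system decouples and the arguments of Chapter \ref{chap:FarFromCMC} produce a global super/subsolution pair uniformly in the data. Using the iteration map $F_t$ from Section \ref{sec:FixedPoint} (solve the vector equation for $W(\phi,t)$, then the parametrized Lichnerowicz equation for $G(W,t)$, then compactly embed back into $L^\infty$), $F_t$ is compact and continuous in $(\phi,t)$. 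One checks that fixed points of $F_t$ are exactly solutions of the $t$-deformed conformal constraint equations, and that they are bounded below by $\alpha\psi$ for the admissibility conformal factor $\psi$ and some $\alpha>0$ independent of $t$.

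Schaefer's theorem \ref{thm:Schaefer} (applied in the form of the Leray--Schauder alternative at $t=1$) now leaves two alternatives. If the set $\{\phi : \phi = F_t(\phi), \text{ some }t \in [0,1]\}$ is bounded in $L^\infty$, then taking $\mru$ and using the global subsolution $\alpha\psi$ produces a solution $(\phi,W)$ of the full system with the prescribed asymptotics exactly as in Theorem \ref{thm:CombinedExistence}. Otherwise there is a sequence $t_i \in [0,1]$ with $t_i \to \alpha_0 \in [0,1]$, and fixed points $(\phi_i,W_i)$ with $M_i := \|\phi_i\|_\infty \to \infty$. Define the rescaling $\phi_i' = \phi_i/M_i$ and $W_i' = W_i/M_i^q$; the rescaled vector equation reads
\begin{equation}
\di\tfrac{1}{2N}LW_i' = \kappa (\phi_i')^q d(t_i\tau) + M_i^{-q} J,
\end{equation}
whose right-hand side is uniformly bounded in $C^{0,\alpha}_{\delta-2}$, so Proposition \ref{prop:AELinearExistence} together with the compact embedding of Proposition \ref{prop:HolderEmbeddings} yields (after extraction) $W_i' \to W$ in $C^{2,\alpha'}_{\delta'}$ for $\delta<\delta'<0$ and $\alpha'<\alpha$, and $\phi_i' \to \phi'_\infty \geq 0$ with $\|\phi'_\infty\|_\infty = 1$. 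The rescaled Lichnerowicz equation, after division by $M_i^{q-1}$, forces all terms except $\kappa t_i^2\tau^2(\phi_i')^{q-1}$ and $-\tfrac{1}{4N^2}|LW_i'|^2(\phi_i')^{-q-1}M_i^{-2q-2}\cdot M_i^{2q}$ to drop out, yielding in the limit the pointwise identity $\kappa\alpha_0^2\tau^2 (\phi'_\infty)^{2q} = \tfrac{1}{4N^2}|LW|^2$ on $\{\phi'_\infty > 0\}$; feeding $(\phi'_\infty)^q |d\tau| = \alpha_0 |LW| |d\tau|/(2N\sqrt{\kappa}\tau)$ back into the rescaled vector equation gives \eqref{eq:AELimit}. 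The decay bounds $|W|\leq C\rho^\delta$ and $|LW|\leq C\rho^{\delta-1}$ then come from the weighted $C^{2,\alpha}_\delta$-estimate for the vector Laplacian (cf.\ \eqref{eq:LWest}), with the constant $C$ inherited from the uniform $C^{0,\alpha}_{\delta-2}$ bound on the right-hand side of the rescaled vector equation, which depends only on $g$ and $\|d\tau\|_{C^0_{\delta-2}}$.

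The main obstacle, and the step I expect to be genuinely delicate, is the last claim: under the near-CMC condition \eqref{eq:NearCMCCondition} the limit $W$ must be nontrivial and $\alpha_0 \neq 0$. In the compact setting one simply integrates the Lichnerowicz equation against $\phi^{1-q}$ and uses that $\tau^2$ dominates $|\sigma + LW/2N|^2/\kappa$ on the nose. Here $\tau$ decays at infinity and the weighted spaces do not cooperate with global $L^2$ testing. My approach would be contradiction: if $W \equiv 0$, then since $g$ admits no conformal Killing fields the rescaled vector equation would force $(\phi_i')^q \, d\tau \to 0$ in a suitable weighted sense, so in particular, using \eqref{eq:W0Bound} to compare $W_i$ to the $J\equiv 0$ solution $W_0(\phi_i)$, we could replace $W_i$ by $W_0(\phi_i)$ up to vanishing error. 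Evaluating the Lichnerowicz equation at a sequence of points $p_i$ where $\phi_i'$ is comparable to $1$, the second-derivative term has a favorable sign and one obtains the pointwise asymptotic inequality
\begin{equation}
\kappa t_i^2\tau^2(p_i) - \tfrac{1}{4N^2}|LW_0(\phi_i)|^2(p_i) \leq o(1),
\end{equation}
which as $i \to \infty$ directly contradicts \eqref{eq:NearCMCCondition}, forcing $W \not\equiv 0$ and, with it, $\alpha_0 \neq 0$ by the limit identity. The hard part will be controlling the location of the maxima $p_i$ and converting the formal pointwise inequality into a rigorous estimate with remainders that genuinely vanish at the right rate; this will likely require a careful localized maximum-principle argument combined with the scale-invariance built into the rescaling, in place of the global integral testing used in the compact case.
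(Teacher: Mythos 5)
Your proposal does not prove the statement you were asked to prove. The statement is the Schaefer fixed point theorem itself: an abstract result about an arbitrary compact map $F:X\to X$ on an arbitrary Banach space, asserting that boundedness of the set $K=\{x: x=tF(x) \text{ for some } t\in[0,1]\}$ forces a fixed point. What you have written instead is a strategy for Theorem \ref{thm:AELimit} (the limit equation criterion on AE manifolds), which is a downstream \emph{application} of Schaefer's theorem. Nothing in your text engages with the abstract hypothesis ``$K$ is bounded'' or derives the existence of a fixed point from it; you simply invoke Schaefer/Leray--Schauder as a black box in your second paragraph, which is circular relative to the task.

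The actual proof of the statement is short and purely functional-analytic. Since $K$ is bounded, pick $M$ with $K\subseteq \overline{B}_M(0)$, and define the radial retraction $G(x) = F(x)$ if $\|F(x)\|\le 2M$ and $G(x) = 2M\,F(x)/\|F(x)\|$ otherwise. Then $G$ is a continuous compact map of the closed convex set $\overline{B}_{2M}(0)$ into itself, so the Schauder fixed point theorem (Theorem \ref{thm:SchauderFP}, already available in the paper) gives $x^*$ with $x^*=G(x^*)$. If $\|F(x^*)\|>2M$, then $x^* = tF(x^*)$ with $t = 2M/\|F(x^*)\|\in(0,1)$, so $x^*\in K$ and hence $\|x^*\|\le M$; but $\|x^*\|=\|G(x^*)\|=2M$, a contradiction. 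Therefore $\|F(x^*)\|\le 2M$ and $x^*=F(x^*)$. If you want to salvage your write-up, it belongs with Theorem \ref{thm:AELimit} and Lemma \ref{lem:LimitConvergence}, not here; and even there the delicate nontriviality step you flag is handled in the paper by a supersolution comparison argument (a constant $\phitil_+<1$ combined with the near-CMC inequality \eqref{eq:NearCMCCondition}) rather than by tracking maxima $p_i$ pointwise.
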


Let $F$ be the iteration map, as in Chapter \ref{chap:FarFromCMC}, giving a solution
of the Lichnerowicz equation with asymptotic function $\mru$. Recall that $F$ is
a compact map on $L^\infty$. We use the Schaefer fixed point theorem by setting 
\begin{equation}\label{eq:KDef}
K:= \{ \phi\in L^\infty:  \exists t \in [0,1] \textrm{ such that } \phi= t F(\phi)\}.
\end{equation} Note that the definition of $K$ does not directly mention the 
asymptotic function of $\phi$. However, since $\phi = t F(\phi)$, we know that 
$\phi - t\mru \in W^{2,p}_\delta$, for some $t\in (0,1]$.

\section{Convergence of Solutions}
\label{sec:Conv}

In this section, we show that if $K$ is unbounded, then the limit equation has
a solution. By conformal covariance, and since $\tau^2>0$, it follows from the
prescribed scalar curvature theorem \ref{thm:PrescribedScalarCurvature} 
that we can assume without loss
of generality that $R= -\kappa \tau^2$. The definition of $K$ shows that $K$ being
unbounded is equivalent to the
existence of an unbounded sequence $(\phi_i,t_i)$ with $t_i\in(0,1]$ solving
\begin{subequations}\label{eq:SeqConst}\begin{align}
-a \Delta \phi_i -\kappa \tau^2 \phi_i + \kappa \tau^2 \phi_i^{q-1} 
    - \left|\sigma + \frac{1}{2N} LW_i\right|^2 \phi_i^{-q-1} - r \phi_i^{-q/2}=0
     \label{eq:SeqLich}\\
\di \frac{1}{2N} LW = \kappa (t_i \phi_i)^{q} d\tau + J. \label{eq:SeqVect}
\end{align} \end{subequations}

\begin{lem}\label{lem:LimitConvergence}
Suppose such a sequence $(\phi_i, t_i)$ exists. Then there exists a (perhaps trivial)
solution $W\in W^{2,p}_\delta$ of the limit equation
   \begin{equation}\label{eq:AELimit}
     \di \frac{1}{2N} LW = \alpha_0 \sqrt{1/\kappa} |LW| \frac{d\tau}{2N\tau} 
   \end{equation} for some $\alpha_0 \in [0,1]$. Furthermore, 
   $|W|\leq C \rho^{\delta}$ and $|LW|\leq C \rho^{\delta-1}$ for some $C$ dependent
   only on $g$ and $\|d\tau\|_{C^{0}_{\delta-2}}$. 
   If 
   \begin{equation}\label{eq:NearCMCCondition2}
   \kappa \tau^2 - \frac{1}{4N^2} |LW_0|^2 \geq c \tau^2
   \end{equation} for some $c>0$ and for all solutions $W_0$ of the vector equation
    \eqref{eq:OrigVect} with $J \equiv 0$ and $\phi\leq 1$, then the solution
    $W$ of the limit equation is nontrivial, and $\alpha_0\neq 0$.
\end{lem}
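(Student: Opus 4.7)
The plan is to carry out the rescaling/blow-up argument of Dahl--Gicquaud--Humbert \cite{DGH11} (as streamlined by Nguyen \cite{Nguyen14}) in the AE weighted H\"older setting. First I would normalize. Set $\gamma_i := \sup_M \phi_i$, which is finite because $\phi_i - t_i\mru \in C^{2,\alpha}_\delta$ and $\mru$ is bounded. The unboundedness of the sequence forces $\gamma_i \to \infty$ along a subsequence, along which we may additionally assume $t_i \to t_\infty \in [0,1]$. Defining $\tilde{\phi}_i := \phi_i/\gamma_i$ and $\tilde{W}_i := \gamma_i^{-q} W_i$ and dividing the Lichnerowicz equation by $\gamma_i^{q-1}$, the system \eqref{eq:SeqConst} becomes
\begin{equation*}
-a\gamma_i^{2-q}\Delta\tilde{\phi}_i - \kappa\tau^2\gamma_i^{2-q}\tilde{\phi}_i + \kappa\tau^2\tilde{\phi}_i^{q-1} - \left|\gamma_i^{-q}\sigma + \tfrac{1}{2N}L\tilde{W}_i\right|^2\tilde{\phi}_i^{-q-1} - r\gamma_i^{1-3q/2}\tilde{\phi}_i^{-q/2} = 0,
\end{equation*}
\begin{equation*}
\di\tfrac{1}{2N}L\tilde{W}_i = \kappa t_i^q \tilde{\phi}_i^q\, d\tau + \gamma_i^{-q}J,
\end{equation*}
with $0 < \tilde{\phi}_i \leq 1$ and $\sup_M \tilde{\phi}_i = 1$.

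I would then establish compactness and pass to the limit. Since $\tilde{\phi}_i \leq 1$, the right-hand side of the rescaled vector equation is uniformly bounded in $C^{0,\alpha}_{\delta-2}$ by a constant depending only on $g$, $N$, $J$, and $\|d\tau\|_{C^{0,\alpha}_{\delta-2}}$. Proposition \ref{prop:AELinearExistence} then yields a uniform $C^{2,\alpha}_\delta$ bound on $\tilde{W}_i$, and Proposition \ref{prop:HolderEmbeddings} lets one pass to a subsequence converging in $C^{2,\beta}_{\delta'}$ (any $\beta<\alpha$, $\delta'>\delta$) to a limit $W$ satisfying the stated pointwise estimates $|W| \leq C\rho^\delta$ and $|LW| \leq C\rho^{\delta-1}$. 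Interior Schauder estimates applied to the rescaled Lichnerowicz equation then give local $C^{2,\beta}$ bounds on the $\tilde{\phi}_i$, so a further subsequence converges in $C^0_{\mathrm{loc}}$ to some $\tilde{\phi}_\infty \in [0,1]$. Passing to the limit, using $q>2$ to kill every $\gamma_i^{-k}$ prefactor, yields $\kappa\tau^2\tilde{\phi}_\infty^{q-1} = \frac{1}{4N^2}|LW|^2\tilde{\phi}_\infty^{-q-1}$ wherever $\tilde{\phi}_\infty > 0$. If $\tilde{\phi}_\infty \equiv 0$ on $M$, the right-hand side of the limiting vector equation vanishes and absence of conformal Killing fields forces $W \equiv 0$, so \eqref{eq:AELimit} holds trivially with $\alpha_0 := 0$. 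Otherwise one obtains $\tilde{\phi}_\infty^q = |LW|/(2N\sqrt{\kappa}\,\tau)$, and substituting into the limiting vector equation with $\alpha_0 := t_\infty^q$ produces \eqref{eq:AELimit}.

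For the nontriviality claim under \eqref{eq:NearCMCCondition2}, the strategy is to show that the points $x_i \in M$ at which $\tilde{\phi}_i$ attains its supremum (which exist because $\tilde{\phi}_i \to t_i\mru/\gamma_i \to 0$ on every end) remain in a fixed compact set of $M$. Granted this, a subsequence gives $x_i \to x_\infty \in M$, local uniform convergence yields $\tilde{\phi}_\infty(x_\infty) = 1$, and then $|LW|(x_\infty) = 2N\sqrt{\kappa}\,\tau(x_\infty) > 0$, so $W \not\equiv 0$ and $\alpha_0 \neq 0$. To try to establish this compactness, apply the maximum principle to the rescaled Lichnerowicz equation at $x_i$: using $\Delta\tilde{\phi}_i(x_i) \leq 0$ and $\tilde{\phi}_i(x_i) = 1$, the $\gamma_i^{-k}$ terms vanish and one obtains $\kappa\tau^2(x_i) \leq \frac{1}{4N^2}|L\tilde{W}_i|^2(x_i) + o(1)$. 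Decomposing $\tilde{W}_i = t_i^q W_i' + v_i$ with $v_i \to 0$ in $C^{2,\alpha}_\delta$ and $W_i'$ a solution of the vector equation with $\phi = \tilde{\phi}_i \leq 1$ and $J \equiv 0$, the near-CMC hypothesis applied to $W_i'$ gives $\frac{1}{4N^2}|LW_i'|^2(x_i) \leq (\kappa - c)\tau^2(x_i)$, whence $c\tau^2(x_i) \leq o(1)$.

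The main obstacle is precisely this last step. In the AE setting, $\tau(x_i) \to 0$ is consistent with $x_i \to \infty$, so one does not immediately recover compactness of the maxima, whereas in the compact case $\tau$ is bounded below and the inequality is contradictory. As remarked in the introduction to this chapter, this reflects the genuine incompatibility between the near-CMC hypothesis \eqref{eq:NearCMCCondition2} and the decay $\tau \to 0$ forced by AE asymptotics. Completing the nontriviality argument therefore requires a blow-up analysis at infinity: translate $x_i$ to the origin of end-coordinates, rescale using the AE asymptotics of $g$ to obtain a flat-space limit problem on $\R^n$, and use the H\"older regularity assumed in Assumption \ref{AssumptLimit} to extract a nontrivial limiting $(\tilde{\phi}_\infty, W)$ whose properties contradict a scale-invariant avatar of the near-CMC condition. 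I expect this blow-up step, rather than the preceding compactness or convergence arguments, to be the crux of the proof.
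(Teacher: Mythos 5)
Your setup (rescaling by $\Gamma_i=\sup\phi_i$, uniform $C^{2,\alpha}_\delta$ bounds on $\Wtil_i$ from the vector equation with $\phitil_i\le 1$, and the resulting pointwise bounds on $W$ and $LW$) matches the paper. But there are two genuine gaps. First, your compactness claim for $\phitil_i$ does not work: you cannot get uniform local $C^{2,\beta}$ bounds from ``interior Schauder estimates applied to the rescaled Lichnerowicz equation,'' because that equation is singularly perturbed. Writing it as
$-a\Delta\phitil_i = \kappa\tau^2\phitil_i - \Gamma_i^{q-2}\bigl[\kappa\tau^2\phitil_i^{q-1} - |\sigmatil_i+\tfrac{1}{2N}L\Wtil_i|^2\phitil_i^{-q-1}-\rtil_i\phitil_i^{-q/2}\bigr]$,
the bracket is only known to be bounded, so the source term carries an uncontrolled factor $\Gamma_i^{q-2}\ra\infty$; there is no uniform elliptic estimate, and hence no subsequence converging in $C^0_{\mathrm{loc}}$ by this route. (A related secondary issue: even granting a local limit $\phitil_\infty$, your argument only yields the algebraic relation where $\phitil_\infty>0$, which is not enough to identify $\kappa\phitil_\infty^q\,d\tau$ with $\alpha_0\sqrt{1/\kappa}\,|LW|\,d\tau/(2N\tau)$ on the zero set.) The paper avoids both problems by reversing the logic: it \emph{defines} $\phitil_\infty$ by $\phitil_\infty^q = |L\Wtil_\infty|/(2N\sqrt{\kappa}\tau)$ and then proves $|\phitil_i-\phitil_\infty|<\epsilon\rho^\epsilon$ for large $i$ purely by comparison, showing that any $C^2$ function squeezed between $\phitil_\infty+\tfrac{\epsilon}{2}\rho^\epsilon$ and $\phitil_\infty+\epsilon\rho^\epsilon$ is a pointwise supersolution on a compact set (and dominates $\phitil_i\le 1$ outside it, where $\tfrac{\epsilon}{2}\rho^\epsilon>1$), then invoking the sub/supersolution theorem on compact manifolds with boundary plus uniqueness. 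This gives convergence in $C^0_{\epsilon'}$, which is exactly the global weighted convergence needed to pass to the limit in the vector equation.

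Second, the nontriviality step: you correctly sense that chasing the maximum points $x_i$ fails when they escape to infinity, but the conclusion you draw --- that a blow-up analysis at infinity is required --- is not the right fix, and the paper never localizes the maxima at all. Instead one argues by contradiction: if $L\Wtil_\infty\equiv 0$ then $\phitil_\infty\equiv 0$, and one checks that a \emph{constant} $\phitil_+<1$ sufficiently close to $1$ is a global supersolution of the rescaled Lichnerowicz equation for all large $i$. The point is that the near-CMC hypothesis \eqref{eq:NearCMCCondition2} is a pointwise inequality with the \emph{same decay rate} $\tau^2$ on both sides, so the dominant balance $\kappa\tau^2\phitil_+^{2q} - |\sigmatil_i+\tfrac{1}{2N}L\Wtil_i|^2\ge c\tau^2$ holds uniformly all the way out on the ends (the $J$ and $\Gamma_i^{2-q}$ terms decay at the faster rate $\rho^{2\delta-2}$ relative to their coefficients and are absorbed for large $i$). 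Then $\phitil_i\le\phitil_+<1$ everywhere contradicts the normalization $\sup\phitil_i=1$ --- no convergence of maximum points, and no rescaled limit problem on $\R^n$, is needed. As written, your proof of the final claim of the lemma is incomplete.
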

\begin{proof}
We claim the sequence $W_i:= W(\phi_i)$ 
(sub)converges to a solution of the limit equation, up to rescaling. In the original
proof of the limit equation criterion, the sequence of subcritical solutions was 
renormalized by an energy based on $|LW_i|:= |LW(t_i\phi_i)|$. While this energy has
some advantages which
we discuss below, its use requires proving an estimate of the form
$\sup \phi_i^q \leq \sup\{1, \int |LW|^2\}$. Despite some effort, we were unable to
prove an analogous estimate for AE data. The problem is that $|LW_i|$
falls off as $\tau$, but the convergence of a renormalized $|LW_i|$ is only in a 
weaker space. It is essentially for this same reason that we are only able to prove
nontriviality of the solution of the limit equation for near-CMC data.

Instead of an energy based on $|LW_i|$, we bound $\sup \phi_i$ directly. By
assumption $\sup \phi_i \to \infty$. Let $\Gamma_i = \sup \phi_i$.
We start by rescaling the seed data and solutions by certain powers of the energy. In
particular, we rescale $\sigma, r, J, \phi_i$, and $W_i$ (we do not rescale the metric,
$N$, or $\tau$) as follows
\begin{equation}\label{eq:rescaling}
      \phitil_i   := \Gamma_i^{-1} \phi_i,
\quad \Wtil_i    := \Gamma_i^{-q} W_i,
\quad \sigmatil_i := \Gamma_i^{-q} \sigma,
\quad \rtil_i    := \Gamma_i^{\frac{-3q}{2}+1}r,
\quad \Jtil_i     := \Gamma_i^{-q} J.
\end{equation}
Then, dividing the conformal constraint equations \eqref{eq:SeqConst}
by certain powers of the energy, and substituting in these rescaled quantities, we obtain
\begin{subequations}\label{eq:LimitConst}
\begin{align}\label{eq:LimitLich}
\frac{1}{\Gamma_i^{q-2}} \left(-a\Delta \phitil_i -\kappa \tau^2 \phitil_i\right)
  + \kappa\tau^2 \phitil_i^{q-1}
    -\left|\sigmatil_i+\frac{1}{2N}L\Wtil_i\right|^2 \phitil_i^{-q-1}
     - \rtil_i \phitil_i^{-q/2}=0,\\
\label{eq:LimitVect}
\di \frac{1}{2N}L\Wtil_i
   = \kappa(t_i \phitil_i)^{q} d\tau + \Jtil_i.
\end{align}
\end{subequations}

Proceeding, we substantially follow the original proof from \cite{DGH11}. 
Similar to the estimate \eqref{eq:W0Bound},
\begin{align}\notag
\|W_i\|_{C^{1,\beta}_{\delta}} &\leq C \| t_i^q\phi_i^{q} |d\tau| + J\|_{C^0_{\delta-2}}\\
   &\leq C \Gamma^{q}_i \left( \|d\tau\|_{C^0_{\delta-2}} 
      + \|J/\Gamma_i^q\|_{C^0_{\delta -2}} \right).\label{NewLWptwise}
\end{align}
Note that the constant $C$ only depends on $g$, $N$, and our choice of $\beta>\alpha$.
Consequently, $\Wtil_i$ is uniformly bounded in $C^{1,\beta}_{\delta}$ and $L\Wtil_i$ is
uniformly bounded in $C^{0,\beta}_{\delta-1}$. Using the compact Sobolev embedding 
\ref{prop:HolderEmbeddings}, a subsequence of $\Wtil_i$ converges
 in $C^{1,\alpha}_{\delta'}$ for any $\delta'>\delta$ to some
$\Wtil_\infty \in C^{1,\beta}_{\delta}$. Thus, in particular, 
\begin{equation}
|\Wtil_\infty| \leq C \rho^{\delta} \quad\text{and}\quad |L\Wtil_\infty| \leq C \rho^{\delta-1},
\end{equation} for $C$ dependent only on $g$, $N$, and $\beta$. We cannot, however, 
be certain that $\Wtil_\infty\not\equiv 0$.

Heuristically, as $i\to \infty$ we would expect all the terms in the rescaled
Lichnerowicz equation \eqref{eq:LimitLich} except the $\tau^2$ and $L\Wtil_i$ terms
to go to zero. Thus, we define the function $\phitil_\infty$ by
\begin{equation}
\kappa\tau^2 \phitil_\infty^{q-1} := \frac{1}{4N^2}|L \Wtil_\infty|^2 \phitil_\infty^{-q-1},
\end{equation} which reduces to
\begin{equation}
\label{eq:phitilinf}
\phitil^q_\infty = \frac{|L\Wtil_\infty|}{2N\sqrt{\kappa}\tau}.
\end{equation}
Comparing expression \eqref{eq:phitilinf} with the rescaled vector constraint equation
\eqref{eq:LimitVect}, we see that if $\phitil_i \to \phitil_{\infty}$
in an appropriate space, and if $\Wtil_{\infty}$ is a solution
(in an appropriate sense) to the limit of equation \eqref{eq:LimitVect} as 
$i \to \infty$, then it follows that $\Wtil_\infty $ is a solution to the limit equation
\eqref{eq:AELimit}. So long as $\phitil_\infty$ is not
identically zero, this solution is non-trivial. Therefore, we focus on verifying these
limits.

For any $\epsilon >0$, we claim there is an $i_0$ such that if 
$i\geq i_0$ that
\begin{equation}
\label{philimit}
|\phitil_i - \phitil_\infty| < \epsilon \rho^\epsilon.
\end{equation}
If \eqref{philimit} holds (for small $\epsilon$), it then follows that
$\phitil_i \to \phitil_\infty$ in $C^0_{\epsilon'}$ for any small $\epsilon'>0$.
Recalling the definition of $\phitil_\infty$ from \eqref{eq:phitilinf}, we let 
$\phitil_+$ be any $C^2$ function for which the inequality
\begin{equation}
\phitil_\infty+\frac{\epsilon}{2}\rho^\epsilon \leq \phitil_+ \leq \phitil_\infty+ \epsilon\rho^\epsilon.
\end{equation}
holds. We claim  that $\phitil_+\geq \phitil_i$ everywhere for $i$ large enough. Since
$\phitil_i\leq 1$, it is immediately clear that this is true except on a compact set 
depending only on $\epsilon$. On that compact set, we claim that $\phitil_+$ is a
(pointwise) supersolution of the rescaled Lichnerowicz equation \eqref{eq:LimitLich} for
all $i$ is large enough.

Multiplying the rescaled Lichnerowicz equation \eqref{eq:LimitLich} by 
$\phitil^{q+1}_+$, we need to verify the inequality
\begin{equation}
\label{phitilineq}
\frac{\phitil_+^{q+1}}{\Gamma^{q-2}} \left(-a\Delta \phitil_+
  - \kappa\tau^2 \phitil_+\right) + \kappa\tau^2 \phitil^{2q}_+
  \geq \left|\sigmatil + \frac{1}{2N}L \Wtil_i\right|^2 + \rtil \phitil^{q/2+1}.
\end{equation}
Since, by definition,
\begin{equation}
\phitil_+^{2q}
  \geq \left(\phitil_\infty + \frac{\epsilon}{2}\rho^\epsilon\right)^{2q}
   \geq \phitil^{2q}_\infty + \left(\frac{\epsilon}{2}\rho^\epsilon\right)^{2q},
\end{equation}
inequality \eqref{phitilineq} is satisfied provided that
\begin{multline}
\label{hmm}
\frac{\phitil_+^{q+1}}{\Gamma_i^{q-2}} \left(-a\Delta \phitil_+
  - \kappa\tau^2\phitil_+\right) + \kappa\tau^2 \left(\frac{\epsilon}{2} \rho^\epsilon\right)^{2q}
  \\ \geq \left|\sigmatil_i + \frac{1}{2N}L \Wtil_i\right|^2 
     -\frac{1}{2N}|L\Wtil_\infty|^2+ \rtil_i \phitil^{q/2+1}.
\end{multline}

Recalling that $L\Wtil_i \to L\Wtil_\infty$, we readily verify that all of the terms
in equation \eqref{hmm} go to zero pointwise
as $i \to \infty$ except for the $\epsilon$ term. Thus for any fixed compact set,
there exists an $i_0$ such that for all $i\geq i_0$, $\phitil_+$ is a pointwise
supersolution on that compact set.

We want to use the sub and supersolution theorem to prove that $\phitil_+ \geq \phitil_i$
for large $i$. Since $\phitil_+$ is only a supersolution on a compact set, we cannot use
the sub and supersolution theorem on AE manifolds \ref{thm:SubSupersolutionTheorem}.
However, in the complement of the compact set $\{\frac\epsilon2\rho^\epsilon\leq 1\}$,
we know 
$\phitil_+\geq \phitil_i$. Thus we can use the sub and supersolution theorem on compact 
manifolds with boundary (cf. \cite{Dilts13b}) to find a solution of \eqref{eq:LimitLich}
less than $\phitil_+$ on that compact set. (For convenience, we can use 
$\alpha \Gamma^{-1}_i$  for the subsolution. It is a global subsolution for any
$\alpha\leq 1$, independent of  $\epsilon$.) Since such solutions are unique, it must
be $\phitil_i$. Thus we obtain the pointwise inequality
\begin{equation}
\label{ptwiseineq}
\phitil_i \leq \phitil_+ \leq \phitil_\infty + \epsilon \rho^\epsilon.
\end{equation}
everywhere on $M$.

We prove a similar (subsolution type) result for any $C^2$ function $\phitil_-$ satisfying
\begin{equation}
\phitil_\infty - \epsilon\rho^\epsilon \leq \phitil_- 
    \leq \phitil_\infty - \frac{\epsilon}{2} \rho^\epsilon.
\end{equation}
Since $\phitil_i >0$, we only need to show that $\phitil_-$ is a subsolution of the 
Lichnerowicz equation where $\phitil_-$ is positive. The proof follows similarly, 
using $\phitil_+$ as our supersolution. We thus have
\begin{equation}\label{ptwiseineq2}
\phitil_i \geq \phitil_\infty-\epsilon\rho^\epsilon
\end{equation} for large enough $i$.

Using the two inequalities \eqref{ptwiseineq} and \eqref{ptwiseineq2}, we conclude
that $\phitil_i \to \phitil_\infty$ in $C^0_{\epsilon'}$ for any $\epsilon'>0$. 
This implies that $\phitil_i^q d\tau \to \phitil_\infty^q d\tau$ in 
$L^p_{\delta-2+\epsilon' q}$. Also, a subsequence of $t_i^q$ converges to a number
$\alpha_0 \in [0,1]$. Thus $\Wtil_i \to \Wtil_\infty$
in $W^{2,p}_{\delta+\epsilon'q}$ for $\epsilon'<-\delta/q$. The $\Wtil_\infty$
weakly satisfies the limit equation \eqref{eq:AELimit}. Since the right hand side 
of the limit equation is in $C^{0,\alpha}_{\delta-2}$ (since 
$L\Wtil_\infty \in C^{0,\beta}_{\delta-1}$), we can conclude by Proposition 
\ref{prop:AELinearExistence} that $\Wtil_\infty \in C^{2,\alpha}_{\delta}$.

Our only remaining task is to verify that $W_\infty \not\equiv 0$ in the near-CMC
case. In this case, if $\alpha_0=0$, any solution of the limit equation is a 
conformal Killing field, which implies that $W_\infty \equiv 0$, a contradiction (cf. 
Proposition \ref{prop:Isomorphism}). 
To show that $W_\infty$ is not trivial, we repeat the previous argument with a few changes. 
This implies that $\Wtil_\infty \not\equiv 0$. 

Assume that $L\Wtil_\infty \equiv 0$, which by definition implies that 
$\phitil_\infty \equiv 0$. Let $\phitil_+$ be a constant less than 1. 
We can derive the equivalent of equation \eqref{hmm} for this $\phitil_+$, namely
\begin{equation}
\label{hmm2}
-\frac{\phitil_+^{q+2}}{\Gamma_i^{q-2}} \kappa \tau^2\phitil_+ 
   + \kappa\tau^2 \phitil_+^{2q}
  \geq \left|\sigmatil_i + \frac{1}{2N}L \Wtil_i\right|^2 + \rtil_i \phitil^{q/2+1}.
\end{equation}
We note that as $i\to \infty$, the $J$ term in equation \eqref{NewLWptwise} goes to zero.
Thus, by our near-CMC assumption,
\begin{equation}\label{eq:Near-CMC_Assumption}
\kappa\tau^2 \phitil_+^{2q} - \left|\sigmatil_i+\frac{1}{2N}L\Wtil_i\right|^2\geq c \tau^2
\end{equation} for large enough $i$, $\phitil_+$ sufficiently close to $1$, and some 
small $c>0$. 
The other terms in \eqref{hmm2} go to zero in $C^{0}_{2\delta-2}$, and so for $i$ large
enough, \eqref{hmm2} holds on all of $M$.

Using the sub and supersolution theorem on AE manifolds, we can deduce
\begin{equation}
\phitil_i \leq \phitil_+<1.
\end{equation} Since $\phitil_i=1$ somewhere, this is a contradiction.
\end{proof}

The main result is now easily proved.

\begin{proof}[Proof of Theorem \ref{thm:AELimit}]
Let $F$ be the iteration map, as in Chapter \ref{chap:FarFromCMC}, giving a solution
of the Lichnerowicz equation with asymptotic function $\mru$. Let 
\begin{equation}\label{eq:KDef}
K:= \{ \phi\in L^\infty:  \exists t \in [0,1] \textrm{ such that } \phi= t F(\phi)\}
\end{equation} 

If $K$ is bounded, then the Schaefer fixed point theorem \ref{thm:Schaefer} gives a 
solution to the conformal constraint equations. If $K$ is unbounded, Lemma
\ref{lem:LimitConvergence} shows the limit equation has a solution.
\end{proof}

We now make a number of remarks on the proof of Lemma \ref{lem:LimitConvergence}, which
is the heart of the limit equation criterion. On compact manifolds, it is easy to show 
that $\phitil_\infty$ is nontrivial. By definition, $\phitil_i =1$ at some point $p_i$, and since
$M$ is compact, the $p_i$ converge to some point $p_\infty$. Since the $\phitil_i$
converge in $L^\infty$, we have $\phitil_\infty(p_\infty) = 1$, and so $\phitil_\infty$
and $\Wtil_\infty$ are nontrivial. On AE manifolds, this argument breaks down at two
points. First, since $M$ is noncompact, the points $p_i$ may wander off to infinity.
Indeed, if they were contained on a compact set, then the $\phitil_i$ would converge
to a nontrivial $\phitil_\infty$ in a similar fashion. Second, we are only able to show
that $\phitil_i$ converges to $\phitil_\infty$ in $C^0_{\epsilon'}$ for $\epsilon'>0$,
and thus $\Wtil_i$ only converges to $\Wtil_\infty$ in 
$C^{1,\beta}_{\delta + \epsilon'q}$. In particular, if $\Wtil_\infty$ were trivial,
$|L\Wtil_i|^2$ converges to zero only in $C^0_{2\delta-2+2\epsilon'q}$, and so we can
only show the inequality \eqref{hmm2} on compact sets, unless we make the near-CMC
assumption that we did. Fixing either of these two points would show that the
solution to the limit equation is nontrivial.

Another idea, as discussed above, is to use some relative of $\|LW_i\|_2$ as the energy.
If $L\Wtil_i$ converges in that norm, then $\|L\Wtil_\infty\|_2$
would be 1, and thus nontrivial. This type of argument works in the compact case, and is
in fact what Dahl, Gicquaud, and Humbert originally did in \cite{DGH11}. Unfortunately,
proving convergence of the $\phitil_i$ requires an estimate related to 
$\sup \phi_i^q \leq C\max\{\|LW_i\|_2, 1\}$. We have tried to prove such an estimate for
AE manifolds, but were unable to do so. The problem is, again, that $|LW_i|$ and $\tau$ have
the same falloff at infinity.

One possible reason why the proof of the limit equation criterion for AE manifolds has
proven difficult is its relationship with asymptotic functions. The limit equation
criterion \ref{thm:AELimit}, as long as the limit equation has no solutions, gives a
solution for \emph{any} asymptotic constant. In chapter \ref{chap:FarFromCMC}, however,
except in the near-CMC case, our proof strongly depends on the asymptotic function 
being sufficiently small. Using the rescaling of Proposition \ref{prop:NguyenScaling},
the asymptotic function is allowed to be larger, as long as $\tau$ scaled towards zero,
a kind of near-CMC condition. Thus, perhaps, the limit equation criterion for more 
arbitrary $\tau$ may be easier to prove as long as the asymptotic
function is sufficiently small. Unfortunately, we were also unable to leverage this idea.

We can use a modification of the argument in Lemma \ref{lem:LimitConvergence} to
show that if $\tau$ is sufficiently near-CMC, then the conformal constraint 
equations have a solution.

\begin{prop} \label{prop:NearCMCSoln}
If 
\begin{equation}\label{eq:NearCMCConditionFail}
\kappa \tau^2\left(\frac12\right)^{2q} - \frac{1}{4N^2} |LW_0|^2 \geq c \tau^2
\end{equation} for some $c>0$ and all solutions $W_0$ of the vector equation
\eqref{eq:OrigVect} with $J \equiv 0$ and $\phi\leq 1$, then the conformal constraint 
equations have a solution as in Theorem \ref{thm:AELimit}.
\end{prop}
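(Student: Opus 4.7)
The plan is to follow exactly the Schaefer fixed point strategy from Theorem \ref{thm:AELimit}, and show that under the stronger near-CMC hypothesis \eqref{eq:NearCMCConditionFail}, the set $K$ from \eqref{eq:KDef} must be bounded, so Schaefer's theorem \ref{thm:Schaefer} directly produces a fixed point of the iteration map $F$ (and hence a solution of \eqref{eq:ConfConst}). The key point is that the exponent $(1/2)^{2q}$ appearing in \eqref{eq:NearCMCConditionFail} is tailored so that the \emph{constant} function $\phitil_+ \equiv 1/2$ becomes a supersolution of the rescaled Lichnerowicz equation, and this is incompatible with $\sup \phitil_i = 1$.

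More concretely, assume for contradiction that $K$ is unbounded. As in Lemma \ref{lem:LimitConvergence}, one first uses conformal covariance \ref{prop:ConformalCovariance} and admissibility to assume $R = -\kappa\tau^2$, then extracts a sequence $(\phi_i, t_i)$ solving \eqref{eq:SeqConst} with $\Gamma_i := \sup\phi_i \to \infty$, and rescales via \eqref{eq:rescaling} to obtain $\phitil_i, \Wtil_i$ with $\sup\phitil_i = 1$, satisfying \eqref{eq:LimitConst}. Split $\Wtil_i = W_{0,i} + \Wtil_{i,J}$, where $W_{0,i}$ solves the vector equation with source $\kappa t_i^q \phitil_i^q\, d\tau$ (no $J$) and $\Wtil_{i,J}$ picks up the $\Jtil_i$ contribution. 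Since the effective conformal factor $t_i^{1/q}\phitil_i$ is $\leq 1$, assumption \eqref{eq:NearCMCConditionFail} applied to $W_{0,i}$ yields
\begin{equation*}
\kappa\tau^2 (1/2)^{2q} - \tfrac{1}{4N^2}|LW_{0,i}|^2 \;\geq\; c\tau^2.
\end{equation*}
Because $\Jtil_i, \sigmatil_i \to 0$ in $C^{0}_{\delta-2}$ and $C^{0}_{2\delta-2}$ respectively, estimate \eqref{eq:LWest} shows $L\Wtil_{i,J}$ decays in a weighted norm controlled by $\tau^2$, so Young's inequality upgrades the previous bound to
\begin{equation*}
\kappa\tau^2 (1/2)^{2q} - \left|\sigmatil_i + \tfrac{1}{2N}L\Wtil_i\right|^2 \;\geq\; c'\tau^2
\end{equation*}
for some $c' > 0$ and all sufficiently large $i$.

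Now plug the constant $\phitil_+ \equiv 1/2$ into the rescaled Lichnerowicz equation \eqref{eq:LimitLich} and multiply by $(1/2)^{q+1}$ (as in the derivation of \eqref{hmm2}). The supersolution inequality reduces to
\begin{equation*}
\kappa\tau^2(1/2)^{2q} - \left|\sigmatil_i + \tfrac{1}{2N}L\Wtil_i\right|^2 \;\geq\; \tfrac{(1/2)^{q+2}}{\Gamma_i^{q-2}}\kappa\tau^2 + \rtil_i (1/2)^{q/2+1},
\end{equation*}
whose left side is $\geq c'\tau^2$ while both terms on the right tend to zero relative to $\tau^2$ as $\Gamma_i \to \infty$; thus $\phitil_+ \equiv 1/2$ is a global supersolution of the rescaled problem for $i$ large. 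Combined with the global subsolution $\alpha/\Gamma_i$ (valid for any $\alpha \in (0,1]$ since $R=-\kappa\tau^2$, exactly as in the proof of Lemma \ref{lem:LimitConvergence}) and the asymptotic function $t_i\mru/\Gamma_i$, the sub and supersolution theorem \ref{thm:SubSupersolutionTheorem} together with the uniqueness result \ref{thm:LichUniqueness} force $\phitil_i \leq 1/2$ on all of $M$, contradicting $\sup\phitil_i = 1$. Hence $K$ is bounded and Schaefer's theorem yields the desired fixed point.

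The main obstacle is the global verification of the supersolution inequality: the error term $\rtil_i (1/2)^{q/2+1}$ is only in $C^0_{\delta-2}$, not $C^0_{2\delta-2}$, so controlling it pointwise by $c'\tau^2$ requires either a minor strengthening of the decay assumption on $r$ or, alternatively, truncating at a large ball $B_R$ where the asymptotic decay $\phitil_i \approx t_i\mru/\Gamma_i$ already forces $\phitil_i < 1/2$ outside $B_R$ and then applying a compact-manifold-with-boundary version of the sub/supersolution theorem (as cited from \cite{Dilts13b} in the proof of Lemma \ref{lem:LimitConvergence}). Either route closes the argument, but the second is more robust and mirrors exactly the handling of the analogous difficulty in the nontriviality portion of Lemma \ref{lem:LimitConvergence}.
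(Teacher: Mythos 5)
Your proposal is correct and follows the same strategy as the paper: assume the Schaefer set $K$ is unbounded, rescale by $\Gamma_i := \sup\phi_i$, and contradict $\sup\phitil_i = 1$ by exhibiting a constant supersolution of the rescaled Lichnerowicz equation that is strictly less than $1$. The differences are cosmetic but worth noting. The paper first builds $\phitil_\infty$ and $\Wtil_\infty$ via the full machinery of Lemma \ref{lem:LimitConvergence}, uses the hypothesis to conclude $\sup\phitil_\infty < 1/2$, and then picks a constant $\phitil_+ \in (\sup\phitil_\infty + \tfrac12,\, 1)$; you bypass the limit objects entirely, split $\Wtil_i = W_{0,i} + \Wtil_{i,J}$ so that the hypothesis applies directly to $W_{0,i}$ (with conformal factor $t_i\phitil_i \le 1$), and absorb the $\Jtil_i$ and $\sigmatil_i$ contributions via Young's inequality. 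Since $\phitil_+^{2q} \geq (1/2)^{2q}$ is exactly what the hypothesis buys you, the explicit choice $\phitil_+ \equiv 1/2$ is the sharpest usable constant and makes the role of the exponent $(1/2)^{2q}$ transparent, whereas the paper's larger constant near $1$ would also work but obscures this connection.

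Your flagged obstacle — that $\rtil_i \phitil_+^{q/2+1}$ lies only in $C^0_{\delta-2}$ rather than $C^0_{2\delta-2}$, so it cannot be dominated by $c'\tau^2 \sim \rho^{2\delta-2}$ uniformly out to infinity — is a real issue that the paper glosses over in the corresponding step of Lemma \ref{lem:LimitConvergence} ("the other terms in \eqref{hmm2} go to zero in $C^0_{2\delta-2}$"). Your second fix, truncating at a large ball and invoking the compact-manifold-with-boundary sub/supersolution theorem with the observation that $\phitil_i$ decays to $t_i\mru/\Gamma_i < 1/2$ on the exterior region for large $i$, is exactly how the paper itself handles the analogous difficulty in the earlier (non-CMC) part of Lemma \ref{lem:LimitConvergence}, and is the right repair. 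One small slip: the effective conformal factor in the $J$-free vector equation is $t_i\phitil_i$, not $t_i^{1/q}\phitil_i$, since the source is $\kappa(t_i\phitil_i)^q\,d\tau$; this does not affect anything since $t_i \le 1$ either way.
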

\begin{proof}
Suppose $\tau$ satisfies \eqref{eq:NearCMCConditionFail}. 
Let $K$ be as in \eqref{eq:KDef}. Suppose $K$ is unbounded.
Then we can proceed as in the proof of Lemma \ref{lem:LimitConvergence}. 

Let $\Wtil_\infty$ and $\phitil_\infty$ be constructed as in the proof of Lemma
\ref{lem:LimitConvergence}. Then, using, $\phitil_\infty$ to define $W_0$,
condition \eqref{eq:NearCMCConditionFail} reduces to
\begin{equation}
\phitil_\infty^{2q} := \frac{|L\Wtil_\infty|^2}{4 N^2 \kappa\tau^2} 
             \leq \left(\frac12\right)^{2q}-\epsilon
\end{equation} for some small $\epsilon>0$. In particular, $\phitil_\infty$ is bounded
above by a number strictly less than $1/2$. 

Let $\phitil_+$ be a constant function such that
$\phitil_\infty + \frac12 < \phitil_+ < 1 \leq \phitil_\infty +1$.

Using the near-CMC condition \eqref{eq:NearCMCConditionFail}, $\phitil_+\geq \phitil_i$
everywhere, as is shown in the proof of Lemma \ref{lem:LimitConvergence}. But $\phitil_i =1$ 
somewhere and $\phitil_+ < 1$. This is a contradiction, so for such $\tau$, $K$ must
be bounded. Thus the Schaefer fixed point theorem \ref{thm:Schaefer} gives a solution
to the conformal constraint equations.
\end{proof}

As discussed in the introduction of this chapter, since the limit equation criterion has
only been applied in the near-CMC case, it might seem reasonable
to assume that $\kappa \tau^2 - \frac{1}{4N^2} |LW_0|^2 \geq c \tau^2$. Unfortunately, 
we can show that $\tau$ cannot be arbitrarily near-CMC, in the sense that
there are constants $C>0$ sufficiently large such that no $\tau>\rho^{\delta-1}$
satisfies the related
inequality $\tau^2 - C \|d\tau\|^2_{C^0_{\delta-2}} \rho^{2\delta-2} \geq c\tau^2$. 
This makes it very difficult to verify that the near-CMC condition in the limit 
equation criterion \ref{thm:AELimit} or Proposition \ref{prop:NearCMCSoln}
is fulfilled.

\begin{prop}\label{prop:NoNearCMC}
There is a constant $C>0$ sufficiently large such that for any $C'>0$,
no $\tau$ satisfying $\tau>C' \rho^{\delta-1}$ also satisfies
\begin{equation}
\tau^2 - C \|d\tau\|^2_{C^0_{\delta-2}} \rho^{2\delta-2} \geq 0.
\end{equation}
\end{prop}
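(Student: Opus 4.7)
The plan is to extract two bounds of matching decay rate $\rho^{\delta-1}$ on each end — a lower bound coming from the near-CMC inequality itself, and an upper bound coming from integrating $d\tau$ inward from infinity — and to observe that the comparison forces a universal ceiling on $C$. Setting $K := \|d\tau\|_{C^0_{\delta-2}}$, the hypothesis fails trivially if $K = \infty$, so we may assume $K$ is finite; the definition of the weighted H\"older norm then yields the pointwise bound $|d\tau|_{\mathrm{Euc}} \leq K\rho^{\delta-2}$ on each end.

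Next, on a fixed end $E_j$, I would parametrize a Euclidean radial ray $\gamma$ by the radial coordinate $\rho$. The central step is to use the fundamental theorem of calculus together with $\tau \to 0$ at infinity (which holds under the standing regularity $\tau \in C^{1,\alpha}_{\delta-1}$, since $\delta-1<0$) to derive
\[
\tau(\gamma(R)) \;=\; -\int_R^\infty \tau'(\rho)\,d\rho \;\leq\; K\int_R^\infty \rho^{\delta-2}\,d\rho \;=\; \frac{K}{1-\delta}\,R^{\delta-1}.
\]
The assumed near-CMC inequality $\tau^2 \geq CK^2\rho^{2\delta-2}$ evaluated along $\gamma$ simultaneously gives $\tau(\gamma(R)) \geq \sqrt{C}\,K\,R^{\delta-1}$.

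When $K > 0$, dividing these two bounds yields $\sqrt{C} \leq 1/(1-\delta)$, so any choice $C > (1-\delta)^{-2}$ rules out that case. The remaining possibility $K = 0$ gives $d\tau \equiv 0$, whence $\tau$ is locally constant; combined with $\tau \to 0$ at infinity on each end, this forces $\tau \equiv 0$, contradicting the positivity hypothesis $\tau > C'\rho^{\delta-1} > 0$. Fixing any $C > (1-\delta)^{-2}$ therefore establishes the proposition. The only subtle point is justifying the decay $\tau \to 0$ along each ray needed to invoke the fundamental theorem of calculus; this is immediate from the working regularity class for $\tau$, and in any event the finiteness of $K$ together with integrability of $\rho^{\delta-2}$ at infinity already guarantees convergence of $\tau\circ\gamma$.
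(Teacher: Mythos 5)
Your proof is correct, but it takes a genuinely different route from the one in the paper. You argue pointwise: using $|d\tau| \le K\rho^{\delta-2}$ (with $K = \|d\tau\|_{C^0_{\delta-2}}$), you integrate inward from infinity along a Euclidean radial ray to obtain $\tau(R) \le \tfrac{K}{1-\delta}R^{\delta-1}$, and you compare this with the lower bound $\tau(R)\ge \sqrt{C}\,K\,R^{\delta-1}$ coming directly from the assumed near-CMC inequality. Dividing gives $\sqrt{C}\le (1-\delta)^{-1}$, so any $C > (1-\delta)^{-2}$ works; the degenerate case $K = 0$ forces $\tau\equiv 0$ by connectedness and decay, contradicting $\tau > C'\rho^{\delta-1}$. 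The paper instead argues by contradiction via a sequence: after rescaling each $\tau_i$ by $(\sqrt{C_i}\|d\tau_i\|_{C^0_{\delta-2}})^{-1}$, one obtains $\tau_i \ge \rho^{\delta-1}$ with $\|d\tau_i\|_{C^0_{\delta-2}}\to 0$, and then the weighted Poincar\'e inequality (Lemma \ref{lem:poincare}) forces $\|\tau_i\|_{p,1-n/p}\to 0$, contradicting the uniform lower bound $\tau_i \ge \rho^{\delta-1}$. Both arguments ultimately exploit the same mechanism — the decay of $\tau$ at infinity is controlled by the decay of $d\tau$ — but your version is entirely elementary (one-dimensional integration along rays) and, unlike the paper's, produces an explicit threshold for $C$.

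One small caveat on your closing remark: finiteness of $K$ and integrability of $\rho^{\delta-2}$ only guarantee that $\tau\circ\gamma$ converges to \emph{some} constant along each ray, not that the limit is zero, so this ``in any event'' fallback does not by itself license the step $\tau(\gamma(R)) = -\int_R^\infty \tau'(\rho)\,d\rho$. You do need the standing decay $\tau\in C^{1,\alpha}_{\delta-1}$ (or at least $\tau = o(1)$ at infinity), which is part of the chapter's working Assumption \ref{AssumptLimit}, and which you correctly invoke as the primary justification. The paper's Poincar\'e-based proof has the same implicit dependence, since that inequality is stated for functions in a weighted Sobolev class.
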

\begin{proof}
Suppose to the contrary that there are $\tau_i$ such that, after scaling, 
$\tau_i \geq \rho^{\delta-1}$ and $\|d\tau_i\|_{C^0_{\delta-2}}< 1/i$. Using 
the Poincar\'{e} inequality \ref{lem:poincare}, for $p\in (n/2,n)$, we have
\begin{equation}
\|\tau_i\|_{p,{1-n/p}} \leq C \|d\tau_i\|_{p,-n/p} 
\leq C\|d\tau_i\|_{C^0_{\delta-2}} \leq C\frac1i.
\end{equation} This contradicts the fact that $\tau_i\geq \rho^{\delta-1}$.
\end{proof} 

\chapter{ADM Mass and the Asymptotic Function} \label{chap:Mass}

\'{O} Murchadha
\cite{OM05} showed that, in the compact, CMC case, the volume of the solution to
the conformal constraint equations \eqref{eq:ConfConst}
is monotonically related to the constant curvature $\tau$. In particular, instead of
specifying a constant $\tau$, one could equivalently specify the volume of the
solution manifold.

In the asymptotically Euclidean case, $\tau$ must vanish at infinity, and so there
is only one choice for constant $\tau$, $\tau \equiv 0$. However, there is a new
``constant" that one may specify, the asymptotic function. An AE manifold
does not have (finite) volume, but it seems logical to ask what, if anything, the choice of
asymptotic function controls. In analog to volume for the compact case, the ADM
mass is a natural candidate.

In general relativity, it is very difficult to define the mass of a non-isolated object, such
as a star. However, there are good definitions for the mass of an entire system.
One such definition is the ADM mass, a metric invariant for AE manifolds
first described by Arnowitt, Deser, and Misner in \cite{ADM61}. 
This mass describes the total mass of all matter in the AE manifold, as measured by
the mass's effects on the asymptotics of the metric. The usual definition
for the ADM mass is
\begin{equation}\label{eq:ADMMass}
M_{ADM}(g) := \frac{1}{16\pi} \lim_{t\to\infty} \sum_i 
    \int_{S_t} (g_{ij,i}-g_{ii,j}) \nu^j_e dS_e,
\end{equation} where $S_t$ is the Euclidean coordinate sphere of radius $t$ on an end,
$\nu^j_e$ is the Euclidean unit normal to $S_t$ and $dS_e$ is the Euclidean spherical
volume element. Bartnik \cite{Bartnik86} showed that the ADM mass is independent of 
the choice of Euclidean coordinates. 
As expected, for Euclidean space, the ADM mass is zero. For the 
Schwarzschild family of solutions, the mass is exactly the standard mass parameter $M$. 

If $g$ does not fall off to the Euclidean metric $e$ fast enough, the mass
may not exist. For this reason, we assume that $\delta<1-n/2$ for this chapter.
If $g$ is a $W^{2,p}_{\delta}$ AE metric, then $M_{ADM}(g)$ exists. Also, since the
mass is dependent on only one end of the AE manifold, we ignore the other ends, and
use an asymptotic constant $c$ instead of the more general asymptotic function $\mru$.
This makes no difference in our calculations below.

Suppose $\phi-c:= f\in W^{2,p}_{\delta}$. Let $\gtil = \phi^{q-2} g$, as usual.
In order to calculate the mass, we have to use coordinates such that 
$\gtil \to e$ in those coordinates. If $c\neq 1$, this means we must scale the 
coordinates. Let
$x^i$ be the Euclidean coordinates for $g$. Let $\xbar^i = c^{q/2 -1}x^i$. Then, denoting
$\gtil$ in $\xbar$ coordinates as $\gbar$, we have
\begin{equation}
\gbar_{ij} d\xbar^i d\xbar^j = \gbar_{ij} c^{2-q} dx^i dx^j 
  = \gtil_{ij} dx^i dx^j = \phi^{q-2} g_{ij} dx^i dx^j.
\end{equation} 
Thus, as functions on $M$ (not as tensors), $\gbar_{ij} = (\phi/c)^{q-2} g_{ij}$.

Let $\ebar$ be the Euclidean metric in the $\xbar$ coordinates. Then
$dS_{\ebar} = c^{(q/2-1)(n-1)} dS_e$ and $\frac{\p f}{\p \xbar} 
= \frac{\p f}{\p x} c^{1-q/2}$. Finally, since we are integrating over spheres,
tracing with $\nu_{\ebar}^{\overline{j}}$ picks the radial component.
Since the radial direction is the same for both 
metrics, the scaling of the derivatives/metric takes care of this term.

Combining these facts, we have
\begin{align}
M_{ADM}(\gbar) &= \frac{1}{16\pi} \lim_{t\to\infty} \sum_i \int_{S_t} 
   (\gbar_{ij,\overline{i}} - \gbar_{ii,\overline{j}})\nu_{\ebar}^{\overline{j}} dS_{\ebar}\\
&= \frac{c^{(q/2-1)(n-2)}}{16\pi} \lim_{t\to\infty} \sum_i \int_{S_t} 
   \left[ \left( (\phi/c)^{q-2} g_{ij}\right)_{,i} - \left( (\phi/c)^{q-2} g_{ii}\right)_{,j}\right]
   v_e^j dS_e.
\end{align} Using $(q/2-1)(n-2) =2$ and $(\phi/c)^{q-2} = 1+ (q-2)f/c + L.O.T.$, 
and ignoring the lower order terms, which vanish in the limit, we find
\begin{align}
M_{ADM}(\gbar) &= c^2M_{ADM}(g) + \frac{c(q-2)}{16\pi} \lim_{t\to\infty} \sum_i \int_{S_t} 
   \left[  g_{ij}f_{,i} - g_{ii}f_{,j}\right] v_e^j dS_e \\
&= c^2 M_{ADM}(g) + \frac{c(q-2)(1-n)}{16\pi} \lim_{t\to\infty} \int_{S_t} \p_r f dS_e,
\end{align} where $\p_r$ is the Euclidean radial derivative. 

Starting with $W^{2,p}_\delta$ seed data, the Lichnerowicz equation \eqref{eq:OrigLich}
implies that $\Delta \phi \in L^1$. Applying integration by parts,
\begin{equation}
\int_{B_t} \Delta \phi dV_g = \int_{S_t} \nabla_i \phi \nu_g^i dS_g.
\end{equation} If $t\to\infty$,
we can drop lower order terms, and thus
\begin{equation}
\int_M \Delta \phi dV_g = \lim_{t\to\infty} \int_{S_t} \p_r \phi dS_e.
\end{equation}
Thus
\begin{align}\label{eq:PhiMass}
M_{ADM}(\gbar) 
&= c^2 M_{ADM}(g) + \frac{c(q-2)(n-1)}{16\pi a} \int_M -a\Delta\phi dV_g.
\end{align}

\section{Model Problem}

We now discuss a (relatively) simple model problem. Assume we have seed data
with regularity as in Assumption \ref{eq:DataRegularity}, with $p>n$.
Let $\tau \equiv 0$, i.e., the CMC case. In this case, the metric $g$ must be 
Yamabe positive in order for the Lichnerowicz equation \eqref{eq:OrigLich}
to have a solution, and so, 
without loss of generality, we may assume that $R\equiv 0$. For simplicity, we also
assume that $r\equiv 0$ and $J\equiv 0$. Also, since $d\tau\equiv 0$,
we know $LW\equiv 0$. We ignore the degenerate case $\sigma \equiv 0$. 
Thus the conformal constraint equations reduce to a single equation,
\begin{equation}\label{eq:ReducedLich}
-a\Delta \phi - |\sigma|^2 \phi^{-q-1}.
\end{equation}

Let $\phi_c$ be the solution to \eqref{eq:ReducedLich} such that 
$\phi_c-c \in W^{2,p}_\delta$. Such solutions exist for all $c$ by Theorem
\ref{thm:LichIff}. We first prove a lemma.

\begin{lem}\label{lem:PhiUpperBound}
Let $\eta = 2\delta -4+n$.There exists a positive solution $u\in W^{2,p}_{\eta+2}$ of
\begin{equation}\label{eq:uDef}
-a\Delta u - \rho^{\eta}=0
\end{equation} such that $C_0 \rho^{2-n} \leq u \leq C_1 \rho^{\eta+2}$.
\end{lem}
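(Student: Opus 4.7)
The plan is to apply Lemma \ref{lem:Falloff} with $\gamma := \eta+2$ to produce a positive solution with the correct leading asymptotics, and then to read off the two pointwise bounds from the resulting decomposition. First I would check that the weight is admissible. Since the chapter assumes $\delta < 1 - n/2$ (so that the ADM mass is defined) and we always have $\delta > 2-n$, a quick computation gives $\gamma = 2\delta + n - 2 \in (2-n, 0)$. Because the model problem is already normalized to $R \equiv 0$, Lemma \ref{lem:Falloff} applies and produces a unique $u \in W^{2,p}_{\gamma}$ solving $-a\Delta u = \rho^{\gamma-2} = \rho^{\eta}$ together with a decomposition $u = c_{\gamma} \rho^{\gamma} + \hat u$, where $\hat u \in W^{2,p}_{\gamma'}$ for some $\gamma' < \gamma$ (one checks that $\gamma' = 2\gamma$ lies in $(2-n, \gamma)$, using $\eta < -2$).

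Next I would establish positivity of $u$. Since $\rho^{\eta} \geq 0$ and $u$ decays to zero along each end, the maximum principle (Proposition \ref{prop:MaxPrinciple} with $V \equiv 0$) forces $u \geq 0$; as $-a\Delta u \not\equiv 0$, the strong maximum principle \ref{prop:StrongMaxPrinciple} then upgrades this to $u > 0$ on $M$. Matching the hierarchy of decay rates in the decomposition above also forces $c_{\gamma} > 0$: were $c_{\gamma} \leq 0$, the remainder $\hat u$ decays strictly faster than $\rho^{\gamma}$ and so cannot compensate, contradicting $u > 0$ at infinity.

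The upper bound follows immediately from the embedding $W^{2,p}_{\gamma'} \hookrightarrow C^0_{\gamma'}$ (Proposition \ref{prop:SobolevEmbeddings}, valid since $p > n$), which yields $|\hat u| \leq C \rho^{\gamma'}$ with $\gamma' < \eta + 2$; hence
\begin{equation}
u \leq c_{\gamma} \rho^{\eta+2} + C \rho^{\gamma'} \leq C_1 \rho^{\eta+2}
\end{equation}
globally, using $\rho \geq 1$. For the lower bound, I would compare $u$ with $\rho^{2-n}$. Because $\gamma > 2-n$, at infinity one has $u/\rho^{2-n} = c_{\gamma} \rho^{\gamma+n-2}(1+o(1)) \to \infty$. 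On any bounded region the quotient $u/\rho^{2-n}$ is a continuous, strictly positive function and therefore bounded below by a positive constant. Combining the two regimes gives a uniform $C_0 > 0$ with $u \geq C_0 \rho^{2-n}$ on $M$.

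The only non-routine step is confirming the sign $c_{\gamma} > 0$; every other piece is an immediate consequence of the linear theory assembled in Chapter \ref{chap:AEIntro}, and positivity of the leading coefficient is itself dictated by the maximum principle. So I expect no serious obstacle, only the need to carefully track the decay exponents and to verify that the Lemma \ref{lem:Falloff} decomposition is compatible with the desired weight $\eta + 2$.
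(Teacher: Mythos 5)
Your argument is correct, and for the lower bound it takes a genuinely different --- and cleaner --- route than the paper's. Existence, positivity, and the upper bound are handled essentially as in the paper (isomorphism, maximum principles, and the embedding $W^{2,p}_{\gamma'}\subset C^0_{\gamma'}$), the only cosmetic difference being that the paper invokes Proposition \ref{prop:AELinearExistence} directly rather than Lemma \ref{lem:Falloff}. The real divergence is the bound $u\geq C_0\rho^{2-n}$. The paper only knows $u\in W^{2,p}_{\eta+2}$, which controls the decay from above but gives no lower bound, so it manufactures one: it solves a Dirichlet problem for a harmonic $v$ on an exterior region $E_R$ with $v=1$ on $\partial E_R$, shows $\alpha v$ is a subsolution of \eqref{eq:uDef} so that $u\geq\alpha v$, and then traps $v$ between explicit barriers $\alpha_\pm\bigl(C_\pm\rho^{2-n}\mp\rho^{2-n+\delta/2}\bigr)$ to get $v\geq C\rho^{2-n}$; this requires extending both the linear isomorphism theory and the sub/supersolution theorem to exterior domains with boundary. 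You instead extract the exact leading term $c_\gamma\rho^{\eta+2}$ from Lemma \ref{lem:Falloff}, observe that $\eta+2>2-n$ forces $u/\rho^{2-n}\to\infty$ at infinity, and finish on the complementary compact set by continuity and strict positivity. Your route buys a much shorter argument with no boundary-value machinery, at the price of needing the sharper asymptotic expansion (and hence the hypothesis $R\equiv 0$, which the model problem does supply).

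Two small points to tighten. First, your sign argument for $c_\gamma$ only excludes $c_\gamma<0$: if $c_\gamma=0$, then $u=\hat u$ is positive and merely decays faster than $\rho^{\eta+2}$, which is not by itself a contradiction, and then your lower bound at infinity fails. You should instead note that $c_\gamma$ is determined by $-a\Delta_e\rho^\gamma=-a\gamma(\gamma+n-2)\rho^{\gamma-2}$, which for $\gamma\in(2-n,0)$ gives a strictly positive constant outright (the formula printed in Lemma \ref{lem:Falloff} appears to carry a sign slip, so it is worth doing this computation rather than quoting it). Second, $2\gamma$ need not lie in $(2-n,\gamma)$ when $\gamma<1-n/2$; Lemma \ref{lem:Falloff} already accounts for this by allowing any $\gamma'\in(2-n,\gamma)$ in that case, and since all you use is $\gamma'<\gamma$, nothing breaks.
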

\begin{proof}
Note that $\eta \in (-n,-2)$ since $\delta \in (2-n, 1-n/2)$.
Thus Proposition \ref{prop:AELinearExistence} shows that 
$\Delta:W^{2,p}_{\eta+2} \to L^p_{\eta}$ is an isomorphism, and
so there exists a solution $u\in W^{2,p}_{\eta+2}$ of
\begin{equation}\label{eq:uDef}
-a\Delta u - \rho^{\eta}=0
\end{equation} such that $u \leq C \rho^{\eta+2}$.
The maximum principle \ref{prop:MaxPrinciple} shows that $u$ is positive. 

We claim that $u\geq C\rho^{2-n}$ for some $C>0$. A straightforward extension of 
\cite{Bartnik86} shows that $\Delta:u \mapsto (\Delta u, u|_{\p E_R})$
is an isomorphism between the 
spaces $W^{2,p}_\delta(E_R) \to L^{p}_\delta(E_R) \times W^{2-1/p,p}(\p E_R)$, where $E_R$ is the
exterior of the ball of radius $R$ in the end. Using this result, we can
also extend the sub and supersolution theorem \ref{thm:SubSupersolutionTheorem}
to allow (smooth) internal boundaries with Dirichlet boundary data.

Let $v \in W^{2,p}_\delta$ be the unique solution of 
\begin{equation}\label{eq:PhiDecayPDE}
\Delta v =0 \textrm{ on } E_R \,\,\,\,\, \textrm{ and } \,\,\,\,\, v = 1 \textrm{ on } \p E_R.
\end{equation} 
Clearly, for small enough $\alpha>0$, $\alpha v$ is a subsolution to
\eqref{eq:uDef} on $E_R$. Since $u$ is a solution, and thus a supersolution,
 the extension of the sub and supersolution theorem shows that $u\geq \alpha v$. 

We claim that $v> C\rho^{2-n}$ for some $C>0$, which completes the proof. Consider
the functions $v_\pm = \alpha_\pm(C_\pm \rho^{2-n} \mp \rho^{2-n+\delta/2})$ 
for appropriately
chosen constants $\alpha_+,$, $\alpha_-$, $C_+$, and $C_-$.
We claim that the functions $v_\pm$ are respectively super
and subsolutions of the boundary value problem \eqref{eq:PhiDecayPDE}.
Indeed, using the decomposition
\begin{equation}
\Delta = \Delta_e + k^{ij} \p_i \p_j + g^{ij} \Gamma_{ij}^k \p_k,
\end{equation} and the fact that $W^{2,p}_\delta \subset C^{1}_\delta$, it is clear that 
$\Delta \rho^{2-n}=O(\rho^{-n+\delta})$ since $\Delta_e \rho^{2-n} = 0$.
(Recall that $\rho$ is the radial coordinate sufficiently far out on each end.) 
We similarly get that $\Delta \rho^{2-n+\delta/2}=O(\rho^{-n + \delta/2})$.
This is the highest order term that remains. Then, because of this term's sign, it
eventually dominates, making $v_\pm$ a super or subsolution. Finally, $\alpha_\pm$ can be made
large or small to ensure the boundary condition falls between $v_\pm$. Again using the
extension of the sub and supersolution theorem, along with the fact that $\Delta$ is an
isomorphism, $v_-\leq v \leq v_\pm$, and so $v> C\rho^{2-n}$, as claimed.
\end{proof}

We now list a number of properties of the solutions $\phi_c$ of the reduced Lichnerowicz
equation \eqref{eq:ReducedLich} and their derivatives, 
$\delta \phi_c := \frac{\p}{\p c} \phi_c$.
\begin{prop}[Properties of $\phi_c$]
The function $\phi_c$ satisfies
\begin{equation}\label{eq:PhiCBound}
c< \phi_c \leq c^{-q/2}C_2 u +c,
\end{equation} for some $C_2>0$ independent of $c$. Also, 
\begin{equation}
0<\phi_1 -1\leq \phi_c
\end{equation} for all $c>0$. Finally, 
\begin{equation}\label{eq:DeltaPhiCBound}
\sup\left\{0,\frac{q+1}{c} (c-\phi_c) + 1\right\} \leq \delta\phi_c \leq 1.
\end{equation} In particular, as $c \to \infty$, $\phi_c \to c$ 
and $\delta \phi_c \to 1$.
\end{prop}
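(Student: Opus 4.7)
The plan is to prove each inequality by setting up an auxiliary function in $W^{2,p}_\delta$ (or a close cousin), verifying it solves or bounds a linear inequality of the form $[-a\Delta + V]h \gtrless 0$ with $V \geq 0$, and then invoking the maximum principles (Propositions 2.5, 2.6) on asymptotically Euclidean manifolds. Throughout, write $\phi_c = c + f_c$ with $f_c \in W^{2,p}_\delta$, and set $V := (q+1)|\sigma|^2\phi_c^{-q-2} \geq 0$. The strict positivity $c < \phi_c$ comes immediately: $-a\Delta f_c = |\sigma|^2 \phi_c^{-q-1} \geq 0$ and $f_c \to 0$ at infinity, so Proposition 2.5 gives $f_c \geq 0$ and Proposition 2.6 promotes this to $f_c > 0$ as long as $|\sigma| \not\equiv 0$.

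For the upper bound, I would take as trial supersolution $w := c + c^{-q/2}C_2 u$, with $u$ from Lemma 6.2. Since $-a\Delta w = c^{-q/2}C_2\rho^\eta$ and $w \geq c$ gives $|\sigma|^2 w^{-q-1} \leq c^{-q-1}|\sigma|^2$, the supersolution inequality reduces to a pointwise bound between $c^{-q/2}C_2\rho^\eta$ and $c^{-q-1}|\sigma|^2$; since $\eta \geq 2\delta-2$, the regularity of $\sigma$ dominates $|\sigma|^2$ by a multiple of $\rho^\eta$, and $C_2$ can be chosen large to close the inequality. Then the difference $\phi_c - w$ lies in $W^{2,p}_{\eta+2}$ (because $\eta+2 > \delta$), vanishes at infinity, and on the open set $\{\phi_c > w\}$ satisfies $-a\Delta(\phi_c - w) = |\sigma|^2[\phi_c^{-q-1} - w^{-q-1}] < 0$ (subharmonic), so the max principle forces the set to be empty.

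The bound $\phi_c \geq \phi_1 - 1$ splits on $c$. For $c \leq 1$, set $w := \phi_c + (1-c)$, asymptotic to $1$; on $\{\phi_1 > w\}$ one has $\phi_c < \phi_1$, so $-a\Delta(\phi_1 - w) = |\sigma|^2(\phi_1^{-q-1} - \phi_c^{-q-1}) < 0$, making $\phi_1 - w$ subharmonic with vanishing boundary values and asymptote, hence that set is empty and $\phi_c \geq \phi_1 - 1 + c > \phi_1 - 1$. For $c \geq 1$, I would use the monotonicity $\delta\phi_c \geq 0$ (proved next) to conclude $\phi_c \geq \phi_1 > \phi_1 - 1$. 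The positivity $0 < \phi_1 - 1$ is just the special case $c=1$ of the first claim.

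For the derivative bounds, implicit differentiation of the Lichnerowicz equation yields $L\delta\phi_c = 0$ where $L := -a\Delta + V$, and $\delta\phi_c \to 1$ at infinity with $\delta\phi_c - 1 \in W^{2,p}_\delta$. The upper bound $\delta\phi_c \leq 1$ follows by applying Proposition 2.5 to $-(\delta\phi_c - 1)$, which satisfies $L[-(\delta\phi_c - 1)] = V \geq 0$. Non-negativity $\delta\phi_c \geq 0$ comes from pairing $L\delta\phi_c = 0$ with $(\delta\phi_c)^-$, which is compactly supported since $\delta\phi_c \to 1 > 0$; integrating by parts yields $\int a|\nabla(\delta\phi_c)^-|^2 + V((\delta\phi_c)^-)^2 = 0$, forcing $(\delta\phi_c)^- \equiv 0$. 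For the sharper bound, set $h := c(\delta\phi_c - 1) + (q+1)f_c \in W^{2,p}_\delta$; a direct calculation using $-a\Delta f_c = |\sigma|^2\phi_c^{-q-1}$ and $L(\delta\phi_c - 1) = -V$ gives $Lh = (q+2)Vf_c \geq 0$, so Proposition 2.5 yields $h \geq 0$, which rearranges to $\delta\phi_c \geq 1 + \frac{q+1}{c}(c - \phi_c)$. The limits $\phi_c \to c$ and $\delta\phi_c \to 1$ as $c \to \infty$ now read off the two-sided bounds directly. The main obstacle is the upper-bound construction: matching the power $c^{-q/2}$ in the trial supersolution to the $c^{-q-1}$ coming from $w \geq c$ requires either restricting to $c$ bounded below, or refining the supersolution ansatz (e.g. splitting into regions where $c$ versus $c^{-q/2}u$ dominates) to extract a constant $C_2$ genuinely independent of $c$.
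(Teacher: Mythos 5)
Your overall strategy (auxiliary supersolutions/subsolutions plus the maximum principles) is the same one the paper uses, and the parts handling the strict lower bound $\phi_c>c$, the two derivative bounds, and the rearrangement $Lh=(q+2)Vf_c\geq 0$ for the sharper estimate on $\delta\phi_c$ are all correct. But the gap you flagged in the upper bound is a genuine one, and it is exactly where the content of the proposition lives. Dropping $w\geq c$ and asking $c^{-q/2}C_2\rho^\eta\geq c^{-q-1}|\sigma|^2$ with $|\sigma|^2\leq S\rho^{2\delta-2}\leq S\rho^\eta$ forces $C_2\geq S\,c^{-q/2-1}$, which blows up as $c\to 0$. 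The paper closes this by \emph{not} discarding the $c^{-q/2}C_2u$ term: after raising the supersolution inequality to the power $1/(q+1)$ it reads
\begin{equation}
c^{-q/2}C_2 u + c \;\geq\; |\sigma|^{2/(q+1)}\left(c^{-q/2}C_2\rho^\eta\right)^{-1/(q+1)},
\end{equation}
and one then uses the \emph{lower} bound $u\geq C_0\rho^{2-n}$ from Lemma \ref{lem:PhiUpperBound} together with $|\sigma|^2\leq S\rho^{2\delta-2}$. Dividing by $c$ and observing that $2\delta-2-\eta = 2-n$ collapses all $c$- and $\rho$-dependence into the single scalar $s:=c^{-q/2-1}\rho^{2-n}$: the inequality becomes $C_2C_0\,s+1 \geq S^{1/(q+1)}C_2^{-1/(q+1)}\,s^{1/(q+1)}$, which, since $1/(q+1)<1$, holds for all $s\geq 0$ once $C_2$ is large enough — independently of $c$ and $\rho$. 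This is the ``split into regions'' you anticipated but did not carry out, and without it the uniform constant $C_2$ in \eqref{eq:PhiCBound} (and hence the limit $\phi_c\to c$) is not obtained.

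Two secondary remarks. Your argument for $\phi_c\geq\phi_1-1$ with $c\geq 1$ relies on the monotonicity $\delta\phi_c\geq 0$, which you prove afterwards; that is easily reordered, but the paper's route is cleaner and uniform in $c$: it verifies directly that $\phi_1-1+\epsilon$ is a subsolution of the reduced Lichnerowicz equation for any $\epsilon\in(0,1)$ (the key is $\phi_1>1$ and hence $\phi_1^{-q-1}<(\phi_1-1+\epsilon)^{-q-1}$), so $\phi_c\geq\phi_1-1+\epsilon$ for every $c>0$ by the sub/supersolution theorem, and one lets $\epsilon\to 0$. Also, for $\delta\phi_c\geq 0$ the paper argues by comparison ($\phi_{c'}$ is a supersolution for $\phi_c$ when $c'>c$, so $c\mapsto\phi_c$ is nondecreasing); your direct integration-by-parts argument with the compactly supported $(\delta\phi_c)^-$ is an equally valid alternative and is essentially a re-derivation of Proposition \ref{prop:MaxPrinciple} without the $\epsilon$-shift.
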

\begin{proof}

First, note that $c$ is a subsolution for \eqref{eq:ReducedLich}. We claim that
$c^{-q/2} C_2 u+c$ is a supersolution for \eqref{eq:ReducedLich} for some $C_2>0$
independent of $c$. Note that $c<c^{-q/2}Cu +c$ since $u>0$, and so the sub and 
supersolution theorem \ref{thm:SubSupersolutionTheorem} combined with the uniqueness
of solutions of the Lichnerowicz equation from Theorem \ref{thm:LichUniqueness}
shows that $c \leq \phi_c \leq c^{-q/2} C_2 u+c$. 

For $c^{-q/2} C_2 u +c$ to be a supersolution to the reduced Lichnerowicz equation 
\eqref{eq:ReducedLich}, we must show
\begin{equation}
-a\Delta (c^{-q/2} C u +c) - |\sigma|^2 (c^{-q/2} C u +c)^{-q-1}\geq 0.
\end{equation} Using $|\sigma|^2\leq S \rho^{2\delta-2}$ and 
$C_0 \rho^{2-n} \leq u \leq C_1 \rho^{\eta+2}$, where $\eta = 2\delta -4+n$,
this reduces to
\begin{align}
c{-q/2} C_2 \rho^{\eta} &\geq|\sigma|^2(c^{-q/2}C_2 u + c)^{-q-1} \\
C_2 C_0 c^{-q/2 -1} \rho^{2-n} +1 &\geq S^{1/(q+1)} C_2^{-1/(q+1)} c^{\frac{q}{2(q+1)} -1} \rho^{\frac{2-n}{q+1}}.
\end{align} For some $C_2$ large enough, independent of $c$, this is true, and
so $c^{-q/2} C_2 u+c$ is a supersolution. The
calculation to show this is long and unenlightening, so we do not include it.
Essentially, the $1$ bounds the right hand side if $\rho$ and/or $c$ are large, while
the other term bounds the right hand side if $\rho$ and/or $c$ are small.

Note that $\phi_c-c \in W^{2,p}_\delta$ and that $-a\Delta(\phi_c-c)\geq 0$. Since 
$\phi_c\geq c$ and $\phi_c\not\equiv c$, by the strong maximum principle 
\ref{prop:StrongMaxPrinciple}, $\phi_c-c>0$, and so $\phi_c>c$.

A quick calculation shows that $\phi_1-1+\epsilon$ is a subsolution to 
\eqref{eq:ReducedLich} for any $0<\epsilon<1$. Using $\phi_c$ as a supersolution,
the sub and supersolution theorem \ref{thm:SubSupersolutionTheorem} combined with 
uniqueness from Theorem \ref{thm:LichUniqueness} show that 
$\phi_c\geq\phi_1-1+\epsilon$. Letting $\epsilon \to 0$, we have $\phi_c \geq\phi_1-1>0$.

Taking the variation of \eqref{eq:ReducedLich}, we find that $\delta \phi_c$ satisfies
\begin{equation}\label{eq:DerivPhi}
(-a\Delta + (q+1)|\sigma|^2 \phi_c^{-q-2}) \delta \phi_c = 0.
\end{equation} Since $\phi_c$ changes at a rate of one near infinity, 
we require $\delta \phi_c \to 1$ at infinity. By Proposition \ref{prop:Isomorphism},
$\delta\phi_c -1 \in W^{2,p}_\delta$. Then,
since $(-a\Delta + (q+1)|\sigma|^2 \phi_c^{-q-2}) (\delta \phi_c-1) \leq 0$,
the maximum principle \ref{prop:MaxPrinciple} shows that $\delta\phi_c-1\leq 0$,
and so $\delta\phi_c\leq 1$.

To show that $\delta\phi_c\geq 0$, note that for $c'>c$, $\phi_{c'}$ is a supersolution
for $\phi_c$, i.e., for the reduced Lichnerowicz equation \eqref{eq:ReducedLich}.
Thus $\phi_c$ is nondecreasing, and so $\delta\phi_c\geq 0$.

We claim that $\frac{q+1}{c} (c-\phi_c) +1$ is a subsolution of \eqref{eq:DerivPhi},
which then implies that it is a lower bound for $\delta\phi_c$.
Indeed,
\begin{equation}
\begin{aligned}
\left(-a \Delta+ (q+1)|\sigma|^2 \phi_c^{-q-2}\right)\left[\frac{q+1}{c} (c-\phi_c) +1\right]
   &=|\sigma|^2 \phi_c^{-q-1} \left[ -\frac{q+1}{c}(q+2) + \frac{(q+1)(q+2)}{\phi_c}\right]\\
   &\leq 0
\end{aligned}
\end{equation} since $\phi_c\geq c$. 
\end{proof}

With those properties of $\phi_c$, we can now understand how the integral term
in \eqref{eq:PhiMass} behaves as $c\to 0$ or $c\to \infty$. In this model problem,
the integral term, modulo a constant, becomes $c\int_M |\sigma|^2 \phi_c^{-q-1}$.

\begin{prop}\label{prop:MassLimits}
For all $c$ large enough, the integral term in \eqref{eq:PhiMass} strictly decreases
and approaches $0$ as $c\to \infty$.

If $\sigma$ has compact support, the integral term goes to zero as $c\to 0$.

If $|\sigma|^2 \geq C \rho^{\alpha}$ for some 
$\alpha > \frac{2n}{n-1}\delta - \frac{n}{n-1}$ and $C>0$, the integral term becomes
unbounded as $c\to 0$.
\end{prop}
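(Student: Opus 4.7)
The plan is to analyze $I(c):=c\int_M |\sigma|^2\phi_c^{-q-1}\,dV$ case by case, relying throughout on the pointwise bounds for $\phi_c$ and $\delta\phi_c$ in \eqref{eq:PhiCBound} and \eqref{eq:DeltaPhiCBound} together with Lemma \ref{lem:PhiUpperBound} for the comparison function $u$. A preliminary observation I would use in every case is that, since $\delta<1-n/2$ in this chapter, the bound $|\sigma|^2\le S\rho^{2\delta-2}$ with $2\delta-2<-n$ integrates to give $|\sigma|^2\in L^1(M)$; moreover $u\in W^{2,p}_{\eta+2}$ with $\eta+2=2\delta+n-2\in(2-n,0)$, so $u$ is continuous, positive, bounded, and decays to $0$ at infinity.

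For the limit $c\to\infty$, the inequality $\phi_c\ge c$ immediately gives $I(c)\le c^{-q}\int_M|\sigma|^2\,dV\to 0$. For strict monotonicity at large $c$, I would differentiate under the integral (justified since $c\mapsto\phi_c$ is $C^1$ by the implicit function theorem and the integrand is uniformly dominated using \eqref{eq:PhiCBound}) to obtain
\[
I'(c)=\int_M|\sigma|^2\phi_c^{-q-2}\bigl[\phi_c-(q+1)c\,\delta\phi_c\bigr]\,dV.
\]
Plugging the upper bound $\phi_c\le c+c^{-q/2}C_2 u$ from \eqref{eq:PhiCBound} into \eqref{eq:DeltaPhiCBound} gives $\delta\phi_c\ge 1-(q+1)c^{-q/2-1}C_2 u$, and since $u$ is bounded these two estimates yield, for $c$ sufficiently large, $\phi_c\le 2c$ and bracket $\le -qc+O(c^{-q/2})\le -\tfrac12 qc$ everywhere on $M$. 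Combining with $\phi_c^{-q-2}\ge(2c)^{-q-2}$ produces $I'(c)\le -Cc^{-q-1}\int_M|\sigma|^2\,dV<0$.

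The case $c\to 0$ with $\sigma$ compactly supported is the easiest step: on $K:=\operatorname{supp}\sigma$, the global lower bound $\phi_c\ge\phi_1-1$ from the properties of $\phi_c$ combined with continuity and strict positivity of $\phi_1-1$ on the compact set $K$ gives $\phi_c\ge\epsilon>0$ uniformly in $c$, hence $I(c)\le c\,\epsilon^{-q-1}\int_K|\sigma|^2\,dV\to 0$.

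The main obstacle will be the divergence as $c\to 0$ under the growth hypothesis $|\sigma|^2\ge C\rho^\alpha$, which requires a usable lower bound on $I(c)$ and hence an effective upper bound on $\phi_c$ in a suitably chosen region. I would start from $\phi_c\le c+c^{-q/2}C_3\rho^{2\delta+n-2}$ (with the exponent in $(2-n,0)$) and balance the two summands at a crossover radius $R_c$ defined by $c=c^{-q/2}C_3 R_c^{2\delta+n-2}$, i.e.\ $R_c^{2\delta+n-2}=c^{q/2+1}/C_3$; since $2\delta+n-2<0$ while $q/2+1>0$, this forces $R_c\to\infty$ as $c\to 0$. On the outer region $\{\rho\ge R_c\}$ the bound reduces to $\phi_c\le 2c$, and using that $\alpha+n<0$ (forced by $\alpha\le 2\delta-2<-n$) one has $\int_{\{\rho\ge R_c\}}\rho^\alpha\,dV\sim R_c^{\alpha+n}$, hence
\[
I(c)\ge C'c^{-q}\int_{\{\rho\ge R_c\}}\rho^\alpha\,dV\ge C''\,c^{-q}R_c^{\alpha+n}=C'''\,c^{E},
\]
with $E=-q+(q/2+1)(\alpha+n)/(2\delta+n-2)$. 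Using $q=2n/(n-2)$, so that $q/2+1=2(n-1)/(n-2)$, and remembering that $2\delta+n-2<0$ flips the inequality when clearing denominators, a short algebraic manipulation shows $E<0$ is equivalent to $\alpha>\tfrac{2n}{n-1}\delta-\tfrac{n}{n-1}$, which is precisely the hypothesis. The delicate bookkeeping here will be the sign flips in the exponent arithmetic and the verification that the integral on the outer region really is of order $R_c^{\alpha+n}$, which uses the AE structure so that $dV$ is comparable to the Euclidean volume element at infinity.
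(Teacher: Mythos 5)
Your argument is correct, and for the first two claims it coincides with the paper's proof: the bound $\phi_c\ge c$ gives the $c\to\infty$ limit, the derivative identity $I'(c)=\int|\sigma|^2\phi_c^{-q-2}\bigl[\phi_c-(q+1)c\,\delta\phi_c\bigr]$ is exactly the integral part of \eqref{eq:MassDeriv}, and the compact-support case via $\phi_c\ge\phi_1-1>0$ is verbatim the paper's. Two small differences are worth noting. For monotonicity, the paper simply asserts that $c/\phi_c\to1$ and $\delta\phi_c\to1$ uniformly and concludes the bracket is eventually negative; you make this quantitative by feeding \eqref{eq:PhiCBound} into \eqref{eq:DeltaPhiCBound} to get the bracket $\le -\tfrac12 qc$, which is a cleaner and more checkable version of the same step. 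For the blow-up as $c\to0$, the paper substitutes $\rho=r\,c^{(1+q/2)/(\eta+2)}$ to pull out the power $c^{-q-1+(1+q/2)(\alpha+n)/(\eta+2)}$ and then must argue that the resulting $c$-independent integral converges (which requires a separate check of the integrand near $r=0$ and $r=\infty$); your crossover-radius splitting at $R_c$ with $R_c^{2\delta+n-2}=c^{q/2+1}/C_3$ extracts the identical exponent $E=-q+(q/2+1)(\alpha+n)/(2\delta+n-2)$ from the outer region alone, where $\phi_c\le 2c$, and only needs the elementary fact that $\int_{\rho\ge R_c}\rho^\alpha\,dV\sim R_c^{\alpha+n}$ when $\alpha+n<0$. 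This buys you a slightly more elementary computation at the cost of discarding the inner region, which is harmless for a lower bound; the sign bookkeeping (flipping the inequality when dividing by $2\delta+n-2<0$) reproduces exactly the threshold $\alpha>\tfrac{2n}{n-1}\delta-\tfrac{n}{n-1}$.
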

\begin{remark}
The lower bound on $\sigma$ need not hold on all of $M$. Indeed, it is sufficient for
$\sigma$ to be bounded below only on some wedge of positive angle, perhaps far out on the
end. Also, note that, for $\alpha = 2\delta-2$, the inequality for $\alpha$ reduces
to $\delta <1-n/2$, which was already assumed.
Finally, note that the $\delta$ from the lower bound on $\alpha$ is the 
$\delta\in (2-n,0)$ from
the inequality $|\sigma|^2 \leq C \rho^{2\delta-2}$. In particular, the result can
hold even if $\sigma$ falls off faster than the metric.
\end{remark}

\begin{proof}
For all $c$,
\begin{equation}
0 < c \int_M |\sigma|^2 \phi_c^{-q-1} \leq c^{-q} \int_M |\sigma|^2,
\end{equation} and so the integral term approaches zero as $c\to \infty$.

The derivative of the mass as a function of $c$ is
\begin{equation} \label{eq:MassDeriv}
\frac{\p}{\p c} M_{ADM}(\gbar) = 2c M_{ADM}(g) 
  +C_0\int |\sigma|^2 \phi_c^{-q-1} \left[ 1- (q+1)\frac{c}{\phi_c} \delta \phi_c\right].
\end{equation} Since $c/\phi_c \to 1$ and $\delta \phi_c \to 1$ uniformly in $c$,
the integrand in \eqref{eq:MassDeriv} is negative for large $c$. Thus the integral term
decreases monotonically for all $c$ large enough.

Suppose $\sigma$ has compact support. Then
\begin{equation}
c\int_M |\sigma|^2 \phi_c^{-q-1} \leq c \int_M |\sigma|^2 (\phi_1-1)^{-q-1}.
\end{equation} Since $\phi_1-1>0$ and $\sigma$ has compact support, the integral term is
finite. Thus, as $c\to 0$, the integral term of the mass \eqref{eq:PhiMass} vanishes.

Suppose that $|\sigma|^2 \geq C \rho^{\alpha}$ for some 
$\alpha > \frac{2n}{n-1}\delta - \frac{n}{n-1}$ and $C>0$. 
Recall that by Equation \eqref{eq:PhiCBound} 
and Lemma \ref{lem:PhiUpperBound}, $\phi_c \leq c^{-q/2}C\rho^{\eta+2} + c$,
where $\eta := 2\delta -4 +n$.

Dropping all constants not depending on $c$, one has on an end $E$,
\begin{align}
\int_E |\sigma|^2 \phi_c^{-q-1} dV &\geq \int_E \frac{\rho^{\alpha+n-1}}{(c^{-q/2}\rho^{\eta+2}+c)^{q+1}} d\rho d\sigma\\
&= c^{-q-1 + \frac{(1+q/2)(\alpha +n)}{\eta+2}} \int^\infty_{C c^{\frac{1+q/2}{-\eta-2}}} \frac{r^{\alpha+n-1}}{(1+r^{\eta+2})^{q+1}} dr
\end{align} The first line is true for some integration form $d\sigma$. For the second line,
we pulled out $c$'s and used the substitution $\rho = r c^{\frac{1+q/2}{\eta+2}}$, and integrated
out the $d\sigma$. Since the final integral converges as $c\to 0$, we may bound it
by a constant. Thus,
\begin{equation}
\int_E |\sigma|^2 \phi_c^{-q-1} dV \geq C_0  c^{-q-1 + \frac{(1+q/2)(\alpha +n)}{\eta+2}}.
\end{equation} If $ -q-1 + \frac{(1+q/2)(\alpha +n)}{\eta+2}<-1$, then the integral
term of the mass \eqref{eq:PhiMass} blows up as $c\to 0$. Using $\eta = 2\delta -4+n$, this 
condition reduces to $\alpha > \frac{2n}{n-1}\delta - \frac{n}{n-1}$. 
This establishes the final claim of the proposition.
\end{proof}

While the original hope was that mass and asymptotic constants were in one to one
correspondence, Proposition \ref{prop:MassLimits} unfortunately shows that this is not
the case. If $M_{ADM}(g) =0$, and there are $\sigma$ which have compact support, then
for those $\sigma$, the mass is not a monotonic function of $c$. For any positive mass,
since $c^2 M_{ADM}(g)$ is zero as $c\to 0$ and unbounded as $c\to \infty$, for 
$\sigma$ which do not fall off very quickly, the mass is again not monotonic as a
function of $c$. 

\bibliographystyle{alpha}

\bibliography{ThesisBib}
\end{document}